\def\showauthornotes{0}
\def\showtableofcontents{1}
\def\showkeys{0}
\def\showdraftbox{0}
\def\usemicrotype{1}
\def\showfixme{0}
\newtheorem{theorem}{Theorem}[section]
\newtheorem*{theorem*}{Theorem}
\newtheorem{proposition}[theorem]{Proposition}
\newtheorem*{proposition*}{Proposition}
\newtheorem{lemma}[theorem]{Lemma}
\newtheorem*{lemma*}{Lemma}
\newtheorem{corollary}[theorem]{Corollary}
\newtheorem*{conjecture*}{Conjecture}
\newtheorem{fact}[theorem]{Fact}
\newtheorem*{fact*}{Fact}
\newtheorem*{hypothesis*}{Hypothesis}
\theoremstyle{definition}
\newtheorem{definition}[theorem]{Definition}
\theoremstyle{remark}
\newtheorem{claim}[theorem]{Claim}
\newtheorem*{claim*}{Claim}
\newtheorem*{remark*}{Remark}
\newtheorem*{observation*}{Observation}
\let\mathbb\varmathbb
\let\pref=\prettyref
\newcommand{\savehyperref}[2]{\texorpdfstring{\hyperref[#1]{#2}}{#2}}
\newcommand{\Sref}[1]{\hyperref[#1]{\S\ref*{#1}}}
\newcommand{\Authornote}[2]{{\sffamily\small\color{red}{[#1: #2]}}}
\newcommand{\Authornotecolored}[3]{{\sffamily\small\color{#1}{[#2: #3]}}}
\newcommand{\Authorcomment}[2]{{\sffamily\small\color{gray}{[#1: #2]}}}
\newcommand{\Authorstartcomment}[1]{\sffamily\small\color{gray}[#1: }
\newcommand{\Authorfnote}[2]{\footnote{\color{red}{#1: #2}}}
\newcommand{\Authorfixme}[1]{\Authornote{#1}{\textbf{??}}}
\newcommand{\Authormarginmark}[1]{\marginpar{\textcolor{red}{\fbox{\Large #1:!}}}}
\newcommand{\Authornote}[2]{}
\newcommand{\Authornotecolored}[3]{}
\newcommand{\Authorcomment}[2]{}
\newcommand{\Authorstartcomment}[1]{}
\newcommand{\Authorfnote}[2]{}
\newcommand{\Authorfixme}[1]{}
\newcommand{\Authormarginmark}[1]{}
\newcommand{\Paren}[1]{\left(#1\right)}
\newcommand{\Abs}[1]{\left\lvert#1\right\rvert}
\newcommand{\Norm}[1]{\left\lVert#1\right\rVert}
\newcommand{\iprod}[1]{\langle#1\rangle}
\newcommand{\Esymb}{\mathbb{E}}
\newcommand{\Psymb}{\mathbb{P}}
\DeclareMathOperator*{\E}{\Esymb}
\DeclareMathOperator*{\ProbOp}{\Psymb}
\renewcommand{\Pr}{\ProbOp}
\newcommand{\tensor}{\otimes}
\newcommand{\defeq}{\stackrel{\mathrm{def}}=}
\newcommand{\seteq}{\mathrel{\mathop:}=}
\newcommand{\mper}{\,.}
\newcommand{\mcom}{\,,}
\newcommand\bdot\bullet
\DeclareMathOperator{\Tr}{Tr}
\DeclareMathOperator{\poly}{poly}
\DeclareMathOperator{\polylog}{polylog}
\newcommand{\Erdos}{Erd\H{o}s\xspace}
\newcommand{\Renyi}{R\'enyi\xspace}
\newcommand{\N}{\mathbb N}
\newcommand{\R}{\mathbb R}
\newcommand{\cB}{\mathcal B}
\newcommand{\cL}{\mathcal L}
\newcommand{\cM}{\mathcal M}
\newcommand{\cN}{\mathcal N}
\newcommand{\cP}{\mathcal P}
\newcommand{\cR}{\mathcal R}
\renewcommand{\leq}{\leqslant}
\renewcommand{\geq}{\geqslant}
\newcommand{\draftbox}{\begin{center}
  \fbox{%
    \begin{minipage}{2in}%
      \begin{center}%
          \Large\textsc{Working Draft}\\%
        Please do not distribute%
      \end{center}%
    \end{minipage}%
  }%
\end{center}
\vspace{0.2cm}}
\newcommand{\draftbox}{}
\let\epsilon=\varepsilon
\numberwithin{equation}{section}
\newcommand\MYcurrentlabel{xxx}
\newcommand{\MYstore}[2]{%
  \global\expandafter \def \csname MYMEMORY #1 \endcsname{#2}%
}
\newcommand{\MYload}[1]{%
  \csname MYMEMORY #1 \endcsname%
}
\newcommand{\MYnewlabel}[1]{%
  \renewcommand\MYcurrentlabel{#1}%
  \MYoldlabel{#1}%
}
\newcommand{\MYdummylabel}[1]{}
\newcommand{\torestate}[1]{%
  \let\MYoldlabel\label%
  \let\label\MYnewlabel%
  #1%
  \MYstore{\MYcurrentlabel}{#1}%
  \let\label\MYoldlabel%
}
\newcommand{\restatetheorem}[1]{%
  \let\MYoldlabel\label
  \let\label\MYdummylabel
  \begin{theorem*}[Restatement of \prettyref{#1}]
    \MYload{#1}
  \end{theorem*}
  \let\label\MYoldlabel
}
\newcommand{\restatelemma}[1]{%
  \let\MYoldlabel\label
  \let\label\MYdummylabel
  \begin{lemma*}[Restatement of \prettyref{#1}]
    \MYload{#1}
  \end{lemma*}
  \let\label\MYoldlabel
}
\newcommand{\restateprop}[1]{%
  \let\MYoldlabel\label
  \let\label\MYdummylabel
  \begin{proposition*}[Restatement of \prettyref{#1}]
    \MYload{#1}
  \end{proposition*}
  \let\label\MYoldlabel
}
\newcommand{\restatefact}[1]{%
  \let\MYoldlabel\label
  \let\label\MYdummylabel
  \begin{fact*}[Restatement of \prettyref{#1}]
    \MYload{#1}
  \end{fact*}
  \let\label\MYoldlabel
}
\newcommand{\restate}[1]{%
  \let\MYoldlabel\label
  \let\label\MYdummylabel
  \MYload{#1}
  \let\label\MYoldlabel
}
\newcommand{\addreferencesection}{
  \phantomsection
  \addcontentsline{toc}{section}{References}
}
\let\origparagraph\paragraph
\renewcommand{\paragraph}[1]{\origparagraph{#1.}}
\let\citet\cite
\theoremstyle{definition}
\DeclareMathOperator{\pE}{\tilde {\mathbb E}}
\let\cL\relax
\DeclareMathOperator{\cL}{\mathcal L}
\DeclareMathOperator{\Span}{Span}
\DeclareUrlCommand\email{}
\newcommand{\tO}{\tilde O}
\newcommand{\Err}{\text{Err}}
\newcommand{\eqdef}{\stackrel{\textrm{def}}{=}}
\newcommand{\ignore}[1]{}
\definecolor{corlinks}{RGB}{64,128,128}
\definecolor{cormenu}{RGB}{0,37,94}
\definecolor{corurl}{RGB}{0,46,91}
\newcommand{\on}{\{-1,1\}}
\renewcommand{\P}{\mathcal{P}}
\newcommand{\B}{\mathcal{B}}
\renewcommand{\L}{\mathcal{L}}
\newcommand{\M}{\mathcal{M}}
\renewcommand{\S}{\mathbb{S}}
\newcommand{\tE}{\tilde{\mathbb{E}}}
\renewcommand{\int}{\mathsf{int}}
\newcommand{\MaxC}{\textsc{MAX CLIQUE}}
\newcommand{\nchoose}[1]{{{[n]} \choose {#1}}}
\newcommand{\sE}{\mathcal{E}}
\renewcommand{\top}{\dagger}
\title{SoS and Planted Clique: Tight Analysis of MPW Moments at all Degrees and an Optimal Lower Bound at Degree Four}
\author{%
Samuel B. Hopkins\thanks{Department of Computer Science, Cornell University. Work done as an intern at Microsoft Research New England.}
\and Pravesh K. Kothari \thanks{Department of Computer Science, UT Austin. Work done as an intern at Microsoft Research New England.}
\and Aaron Potechin \thanks{Simons Fellow. Work done as an intern at Microsoft Research New England and at MIT with an NSF graduate research fellowship under grant No. 0645960.} }
\begin{document}

\maketitle
\draftbox
\thispagestyle{empty}

\begin{abstract}
The problem of finding large cliques in random graphs and its ``planted" variant, where one wants to recover a clique of size $\omega \gg \log{(n)}$ added to an \Erdos-\Renyi graph $G \sim G(n,\frac{1}{2})$, have been intensely studied.
Nevertheless, existing polynomial time algorithms can only recover planted cliques of size $\omega = \Omega(\sqrt{n})$.
By contrast, information theoretically, one can recover planted cliques so long as $\omega \gg \log{(n)}$.

In this work, we continue the investigation of algorithms from the sum of squares hierarchy for solving the planted clique problem begun by Meka, Potechin, and Wigderson \cite{MPW15} and Deshpande and Montanari \cite{DM15}.
Our main results improve upon both these previous works by showing:
\begin{enumerate}
  \item Degree four SoS does not recover the planted clique unless $\omega \gg \sqrt n / \polylog n$, improving upon the bound $\omega \gg n^{1/3}$ due to \cite{DM15}. A similar result was obtained independently by Raghavendra and Schramm \cite{RS15}.
  
  \item For $2 < d = o(\sqrt{\log{(n)}})$, degree $2d$ SoS does not recover the planted clique unless $\omega \gg n^{1/(d + 1)} /(2^d \polylog n)$, improving upon the bound due to \cite{MPW15}.
\end{enumerate}
Our proof for the second result is based on a fine spectral analysis of the certificate used in the prior works \cite{MPW15,DM15,DBLP:journals/siamcomp/FeigeK03} by decomposing it along an appropriately chosen basis. Along the way, we develop combinatorial tools to analyze the spectrum of random matrices with dependent entries and to understand the symmetries in the eigenspaces of the set symmetric matrices inspired by work of Grigoriev \cite{Gri01}.

An argument of Kelner shows that the first result cannot be proved using the same certificate. Rather, our proof involves constructing and analyzing a new certificate that yields the nearly tight lower bound by ``correcting" the certificate of \cite{MPW15,DM15,DBLP:journals/siamcomp/FeigeK03}.

\end{abstract}

\clearpage

\ifnum\showtableofcontents=1
{
\tableofcontents
\thispagestyle{empty}
 }
\fi

\clearpage

\setcounter{page}{1}

\section{Introduction}
Let $G(n,p)$ be the \Erdos-\Renyi random graph where each edge is present in $G$ with probability $p$ independently of others. It is an easy calculation that the largest clique in $G \sim G(n,\frac{1}{2})$ is of size $(2+o(1))\cdot \log{(n)}$ with high probability. Recovering such a clique using an efficient algorithm has been a long standing open question in theoretical computer science. As early as 1976, Karp \cite{Kar76} suggested the impossibility of finding cliques of size even $(1+\epsilon) \log{(n)}$ for any constant $\epsilon > 0$ in polynomial time. Karp's conjecture was remarkably prescient and has stood ground after nearly $4$ decades of research.

Lack of algorithmic progress on the question motivated Jerrum \cite{DBLP:journals/rsa/Jerrum92} and Kucera \cite{DBLP:journals/dam/Kucera95} to consider a relaxed version known as the \emph{planted clique} problem. In this setting, we are given a graph $G$ obtained by planting a clique of size $\omega$ on a graph sampled according to $G(n,\frac{1}{2})$. Information theoretically, the added clique is identifiable as long a $\omega \gg \log{(n)}$. The goal is to recover the added clique via an efficient algorithm for as small an $\omega$ as possible. This variant is also connected to the question of finding large communities in social networks and the problem of \emph{signal finding} in molecular biology \cite{PS00}. Despite attracting a lot of attention, the best known polynomial time algorithm can only find planted cliques when their size $\omega = \Omega(\sqrt{n})$ \cite{DBLP:conf/soda/AlonKS98,DBLP:journals/siamcomp/FeigeK03}. The LS+ semi-definite programming hierarchy leads to the state of the art trade off: planted cliques of size $\omega \approx \sqrt{\frac{n}{2^{d}}}$ can be recovered in time $n^{O(d)}$ for any $d = O(\log{(n)}$.

Recently, this difficulty of finding cliques of size $\omega \ll \sqrt{n}$ has led to an increasing confidence in planted clique being a candidate for an average case hard problem and has inspired new research directions in cryptography \cite{DBLP:conf/stoc/ApplebaumBW10}, property testing \cite{DBLP:conf/stoc/AlonAKMRX07}, machine learning \cite{BR13}, algorithmic game theory \cite{DBLP:conf/soda/HazanK09a,DBLP:conf/approx/AustrinBC11} and  mathematical finance \cite{DBLP:conf/innovations/AroraBBG10}. 
%

In this paper we are interested in understanding the \emph{Sum of Squares} (SoS, also known as \emph{Lasserre}) semi-definite programming (SDP) hierachy \cite{Las01,Par00} for the planted clique problem. This is a family of algorithms, paramterized by a number $d$ called the \emph{degree}, where the $d^{th}$ algorithm takes $n^{O(d)}$ time to execute. The sum of squares hierarchy can be viewed as a common generalization and extension of both linear programming and spectral techniques, and as such has been remarkably successful in combinatorial optimization. In particular it captures the state of the art algorithms for problems such as Sparsest Cut \cite{ARV09}, MAX CUT \cite{GW95}, Unique Games/Small Set Expansion \cite{ABS10, BRS12,GS12}. 
 Recently, \cite{BBHKSZ12} showed that a polynomial time algorithm from this hierarchy solves all known integrality gap instances of the Unique Games problem, and similar results have been shown for the hard instances of MAX-CUT \cite{DMN13} and Balanced Separator \cite{OZ13}.  Very recently, \cite{LRS15} showed that the sum of squares algorithm is in fact optimal amongst all efficient algorithms based on semidefinite programming for a large class of problems that includes constraint satisfaction and the traveling salesman problem. Moreover, Barak, Kelner and Steurer~\cite{BKS14,BKS15} used the SoS hierarchy to give improved algorithms to \emph{average case} problems such as the dictionary learning problem and the planted sparse vector problem that have at least some similarity to the planted clique problem.Thus several researchers have asked whether the SoS hierarchy can yield improved algorithms for this problem as well.


The first published work along these lines was of Meka, Potechin and Wigderson \cite{MPW15} who showed that for every $d \geq 2$, the degree $2d$ SoS cannot find planted cliques of size smaller than $\approx n^{\frac{1}{2d}}$.\footnote{We use $\approx$ to denote equality up to factors polylogarithmic in $n$ (the size of the graph) and with an arbitrary dependence on the degree parameter $d$.} Deshpande and Montanari \cite{DM15} independently proved a tighter lower bound of  $\approx n^{1/3}$ for the case of degree $4$. In the main result of this paper, we extend the prior works and show that the first non trivial extension of the spectral algorithm, namely the SoS algorithm of degree $4$, cannot find cliques of size $\approx \sqrt{n}$, a bound optimal within $\poly \log {(n)}$ factors. Our lower bound for degree $4$ is obtained by a careful ``correction'' to the certificate used by $\cite{MPW15} $ and $\cite{DM15}$ in their lower bounds. 

\begin{theorem}[Main Theorem 1]
  \label{thm:deg4-intro}
  The canonical degree $4$ SoS relaxation of the planted clique problem (\pref{eq:prog-formulation}) has an integrality gap of at least $\tilde{O}(\sqrt{n})$ with high probability.\footnote{Throughout this paper, we use $\tilde{O}$ to hide polylogarithmic factors in $n$}.
\end{theorem}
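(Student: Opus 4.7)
The plan is to construct a degree-$4$ pseudo-expectation $\tE$ on $\R[x_1,\ldots,x_n]_{\leq 4}$ that satisfies the canonical SoS relaxation for $G\sim G(n,1/2)$ (with no planted clique) and achieves $\sum_i \tE[x_i] = \omega = \tOmega(\sqrt n)$; since the true max clique is $O(\log n)$ w.h.p., this yields the claimed $\tO(\sqrt n)$ integrality gap. The construction starts from the Feige--Krauthgamer/MPW/DM certificate
\[
\tE_{\mathrm{FK}}[x_S] \;=\; (\omega/n)^{|S|} \cdot \Ind[S \text{ is a clique in } G] \qquad (|S|\leq 4),
\]
and adds an explicit correction, $\tE[x_S] = \tE_{\mathrm{FK}}[x_S] + C_S(G)$, where $C_S$ is a low-degree polynomial in the edge indicators of $G$, supported on $4$-cliques, and identically zero for $|S|\leq 2$. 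Because the objective, the Boolean constraints $\tE[(x_i^2 - x_i)q]=0$, and the non-edge constraints $\tE[x_ix_j q]=0$ only see values of $\tE$ at sets of size $\leq 2$, corrections of this shape preserve these constraints automatically. The entire task then reduces to choosing $C$ so that the moment matrix $M$ with $M[S,T]=\tE[x_{S\cup T}]$, indexed by $|S|,|T|\leq 2$, is PSD with high probability.

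The first step is to obtain an explicit block diagonalization of the expected moment matrix $\mathbb{E}_G[M_{\mathrm{FK}}]$ by exploiting the $S_n$-symmetry of the problem. Following a Grigoriev-inspired representation-theoretic decomposition (developed in the paper for the general-$d$ lower bound), $M_{\mathrm{FK}}$ splits according to orbits of pairs $(S,T)$ --- equivalently, by the isomorphism type of $G[S\cup T]$ --- and further decomposes within each orbit into $S_n$-isotypic components. The eigenvalues of $\mathbb{E}_G[M_{\mathrm{FK}}]$ can then be computed in closed form as low-degree polynomials in $\omega/n$ and $n$, and the single offending eigendirection --- the one whose eigenvalue becomes negative for $\omega \ll n^{1/3}$, which is exactly the barrier pointed out by Kelner and matched by the Deshpande--Montanari analysis --- is isolated. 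The correction $C$ is then designed to shift precisely this eigenvalue back to nonnegativity; the requirement of $S_n$-invariance combined with ``kill the bad direction'' essentially pins down $C$ up to scale.

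The final and most technical step is to prove that $M = M_{\mathrm{FK}} + M_C$ is PSD with high probability over $G$ when $\omega = \tOmega(\sqrt n)$. By construction, $\mathbb{E}_G[M]$ is PSD in every block; it remains to bound the spectral norm of the random fluctuation $M - \mathbb{E}_G[M]$ in each block. The entries of this fluctuation are dependent low-degree polynomials in the edge indicators of $G$, and we analyze them via trace-moment estimates on ``graph matrices,'' i.e., matrices whose entries are sums over labeled copies of small graphs in $G$; each trace reduces to a signed sum over combinatorial shapes whose contribution is controlled by counting arguments. The main obstacle, and the reason the correction must be designed so carefully, is that a naive correction which kills the bad eigendirection in expectation could amplify the random fluctuation in some \emph{other} block and reintroduce a negative eigenvalue there with high probability; avoiding this simultaneously for every block is what forces the precise combinatorial form of $C$ and, in turn, pushes the integrality gap from $n^{1/3}$ up to $\tilde\Theta(\sqrt n)$.
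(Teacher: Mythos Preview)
Your outline---MPW moments plus a correction at degree $4$, then PSD via block decomposition and graph-matrix trace bounds---matches the paper's at the highest level, but the mechanism you describe for the correction is wrong in a way that would block the argument.

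The failure of the simple moments at $\omega\gg n^{1/3}$ (you wrote $\ll$; presumably a slip) is \emph{not} an eigenvalue of $\E_G[M_{\mathrm{FK}}]$ going negative. That matrix is set-symmetric with Johnson-scheme eigenspaces $V_0,V_1,V_2$ and eigenvalues $\lambda_j\approx n^2\omega^{4-j}$, all strictly positive for every $\omega$. The obstruction lives entirely in the random deviation $L$: in its Fourier decomposition over bipartite patterns $B$, the two ``one-sided star'' pieces (those $B$ with exactly one left vertex of degree $0$) have rows proportional to $r_s^{\otimes 2}$, spectral norm $\approx\omega^4 n^{3/2}$, and kernel containing $V_2$ on only one side, so they produce a $V_1$--$V_2$ cross-term exceeding $\sqrt{\lambda_1\lambda_2}\approx n^2\omega^{5/2}$ once $\omega\gg n^{1/3}$. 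An $S_n$-invariant correction---necessarily a function of $|S|$ alone, hence just a rescaling of $\alpha_4$---changes none of this and therefore cannot be what ``kills the bad direction.'' The paper's correction is instead explicitly \emph{graph-dependent}, $\cL[x_S]=\gamma(\omega/n)^5\sum_s\prod_{j\in S}r_s(j)$ on $4$-cliques $S$, which (up to the clique-zeroing error term that must itself be controlled) adds $\gamma(\omega/n)^5 C_4\sum_s(r_s^{\otimes 2})(r_s^{\otimes 2})^\top$ to the moment matrix. This does not ``shift an eigenvalue of $\E[M]$'': it carves out of $V_2$ a new subspace $W_{1.5}=\Span\{\Pi_{V_2}r_s^{\otimes 2}\}$ with boosted eigenvalue $\approx\gamma\omega^5 n$, chosen precisely so that the bad Fourier pieces now annihilate the residual $W_2=V_2\ominus W_{1.5}$. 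The cross-term to $W_2$ then drops to $\approx\omega^3 n^{3/2}$, and all block bounds close at $\omega\approx\sqrt n$.

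Separately, the theorem is stated for the feasibility program \pref{eq:prog-formulation}, which includes the constraint $\{\sum_i x_i=\omega\}$. Your claim that a degree-$4$-only correction ``preserves these constraints automatically'' is false for that one: $\{\sum_i x_i=\omega\}$ at degree $4$ forces $\sum_\ell\tE[x_ix_jx_kx_\ell]=\omega\,\tE[x_ix_jx_k]$, so changing degree-$4$ values while freezing degree-$3$ breaks it unless $\sum_\ell C_{\{i,j,k,\ell\}}=0$ identically, which no useful correction satisfies. The paper instead redefines the lower-degree moments by downward propagation $\tE[x_S]=(\omega'-|S|)^{-1}\sum_{\ell\notin S}\tE[x_{S\cup\ell}]$, absorbing the mismatch into a small shift $\omega\to\omega'$, and then spends a genuine lemma (the $\Err$ analysis in \pref{sec:deg4-lowdeg-cleanup}) showing that the induced perturbation of the low-degree block is spectrally negligible. (Also, your assertion that the Boolean and non-edge constraints ``only see values of $\tE$ at sets of size $\le 2$'' is literally false---e.g.\ $\tE[x_ix_j\cdot x_kx_\ell]$ is degree $4$---though those two constraints \emph{are} preserved by any clique-supported multilinear correction, just not for the reason you state.)
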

A similar result was obtained in an independent work by Raghavendra and Schramm \cite{RS15}. In our second main result, we give a tight analysis of the certificate considered by \cite{MPW15} and \cite{DM15} and show that it yields a lower bound of $n^{\frac{1}{d +1}}$.

\begin{theorem}[Main Theorem 2]
  \label{thm:degd-intro}
 For every $d = o(\sqrt{\log{(n)}})$, the canonical degree $2d$ SoS relaxation for the planted clique problem (\pref{eq:prog-formulation}) has an integrality gap of at least $\tilde{O}(n^{\frac{1}{d+1}})$. 
\end{theorem}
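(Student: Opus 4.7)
\textbf{Proof plan for \pref{thm:degd-intro}.}
The plan is to exhibit, for $\omega \leq \tO(n^{1/(d+1)})$, a degree-$2d$ pseudo-expectation $\tE$ on $x_1,\dots,x_n$ that satisfies all constraints of the planted-clique SoS program with high probability over $G \sim G(n,\half)$. Following \cite{MPW15,DM15,DBLP:journals/siamcomp/FeigeK03}, I use the candidate
\[
  \tE\brac{x^S} \;=\; \Paren{\tfrac{\omega}{n}}^{|S|} \cdot 2^{\binom{|S|}{2}} \cdot \1\brac{S \text{ is a clique in } G}
\]
for $|S|\leq 2d$, with $\tE[1]=1$. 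The equality constraints $x_i^2=x_i$ and $x_ix_j=0$ for non-edges hold by construction, and the linear constraint $\sum_i x_i = \omega$ can be enforced by the same affine projection used in \cite{MPW15} at negligible cost in the eigenvalues. The entire burden of the proof is therefore to show that the moment matrix $M \in \R^{\binom{[n]}{\leq d}\times\binom{[n]}{\leq d}}$ with entries $M_{I,J} = \tE[x^{I\cup J}]$ is positive semidefinite with high probability over $G$.

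The core idea is a fine spectral analysis of $M$ via a decomposition that diagonalizes $\E_G M$. Since $\E_G M$ is invariant under the natural $S_n$-action on rows and columns, Schur's lemma implies it block-diagonalizes over isotypic components of $\R^{\binom{[n]}{\leq d}}$ under $S_n$. I will construct an explicit orthonormal basis for these components, in the style of Grigoriev's \cite{Gri01} analysis of set-symmetric matrices arising in random $3$-XOR. A direct combinatorial calculation then identifies the spectrum of $\E_G M$ block by block: up to polylogarithmic factors, the minimum eigenvalue in the ``$k$-th'' block is of order $(\omega/n)^{k}$ times an explicit combinatorial factor, and in particular is $\Omega(1)$ in the top block.

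Next I control the fluctuation $\Delta = M - \E_G M$ on the same block decomposition. The entries of $\Delta$ are strongly dependent because they share underlying edge variables through the centered clique indicators $\1[S\text{ is a clique}] - 2^{-\binom{|S|}{2}}$, which are products of independent centered $\pmo$ edge variables. I will bound $\E_G \Tr(\Delta^{2k})$ restricted to each block by a trace-moment / graphical argument: enumerate closed walks on pairs of index sets, group contributions by the underlying edge-multigraph they touch, and use the product structure of the centered clique indicators to kill all walks that leave any edge odd. Careful enumeration, following and extending the combinatorial tools of \cite{MPW15}, yields fluctuation bounds of order $\tO\paren{(\omega\cdot n^{-1/(d+1)})^{k}}$ in the $k$-th block.

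The main obstacle will be obtaining the sharp exponent $1/(d+1)$ in the third step: the weaker bound $n^{1/(2d)}$ of \cite{MPW15} corresponds to bounding every walk class crudely, and closing the gap requires identifying precisely which edge-multigraphs dominate the trace in each block and showing that the remaining classes cancel or collapse thanks to the set-symmetric structure. This is why the Grigoriev-inspired basis must be chosen \emph{before} the trace bound: it aligns the dependent contributions with the blocks so that they add constructively only along the (small) leading diagrams. A secondary technicality is uniformity in $d$: constants from enumerating walks grow as $2^{O(d^2)}$, which is absorbed by the hypothesis $d=o(\sqrt{\log n})$. Combining the blockwise eigenvalue bound for $\E_G M$ with the matching fluctuation bound for $\Delta$ gives $M \sge 0$ whenever $\omega \leq \tO(n^{1/(d+1)})$, completing the proof.
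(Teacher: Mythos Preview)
Your high-level architecture matches the paper: use the MPW pseudo-expectation, reduce to PSD-ness of the moment matrix, block-diagonalize the expectation $E=\E_G M$ over the Johnson-scheme eigenspaces $V_0,\dots,V_d$ via the $S_n$-action, and then control the fluctuation block by block. The eigenvalue computation for $E$ is as you describe and indeed gives $\lambda_j(E)\approx 2^{-O(d^2)}n^d\omega^{2d-j}$.

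The gap is in the fluctuation step, and it is the heart of the improvement from $n^{1/(2d)}$ to $n^{1/(d+1)}$. You propose to bound $\E_G\Tr\bigl((\Pi_i D\Pi_j)^{2\ell}\bigr)$ directly and hope that walk enumeration in the chosen basis sharpens the count. But the projectors $\Pi_i$ are themselves complicated $S_n$-invariant objects, and there is no clean way to push them through a trace-moment expansion of the dependent entries. The paper does something structurally different: it first writes the local part $L$ of the fluctuation in the Fourier basis of the underlying edge variables, i.e.\ $L=\sum_{B}L^B$ where $B$ ranges over bipartite graphs on $[d]\times[d]$ and $L^B(I,J)$ is the monomial $\prod_{b\in \zeta_{I,J}(B)}g_b$ (suitably scaled). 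Two separate lemmas then do the work. A norm lemma (via trace moments on each \emph{individual} $L^B$, which has simple monomial entries) gives $\|L^B\|\lesssim n^{d-c/2}$ where $c$ is the maximum matching in $B$. A kernel lemma, proved by elementary representation theory (your ``Grigoriev-style'' input, but applied to each $L^B$ rather than to $D$ globally), shows that after left/right symmetrization $\Pi_j L^B=0$ for $j>|B_\ell|$ and $L^B\Pi_i=0$ for $i>|B_r|$. The point is that the pieces with \emph{large} norm are exactly the $B$'s with small matching, hence few non-isolated vertices, hence they automatically annihilate the eigenspaces $V_j$ with \emph{small} eigenvalue. This matching of ``large norm $\Leftrightarrow$ large kernel'' is what produces $n^{1/(d+1)}$; it does not fall out of a blockwise trace bound on $D$ itself.

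Two smaller points. First, the candidate you wrote, $\tE[x^S]=(\omega/n)^{|S|}2^{\binom{|S|}{2}}\1[S\text{ clique}]$, is not the MPW operator; the paper uses $\tE[x_I]\propto \deg_G(I)$, which has \emph{global} dependence through the number of $2d$-cliques containing $I$. This forces a further split $D=L+\Delta$, where $\Delta$ carries the $\deg_G$-dependence and requires its own decomposition (into per-vertex contributions $e^i_{s,K}$) and its own kernel analysis---this is not ``negligible cost'' and is handled separately in the paper. Second, your claimed block bound ``$\tO((\omega\, n^{-1/(d+1)})^k)$'' does not match the actual structure: the relevant comparison is $\|\Pi_i L\Pi_j\|\lesssim \omega^{2d}n^{d-1}$ for $i,j\geq 2$ versus $\|\Pi_i L\Pi_j\|\lesssim \omega^{2d}n^{d-1/2}$ otherwise, checked against $\sqrt{\lambda_i\lambda_j}$; there is no single geometric-series exponent.
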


The certificate of \cite{MPW15,DM15} is sufficient to show an $\Omega(\sqrt{n/2^d})$ lower bound for the degree $d$ $LS+$ hierarchy~\cite{DBLP:journals/siamcomp/FeigeK03} (which is a weaker SDP that also runs in time $n^{O(d)}$).
However, a generalization of an argument of Kelner (see \pref{sec:kelner}) shows that this is \emph{not} the case for the SoS hierarchy, and our analysis of this certificate is tight.
Hence our work shows that to get stronger lower bounds for higher degree SOS it is necessary and sufficient to utilize more complicated constructions of certificates than those used for weaker hierarchies. Whether this additional complexity results in an asymptotically better tradeoff between the running time and clique size remains a tantalizing open question. 
\section{Technical Overview}
\newcommand{\J}{\mathcal{J}}
The SoS semidefinite programming hierarchy yields a convex programming relaxation for the planted clique problem. That is, we derive from the input graph $G$ a convex program $\cP_G$ such that if
the graph had a clique of size $\omega$ then $\cP_G$ is feasible. To show that the program \emph{fails} to solve the planted clique problem with parameter $\omega$, we show that with high probability there is a solution (known as a \emph{certificate}) for the program $\cP_G$ even when $G$ is a random graph from $G(n,1/2)$ (which in particular will not have a clique of size $\gg \log n$).

The solution to degree $d$ hierarchies can be thought of as a vector $X \in \R^{n^d}$. For \emph{linear programming} hierarchies this vector needs to satisfy various linear constraints, while for \emph{semi-definite programming} languages it also needs to satisfy constraints of the form $M \succeq 0$ where $M$ is a matrix where each entry is a linear function of $X$. In previous SoS lower bounds for problems such as random 3XOR/3SAT, Knapsack, and random constraint-satisfaction problems~\cite{Gri01,Schoenebeck08,BCK15}, the certificate $X$ was obtained in a fairly natural way, and the bulk of the work was in the analysis. In fact, in all those cases the certificate used in the SoS lower bounds was the same one that was used before for obtaining lower bounds for weaker hierarchies~\cite{DBLP:journals/siamcomp/FeigeK03}. The same holds for the previous works for the planted clique problem, where the works of \cite{MPW15,DM15} used a natural certificate which is a close variant of the certificate used by Feige and Krauthgamer~\cite{DBLP:journals/siamcomp/FeigeK03} for LS+ lower bounds, and showed that it satisfies the stronger conditions of the SoS program.

It is known that such an approach \emph{cannot} work to obtain a $\approx \sqrt{n}$ lower bound for the SoS program of degree $4$ and higher. That is, this natural certificate does \emph{not} satisfy the conditions of the SoS program. Hence  to obtain our lower bound for degree $4$ SoS we need to consider a more complicated certificate, that can be thought of as making a global ``correction'' to the simple certificate of \cite{MPW15,DM15}.  For higher degrees, we have not yet been able to analyze the corresponding complex certificate, but we are able to give a tight analysis of the simple certificate, showing that it certifies an $\omega \approx n^{1/(d + 1)}$ lower bound on degree $2d$ SoS relaxation. The key technical difficulty in both our work and prior ones is analyzing the spectrum of random matrices that have \emph{dependent} entries. Deshpande and Montanari~\cite{DM15} achieved such an analysis by a tour-de-force of case analysis specifically tailored for the degree $4$ case. However, the complexity of this argument makes it unwieldy to extend to either the case of the more complex certificate or the case of analyzing the simple certificate at higher degrees. Thus, key to our analysis is a more principled representation-theoretic approach, inspired by Grigoriev~\cite{Gri01}, to analyzing the spectrum of these kind of matrices. We hope this approach would be of use in further results for both the planted clique and other problems.

We now give an informal overview of the SoS program for planted clique, the \cite{MPW15,DM15} certificate, our correction to it, and our analysis. See \pref{sec:prelims} and \cite{BS14} for details.
\subsection{The SoS program for \MaxC}

Let $G = G([n],E)$ be any graph with the vertex set $[n]$ and edge set $E$. The following polynomial equations ensure that any assignment $x\in \R^n$ must be the characteristic vector of an $\omega$-sized clique in $G$:
\begin{align}
   x_i^2 = x_i & \text{  for all } i \in [n] \notag \\
  x_i \cdot x_j = 0 & \text{ for all } \{ i, j\} \notin E \notag \\
  \sum_{i = 1}^n x_i = & \omega\mper \label{eq:prog-formulation}
\end{align}
There is a related formulation (which we refer to as the ``optimization version") where the constraint $\sum_{i=1}^n x_i = \omega$ is not present and instead we have an objective function $ \sum_{i = 1}^n x_i$ to maximize. This latter formulation is used by \cite{DM15} in their work. It is also the program of focus in the work of Raghavendra and Schramm  \cite{RS15} who independently of us, show an almost optimal lower bound for the planted clique problem for the case of degree $4$ SoS relaxation. A point feasible for $\pref{eq:prog-formulation}$ is easily seen to be feasible for the optimization version with value $\omega$ and hence using the variant \pref{eq:prog-formulation} only makes our results stronger. It is unclear, however, whether an explicit constraint of $\sum_{i = 1}^n x_i = \omega$ adds more refutation power to the program. \footnote{The reason, as we describe when discussing pseudoexpectations, is that adding $\{p = 0\}$ as a constraint ensures that $\tE[qp] = 0$ for every $\deg(q) \leq d-\deg(p)$ in addition to $\tE[p] = 0$.}

The degree $d$ SoS hierarchy optimizes over an object called as degree-$d$ \emph{pseudo-expectation} or \emph{pseudo-distribution}. A \emph{degree-$d$ pseudo-expectation operator} for (\ref{eq:prog-formulation}) is a linear operator $\tE$ that behaves to some extent as the expectation operator for some distribution over $x\in\R^n$ that satisfies the conditions (\ref{eq:prog-formulation}). For example this operator will satisfy that $\tE \sum_{i=1}^n x_i = \omega$, $\tE x_i^2 = \tE x_i$, etc.. More formally, $\tE$ is a linear operator mapping every polynomial $P$ of degree at most $d$ into a number $\tE P$ such that $\tE 1 =1$, $\tE P^2 \geq 0$ for every $P$ of degree at most $d/2$, $\tE PQ = 0$ for every $Q$ of degree at most $d- \deg P$ and $P$ such that the constraint $\{P=0 \}$ appears in (\ref{eq:prog-formulation}).  Note that since the dimension of the set of $n$-variate polynomials of degree at most $d$ is at most $n^d$, the operator $\tE$ can be described as a vector in $\R^d$. Moreover, the constraints on $\tE$ can be captured by a semidefinite program, and this semidefinite program is in fact the SoS program. See \pref{sec:prelims}, the survey \cite{BS14} or the lecture notes \cite{Bar14} for more on the SoS hierarchy.

\subsection{The ``Simple Moments''}

To show a lower bound of $\omega$, we need to show that for a random graph $G$, we can find a degree $d$ pseudo-expectation operator that satisfies (\ref{eq:prog-formulation}). Both the papers \cite{MPW15} and \cite{DM15} utilize essentially the same operator, which we call here the ``simple moments''. It is arguably the most straightforward way to satisfy the conditions of (\ref{eq:prog-formulation}), and the bulk of the work is then in showing the positivity constraint that $\tE P^2 \geq 0$ for every $P$ of degree $\leq 2$ (in the degree $4$ case). \cite{DM15} shows that this will hold as long as $\omega \ll n^{1/3}$ and an argument of Kelner (see \pref{sec:Kelner-poly} below) shows that this is tight and in fact these simple moments fail to satisfy the positivity conditions for $\omega \gg n^{1/3}$.

To define a degree $d$ pseudo-expectation operator $\tE$, we need to choose some basis $\{ P_1 , \ldots, P_N \}$ for the set of polynomials of degree at most $d$ and define $\tE P_i$ for every $i$. The simplest basis is simply the monomial basis. Moreover, since our pseudo-expectation satisfies the constraints $\{ x_i^2=x_i \}$, we can restrict attention to \emph{multilinear} monomials, of the form $x_S = \prod_{i\in S} x_i$ for some $S \subseteq [n]$. Note also that the constraints $x_ix_j = 0$ for $\{i,j\}\notin E$ imply that we must define $\tE x_S = 0$ for every $S$ that is not a clique in $G$. Indeed, the pseudo-distribution $\{ x \}$ is supposed to mimic an actual distribution over the characteristic vectors of $\omega$-sized cliques in $G$, and note that in any such distribution it would hold that $\tE x_S =0$ when $S$ is not a clique.

The simplest form of such a pseudo-distribution is to set 
\[
\tE x_S = \begin{cases}
          0 & \text{$S$ is not a clique} \\
          \alpha_{|S|} & \text{otherwise} 
          \end{cases}
\]

where $\alpha_{|S|}$ is a constant depending only on the size of $S$. We can compute the value $\alpha_{|S|}$ by noting that we need to satisfy $\tE (\sum_i x_i)^\ell = \sum_{i_1,\ldots,i_\ell} \tE x_{i_1}\cdots x_{i_\ell} = \omega^\ell$ for every $\ell=1,\ldots, d$. Since there would be about
$\binom{n}{\ell}2^{-\binom{\ell}{2}}$ $\ell$-sized cliques in the graph $G$, the value $\alpha_\ell$ will be $\approx \left(\tfrac{\omega}{n}\right)^\ell$.\footnote{One actually needs to make some minor modifications to these moments to ensure they satisfy exactly the constraint $\sum x_i = \omega$ as is done in \cite{MPW15} and in our technical section. However, these corrections have very small magnitudes and so all the observations below apply equally well to the modified moments, and so we ignore this issue in this informal overview. }

This pseudo-distribution is essentially the same one used by Feige and Krauthgamer~\cite{DBLP:journals/siamcomp/FeigeK03} for LS+, where they were shown to be valid for the constraints of this problem as long as $\omega < \sqrt{n/2^{d+1}}$. Initially Meka and Wigderson conjectured that a similar bound holds for the SoS program, or in other words, that the $\binom{n}{\leq d/2} \times \binom{n}{\leq d/2}$ matrix $M$ where $M_{S,T} = \tE x_Sx_T$ for every $S,T \subseteq [n]$ of size $\leq d/2$ is positive semidefinite as long as $\omega \ll \sqrt{n}$. The Meka-Wigderson conjecture would have held if the off-diagonal part of $M$, which is a random matrix with \emph{dependent} entries, would have a spectral norm comparable with an \emph{independent} random matrix with entries of a similar magnitude. However, this turns out to fail in quite a strong way. An argument due to Jonathan Kelner, described in \cite{Bar14}, shows that (for $d=4$) the matrix $M$ is \emph{not} positive semidefinite as long as $\omega \gg n^{1/3}$. We review this argument below, as it is instructive for our correction.

\subsection{There is such a thing as too simple} \label{sec:Kelner-poly}

In the simple moments, every $4$-clique $S$ gets the same pseudo-expectation. In some sense these moments turn out to be ``too random'' in that they fail to account for some structure that the graph possesses. Specifically, for $i\in [n]$, consider the linear function $r_i(x) = \sum_j r_{i,j} x_j$ where $r_{i,j}$ equals $+1$ when $\{i,j\} \in E$, equals $-1$ when $\{i,j\}\notin E$, and equals $0$ when $i=j$. Now, consider the polynomial $P(x) = \sum_{i=1}^n r_i(x)^4$. For every $x$ that is the characteristic vector of an $\omega$-clique in $G$,  $P(x) \geq \omega(\omega-1)^4 \geq \omega^5/2$; 
indeed for every $i$ in the clique, $r_i(x)^4$ would equal $(\omega-1)^4$. On the other hand, for every $i$, let us consider the \emph{expectation} of $\tE r_i(x)^4$ taken over the choice of the random graph $G$. Note that in a random graph the  $r_{i,j}$'s are i.i.d. $\pm 1$ random variables, and hence
\begin{equation}
\E \tE r_i(x)^4 = \E \tE (\sum_j r_{i,j} x_j )^4 = \sum_{j_1,j_2,j_3,j_4 \neq i} \E r_{i,j_1} r_{i,j_2} r_{i,j_3} r_{i,j_4} \tE x_{j_1}x_{j_2}x_{j_3}x_{j_4} \;. \label{eq:rijs}
\end{equation}

Let us group the terms on the RHS of (\ref{eq:rijs}) based on the number of distinct $j_k$'s. There are $O(n^2)$ terms corresponding to two distinct $j_k$'s, each of them is multiplied by $\alpha_2 \approx \left(\tfrac{\omega}{n}\right)^2$ and so they contribute a total of $C \omega^2$ to the expectation for some constant $C$. In the terms corresponding to three or four $j_k$'s, there is always one variable $r_{i,j}$ that is not squared, and hence their contribution to the expectation is zero. There are $O(n)$ terms corresponding to a single $j_k$, each multiplied by $\tfrac{\omega}{n}$ and so their total contribution is at most $\omega$. We get that in expectation $\tE r_i(x)^4 \leq C \omega^2$ for some constant $C$ and by Markov this holds with high probability as well (for some different choice of the constant). 

The conclusion is that while for every $\omega$-sized clique $x$, $P(x) \geq \omega^5/2$, the simple moments satisfy that $\tE P(X) \leq C n \omega^2$.
When $\omega \gg n^{1/3}$ this yields a strong discrepancy between the value the simple moments give $P$ and the value that they should have given, had they corresponded to an actual distribution on $\omega$-sized cliques. This discrepancy can be massaged into a degree $2$ polynomial $Q$ such that $\tE Q^2 < 0$ for the simple moments when $\omega \gg n^{1/3}$, thus showing that in this case these moments do not satisfy the SoS program.

\subsection{Fixing the simple moments}

Our fix for the simple moments is directly motivated by the example above. We want to ensure that the polynomial $P$ will get a pseudo-expectation of $\approx \omega^5$, and that in fact for every $i$ $\tE r_i(x)^4$ will be roughly $\omega^5/n$.  The idea is to break the symmetry between different equal-sized cliques and give a significantly higher pseudo-expectation to cliques that are somewhat over-represented by these polynomials. Specifically for every set $S$, define $r_S = \sum_i \prod_{j\in S} r_{i,j}$. Note that $r_S$ is a sum of $n$  entries in $\{ \pm 1 \}$, and in a random graph it behaves roughly like a normal variable with mean $0$ and variance $n$. Roughly speaking, the corrected moments will set 
\[
\tE [x_S] = \alpha_{|S|} (1+r_S \omega / n)
\]
for every clique $S$. Note that when $\omega = \epsilon \sqrt{n}$, the correction factor would typically be of the form $1 \pm \Theta(\epsilon)$.\footnote{While it might seem that there is a chance for these pseudo-expectations to be negative, if $\omega < \sqrt{n}/\polylog(n)$ then it is exceedingly unlikely that there will exists an $S$ such that $|r_S|>n/\omega$, and so we ignore this issue in this overview.}

Computing the pseudo-expectation $\tE r_i(x)^4$ under the new moments we again get the expression
\[
\sum_{j_1,j_2,j_3,j_4} \E r_{\{ j_1,\ldots,j_4 \}} \tE x_{\{j_1,\ldots,j_4\}} \;.
\]
If we now focus on the contribution of the $n^4$ terms where the $j_k$'s are all distinct, we see that each such set $S$ yields the term
\[
\E r_S^2  \alpha_4 \omega / n \;.
\]
Since $\alpha_4 = C \omega^4/n^4$ we get that $\tE r_i(x)^4 = C \omega^5 / n$ as desired.

\subsection{Analyzing the corrected moments}

The above gives some intuition why the corrected moments might be better than the simple moments for one set of polynomials. But a priori it is not at all clear that those polynomials encapsulated all the issues with the simple moments. Moreover, it is also unclear whether or not the correction itself could introduce additional issues, and create new types of negative eigenvectors. Ruling out these two possibilities is the crux of our analysis. 


Here we discuss some key points from our analysis of $\tE$.
Since \cite{DM15} carried out a thorough analysis of the degree-4 simple moments, we begin by reviewing their approach.

\paragraph{Approach of \cite{DM15}}
The PSDness of $\tE$ reduces to proving PSDness of a related matrix $\M \in \R^{\nchoose{2} \times \nchoose{2}}, \M(S,T) = \pE x^S x^T$.
The eigendecomposition of $E = \E_G[\M]$ has three eigenspaces $V_0, V_1, V_2$ and eigenvalues $\lambda_0 \approx \omega^4 / n^2$ on $V_0$, $\lambda_1 \approx \omega^3 / n^2$ on $V_1$, and $\lambda_2 \approx \omega^2 / n^2$ on $V_2$. 
Next, write $\M$ as a block matrix with blocks $\M_{ij} = \Pi_{V_i} \M \Pi_{V_j}$ where $\Pi_V$ projects to the subspace $V$.
On the diagonal blocks, $E$ contributes large positive eigenvalues.
If the on-diagonal blocks $M_{ii} \succeq \lambda_i I$ for some $\lambda_i$'s so that $\|M_{ij}\| \ll \sqrt{\lambda_i \lambda_j}$, then $\M$ will be PSD.
\begin{figure}[h]
\begin{center}
\includegraphics[scale=0.5]{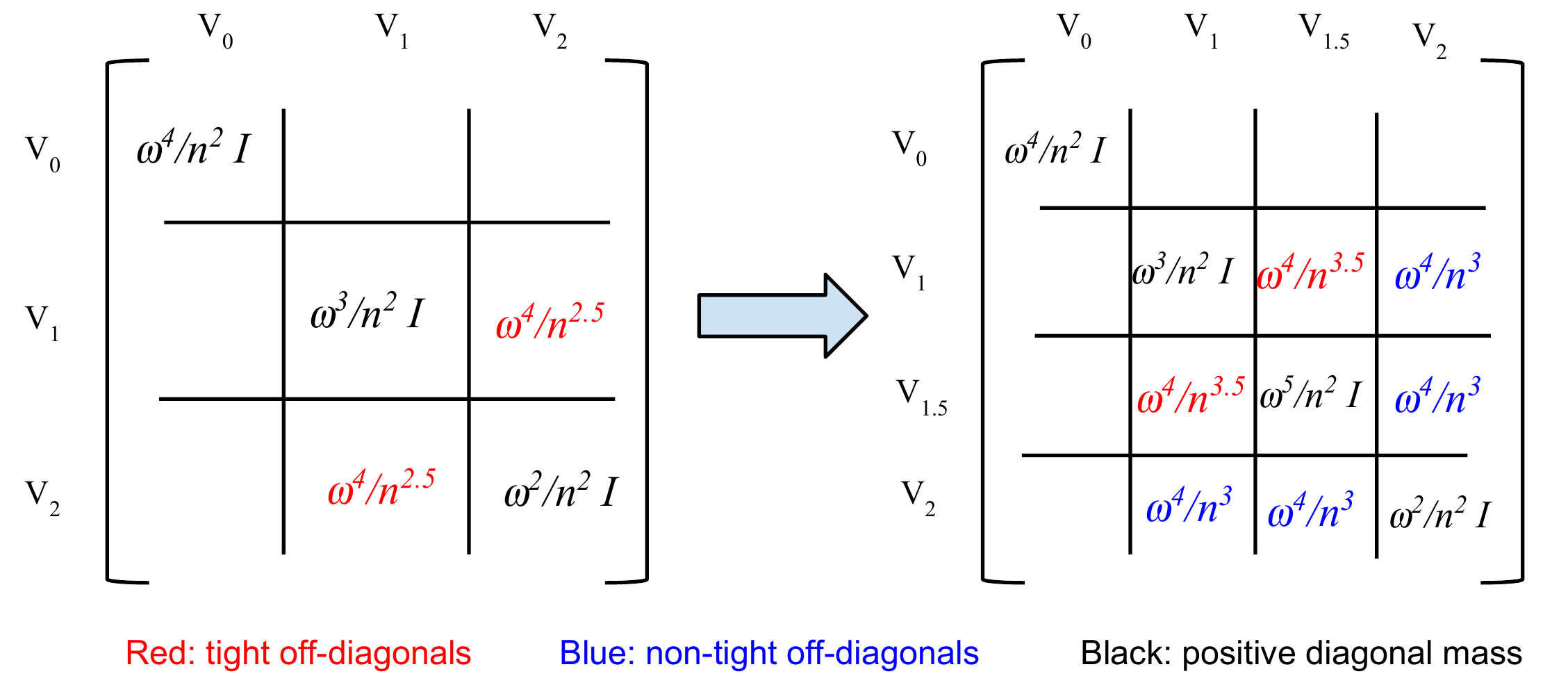}
\end{center}
\caption{The block matrix / subspace decomposition view, before and after the correction. Uninteresting entries left empty.}
\end{figure}

Because of the dependencies in the random matrix $\cM$, the deviation from expectation varies according to the eigenspace.
Thus, in \cite{DM15}, the deviation from the expectation is analyzed by first decomposing along the eigenspaces $V_0,V_1,V_2$.

A second technical idea is required to carry out this decomposition.
Because of the symmetries present in the spaces $V_0, V_1, V_2$, this decomposition is \emph{very nearly the same} as splitting up the matrix $\cM$ in an ostensibly unrelated way.
Each entry $\cM(I,J)$ of $\cM$ for $I, J \in \nchoose{2}$ is the $0/1$ indicator for the presence of a clique on $I \cup J$.
This indicator is just the AND of all the $\pm 1$ indicators $g_b$ for the presence of the edges $b \in \sE(I \cup J)$.
Taking a Fourier decomposition of this suggests a way to decompose $\M$ as $\cM = \sum_{\text{subsets $S$ of edges on $4$ nodes}} \cM_S$,\footnote{This is not quite the whole picture, see \pref{sec:D-decomp}.} where the matrix $\M_S$ corresponds to the Fourier character $S$.
The matrices $\cM_S$ can be matched up to the subspaces $V_0, V_1, V_2$ in such a way that those matrices with \emph{larger spectral norm} (corresponding to larger deviations from expectation) have subspaces with \emph{smaller eigenvalues} in their kernels!

\paragraph{Pinpointing the failure of the simple moments}
The foregoing is missing one subtlety.
Some monomial matrices $\cM_S$ do not match nicely to a single subspace.
Instead they form cross terms: for example, having $V_2$ in the left kernel but not in the right kernel (not all these matrices need be symmetric).
In fact, it is just such a matrix which keeps the simple moments from remaining PSD beyond $\omega \approx n^{1/3}$.

For the four nodes $a_1,a_2,b_1,b_2$, consider the monomial $g_{a_1,b_1}g_{a_1,b_2}$ and the corresponding matrix $\cM_S(\{a_1,a_2\},\{b_1,b_2\}) \approx (\omega^4 / n^4) g_{a_1,b_1}g_{a_1,b_2}$.
The entries $\cM_S$ do not depend on $a_1$, and so there are many repeated rows, which creates a much larger spectral norm for $\cM_S$ than if it had independent entries.
\cite{DM15} prove the (tight) bound $\|\cM_S\| \approx \omega^4 / n^{7/2}$.
At the same time, it turns out only to have $V_2$ in its left kernel, not its right one.
Appealing to the above picture, in order to have $\omega^4 / n^{7/2} \ll \omega^{5/2} / n^2$, we must have $\omega \ll n^{1/3}$.

\paragraph{Analysis of the correction}
We make one further observation about the matrix $\cM_S'$ from the previous section: its rows are the tensor squares of the $\pm 1$ neighborhood indicator vectors $r_i$ from above.
Our fix to the simple moments, described above as adjusting individual pseudo-expectations, amounts roughly to adding to $\cM$ the matrix $(\omega^5 / n^5) \sum_i (r_i^{\tensor 2})(r_i^{\tensor 2})^\top$.
This carves out of $V_2$ (our worst subspace from an eigenvalue perspective) a new subspaces space $V_{1.5}$ with eigenvalues lower-bounded by $\lambda_{1.5} \approx \omega^5 /n \gg \omega^2 / n^2$.
Now instead of matching the bad matrix $\cM_S$ to $V_1$ and $V_2$ as a cross term, we can match it to $V_1$ and $V_{1.5}$ as a cross term.
Then we only need $\omega^4 / n^{7/2} \ll \sqrt{\lambda_1 \lambda_{1.5}} \approx \omega^4 / n^{7/2}$.
With some care in the details, the above picture can be made precise.

However, a crucial point is that the matrix $N = (\omega^5 / n^5) \sum_i (r_i^{\tensor 2})(r_i^{\tensor 2})^\top$ doesn't satisfy the clique constraints (in that all entries $I,J$ with $I \cup J$ not a clique should be $0$). A chunk of our proof goes into analyzing the discrepancy between the matrix $N$ and its zeroed out version. Our analysis here requires the use of new combinatorial tools (\pref{sec:spectralnorm}) combined with the trace moment method.

\paragraph{Symmetries of Eigenspaces and Tight Analysis of MPW Operator for Higher Degrees} As we have alluded to already, a key technical step in our proof is to show that certain Fourier-decomposed matrices of the form discussed above have some of the subspaces $V_0, V_1, V_2$ in their kernels.
In the analysis of simple moments for degree $4$, \cite{DM15} use explicit entries for canonical forms of eigenvectors in $V_0, V_1$ to accomplish this.
However this approach hits analytical roadblocks for the analysis in case of higher degrees.
Canonical forms for the eigenvectors are hard to pin down explicitly from the literature in algebraic combinatorics.

To mitigate this difficulty, we take a more principled approach to understand the eigenspaces $V_0,V_1, \ldots, V_d \subseteq \R^{\nchoose{d}}$ in terms of their symmetries.
Using techniques from basic representation theory of finite groups, we arrive at an explicit family of symmetries that express any vector in $V_i \subseteq \R^{\nchoose{d}}$ as an explicit linear transformation of some vector in $\R^{\nchoose{i}}$.
It also shows that any $v \in V_i \subseteq \R^{\nchoose{d}}$ has the form $\iprod{v,x^{\tensor d}}$ that's essentially the mutilinearization of $(\sum_j x_j)^{d-i} p(x)$ for some $p$.

A similar approach was utilized heavily by Grigoriev \cite{Gri01} to prove a sum of squares lower bound for the knapsack problem.
While for degree $4$ either explicit eigenvectors or our approach will work, although the latter takes some more elbow grease, ours is absolutely vital for our tight analysis of the MPW moments for the higher degrees.
We hope that such an approach will be useful for proving better (approaching $\omega \approx \sqrt n$) integrality gap for Sum of Squares relaxations of higher degree for Planted Clique and other related problems.

The analysis of the MPW operator at higher degrees also presents other new challenges that do not show up in the special case of degree $4$ analyzed in \cite{DM15}. \cite{DM15} deal with the optimization version of the degree $4$ SOS program which could be potentially weaker than the one we analyze here (and thus our lower bound is technically stronger). Working with the ``optimization'' version simplifies the analysis in \cite{DM15} a little bit as the matrix $\M$ has entries that only have local dependence on the graph $G$. We explicitly work with the feasibility version of the degree 4 SOS program and thus, must deal with the additional complexity of the entries of $\M$ having a global dependence. As in \cite{MPW15}, we deal with this situation by separating $\M$ into matrices $L$ and $\Delta$ such that $L$ has only local dependence on the graph $G$. \cite{MPW15} deal with $\Delta$ by a simple entrywise bound, however, employing such a bound yields no improvement over the bound proved in \cite{MPW15} for us. It turns out that we have to do a fine grained analysis of the $\Delta$ matrix itself by a decomposition for $\Delta$ such that each piece is essentially only locally dependent on the graph. Once we have such a decomposition, our ideas from the analysis of $L$ can be extended to that setting as well.

Finally, our argument for analyzing the spectral norms of each of the pieces of encountered in the decompositions also needs to be much more general than in case of \cite{DM15} to handle higher degrees. For this, we identify a simple combinatorial structure that controls the norm bounds and allows a general hammer for computing the norms of all the matrices that appear in this analysis. Our proofs here are based on the trace power method and build on the combinatorial techniques in \cite{DM15}.

\subsection{Preview of Technical Toolkit}
In this section we give a preview of the key lemmas that allow us to carry out the analyses described thus far.
We have simplified some issues for the sake of exposition; details may be found in \pref{sec:tools}.

We are concerned with the matrices in the aforementioned Fourier decomposition.
Let $B \subseteq [d] \times [d]$ be a bipartite graph on $2d$ vertices.
Let $Q_B$ be an $\nchoose{d} \times \nchoose{d}$ matrix with entries
\[
  Q_B(I,J) = (-1)^{\text{number of $B$-edges which are not $G$-edges when the left vertex set of $B$ is replaced by $I$ and the right one with $J$}}
\]
(We are ignoring what happens if $I \cap J \neq \emptyset$.)
\begin{figure}[h]
\begin{center}
\includegraphics[scale=0.5]{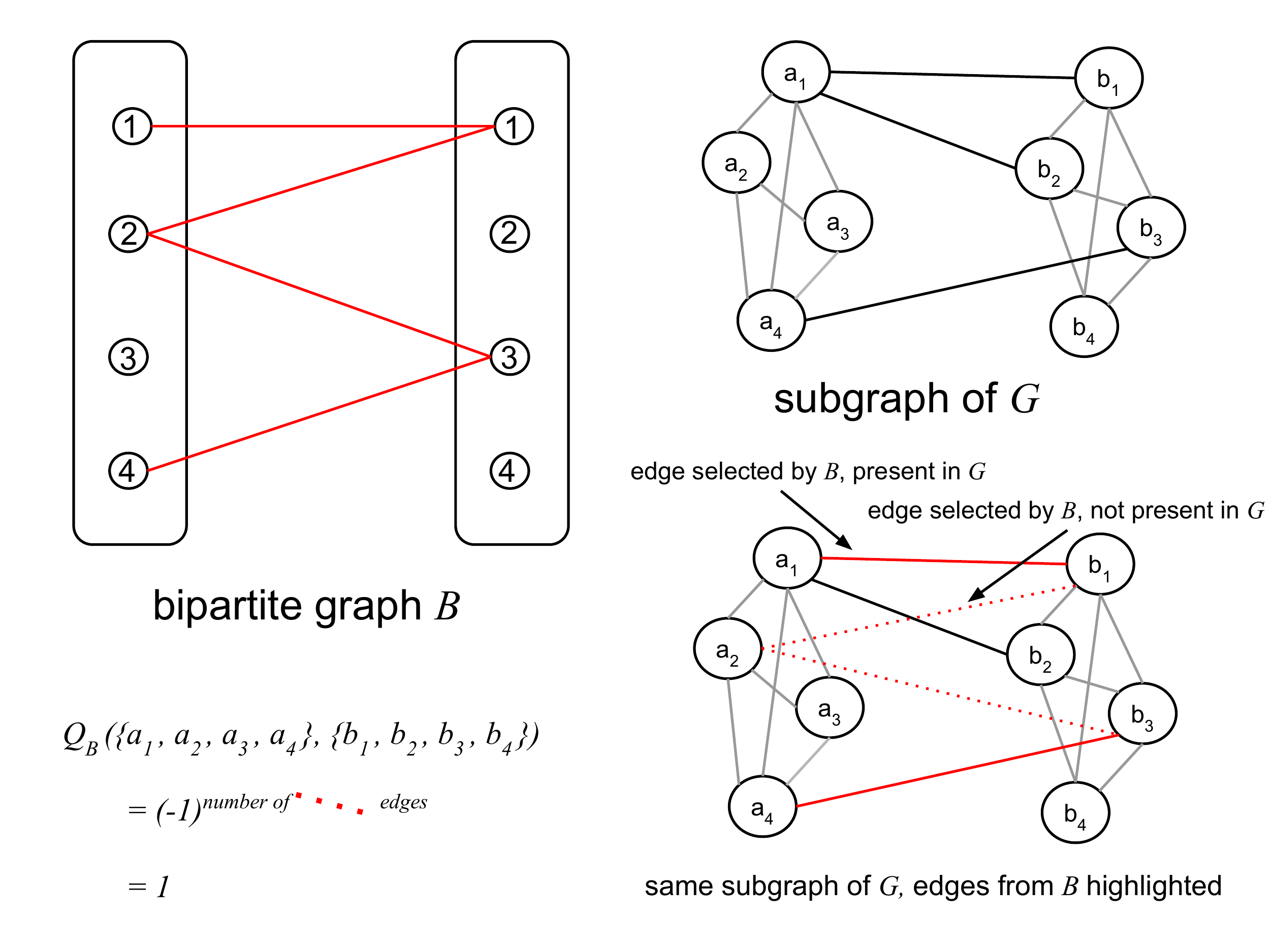}
\end{center}
\caption{Example $B$ and $Q_B$ where $f$ is parity of edges, $d = 4$. \pref{lem:kernelsymoperators-intro} says that $\Pi_4 Q_B = 0$ and $Q_Br \Pi_4 = Q_B \Pi_3 = 0$. \pref{lem:fourier-norm-mpw-disjoint-intro} says that $\|Q_B\|  \approx n^3$ with high probability when $G \sim G(n,1/2)$, since $B$ contains a $2$-matching.}
\end{figure}

This first lemma bounds the spectral norm of such a matrix in terms of the shape of $B$.
\begin{lemma}[Informal version of \pref{lem:fourier-norm-mpw-disjoint}\footnote{We use and prove only the cases $c = 1, c = 2$, but the general version follows from almost identical techniques.}]
  \label{lem:fourier-norm-mpw-disjoint-intro}
  Let $c$ be the number of edges in the maximum matching in $B$.
  With high probability, $\|Q_B\| = \tO(n^{d - c/2})$.
\end{lemma}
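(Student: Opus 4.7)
The plan is to prove this via the trace moment method, following the style of the random-matrix analyses in \cite{DM15} but adapted to handle an arbitrary bipartite pattern $B$. I will first expand
\[
\E\,\Tr\!\bigl((Q_B Q_B^\top)^k\bigr) \;=\; \sum_{I_1,J_1,\ldots,I_k,J_k \in \nchoose{d}} \E \prod_{i=1}^k Q_B(I_i,J_i)\, Q_B(I_{i+1},J_i)
\]
(indices mod $k$), and then substitute in the definition of $Q_B$ so that each factor $Q_B(I,J)$ becomes a product over the edges of $B$ of Fourier characters $g_{I_{\ell_1} J_{\ell_2}}$, one for each $B$-edge $(\ell_1,\ell_2)$. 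This turns the sum into a sum over tuples $(I_i,J_i)$ of products of $\pm 1$ independent edge-indicators in $G$.

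Next I would take the expectation entrywise. Since the $g_e$ are i.i.d.\ uniform over $\{\pm 1\}$, only those tuples survive for which the multiset of edges of $K_n$ induced by placing $B$ at each $(I_i,J_i)$ pair is \emph{even}, i.e.\ every edge used appears an even number of times. The count of such ``balanced'' tuples is what must be bounded. In isolation, a single factor $Q_B(I,J)$ uses the $2d$ vertices indexing $I$ and $J$ as $2dk$ potential free parameters across the walk; the balance constraint forces the vertices to coincide in a nontrivial pattern. The main combinatorial claim is that the number of balanced tuples is at most $n^{(2d-c)k}\cdot\polylog(n)^{O(k)}$.

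The key step — and the main obstacle — is proving this count. Here the maximum matching of size $c$ in $B$ enters crucially. For $c=1$ I would argue that the single matching edge forces, at each of the $2k$ visits to $B$, one nontrivial edge of $K_n$ that must be accounted for elsewhere in the walk; a standard double-counting/BFS argument on the walk tree then saves one vertex choice per visit on average, giving the $n^{(2d-1/2)k}$ bound. For $c=2$, two disjoint matching edges give two independent edge constraints per $B$-copy, and I would handle this by choosing a canonical pairing of edge occurrences and tracking how many of the endpoints of the two matching edges are ``new'' vs.\ forced by previously-seen edges; the disjointness in $B$ is what prevents the two constraints from degenerating into one, and this is where one must be careful in the case analysis. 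I expect this to be the heart of the proof.

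Finally, I would set $k = \Theta(\log n)$ so that $(\polylog n)^{O(k)} = n^{o(k)}$, combine the counting bound with $\|Q_B\|^{2k} \leq \E \Tr((Q_B Q_B^\top)^k)$, and take $2k$-th roots to conclude $\|Q_B\| \leq \tO(n^{d-c/2})$ with high probability via Markov's inequality. Promoting ``with high probability'' from the single $k$ to uniform control follows by choosing $k$ large enough that the failure probability is $n^{-\omega(1)}$, as in the standard trace method.
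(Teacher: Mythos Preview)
Your high-level plan via the trace moment method matches the paper's, but the counting step is where the real content lies, and your sketch of it leaves a genuine gap. The difficulty is that in a contributing walk, an edge of $K_n$ arising from the matching edge $(1,1)$ of $B$ at step $i$ could in principle be partnered with an edge arising from a completely different part of $B$ at another step --- or with one arising from the matching edge $(2,2)$ --- and your ``BFS on the walk tree'' and ``canonical pairing'' do not address how to rule out these cross-interactions. Without controlling them you cannot cleanly argue that each matching edge independently saves half a vertex per step of the walk, and the count $n^{(2d-c)k}$ does not follow from what you have written.

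The paper handles this with a \emph{random partitioning} device that you have not mentioned. Before bounding the trace, it decomposes $Q_B$ as $\sum_{i=1}^{r} Q_i$ with $r = O(\log n)$, where each $Q_i$ is supported only on entries $((a_1,\ldots,a_d),(b_1,\ldots,b_d))$ with $a_1,b_1 \in S_1$, $a_2,b_2 \in S_2$, and $a_j,b_j \in S_3$ for $j>2$, for some fixed tripartition $(S_1,S_2,S_3)$ of $[n]$ (having first used permutation matrices to move the matching to positions $(1,1),(2,2)$). This forces the vertex labels attached to the two matching edges to live in disjoint parts of $[n]$, so in any contributing ribbon for $Q_i$ the $2\ell$-cycle of $K_n$-edges coming from matching edge $(1,1)$ can only be partnered \emph{within itself}, and likewise for the cycle coming from $(2,2)$. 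Each such cycle then has at most $\ell$ distinct edge labels and hence at most $\ell+1$ distinct vertex labels; combined with the at most $2(d-c)\ell$ remaining vertices this gives the clean count $n^{(2d-c)\ell + c}$ for the number of contributing ribbons per $Q_i$. Summing the $O(\log n)$ pieces via the triangle inequality costs only a logarithmic factor in the final norm bound. This partitioning is the missing idea in your proposal; without it the combinatorics you gesture at does not obviously go through.
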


We also want to show that these matrices have nontrivial kernels, so we can bound their negative eigenvalues against the parts of the simple moments with larger positive eigenvalues.
The following allows us to carry out this matching of Fourier decomposition matrices to eigenspaces $V_0,\ldots,V_d$ of the expectation matrix.

\begin{lemma}[Informal version of \pref{lem:kernelsymoperators}] \label{lem:kernelsymoperators-intro}
 Let $B_{\ell}$ ($B_r$) be the subset of vertices on the left (right, respectively) hand side with non zero degrees in $B$.
 Let $\Pi_i$ be the projector to $V_i$.
 Then,
  \begin{enumerate}
    \item For every $j > |B_\ell|$,  $$\Pi_j Q_B  = 0,$$
    \item For every $i > |B_r|$, $$ Q_B \Pi_i = 0.$$
  \end{enumerate}
\end{lemma}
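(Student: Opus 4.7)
The plan is to combine two ingredients: (i) a structural characterization of $V_j$ coming from the paper's representation-theoretic analysis of the eigenspaces $V_0,\ldots,V_d$, and (ii) a factorization of $Q_B$ that reflects the support of $B$.

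For ingredient (i), the key fact, which is a consequence of the Grigoriev-style symmetry analysis the paper develops (and of the standard Johnson-scheme decomposition), is that $V_0 \oplus V_1 \oplus \cdots \oplus V_k$ coincides with the image of the lift operator $L_k : \R^{\nchoose{k}} \to \R^{\nchoose{d}}$ defined by $(L_k u)_I = \sum_{S \subseteq I,\,|S|=k} u_S$; these are precisely the vectors in $\R^{\nchoose{d}}$ whose generating polynomial is (the multilinearization of) $(\sum_j x_j)^{d-k} p(x)$ for some $p$ of degree at most $k$. Because eigenspaces of a symmetric matrix for distinct eigenvalues are orthogonal, $V_j \perp V_0 \oplus \cdots \oplus V_{j-1} = \mathrm{Im}(L_{j-1})$, so $V_j \subseteq \ker L_{j-1}^\top$, which unpacks to the concrete identity
\[
  \sum_{I \supseteq S,\, |I| = d} v_I \;=\; 0 \qquad \text{for every } v \in V_j \text{ and every } S \subseteq [n] \text{ with } |S| \leq j-1.
\]

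For ingredient (ii), because every edge of $B$ touches only vertices in $B_\ell \cup B_r$, evaluating $Q_B(I,J)$ reduces to choosing a size-$|B_\ell|$ subset $S_\ell \subseteq I$ (the image of $B_\ell$) and a size-$|B_r|$ subset $S_r \subseteq J$ (the image of $B_r$), together with the orderings used to instantiate the $B$-edges on these subsets. Summing over these combinatorial choices yields a decomposition
\[
  Q_B(I,J) \;=\; \sum_{\substack{S_\ell \subseteq I \\ |S_\ell| = |B_\ell|}} \;\sum_{\substack{S_r \subseteq J \\ |S_r| = |B_r|}} h_B(S_\ell, S_r),
\]
where $h_B$ depends only on $B$ and the random graph $G$ (ignoring, as in the informal statement, the diagonal case $I \cap J \neq \emptyset$).

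Combining the two: for any $v \in V_j$ with $j > |B_\ell|$ and any column index $J$,
\[
  (v^\top Q_B)_J \;=\; \sum_{S_\ell,\,S_r} \1[S_r \subseteq J]\, h_B(S_\ell, S_r) \,\Big(\sum_{I \supseteq S_\ell} v_I\Big) \;=\; 0,
\]
since $|S_\ell| = |B_\ell| \leq j-1$ forces the inner sum to vanish by (i). Hence $V_j \subseteq \ker Q_B^\top$, which is exactly $\Pi_j Q_B = 0$. The second statement $Q_B \Pi_i = 0$ for $i > |B_r|$ follows by the identical argument with the roles of rows and columns (and of $B_\ell$ and $B_r$) interchanged. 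The main obstacle is ingredient (i): making the identification $V_0 \oplus \cdots \oplus V_k = \mathrm{Im}(L_k)$ precise in our setting is exactly what the paper's representation-theoretic framework (following Grigoriev) is designed to produce, and once it is in hand the combinatorial factorization (ii) and the final one-line calculation are short and clean.
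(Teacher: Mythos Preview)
Your overall strategy---use the lift characterization $V_0\oplus\cdots\oplus V_k=\mathrm{Im}(L_k)$ together with a factorization of $Q_B$ through $L_{|B_\ell|}$---is exactly the paper's approach, phrased dually (you argue via $\ker Q_B^\top$ while the paper argues via $\mathrm{Im}\,Q_B$). Ingredient (i) is precisely the paper's \pref{lem:eigenbasisstructure}.

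The gap is in ingredient (ii). As the paper defines it, $Q_B(I,J)=\prod_{(i,j)\in B}g_{\zeta_I(i),\zeta_J(j)}$ uses the \emph{sorting} maps $\zeta_I,\zeta_J$, so for fixed $(I,J)$ there is a \emph{single} designated subset $S_\ell=\zeta_I(B_\ell)\subseteq I$, not a sum over all $|B_\ell|$-subsets. Your claimed identity $Q_B(I,J)=\sum_{S_\ell\subseteq I}\sum_{S_r\subseteq J}h_B(S_\ell,S_r)$ therefore fails. Concretely, for $d=2$ and $B=\{(1,1)\}$ one has $Q_B(\{a_1,a_2\},\{b_1,b_2\})=g_{\min(a_1,a_2),\min(b_1,b_2)}$, and the vector $I\mapsto f(\min I)$ is generally \emph{not} in $V_0\oplus V_1$, so $\Pi_2 Q_B\neq 0$; the informal statement as literally written is false for a single $Q_B$.

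What makes the argument go through---and what the formal \pref{lem:kernelsymoperators} is actually about---is the \emph{left-symmetrized} matrix $Q^\ell=\sum_{B'\sim_\ell B}Q_{B'}=\tfrac{1}{|Aut_\ell(B)|}\sum_{\sigma\in\S_d}Q_{\sigma\circ B}$. Summing over $\sigma$ replaces ``the subset $\zeta_I(B_\ell)$'' by ``sum over all $|B_\ell|$-subsets of $I$ with all orderings,'' which is exactly your factorization. The paper's proof carries out precisely this step (showing $(Q^\ell v)_I=\sum_{I'\subseteq I,\,|I'|=|B_\ell|}u_{I'}$ after the $\S_d$-average) and then invokes \pref{lem:eigenbasisstructure}. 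So your plan is correct once you replace $Q_B$ by $Q^\ell$ (respectively $Q^r$) and recognize that the symmetrization is not a cosmetic step but the mechanism that produces the sum over $S_\ell\subseteq I$.
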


The maximum matching cannot be too small when $|B_\ell| + |B_r|$ is large, which allows us to combine these lemmas for every $B$, either $Q_B$ has small spectral norm or its kernel contains the spaces where the diagonal of the expectation matrix is small.


%

\subsection{Related Work}
There's a large amount of work on understanding Linear and Semidefinite Programming based hierarchies. A detailed survey on the sum of squares hierarchy and references to works related can be found in \cite{BS14}. 
The earliest works on proving SoS lower bounds were due to Grigoriev \cite{Gri01, Gri01b} who showed that degree $\Omega(n)$ SoS does not beat the random assignment for 3SAT or 3XOR even on random instances from a natural distribution. Some of these lower bounds were rediscovered by Schoenebeck \cite{Schoenebeck08}. Lower bounds for SOS essentially rely on gadget reductions from 3SAT or 3XOR and this approach has been understood in some detail \cite{Tul09,Bhak12}. An exception to this methodology is the recent work of Barak et al. in proving SoS lower bounds for pairwise independent CSPs \cite{BCK15}. Even though the lower bounds for CSPs are for random instances, the average-case nature of the problem does not show up as a main analytic issue. There has recently been a surge of interest in understanding the performance of SoS on average-case problems of interest in machine learning, both in proving upper and lower bounds \cite{Hop15, BarakM15, MaW15, GeM15, BKS15}.

For the planted clique problem, Feige and Krauthgamer gave an analysis of the performance of the LS+ semidefinite heirarchy tight to within constants \cite{FK00, DBLP:journals/siamcomp/FeigeK03} giving the state of the art algorithm for finding planted cliques in any fixed polynomial time. Other algorithmic techniques not based on convex relaxations have been studied and shown to fail for planted clique beyond $\omega \approx \sqrt n$,
most prominantly Markov Chain Monte Carlo (MCMC) \cite{DBLP:journals/rsa/Jerrum92}. Recently, Feldman et. al.  \cite{Feldman13} showed a lower bound for (a variant of) the planted clique problem in the restricted class of \emph{statistical algorithms} that generalize MCMC based methods and many other algorithmic techniques. Frieze and Kannan \cite{FKan} proposed an approach for the planted clique problem through optimizing a degree-$3$ polynomial related to the random graph. Such polynomials are NP hard to optimize in the worst case but the belief is that the random nature of the polynomials might be helpful. This approach was generalized to higher degree polynomials by Brubaker and Vempala \cite{BV09}. 

There has also been recent work on variants of the problem that define Gaussian versions of the planted clique and more generally, the hidden submatrix problems showing, for example, strong inditinguishability results about the spectrum of the associated matrices with and without planting \cite{Mon14}. Finally, the present work builds heavily on independent papers of Meka, Potechin, and Wigderson \cite{MPW15} and Deshpande and Montanari \cite{DM15}, which we have already thoroughly discussed.

\paragraph{Overview of Rest of the Paper}
\pref{sec:prelims} contains preliminaries.
\pref{sec:MPWop} contains definitions and the necessary background on the simple moments, a.k.a. the MPW operator.
\pref{sec:deg4} contains the formal definition of our corrected degree $4$ moments.
\pref{sec:tools} lays out the technical framework for the analyses of the corrected degree $4$ moments and for the tightened bounds on the MPW operator at higher degrees.
Here we define the Fourier decompositions alluded to above and carry out representation-theoretic arguments about their kernels.
\pref{sec:degree-d} and \pref{sec:deg4-psd} use the tools we have built thus far to prove the main theorems.
In \pref{sec:degreevariancecalculations} we prove a technical concentration result for small subgraphs of $G(n,1/2)$ required for the analysis.
In \pref{sec:kelner} we sketch Kelner's argument showing that our analysis of the MPW moments is nearly optimal.

%
%

\section{Preliminaries}
\label{sec:prelims}
We will use the following general notation in the paper.
\begin{enumerate}
\item $G$ will denote a draw from $G(n,\frac{1}{2})$ unless otherwise stated. 
\item $||x||_2= ||x||$ denotes the Euclidean $2$ norm of a vector $x \in \R^m$.
\item For a square symmetric matrices $Q,R$, we write $Q \succeq R$ to mean $Q-R$ is positive semidefinite.
\item For any matrix $M$, $||M||$ denotes its largest singular value, or, equivalently, $||M|| = \max_{x:||x||_2 = 1} ||Mx||^2.$
\item For matrices $M,N$ of same dimensions, $M \odot N$ denotes their entrywise or \emph{Hadamard} product, i.e., $(M \odot N )(I,J) = M(I,J) \cdot N(I,J)$ for every $I,J$.
\item For a graph $G$ and any set of two vertices of $G$, $e$, $g_e$ denotes the $\on$ indicator of the edge $e$ being present in $G$. That is, $g_e = +1$ if $e$ is an edge in $G$ and $-1$ otherwise.
\item For a set $I$ of vertices of $G$, $\sE(I) = {I \choose 2}$, the set of all pairs from $I$.
\item For a pair of subsets of vertices $I,J$ of $G$, $\sE_{ext}(I,J) = \sE(I \cup J) \setminus (\sE(I) \cup \sE(J)),$ the set of cut edges between $I$ and $J$.
\item For a subspace $V$, $\Pi_V$ denotes the projector to $V$.
\end{enumerate}

Following \cite{BBHKSZ12} and many subsequent papers, we work with SoS using the language of \emph{pseudo-expectations}.
\begin{definition}[Pseudo Expectation]
A linear operator $\tE: \P_d^n \rightarrow \R$ is a degree $d$ pseudo-expectation operator if it satisfies: \\
\textbf{Normalization: } $\tE[1] = 1$ where on the LHS $1$ denotes the constant polynomial $p$ such that $p(x)  = 1$. \\
\textbf{Positivity (or positive semidefiniteness):}  $\tE[ p^2] \geq 0$ for every $p \in \P_{d/2}^n$.  \label{def:pseudo-expectation}
\end{definition}
For every polynomial $p \in \P_{d}^n$, we say that $\tE$ satisfies the constraint $\{p = 0\}$ if $\tE[ pq] = 0$ for every $q \in \P_{d-\deg(p)}^n$. The sum of squares hierarchy can be thought of as optimizing over pseudo expectations (see \cite{BS14} and the lecture notes \cite{Bar14}).
\begin{fact}
  Let $p_0,\ldots,p_k \in \P_d^n$.
  If a pseudo-expectation satisfying the constraints $\{ p_0 =0 ,\ldots,p_k = 0 \}$ exists, it can be found in time $n^{O(d)}$.
  If none exists, a certificate of infeasibility of these equations is found instead.
\end{fact}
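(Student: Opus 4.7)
The plan is to encode the pseudo-expectation conditions as a semidefinite program of size $n^{O(d)}$ and invoke a standard SDP solver. First I would fix the monomial basis $\{x^\alpha : |\alpha| \leq d\}$ for $\P_d^n$; since $\dim \P_d^n = \binom{n+d}{d} = n^{O(d)}$, any linear operator $\tE : \P_d^n \to \R$ is uniquely specified by the vector of values $(y_\alpha)_{|\alpha| \leq d} \in \R^N$ with $y_\alpha = \tE[x^\alpha]$, where $N = n^{O(d)}$.

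Next I would translate each defining condition of \pref{def:pseudo-expectation} together with the constraints $\{p_i = 0\}$ into standard SDP form in these variables. Normalization $\tE[1] = 1$ is a single linear equation in $y$. The positivity condition is equivalent to the \emph{moment matrix} $M(y) \in \R^{N_{d/2} \times N_{d/2}}$ with $M(y)_{\alpha,\beta} = y_{\alpha + \beta}$ (indexed by $|\alpha|,|\beta| \leq d/2$) being positive semidefinite; note $M(y)$ depends linearly on $y$, so this is a genuine linear matrix inequality. Finally, satisfying $\{p_i = 0\}$ means $\tE[p_i q] = 0$ for every monomial $q$ of degree $\leq d - \deg(p_i)$, which expands into a finite system of linear equations in $y$, one per such monomial; there are at most $n^{O(d)}$ equations in total. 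The conjunction of these conditions is exactly a semidefinite program in $N = n^{O(d)}$ variables with an $n^{O(d)} \times n^{O(d)}$ LMI and $n^{O(d)}$ linear equations.

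I would then apply the ellipsoid method (or an interior point method) to this SDP: standard analyses show that, given a separation oracle for the PSD cone (which can be implemented in time polynomial in the matrix dimension by eigendecomposition), one can in time polynomial in the description length either produce a feasible point or output a hyperplane separating the feasible region from the origin, which via Farkas' lemma / SDP duality translates into a certificate of infeasibility. Running this on the SDP above gives the claimed $n^{O(d)}$ runtime. If infeasible, the separating hyperplane yields nonnegative multipliers exhibiting a contradiction among the constraints $\{p_i = 0\}$ and the PSD requirement, i.e., a sum-of-squares refutation.

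The main subtlety, and the only real obstacle, is bit-complexity: a priori the feasible set may be unbounded or have no rational point, and one must argue that it suffices to work in a ball of radius $2^{\poly(N)}$ and at precision $2^{-\poly(N)}$. I would dispatch this in the usual way by adding the explicit bound $|y_\alpha| \leq R$ for a large enough $R$ (justifiable because the positivity and normalization conditions already bound $|y_\alpha|$ polynomially when combined with Cauchy--Schwarz on the moment matrix), so that the ellipsoid method runs with polynomial bit complexity and its output can either be rounded to a nearby exactly feasible pseudo-expectation or converted into an exact infeasibility certificate. This gives the stated $n^{O(d)}$ running time for both cases.
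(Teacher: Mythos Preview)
The paper does not prove this statement; it is stated as a bare \texttt{fact} with no proof or sketch, treated as standard background on the SoS hierarchy (with implicit reference to the survey \cite{BS14} and lecture notes \cite{Bar14} mentioned nearby). Your proposal is the standard argument for this result and is essentially correct: encode the pseudo-expectation as a vector of moment values, observe that normalization and the constraint equations are linear while positivity is a linear matrix inequality on the $n^{O(d)}\times n^{O(d)}$ moment matrix, and feed the resulting SDP to the ellipsoid method. You also correctly flag the bit-complexity issue, which is the only genuine subtlety and which the paper ignores entirely. So there is nothing to compare against; your write-up supplies what the paper omits.
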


\begin{fact}[Special Case of Gershgorin Circle Theorem] \label{fact:Gershgorin}
For any square matrix $M \in \R^{N\times N}$, $$||M|| \leq \max_{i \in [N]} \left( \sum_{j = 1}^N |M_{ij}| \right).$$
\end{fact}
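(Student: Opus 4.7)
The plan is to reduce the inequality to a bound on the absolute value of each eigenvalue of $M$ and then apply the standard Gershgorin-disk argument via the triangle inequality. In every application of this fact elsewhere in the paper, the matrix in question is real symmetric, so its spectral norm equals $\max_{\lambda} |\lambda|$ over the (real) eigenvalues of $M$. Thus it suffices to establish $|\lambda| \leq \max_{i \in [N]} \sum_{j=1}^N |M_{ij}|$ for every eigenvalue $\lambda$ of $M$ (working over $\C$ if necessary).

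The main step is as follows. Fix any eigenvalue $\lambda$ of $M$ together with a nonzero eigenvector $v = (v_1,\ldots,v_N)$, and choose an index
\[
  i^\star \in \argmax_{i \in [N]} |v_i|\mper
\]
Since $v \neq 0$, we have $|v_{i^\star}| > 0$. Reading off the $i^\star$-th coordinate of the identity $Mv = \lambda v$ gives
\[
  (\lambda - M_{i^\star i^\star}) \, v_{i^\star} \;=\; \sum_{j \neq i^\star} M_{i^\star j}\, v_j\mper
\]
Dividing both sides by $v_{i^\star}$, taking absolute values, and using $|v_j|/|v_{i^\star}| \leq 1$ by the choice of $i^\star$, the triangle inequality yields
\[
  |\lambda - M_{i^\star i^\star}| \;\leq\; \sum_{j \neq i^\star} |M_{i^\star j}|\mcom
\]
so that $|\lambda| \leq |M_{i^\star i^\star}| + \sum_{j \neq i^\star} |M_{i^\star j}| = \sum_{j=1}^N |M_{i^\star j}| \leq \max_{i \in [N]} \sum_{j=1}^N |M_{ij}|$.

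There is no substantive obstacle here; the entire argument is a one-line application of the triangle inequality to the eigenvalue equation, and the only point requiring any care is that $|v_{i^\star}| > 0$, which holds because $v$ is a nontrivial eigenvector. Combined with the identification of $\|M\|$ with $\max_\lambda |\lambda|$ in the symmetric case, this gives the stated inequality.
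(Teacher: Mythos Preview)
The paper states this as a fact without proof, so there is no argument to compare against; your standard Gershgorin argument is correct and is exactly what the name of the fact suggests. One small remark: as literally stated (``for any square matrix''), the inequality is false in general---for instance $M = \begin{pmatrix} 0 & 1 \\ 0 & 1 \end{pmatrix}$ has $\|M\| = \sqrt 2$ but maximum row sum $1$---so your explicit restriction to the symmetric case (where $\|M\| = \max_\lambda |\lambda|$) is not just a convenience but is actually necessary, and it is indeed all that the paper ever uses.
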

The following observations (actually both the same observation in different forms) will come in handy in our analysis.
\begin{lemma}
  \label{lem:blockpsd}
  Let $M \in \R^{n \times n}$ be self-adjoint.
  Let $W_1,\ldots,W_k$ be an orthogonal decomposition of $\R^n$ into subspaces.
  Let $P_i$ be the projector to $W_i$.
  Let $\lambda_1,\ldots,\lambda_k \geq 0$ and suppose for all $i,j \leq k$
  \[
    P_i M P_i \succeq \lambda_i P_i \quad \text{and when $i \neq j$} \quad \|P_i M P_j\| \leq \frac 2 k \sqrt{\lambda_i \lambda_j}\mper
  \]
  Then $M$ is PSD.
\end{lemma}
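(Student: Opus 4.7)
The plan is the standard reduction from block PSD-ness to a small scalar inequality. For arbitrary $x \in \R^n$, decompose $x = \sum_{i=1}^k x_i$ with $x_i := P_i x \in W_i$. Since the $W_i$'s are mutually orthogonal and each $P_i$ is self-adjoint with $P_i^2 = P_i$, we have $\langle x_i, M y\rangle = \langle P_i x_i, M y\rangle = \langle x_i, P_i M y\rangle$ for all $y$. Expanding bilinearly,
\[
  \langle x, M x\rangle \;=\; \sum_{i=1}^k \langle x_i, M x_i\rangle + \sum_{i \neq j} \langle x_i, M x_j\rangle \mper
\]

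Next I would bound the two sums separately using the two hypotheses. For each diagonal term, $\langle x_i, M x_i\rangle = \langle x_i, P_i M P_i x_i\rangle \geq \lambda_i \|x_i\|^2$ directly from $P_i M P_i \succeq \lambda_i P_i$. For each off-diagonal term, $\langle x_i, M x_j\rangle = \langle x_i, P_i M P_j x_j\rangle$, and Cauchy--Schwarz together with the operator norm hypothesis gives $|\langle x_i, M x_j\rangle| \leq \tfrac{2}{k}\sqrt{\lambda_i \lambda_j}\,\|x_i\|\,\|x_j\|$. Setting $a_i := \sqrt{\lambda_i}\,\|x_i\| \geq 0$, these two bounds combine into the scalar estimate
\[
  \langle x, Mx\rangle \;\geq\; \sum_i a_i^2 \;-\; \frac{2}{k}\sum_{i \neq j} a_i a_j \;=\; a^\top C\, a \mcom
\]
where $C$ is the $k \times k$ scalar matrix $(1 + \tfrac{2}{k}) I - \tfrac{2}{k} J$ and $J$ is all-ones. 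Thus PSD-ness of $M$ reduces to showing $a^\top C a \geq 0$ for every $a \in \R_{\geq 0}^k$, which one checks by diagonalizing $C$ in the $(\Ind, \Ind^\perp)$ decomposition: $C$ has eigenvalue $1 + \tfrac{2}{k}$ on the subspace orthogonal to the all-ones vector, and a single potentially delicate eigenvalue on the all-ones direction.

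The main obstacle I foresee is precisely this final scalar step, where the numerical constant $2/k$ must be balanced against the maximum of $\sum_{i\neq j} a_i a_j$ subject to $\sum_i a_i^2$ fixed. The bound is tight when the $a_i$'s are roughly equal, so the argument is sensitive to the precise constant and one has to either exploit the nonnegativity of the $a_i$'s carefully, or apply the hypothesis with a small additional slack built in (for example, the $\tfrac{2}{k}$ can be weakened to any constant strictly less than $\tfrac{1}{k-1}$, which is the regime that matters in the applications in \pref{sec:degree-d} and \pref{sec:deg4-psd}). Once the $k \times k$ scalar inequality is in hand, the rest of the proof is clean bilinear bookkeeping plus Cauchy--Schwarz.
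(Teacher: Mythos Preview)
Your approach is exactly the paper's: decompose $x=\sum_i P_i x$, lower-bound the diagonal blocks by $\lambda_i\|P_ix\|^2$, upper-bound each cross block by $\tfrac{2}{k}\sqrt{\lambda_i\lambda_j}\,\|P_ix\|\,\|P_jx\|$ via Cauchy--Schwarz, and then finish with AM--GM on $a_i=\sqrt{\lambda_i}\,\|P_ix\|$. The paper's proof is literally one line of AM--GM after this setup.

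You are also right to flag the final scalar step, and in fact your concern is sharper than a mere obstacle: the lemma as stated is \emph{false} for $k>2$. With one-dimensional blocks, $\lambda_i\equiv 1$, and $M=(1+\tfrac{2}{k})I-\tfrac{2}{k}J$, all hypotheses hold with equality but $M\mathbf{1}=(\tfrac{2}{k}-1)\mathbf{1}$, so $M$ has a negative eigenvalue once $k\ge 3$. The paper's proof glosses over exactly this: after AM--GM the off-diagonal sum costs $\tfrac{2(k-1)}{k}\sum_i a_i^2$, which exceeds the diagonal budget $\sum_i a_i^2$ for $k>2$. Your diagnosis that the correct threshold is $\|P_iMP_j\|\le \tfrac{1}{k-1}\sqrt{\lambda_i\lambda_j}$ (with equality giving exactly zero on $\mathbf{1}$) is right, and this is what is actually used in the applications in \pref{sec:degree-d} and \pref{sec:deg4-psd}, where the cross terms are shown to be $o(\sqrt{\lambda_i\lambda_j})$ anyway. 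So treat the $\tfrac{2}{k}$ as a typo for (something at most) $\tfrac{1}{k-1}$; with that fix your write-up is complete and matches the paper's argument.
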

\begin{proof}
  Consider a unit vector $x \in \R^n$ and write it as $x = \sum_{i \in [k]} P_i x$.
  We expand
  \[
    \iprod{x, Mx} = \sum_{i,j \in [k]} \iprod{x, P_i M P_j x} \geq \sum_i \lambda_i \|P_i x\|^2 - \frac 2 k \sum_{i \neq j} \|P_i x\| \|P_j x\| \sqrt{\lambda_i \lambda_i}
  \]
  by our assumptions on $P_i M P_j$ and Cauchy-Schwarz.
  For each $i,j$, we know $\frac 1 k (\|P_i x\| \lambda_i + \|P_i x\| \lambda_j) \geq \frac 2 k \|P_i x \| \|P_j x\| \sqrt{\lambda_i \lambda_j}$, which implies that the whole expression is nonnegative.
\end{proof}

\begin{lemma}
  \label{lem:matrix-cs}
  Let $M \in \R^{n \times n}$ be self-adjoint.
  Let $V_1, V_2$ be subspaces of $\R^n$.
  Let $\Pi_{V_i}$ be the projector to $V_i$.
  If $\sqrt{\lambda_1 \lambda_2 } \geq \|\Pi_{V_1} M \Pi_{V_2} \|$, then
  \[
    \Pi_{V_1} M \Pi_{V_2} + \Pi_{V_2} M \Pi_{V_1} \preceq \lambda_1 \Pi_1 + \lambda_2 \Pi_2\mper
  \]
\end{lemma}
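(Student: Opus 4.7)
The plan is to reduce the claimed operator inequality to a pointwise inequality for associated quadratic forms, and then close with Cauchy--Schwarz and AM--GM.

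First I would observe that because $M$ is self-adjoint and the $\Pi_{V_i}$ are self-adjoint projectors, $\Pi_{V_2} M \Pi_{V_1} = (\Pi_{V_1} M \Pi_{V_2})^\top$, so the left-hand side is a symmetric matrix (namely the symmetrization of $\Pi_{V_1} M \Pi_{V_2}$). Also, $\iprod{x, \lambda_i \Pi_{V_i} x} = \lambda_i \|\Pi_{V_i} x\|^2$ by idempotence and self-adjointness. So the claimed inequality $\Pi_{V_1} M \Pi_{V_2} + \Pi_{V_2} M \Pi_{V_1} \preceq \lambda_1 \Pi_{V_1} + \lambda_2 \Pi_{V_2}$ is equivalent to showing that for every $x \in \R^n$,
\[
\iprod{x, (\Pi_{V_1} M \Pi_{V_2} + \Pi_{V_2} M \Pi_{V_1}) x} \leq \lambda_1 \|\Pi_{V_1} x\|^2 + \lambda_2 \|\Pi_{V_2} x\|^2.
\]

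Next I would rewrite the left-hand side using self-adjointness as $2 \iprod{\Pi_{V_1} x, M \Pi_{V_2} x} = 2 \iprod{\Pi_{V_1} x, \Pi_{V_1} M \Pi_{V_2} x}$, where the last step uses that $\Pi_{V_1} M \Pi_{V_2} x \in V_1$ already. Applying Cauchy--Schwarz and then the operator norm bound
\[
\|\Pi_{V_1} M \Pi_{V_2} x\| = \|(\Pi_{V_1} M \Pi_{V_2})(\Pi_{V_2} x)\| \leq \|\Pi_{V_1} M \Pi_{V_2}\| \cdot \|\Pi_{V_2} x\|
\]
(using $\Pi_{V_2}^2 = \Pi_{V_2}$), I obtain the upper bound $2 \|\Pi_{V_1} M \Pi_{V_2}\| \cdot \|\Pi_{V_1} x\| \cdot \|\Pi_{V_2} x\| \leq 2\sqrt{\lambda_1 \lambda_2} \|\Pi_{V_1} x\|\|\Pi_{V_2} x\|$ by hypothesis.

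Finally I would close by scalar AM--GM applied to $a = \sqrt{\lambda_1}\|\Pi_{V_1} x\|$ and $b = \sqrt{\lambda_2}\|\Pi_{V_2} x\|$, giving $2ab \leq a^2 + b^2 = \lambda_1 \|\Pi_{V_1} x\|^2 + \lambda_2 \|\Pi_{V_2} x\|^2$, which is exactly the desired inequality. There is no real obstacle: this lemma is simply the projector-valued packaging of the scalar inequality $2ab \leq \lambda_1 a^2 + \lambda_2 b^2 / (\lambda_1 \lambda_2) \cdot \lambda_1 \lambda_2$; the only conceptual point is using self-adjointness of $M$ to upgrade a quadratic-form bound to an operator inequality.
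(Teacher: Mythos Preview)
Your proposal is correct and follows essentially the same approach as the paper: reduce the operator inequality to a quadratic-form inequality, bound $2\iprod{x,\Pi_{V_1} M \Pi_{V_2} x}$ by $2\|\Pi_{V_1} M \Pi_{V_2}\|\cdot\|\Pi_{V_1} x\|\cdot\|\Pi_{V_2} x\|$ via Cauchy--Schwarz and the operator norm, and finish with AM--GM. Your write-up is slightly more explicit about the self-adjointness and idempotence justifications, but the argument is the same.
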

\begin{proof}
  For any $x \in \R^n$ we have
  \begin{align*}
    2 \iprod{x, \Pi_{V_1} M \Pi_{V_2} x} & \leq  2 \|\Pi_{V_1} M \Pi_{V_2}\| \cdot \|\Pi_{V_1} x \| \cdot \|\Pi_{V_2} x\|\\ 
    & \leq 2 (\sqrt{\lambda_1} \|\Pi_{V_1} x \|) (\sqrt{\lambda_2}) \|\Pi_{V_1} x\|)\\
    & \leq \lambda_1 \|\Pi_{V_1} x\|^2 + \lambda_2 \|\Pi_{V_2} x \|^2 \\
    & = \lambda_1 \iprod{x, \Pi_{V_1} x} + \lambda_2 \iprod{x, \Pi_{V_2} x}\mper\qedhere
  \end{align*}
\end{proof}

\section{The MPW Operator}
\label{sec:MPWop}
In this section, we describe the linear operator $\tE : \P_{2d}^n \rightarrow \R$ for every $d \leq O(\log{n})$ used by \cite{MPW15} and \cite{DM15}.
It is this operator which we will show gives an integrality gap for degree-$2d$ SoS when $\omega \ll n^{1/(d + 1)}$, and it will also form the basis for our improved integrality gap witness at degree four.

The main task in such a setting is to show that $\tE$ is positive semidefinite.
$\tE$ is same as the operator used by $\cite{MPW15}$ who showed that for $\omega \leq \Theta(n^{\frac{1}{2d}})$ and graph $G$ drawn at random from $G(n,\frac{1}{2})$, $\tE$ is a degree $d$ pseudo-expectation that satisfies all the constraints with high probability over the draw of $G$.
In other words, they showed that $\tE$ is a `cheating" solution that ``thinks" that a random graph has a clique of size $\sim n^{1/2d}$ with high probability.

For any set $I \subseteq [n]$, let $x_I$ be the monomial $\Pi_{i \in I} x_i$.
\begin{definition}[MPW operator for clique size $\omega$]
\label{def:pseudoexpectation}
For a graph $G$ on $n$ vertices and a parameter $\omega > 0$ we define a linear functional $\pE : \P_{2d}^n \rightarrow \R$.
To describe $\tE$ it is enough to describe its values on every monomial $x_I$ for $I \subseteq [n]$, $|I| \leq d$.
Towards this goal, for any set $I \subseteq [n]$, $|I| \leq d$, we define
$$\deg_G(I) = \left| \{S \subseteq [n]: I \subseteq S \text{, } |S|  = 2d \text{, $S$ is a clique  in $G$ }\} \right|.$$
Further, we set $C_{2d} = C_{2d}(G)$ be the number of $2d$-cliques in $G$.

For every $I \subseteq [n]$, we define:
\begin{equation}
  \tE[x_I] = \frac{\deg_G(I)}{C_{2d}} \cdot \frac{{\omega \choose {|I|}}}{{{2d} \choose {|I|}}}. \end{equation}
Our definition of $\tE$ is the same as the one used in $\cite{MPW15}$ up to normalization (we explicitly satisfy the normalization condition $\tE[1]=1$).
When $\omega$ is chosen so that $\pE$ is PSD, we often call it the MPW pseudo distribution.
\end{definition}
It is easy to check that the linear operator $\tE$ satisfies the constraints in \pref{eq:prog-formulation} which we record as the following fact:
\begin{fact}
For any graph $G$, $\tE$ defined by \pref{def:pseudoexpectation} satisfies the constraints described in \pref{eq:prog-formulation}.
\end{fact}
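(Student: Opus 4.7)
The plan is to verify each of the three families of constraints in \pref{eq:prog-formulation} separately. The first two (Boolean and edge constraints) follow essentially by inspecting the support of $\tE$, while the mass constraint $\sum_i x_i = \omega$ reduces to a double-counting identity combined with a standard binomial coefficient manipulation. There is no single ``hard step'' here; the verification is mechanical once one sets up the extension of $\tE$ to arbitrary polynomials correctly.

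For the constraint $x_i^2 = x_i$, I will first observe that $\tE$ is specified only on multilinear monomials $x_I$, via a formula that depends on the underlying set $I$ rather than on any choice of multiplicities. It is therefore natural to extend $\tE$ to arbitrary polynomials of degree at most $2d$ by first reducing modulo the ideal generated by $\{x_i^2 - x_i\}_{i \in [n]}$. Under this extension, $\tE[(x_i^2 - x_i) \cdot q] = 0$ for every $q$ with $\deg q \leq 2d - 2$ automatically, so the Boolean constraints hold.

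For the edge constraints, I will show that $\tE[x_i x_j \cdot x_J] = \tE[x_{\{i,j\} \cup J}] = 0$ for every $J$ with $|J| \leq 2d - 2$ whenever $\{i,j\} \notin E$. If $\{i,j\}$ is a non-edge, then the set $\{i,j\} \cup J$ is not a clique of $G$, hence no $2d$-clique of $G$ contains it, so $\deg_G(\{i,j\} \cup J) = 0$. By \pref{def:pseudoexpectation}, $\tE[x_{\{i,j\} \cup J}] = 0$.

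The substantive (but still routine) step is the mass constraint $\sum_{i=1}^n \tE[x_i x_I] = \omega \tE[x_I]$ for every $I \subseteq [n]$ with $|I| \leq 2d - 1$. I will split the sum on whether $i \in I$: the $|I|$ terms with $i \in I$ collectively contribute $|I| \cdot \tE[x_I]$ (using multilinearity), while the terms with $i \notin I$ contribute $\sum_{i \notin I} \tE[x_{I \cup \{i\}}]$. A double-counting on pairs $(S, i)$ with $S$ a $2d$-clique containing $I$ and $i \in S \setminus I$ gives the identity
\[
\sum_{i \notin I} \deg_G(I \cup \{i\}) = (2d - |I|) \cdot \deg_G(I).
\]
Plugging this in together with the elementary identity
\[
(2d - |I|) \cdot \frac{\binom{\omega}{|I|+1}}{\binom{2d}{|I|+1}} = (\omega - |I|) \cdot \frac{\binom{\omega}{|I|}}{\binom{2d}{|I|}}
\]
(obtained by expanding both binomials using $\binom{m}{k+1} = \binom{m}{k}(m-k)/(k+1)$ and canceling the $k+1$ factors), yields $\sum_{i \notin I} \tE[x_{I \cup \{i\}}] = (\omega - |I|) \cdot \tE[x_I]$. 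Combining with the $i \in I$ contribution gives $\omega \cdot \tE[x_I]$, as required.
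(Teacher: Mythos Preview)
Your proof is correct and complete. The paper does not actually provide a proof of this fact; it simply states ``It is easy to check'' and records the statement without further argument. Your verification---multilinearization for the Boolean constraints, the observation that $\deg_G$ vanishes on non-cliques for the edge constraints, and the double-counting identity $\sum_{i \notin I} \deg_G(I \cup \{i\}) = (2d - |I|)\deg_G(I)$ combined with the binomial ratio identity for the mass constraint---is exactly the routine calculation the paper is alluding to.
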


The main task then is to show that $\tE$ is PSD for appropriate range of $\omega$. This task is simplified by another observation from \cite{MPW15} that we state next.
\begin{fact}[Corollary 2.4 in \cite{MPW15}]
\label{fact:homog}
For $\tE$ of degree $d$ defined in \pref{def:pseudoexpectation}, $\tE$  is positive semidefinite iff $\tE[ p^2] \geq 0$ for every multilinear, homogeneous polynomial $p$ of degree $d$.
\end{fact}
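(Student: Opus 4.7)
The forward direction is immediate since any multilinear homogeneous $p$ of degree $d$ is in particular a polynomial of degree at most $d$. For the converse, my plan is to reduce checking $\tE[p^2] \geq 0$ for arbitrary $p$ of degree at most $d$ to checking it only for multilinear homogeneous $p$ of degree exactly $d$, using the two equality constraints $\{x_i^2 - x_i = 0\}$ and $\{\sum_i x_i - \omega = 0\}$ that $\tE$ satisfies as a consequence of \pref{eq:prog-formulation}.

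For multilinearization, the key observation is that since $\tE$ annihilates every multiple of $x_i^2 - x_i$ of total degree at most $2d$, we have $\tE[p^2] = \tE[\hat p^2]$, where $\hat p$ denotes the multilinearization of $p$ (obtained by replacing each $x_i^k$ with $x_i$). This reduces the test to multilinear $p$ of degree at most $d$.

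For homogenization, given multilinear $p$, decompose it into homogeneous parts $p = \sum_{k=0}^d p_k$. Setting $y = \sum_i x_i$, the constraint $y - \omega = 0$ gives $\tE[f \cdot (y/\omega)^j] = \tE[f]$ for $f$ of degree at most $2d - j$, and so the degree-$d$ homogeneous (but not multilinear) polynomial $q = \sum_k p_k \cdot (y/\omega)^{d-k}$ satisfies
\[
  \tE[q^2] = \sum_{k,l} \tE[p_k p_l \cdot (y/\omega)^{2d - k - l}] = \sum_{k,l} \tE[p_k p_l] = \tE[p^2]\mper
\]
To convert $q$ back into a multilinear homogeneous polynomial of degree $d$ without changing $\tE[q^2]$, I would pass to the moment matrix: let $M$ be the matrix indexed by multilinear monomials of degree at most $d$ with $M_{S,T} = \tE[x_{S \cup T}]$, and $M^{(d)}$ its principal submatrix on size-$d$ subsets. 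Iterating the identity $(\omega - |S|) \tE[x_S] = \sum_{i \notin S} \tE[x_{S \cup \{i\}}]$ (itself a consequence of $\sum_i x_i = \omega$ combined with $x_i^2 = x_i$) yields the linear embedding
\[
  \Phi(x_S) \;=\; \frac{(d-|S|)!}{\prod_{j=|S|}^{d-1}(\omega - j)} \sum_{T \supseteq S,\, |T| = d} x_T
\]
of multilinear monomials of degree at most $d$ into degree-$d$ multilinear ones, which satisfies $\tE[\Phi(x_S)] = \tE[x_S]$.

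The main obstacle is upgrading this to the bilinear identity $\tE[x_S x_T] = \tE[\Phi(x_S) \Phi(x_T)]$ for all multilinear $x_S, x_T$, which would give $u^\top M v = \Phi(u)^\top M^{(d)} \Phi(v)$ and thus $M^{(d)} \succeq 0 \Rightarrow M \succeq 0$. I expect this verification to reduce to a combinatorial double-counting identity about $\deg_G(\cdot)$ from \pref{def:pseudoexpectation}, roughly $\sum_{U \supseteq V,\, |U| = d} \deg_G(U) = \deg_G(V) \binom{2d - |V|}{d - |V|}$ (by counting pairs $(U, W)$ with $V \subseteq U \subseteq W$, $|U| = d$, $|W| = 2d$, $W$ a clique), combined with the matching ratio identity for the binomial prefactors $\binom{\omega}{k}/\binom{2d}{k}$. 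Once this combinatorial step is in place, the two reductions combine to yield the fact.
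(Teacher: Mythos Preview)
The paper does not give its own proof of this statement; it is quoted as Corollary~2.4 of \cite{MPW15}. So there is no paper argument to compare against, and I will comment only on your proposal.

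Your overall plan is correct, but there are two points worth noting. First, the detour through the non-multilinear homogeneous $q = \sum_k p_k (y/\omega)^{d-k}$ is unnecessary and does not help: multilinearizing $q$ can drop the degree below $d$ (e.g.\ $x_1^2 x_2 \mapsto x_1 x_2$), so you are thrown back to the inhomogeneous case. The $\Phi$ map is the right tool, and you should go directly to it after the multilinearization step.

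Second, and more importantly, the ``main obstacle'' you flag---the bilinear identity $\tE[x_S x_T] = \tE[\Phi(x_S)\Phi(x_T)]$---does \emph{not} require any combinatorics specific to $\deg_G$. The linear identity you already have, $(\omega - |S|)\tE[x_S] = \sum_{i \notin S}\tE[x_{S \cup i}]$, upgrades verbatim to
\[
  (\omega - |S|)\,\tE[x_S\, g] \;=\; \sum_{i \notin S}\tE[x_{S \cup i}\, g]
\]
for any $g$ with $\deg g \le d$ and $|S| \le d-1$: this is just $\tE[(\sum_i x_i - \omega)\,x_S\, g] = 0$ together with $\tE[(x_i^2 - x_i)\,x_{S \setminus i}\, g] = 0$, both of which are legitimate since the total degree stays at most $2d$. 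Iterating gives $\tE[x_S\, g] = \tE[\Phi(x_S)\, g]$ for all such $g$. Taking $g = x_T$ and then $g = \Phi(x_S)$ yields the bilinear identity directly. By contrast, your suggested route through the $\deg_G$ double-count $\sum_{U \supseteq V,\,|U|=d}\deg_G(U) = \deg_G(V)\binom{2d-|V|}{d-|V|}$ runs into the complication that in $\sum_{S',T'}\tE[x_{S' \cup T'}]$ the size $|S' \cup T'|$ varies, so the binomial prefactor $\binom{\omega}{|S' \cup T'|}/\binom{2d}{|S' \cup T'|}$ is not uniform and a single counting identity does not suffice.
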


We define a matrix $\M \in \R^{{\nchoose{d} } \times {\nchoose {d}}}$ such that  $I, J \in \nchoose{d}$, \begin{equation} \M(I,J) = \deg_G(I \cup J) \frac{{\omega \choose {|I \cup J|}}}{{{2d} \choose {|I \cup J|}}}. \label{def:momentmatrix} \end{equation}
Then, from the fact above, showing that $\tE$ is PSD is equivalent to proving that $\frac{1}{C_d} \M$ is PSD. The goal of the next section is to establish that with high probability over the draw of $G \sim G(n,1/2)$, $\M$ is PSD for $\omega \leq \tO(n^{\frac{1}{d +1}})$. This immediately also shows that $\frac{1}{C_d} \M$ is PSD with high probability completing the proof. 

\begin{theorem}
\label{thm:main-degree-d}
With probability at least $1- 1/n$ over the draw of $G \sim G(n,\frac{1}{2})$ and $d  = o(\sqrt{\log{(n)}})$ $\M$ defined by \pref{def:momentmatrix} is PSD whenever $\omega \leq \tO(n^{\frac{1}{d +1}})$. \label{thm:main1}
\end{theorem}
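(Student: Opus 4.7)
My plan is to establish $\M \succeq 0$ by decomposing $\M$ into a large-eigenvalue ``expectation'' part plus smaller-norm fluctuations whose cross-blocks between the eigenspaces of $\E[\M]$ are controlled via \pref{lem:matrix-cs} and \pref{lem:blockpsd}. Writing $\deg_G(I \cup J) = \mathbf{1}[I \cup J \text{ is a clique}] \cdot N(I \cup J)$ with $N(S)$ the number of extensions of $S$ to a $2d$-clique, and setting $\bar N_k = \E[N(S) \mid S \text{ is a clique of size } k]$, I split $\M = L + \Delta$, where $L(I,J) = \mathbf{1}[I \cup J \text{ clique}] \cdot \bar N_{|I \cup J|} \cdot \binom{\omega}{|I \cup J|}/\binom{2d}{|I \cup J|}$ depends only on the edges inside $I \cup J$ (local), while $\Delta$ captures the global fluctuation of the extension count.

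I then analyze $\E[\M]$. Since $\E[\M(I,J)]$ depends only on $|I \cup J|$, the matrix is invariant under the action of the symmetric group on $[n]$, and the representation-theoretic decomposition underlying \pref{lem:kernelsymoperators-intro} (in the spirit of \cite{Gri01}) yields an orthogonal decomposition $\R^{\nchoose d} = V_0 \oplus V_1 \oplus \cdots \oplus V_d$ with $\E[\M] = \sum_{i=0}^d \lambda_i \Pi_{V_i}$. A direct computation in the regime $\omega \leq \tO(n^{1/(d+1)})$ yields explicit values $\lambda_i$ that will serve as the targets to beat for every cross-block in the fluctuation $\M - \E[\M]$.

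For the local fluctuation $L - \E[L]$, I Fourier-expand $\mathbf{1}[I \cup J \text{ clique}] = \prod_{e \in \sE(I \cup J)} (1 + g_e)/2$ and group terms by the shape of the support. Edges lying entirely inside $I$ (or entirely inside $J$) contribute diagonal-block terms that I handle via Gershgorin-type bounds (\pref{fact:Gershgorin}); the cross edges produce a scalar multiple of the bipartite-shape matrix $Q_B$ from \pref{lem:fourier-norm-mpw-disjoint-intro}. By \pref{lem:kernelsymoperators-intro}, $Q_B$ has $V_j$ in its left kernel for $j > |B_\ell|$ and $V_i$ in its right kernel for $i > |B_r|$, so it can act only on the $(V_i, V_j)$ block with $i \leq |B_\ell|$, $j \leq |B_r|$, and by \pref{lem:fourier-norm-mpw-disjoint-intro} its norm is at most $\tO(n^{d - c(B)/2})$, where $c(B)$ is the maximum matching in $B$. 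The decisive inequality to verify, for every bipartite shape $B$, is
\[
  |\alpha_B| \cdot \|Q_B\| \;\leq\; \tfrac{2}{d+1}\sqrt{\lambda_{|B_\ell|}\,\lambda_{|B_r|}}\mper
\]
The worst shape will turn out to be a perfect matching on $2d$ vertices, and balancing its norm against $\sqrt{\lambda_d \lambda_d}$ is exactly what pins the lower-bound threshold to $\omega \leq \tO(n^{1/(d+1)})$. Once this inequality is in hand, \pref{lem:matrix-cs} transfers each cross-block to the diagonal, and \pref{lem:blockpsd} certifies $L \succeq 0$.

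The hard part will be $\Delta$, whose global dependence on $G$ (through extension counts that reach far outside $I \cup J$) is the qualitatively new difficulty absent from the degree-$4$ analysis of \cite{DM15}, where the entrywise treatment of $\Delta$ used in \cite{MPW15} would be too lossy at higher $d$. My plan is to expand $N(I \cup J) - \bar N_{|I \cup J|}$ as a sum, over $(2d - |I \cup J|)$-vertex extensions $T$ disjoint from $I \cup J$, of indicators that $I \cup J \cup T$ is a clique, and then to regroup these contributions by how many vertices of $T$ are shared between the $I$- and $J$-extensions of different matrix entries. Each resulting piece is essentially locally dependent on the edges inside $I \cup J \cup T$, so the same $Q_B$-machinery applies after one more layer of bookkeeping; the matching norm bounds will be obtained by the trace-moment argument of \pref{sec:spectralnorm} together with the subgraph-concentration estimate of \pref{sec:degreevariancecalculations}. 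Pushing the cross-block inequality $\|\cdot\| \leq \tfrac{2}{d+1}\sqrt{\lambda_{|B_\ell|}\lambda_{|B_r|}}$ through this additional layer, with only polylogarithmic slack, is the main technical obstacle. Once it is surmounted, summing the block-PSDness verifications across $L$ and the pieces of $\Delta$ and union-bounding the underlying concentration events will yield $\M \succeq 0$ with failure probability at most $1/n$, establishing \pref{thm:main1}.
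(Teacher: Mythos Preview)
Your overall architecture matches the paper's: decompose into expectation plus local fluctuation $L$ plus global fluctuation $\Delta$, use the Johnson-scheme eigenspaces $V_0,\ldots,V_d$, bound each shape's norm by \pref{lem:fourier-norm-mpw-disjoint-intro} and locate its range by \pref{lem:kernelsymoperators-intro}, then invoke \pref{lem:blockpsd}. Two concrete issues, however, would derail the execution as written.

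First, you analyze $\M$ directly, Fourier-expanding the full clique indicator $\mathbf{1}[I\cup J\text{ clique}]=\prod_{e\in\sE(I\cup J)}(1+g_e)/2$. This produces characters supported on edges \emph{inside} $I$ (or inside $J$), and those are not bipartite patterns in the sense of \pref{sec:patterned}: the kernel lemma does not apply to them, and a Gershgorin bound is far too coarse (such a character gives a rank-one matrix $u\mathbf{1}^\top$ with $u_I=\prod_e g_e$, whose left range can hit every $V_i$). The paper evades this by first passing to the ``filled-in'' matrix $\M'$ (\pref{def:MPW-Mprime}), whose $(I,J)$ entry depends only on the \emph{cross} edges $\sE(I\cup J)\setminus(\sE(I)\cup\sE(J))$; then every Fourier character is a bipartite pattern and \pref{lem:kernelsymoperators} applies verbatim. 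Since the clique-indexed principal submatrix of $\M$ is a principal submatrix of $\M'$, PSDness of $\M'$ suffices. You should insert this reduction before the Fourier step.

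Second, your identification of the tight shape is wrong. A perfect matching $B$ has $c(B)=d$, $|B_\ell|=|B_r|=d$, so the constraint becomes $\omega^{2d}n^{d-d/2}\lesssim\sqrt{\lambda_d\lambda_d}\approx n^d\omega^{d}$, i.e.\ $\omega\lesssim\sqrt n$, which is slack. The binding shape is the \emph{star}: $|B_\ell|=1$, $|B_r|=d$, $c(B)=1$, giving $\|Q_B\|\approx n^{d-1/2}$ and acting on the $(V_1,V_d)$ block; balancing against $\sqrt{\lambda_1\lambda_d}\approx n^d\omega^{(3d-1)/2}$ yields $\omega^{(d+1)/2}\lesssim n^{1/2}$, i.e.\ $\omega\lesssim n^{1/(d+1)}$. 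This is exactly the case ``one of $i,j$ is at most $1$'' in the paper's proof of \pref{lem:M'-is-PSD}, and it is what fixes the theorem's threshold.
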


Our analysis improves upon the analysis in \cite{MPW15} and generalizes the improved analysis for the special case of $d = 2$ done in $\cite{DM15}$. By a generalization of the counter example due to Kelner, our analysis can actually be shown to be tight. We defer the details of the counter example to the full version. In the remaining part of this section, we begin the task of proving \pref{thm:main1} by introducing certain simplifications and computing the eigen values of the expected value of the matrix  under $G(n, \frac{1}{2})$.
\subsection{Reduction to PSDness of $\M'$}
The presence of some zero rows in $\M$ (corresponding to index sets $S$ that are not cliques in $G$) poses a problem in analyzing its spectrum.  As in \cite{MPW15}, we evade this issue by working with $\M' \in \R^{\nchoose{d} \times \nchoose{d}}$ obtained by filling in the zero rows of $\M$ while not affecting the non zero rows of $\M$. Since the non zero part of $\M$ (for any $G$) is a sub matrix of $\M'$, proving PSDness of $\M'$ is enough. We describe $\M'$ next and begin by setting up some notation towards that goal:

For any $0 \leq i \leq d$, let $$\beta(i) = \frac{{\omega \choose {2d-i}}}{{{2d} \choose {2d-i}}}.$$ 

\begin{definition}[``Filled-in'' matrix]
  \label{def:MPW-Mprime}
 Let $\M_T \in \R^{\nchoose{d} \times \nchoose{d}}$ be defined by $\M_T(I, J) = \beta( |I \cap J|)$ whenever $I \cup J \subseteq T$ and $\sE(T) \setminus \left(\sE(I) \cup \sE(J) \right) \subseteq E$. We define the filled in matrix $\M'$ as: $$\M' = \sum_{T: |T| = 2d} \M_T.$$
\end{definition}

Observe that for any $I,J$, $\M'(I,J)$ is chosen so as to depend only on the edges with one end point in $I$ and the other in $J$. Intuitively, this corresponds to thinking of $I$ and $J$ as being cliques in $G$ by addition of some edges. Moreoever, $\M'(I,J)$ is chosen so that $\M'(I,J)=M(I,J)$ whenever $I,J$ are actually cliques in $G$. Thus, as noted above, we have the following fact (which is Lemma 5.1 in \cite{MPW15}).

\begin{fact}[Lemma 5.1 in \cite{MPW15}]
$\M$ is PSD if $\M'$ is PSD.
\end{fact}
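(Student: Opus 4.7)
The plan is to show that $\M$, up to reordering of its basis, is obtained from a principal submatrix of $\M'$ by padding with zero rows and columns; since padding with zeros preserves positive semidefiniteness, PSDness of $\M'$ immediately yields PSDness of $\M$.

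First I would identify the support of $\M$. Let $R \subseteq \nchoose{d}$ denote those $d$-subsets that are cliques in $G$. If $I \notin R$ then $I$ contains a non-edge of $G$, so $I \cup J$ contains a non-edge for every $J$, and no $2d$-clique of $G$ can contain $I \cup J$; consequently $\deg_G(I \cup J) = 0$ and $\M(I, J) = 0$. By symmetry the same vanishing holds when $J \notin R$, so all nonzero entries of $\M$ lie in the block $\M|_{R \times R}$.

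Next I would verify the identity $\M(I, J) = \M'(I, J)$ for every $I, J \in R$. Using $|I| = |J| = d$, we have $|I \cup J| = 2d - |I \cap J|$, so $\M(I, J) = \deg_G(I \cup J) \cdot \beta(|I \cap J|)$. Writing $\M'(I,J) = \sum_{|T|=2d} \M_T(I,J)$, the plan is to observe that for any fixed $T \supseteq I \cup J$ of size $2d$, the condition $\sE(T) \setminus (\sE(I) \cup \sE(J)) \subseteq E$ defining when $\M_T(I, J) \neq 0$ is equivalent to $T$ being a $2d$-clique of $G$: since $I, J \in R$ already forces $\sE(I) \cup \sE(J) \subseteq E$, adding this condition is just $\sE(T) \subseteq E$. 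Summing the constant value $\beta(|I \cap J|)$ over the $2d$-cliques of $G$ containing $I \cup J$ yields exactly $\deg_G(I \cup J) \cdot \beta(|I \cap J|) = \M(I, J)$. A case worth mentioning is when $I, J \in R$ but $I \cup J$ is not a clique: then there must exist $i \in I \setminus J$, $j \in J \setminus I$ with $\{i, j\} \notin E$ (since otherwise $I$ or $J$ would contain a non-edge), and this pair lies in $\sE(T) \setminus (\sE(I) \cup \sE(J))$ for every $T \supseteq I \cup J$, so no $T$ contributes and both sides vanish.

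Combining these, $\M|_{R \times R} = \M'|_{R \times R}$. Assuming $\M' \succeq 0$, this principal submatrix is PSD, and $\M$ is obtained from it by padding with zero rows and columns indexed by $\nchoose{d} \setminus R$, which preserves PSDness. Hence $\M \succeq 0$. I do not expect any serious obstacle in executing this plan, as the argument amounts to bookkeeping to align the two definitions. The only subtle step is the case analysis in the second paragraph, and in particular locating the offending non-edge inside $\sE(I \cup J) \setminus (\sE(I) \cup \sE(J))$ when $I, J$ are cliques but $I \cup J$ is not, which is what makes the definition of $\M_T$ conservative enough that the filled-in matrix still agrees with $\M$ on the clique-indexed block.
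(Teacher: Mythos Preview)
Your proposal is correct and follows exactly the approach the paper indicates. The paper does not actually give a proof of this fact (it is cited from \cite{MPW15}), but the surrounding text states precisely the two observations you carry out in detail: that ``$\M'(I,J)$ is chosen so that $\M'(I,J)=\M(I,J)$ whenever $I,J$ are actually cliques in $G$'' and that ``the non zero part of $\M$ (for any $G$) is a sub matrix of $\M'$.'' One minor remark: your separate case analysis for when $I,J\in R$ but $I\cup J$ is not a clique is unnecessary, since your general argument already shows $\M'(I,J)=\beta(|I\cap J|)\cdot\deg_G(I\cup J)$, and $\deg_G(I\cup J)=0$ in that situation automatically.
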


To analyze $\M'$ we decompose into two parts initially writing $\M' = E + D$ where $E = \E_{G \sim G(n, \frac{1}{2})}[ \M']$. We show that $E$ is PSD with all eigenvalues bounded away from $0$ in the next subsection following which we analyze the deviation $D = \M' - E$ by writing it as a sum of various pieces and decomposing the action of each piece along the eigenspaces of $E$ in \pref{sec:degree-d}. 

Thus, the following lemma completes the proof of \pref{thm:main1}.
\begin{lemma}[$\M'$ is PSD] \label{lem:M'-is-PSD}
With probability at least $1-1/n$ over the draw of $G \sim G(n,\frac{1}{2})$ for $d = o(\sqrt{\log{(n)}})$, $\M'$ defined by \pref{def:MPW-Mprime} satisfies $\M' \succeq 0$ whenever $\omega \leq \tO(n^{\frac{1}{d +1}}).$
\end{lemma}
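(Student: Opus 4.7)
The plan is to follow the strategy sketched in the technical overview: decompose $\M' = E + D$ where $E = \E_{G}[\M']$, understand the eigenspace structure of $E$, then control the deviation $D$ by a Fourier-type decomposition and match each piece against the eigenspaces of $E$ via \pref{lem:blockpsd} and \pref{lem:matrix-cs}. The PSDness condition reduces to showing, for each pair of eigenspaces $V_i,V_j$, that the cross-block norms of $D$ are dominated by the geometric mean of the diagonal eigenvalues of $E$.

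First, I would compute $E = \E_G[\M']$ exactly. By linearity and the symmetry of $G(n,1/2)$, $E(I,J)$ depends only on $|I \cap J|$, so $E$ is a Johnson-scheme matrix. Standard representation theory (the symmetries developed in \pref{sec:tools} following Grigoriev) gives an orthogonal decomposition $\R^{\binom{[n]}{d}} = V_0 \oplus V_1 \oplus \cdots \oplus V_d$, where $V_i$ is (essentially) spanned by multilinearized vectors of the form $(\sum_j x_j)^{d-i} \cdot p(x)$ for degree-$i$ harmonic $p$. On $V_i$, $E$ acts as the scalar $\lambda_i \approx (\omega/n)^{d+i} \cdot n^d / (\text{combinatorial factors})$, and after tracking the $\beta(|I \cap J|)$ factors one finds $\lambda_i \gtrsim \omega^{d+i}/n^i$ up to polylog factors and $d$-dependent constants. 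I would verify $\lambda_i > 0$ for all $i \leq d$ in the regime $\omega \leq \tO(n^{1/(d+1)})$.

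Next, I would Fourier-decompose $D$ over the edge variables $g_e \in \{\pm 1\}$. Using \pref{def:MPW-Mprime}, the entry $\M_T(I,J)$ depends on the indicators of the edges in $\sE(T)\setminus(\sE(I)\cup\sE(J))$ through an AND, whose Fourier expansion over $\pm 1$ edge variables yields, after averaging over $T$, a sum of the matrices $Q_B$ from \pref{lem:fourier-norm-mpw-disjoint-intro} indexed by bipartite graphs $B \subseteq [d]\times[d]$, each scaled by an appropriate factor of $\beta(\cdot)$ and a combinatorial count. The empty-$B$ term reproduces $E$; every other $B$ contributes to $D$.

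Now comes the main step: for each nonempty $B$, let $\ell = |B_\ell|$ and $r = |B_r|$, and let $c$ be the size of a maximum matching in $B$. By \pref{lem:kernelsymoperators-intro}, $\Pi_j Q_B \Pi_i = 0$ unless $j \leq \ell$ and $i \leq r$, so $Q_B$ contributes only to the $(i,j)$-block with $i \leq r$, $j \leq \ell$. By \pref{lem:fourier-norm-mpw-disjoint-intro}, the scaled $Q_B$ has spectral norm at most $\tO(\beta_{\text{eff}} \cdot n^{d-c/2})$, where $\beta_{\text{eff}}$ carries the $(\omega/n)^{\text{something}}$ scaling from the definition of $\M'$. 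I would then verify, for every triple $(B,i,j)$ with $i\leq r$, $j\leq \ell$, the inequality
\[
\|\Pi_j Q_B \Pi_i\| \cdot \mathrm{scale}(B) \;\leq\; \tfrac{1}{\mathrm{poly}(n)} \sqrt{\lambda_i \lambda_j}\mper
\]
Since the maximum matching satisfies $c \geq \max(\ell,r) - (d-\min(\ell,r))$ in a suitable sense, the critical case is when $B$ saturates both sides with a perfect matching between $B_\ell$ and $B_r$, giving exactly the tightness at $\omega = n^{1/(d+1)}$ up to polylog factors. Summing over the (constant number, for $d = o(\sqrt{\log n})$) of $B$'s and applying \pref{lem:matrix-cs} to each cross-block and the diagonal-block bound $\Pi_i E \Pi_i \succeq \lambda_i \Pi_i$, \pref{lem:blockpsd} yields $\M' \succeq 0$.

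The main obstacle I expect is the tight bookkeeping for the cross-block terms. The Fourier coefficients of the AND involve cancellations between $B$'s of different sizes, and the $\beta(|I\cap J|)$ factor couples to the pattern of edges, so the ``scale'' of each $Q_B$ requires careful computation and is where extra $\polylog n$ or $2^d$ factors can creep in. A second subtlety is the concentration step needed for \pref{lem:fourier-norm-mpw-disjoint-intro}, which relies on small-subgraph-count concentration over $G(n,1/2)$ (the content of \pref{sec:degreevariancecalculations}); I would invoke that as a black box. Finally, union-bounding over all $B$ and all pairs $(i,j)$ must yield overall failure probability $\leq 1/n$, which is fine since the number of $B$'s is $2^{O(d^2)} = n^{o(1)}$ in our regime.
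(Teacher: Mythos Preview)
Your high-level strategy matches the paper's, but there is a genuine gap that would prevent the argument from going through as written.

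\textbf{The missing $L/\Delta$ split.} You propose to Fourier-expand the AND inside each $\cM_T(I,J)$ and, ``after averaging over $T$,'' obtain a sum of matrices $Q_B$ indexed by bipartite $B\subseteq[d]\times[d]$. This is not what happens. The edge set $\sE(T)\setminus(\sE(I)\cup\sE(J))$ contains not only the cross edges $\sE_{ext}(I,J)$ between $I\setminus J$ and $J\setminus I$ but also all edges from the extension vertices $T\setminus(I\cup J)$ into $I\cup J$ (and among themselves). The Fourier characters supported only on $\sE_{ext}(I,J)$ do yield, after summing over $T$, the patterned matrices you describe---this is exactly the matrix the paper calls $L$. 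But every character touching an edge incident to $T\setminus(I\cup J)$ produces, after the sum over $T$, a term that depends on the \emph{global} graph (effectively on the deviation of $\deg_G(I\cup J)$ from its mean). This is the matrix the paper calls $\Delta$, and it is \emph{not} of the form $Q_B$ for any bipartite $B\subseteq[d]\times[d]$. The paper stresses that a crude (e.g.\ row-sum) bound on $\Delta$ recovers only the $n^{1/(2d)}$ threshold of \cite{MPW15}; to reach $n^{1/(d+1)}$ one needs a separate fine-grained block analysis of $\Delta$ (\pref{lem:Delta-bound}), which decomposes the $\deg_G$ fluctuation as a sum over individual vertex contributions $e^1_{s,K},e^2_{s,K},e^3_{s,K}$ and then reuses the kernel/norm machinery on those pieces. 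Your plan contains no analogue of this step.

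\textbf{The eigenvalue scaling.} Your claimed eigenvalues $\lambda_i\gtrsim\omega^{d+i}/n^i$ are incorrect for $\cM'$: \pref{lem:Eeigenvals} gives $\lambda_j\geq 2^{-O(d^2)}\,n^d\,\omega^{2d-j}$ (so the exponent of $\omega$ \emph{decreases} with $j$, and the $n$-power is $n^d$ throughout). With the wrong eigenvalue profile the block-vs-cross comparison you sketch will not balance at $\omega\approx n^{1/(d+1)}$. This is fixable bookkeeping, but combined with the missing $\Delta$ analysis the current outline does not establish the lemma.
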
 

\subsection{The Expectation Matrix}
The minimum eigen values of the expectation matrix $E = \E[ \M']$ was analyzed in \cite{MPW15} via known results about the Johnson scheme matrices. The same proof also yields all the eigenvalues of $E$ which we note here. 

We first describe the entries of the matrix $\M'$.
\begin{fact}[Entries of $\M'$, see Claim 7.3 in \cite{MPW15}]
For every $I, J \in \nchoose{d}$ and $E = \E[M']$, $$E(I,J) = {{n-|I \cup J|} \choose {2d - |I \cup J|}} \cdot \frac{{\omega \choose {|I \cup J|}}}{{2d \choose {|I \cup J|}}} \cdot 2^{-d^2 - {{|I \cap J|} \choose 2}}.$$
\end{fact}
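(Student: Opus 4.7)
The plan is to establish the formula by direct calculation, expanding $\M'$ as a sum over the sets $T$ in its definition, taking the expectation term-by-term using the independence of edges in $G \sim G(n,1/2)$, and then counting the relevant subsets and edge sets combinatorially.

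First I would fix $I, J \in \nchoose{d}$ and write
\[
  \M'(I,J) = \sum_{T:\, |T| = 2d} \M_T(I,J) = \beta(|I\cap J|) \cdot \sum_{T:\, |T|=2d,\, I \cup J \subseteq T} \Ind\bigl[\sE(T) \setminus (\sE(I) \cup \sE(J)) \subseteq E(G)\bigr],
\]
using that $\beta(|I\cap J|)$ depends only on $I, J$ and that $\M_T(I,J) = 0$ unless $I \cup J \subseteq T$. Taking expectations, the indicator becomes the probability that a specified set of edges is present, which is $2^{-|\sE(T)\setminus(\sE(I)\cup\sE(J))|}$ since in $G(n,1/2)$ each edge is present independently with probability $1/2$.

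Next I would compute the two combinatorial quantities. The number of $T$ of size $2d$ containing $I \cup J$ equals $\binom{n-|I \cup J|}{2d-|I \cup J|}$. For the edge count, inclusion-exclusion gives $|\sE(I) \cup \sE(J)| = 2\binom{d}{2} - \binom{|I\cap J|}{2}$, so
\[
  |\sE(T) \setminus (\sE(I) \cup \sE(J))| = \binom{2d}{2} - 2\binom{d}{2} + \binom{|I\cap J|}{2} = d^2 + \binom{|I\cap J|}{2},
\]
where the identity $\binom{2d}{2} - 2\binom{d}{2} = d^2$ is the one simplification to check.

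Finally I would use $|I \cup J| = 2d - |I \cap J|$ (since $|I|=|J|=d$) to rewrite $\beta(|I \cap J|) = \binom{\omega}{|I\cup J|}/\binom{2d}{|I\cup J|}$, and then assemble the three factors:
\[
  E(I,J) = \underbrace{\binom{n-|I\cup J|}{2d-|I\cup J|}}_{\#\, T \supseteq I\cup J} \cdot \underbrace{\frac{\binom{\omega}{|I\cup J|}}{\binom{2d}{|I\cup J|}}}_{\beta(|I\cap J|)} \cdot \underbrace{2^{-d^2 - \binom{|I\cap J|}{2}}}_{\Pr[\text{edges present}]},
\]
which is the claimed formula. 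There is no real obstacle here; the only step that requires a moment's care is the edge-count simplification to $d^2 + \binom{|I \cap J|}{2}$, and the observation that the expectation factors into a deterministic combinatorial count times a single-edge probability because the indicator events for distinct $T$'s simply add once the $\beta$ factor is pulled out.
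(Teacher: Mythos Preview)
Your proposal is correct and is exactly the natural direct computation one would expect; the paper itself does not prove this fact but simply cites Claim~7.3 of \cite{MPW15}, so there is no alternative approach to compare against. The only small remark is that your final sentence slightly understates the reason the count factors: the key point is that the edge set $\sE(T)\setminus(\sE(I)\cup\sE(J))$ has the \emph{same cardinality} $d^2+\binom{|I\cap J|}{2}$ for every admissible $T$ (since $|T|=2d$ and $I\cup J\subseteq T$), so each summand contributes the same probability and the sum collapses to a count times that common probability.
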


Next, we need a basic fact about the (shared) eigenspaces of all the set symmetric matrices, in particular, their number and dimensions which follows from the following well known result from classical theory of Johnson schemes.
\begin{fact}[Lemma 6.6 of \cite{MPW15}] \label{fact:johnsondecomp}
Fix $n,d \leq n/2$ and let $\J= \J(n,d)$ be the set of all set symmetric matrices in $\R^{\nchoose{d} \times \nchoose{d}}$. Then, there exist subspaces $V_0, V_1, \ldots, V_d \in \R^{\nchoose{d}}$ that are orthogonal to each other such that:
\begin{enumerate}
\item $V_0, V_1, \ldots, V_d$ are eigen spaces for every $J \in \J$ and are isomorphic to distinct irreducible representations of the symmetric group $\S_n$ (See \pref{sec:tools} for definitions). \
\item For $0 \leq j \leq d$, $dim(V_j) = {n \choose j} - {n \choose {j-1}}$. 
\end{enumerate}
\end{fact}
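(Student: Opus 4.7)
The plan is representation-theoretic. First observe that a matrix $J \in \R^{\nchoose{d} \times \nchoose{d}}$ is set-symmetric if and only if it commutes with the permutation action $\rho$ of $\S_n$ on $\R^{\nchoose{d}}$ given by $\rho(\sigma)e_I = e_{\sigma(I)}$. Thus $\J(n,d)$ is exactly the commutant $\End_{\S_n}(\R^{\nchoose{d}})$, and any collection of subspaces that are simultaneous eigenspaces for every $J \in \J$ must be a refinement of the isotypic decomposition of $\R^{\nchoose{d}}$ into irreducible $\S_n$-modules. Conversely, once we show the decomposition is multiplicity-free into pairwise non-isomorphic irreducibles $V_0,\ldots,V_d$, Schur's lemma forces every $J \in \J$ to act as a scalar on each $V_j$, which is exactly what we need.

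Next I would count the number of irreducible summands. By a standard fact (Burnside / orbit counting), $\dim \J$ equals the number of $\S_n$-orbits on $\nchoose{d} \times \nchoose{d}$. These orbits are parameterized by the intersection size $|I \cap J| \in \{0,1,\ldots,d\}$, so $\dim \J = d+1$. On the other hand, $\dim \End_{\S_n}(\R^{\nchoose{d}})$ equals the sum of squared multiplicities of irreducible summands, so the representation is forced to be multiplicity-free with exactly $d+1$ distinct irreducible constituents. Call them $V_0,\ldots,V_d$.

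To pin down dimensions and give the $V_j$ concretely, I would construct an explicit filtration by the ``up'' operators $U^d_j : \R^{\nchoose{j}} \to \R^{\nchoose{d}}$ defined by $U^d_j(e_T) = \sum_{S \supseteq T,\, |S|=d} e_S$. Each $U^d_j$ is $\S_n$-equivariant, and setting $W_j = \mathrm{Image}(U^d_j)$ gives a chain of $\S_n$-submodules $W_0 \subseteq W_1 \subseteq \cdots \subseteq W_d = \R^{\nchoose{d}}$. Defining $V_j := W_j \cap W_{j-1}^\perp$ (with $W_{-1} = 0$) yields an orthogonal decomposition into $d+1$ nonzero $\S_n$-submodules, which by the previous paragraph must coincide with the irreducible components (up to reindexing). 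Since $U^d_j$ is injective for $j \leq d \leq n/2$, we get $\dim W_j = \binom{n}{j}$ and hence $\dim V_j = \binom{n}{j} - \binom{n}{j-1}$.

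The main obstacle will be verifying injectivity of $U^d_j$ for $j \leq d \leq n/2$, i.e., that the Johnson-scheme inclusion matrix has full column rank in this range. I would handle this by exhibiting an explicit left inverse $\R^{\nchoose{d}} \to \R^{\nchoose{j}}$ given by signed sums: for $|T|=j$ set $(L e_S)_T = (-1)^{|T \setminus S|} \cdot c_{|T \cap S|}$ for appropriately chosen constants $c_k$, and verify by a binomial-identity calculation that $L \circ U^d_j$ is a nonzero multiple of the identity when $j \leq n/2$. Together with the orbit-counting argument above this gives both the irreducibility of each $V_j$ and their pairwise non-isomorphism (the dimensions $\binom{n}{j} - \binom{n}{j-1}$ are strictly increasing on $0 \leq j \leq n/2$, so the $V_j$ are non-isomorphic for trivial reasons), completing both items of the fact.
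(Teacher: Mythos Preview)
The paper does not prove this statement: it is recorded as a \emph{fact} with a citation to Lemma~6.6 of \cite{MPW15}, and no argument is supplied here. So there is no ``paper's own proof'' to compare against.

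Your proposal is a correct and standard representation-theoretic proof. The two key ingredients---(i) the orbit count $\dim\J=d+1$ giving $\sum_\lambda m_\lambda^2=d+1$, and (ii) the filtration by images of the inclusion maps $U^d_j$ giving at least $d+1$ nonzero orthogonal $\S_n$-submodules, hence $\sum_\lambda m_\lambda\ge d+1$---together force all multiplicities to be $1$ and all $d+1$ pieces $V_j$ to be irreducible and mutually non-isomorphic. One small comment: the injectivity of $U^d_j$ for $j\le d\le n/2$ is classical (Gottlieb's theorem), and your signed-sum left-inverse sketch is workable but not the slickest route; a cleaner alternative is to show that $(U^d_j)^\top U^d_j$ lies in the (commutative) Johnson scheme on $j$-sets and has strictly positive eigenvalues in the stated range. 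Either way, the argument goes through and yields exactly the dimensions $\binom{n}{j}-\binom{n}{j-1}$.
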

Using a nice basis for the matrices in $\J$, one can obtain the following estimates of the eigenvalues of $E$ on $V_i$ for each $0 \leq i \leq d$:

\begin{lemma}[Eigenvalues of $E$]
\label{lem:Eeigenvals}
Let $\omega < \frac{n-2d}{3d2^{d-1}}$ and $d \leq \omega/2$. Let $\lambda_j(E)$ be the eigenvalue of $E$ on $V_j$ as defined in \pref{fact:johnsondecomp}. Then, $$\lambda_j(E) \geq \frac{1}{2} \cdot {{n-2d+j} \choose j} \cdot \frac{{\omega \choose {2d-j}}}{{{2d} \choose {2d-j}}} \cdot 2^{-d^2 - {j \choose 2}} \cdot {{n-t-j} \choose {d-t}} \cdot {{d-j} \choose  {t-j}} \geq 2^{-O(d^2)} \cdot n^d \cdot \omega^{2d-j}.$$
\end{lemma}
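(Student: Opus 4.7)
The plan is to exploit \pref{fact:johnsondecomp}: because $E$ is a set-symmetric matrix, the subspaces $V_0,\ldots,V_d$ simultaneously diagonalize it, and computing $\lambda_j(E)$ reduces to evaluating $E$ on a convenient test vector in each $V_j$. First observe that $E(I,J) = f(|I\cap J|)$ with
\[
f(t) = \binom{n-2d+t}{t} \cdot \frac{\binom{\omega}{2d-t}}{\binom{2d}{2d-t}} \cdot 2^{-d^2-\binom{t}{2}}
\]
(this is just a rewrite of the entry formula in terms of $t = |I\cap J|$). I would then expand $E = \sum_{s=0}^{d} g(s)\,U_s$ in the basis of ``subset-count'' matrices $U_s(I,J) = \binom{|I \cap J|}{s}$, with $g(s)$ determined by the triangular identity $f(t) = \sum_s g(s)\binom{t}{s}$. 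The merit of this basis is that $U_s = A_s A_s^\top \succeq 0$ for $A_s(I,K) = \1[K\subseteq I]$, and the subspaces $V_j$ interact with $U_s$ cleanly: $V_j \subseteq \operatorname{col}(A_s)$ when $j \leq s$, and $V_j \subseteq \ker U_s$ when $j > s$.

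Next I would compute the eigenvalue of $U_s$ on $V_j$ explicitly. Using a test vector $v \in V_j$ built as $v = \sum_{|S|=j} c_S\, v_S$ with $v_S(I) = \1[S\subseteq I]$ and the $c_S$'s chosen so that every lower-order marginal vanishes (forcing $v \in V_j$), a short counting argument yields $U_s v = \binom{n-j-s}{d-s}\binom{d-j}{s-j}\,v$ for $j\le s\le d$. Substituting into the expansion of $E$ gives
\[
\lambda_j(E) = \sum_{s \geq j} g(s)\binom{n-j-s}{d-s}\binom{d-j}{s-j}\mcom
\]
with the parameter $t$ appearing in the statement of the lemma corresponding to picking out a single term $s = t$ in this sum.

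The crux is that $g(s) \geq 0$ for every $s$, i.e., $E$ is a nonnegative combination of PSD matrices $U_s$. This is the content of the ``nice basis'' from \pref{fact:johnsondecomp} used in \cite{MPW15} and can be read off directly from the combinatorial interpretation of $E$: $f(t)$ is (up to normalization) the expected count of $2d$-clique completions of a pair with $t$ common elements, and $g(s)$ extracts the contribution in which the ``extra overlap'' structure has size exactly $s$. Once nonnegativity is established, $\lambda_j(E)$ is a sum of nonnegative terms and can be lower-bounded by any single one; a short calculation using the hypothesis $\omega < (n-2d)/(3d\,2^{d-1})$ to rule out cancellation in the inclusion--exclusion defining $g$ produces the bound $g(t) \geq \tfrac{1}{2} f(j)$, which yields the explicit lower bound stated. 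The asymptotic simplification $\lambda_j(E) \geq 2^{-O(d^2)} n^d \omega^{2d-j}$ then follows by applying $\binom{N}{k} \geq (N/k)^k$ termwise. The main obstacle is the combinatorial-positivity step $g(s) \geq 0$ together with the bound $g(t) \geq \tfrac 12 f(j)$; without these, the alternating-sign terms in the Möbius-type inversion cannot be discarded without losing the clean $\omega^{2d-j}$ behavior, and this is precisely where the ``nice basis'' computation from \cite{MPW15} (done there only for the minimum eigenvalue) carries over verbatim to every $V_j$.
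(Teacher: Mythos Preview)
Your approach is correct and is precisely the one the paper defers to: the paper does not give its own proof but points to the ``nice basis'' computation from \cite{MPW15}, which is exactly the expansion $E = \sum_s g(s)\,U_s$ with $U_s(I,J)=\binom{|I\cap J|}{s}$, followed by the eigenvalue formula for $U_s$ on $V_j$ and the positivity/lower bound $g(s)\ge \tfrac12 f(s)$ coming from the alternating-sum estimate under the hypothesis on $\omega$. One small cleanup: the M\"obius inversion gives $g(t)\ge \tfrac12 f(t)$, not $\tfrac12 f(j)$; these coincide at $t=j$, which is the term that actually delivers the $2^{-O(d^2)} n^d \omega^{2d-j}$ asymptotic (and the lemma's displayed middle expression likely intends $t=j$, as $t$ is otherwise unquantified).
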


\section{The Corrected Operator for Degree Four}
\label{sec:deg4}
In this section, we present the pseudodistribution that we will use to show an almost optimal lower bound on the degree $4$ SOS algorithm. Our pseudodistribution is obtained by ``correcting" the one described in the previous section. The correction itself is inspired by an explicit polynomial described by Kelner who showed that the pseudodistribution from the previous section for degree $4$ does not satisfy positivity for $\omega \gg n^{1/3}$.

We now lay some groundwork for defining our modified operator. In the following, we will always work on a fixed graph $G$ on $[n]$ and use $\tE_0$ to denote the MPW pseudoexpectation operator for $d=2$. We start by defining a specific neighborhood indicator vector for every vertex in $G$. For a vertex $s \in G$, let the vector $r_s \in \R^n$ be given by
  \[
    r_s(j) = \begin{cases}
    1 & \text{ if $s \sim j$}\\
    -1 & \text{ if $s \not \sim j$}\\
    0 & \text{ if $s = j$}
    \end{cases}\mper
  \]
  
Next, we define the additive correction $\cL$ to the MPW pseudoexpectation operator $\tE$. $\cL$ will be a linear operator on the space of homogeneous degree $4$ polynomials. Because of linearity, it is enough to define $\L$ on the basis of all monomials of degree $4$.
\begin{definition}[Correction Term]
  \label{def:update}Let $\gamma > 0$ be a real parameter to be chosen later.
Let $\cL$ be the linear operator on the linear space of homogeneous, multilinear polynomials of degree $4$ such that:
  \[
    \cL [x_i x_j x_k x_\ell] = \begin{cases}
            \gamma (\tfrac {\omega} n )^5 \sum_s r_s(i) r_s(j) r_s(k) r_s(\ell) & \text{ if $i,j,k,\ell$ form a clique in $G$}\\
            0 & \text{ otherwise}
            \end{cases}\mper
  \]
\end{definition}

The following is easy to prove.
\begin{fact}
  \label{fact:so-ugly}
For $\omega > 0$ and $c \ll \omega$, there exists $x$, $x = \omega \pm O(c/\omega^3)$, such that ${x \choose 4} = {\omega \choose 4} + c$.
\end{fact}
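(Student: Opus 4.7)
The plan is to apply the intermediate value theorem to the polynomial $f(y) = \binom{y}{4} = y(y-1)(y-2)(y-3)/24$, using a derivative estimate to control the size of the required perturbation. Since $c \ll \omega$, we may assume $\omega$ is large (the statement is vacuous or trivial for $\omega = O(1)$).

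First, I would compute $f'(y) = (4y^3 - 18y^2 + 22y - 6)/24$ and observe that for $\omega$ sufficiently large, $f'(y) = \frac{\omega^3}{6}(1 + O(1/\omega))$ uniformly for $y$ in an interval of length $1$ around $\omega$. In particular, on such an interval, $f'(y) \geq \omega^3/12$ and $f'(y) \leq \omega^3/3$.

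Next, pick a constant $A$ (say $A = 20$) and consider the interval $I = [\omega - Ac/\omega^3,\, \omega + Ac/\omega^3]$. Since $c \ll \omega$, we have $Ac/\omega^3 \ll 1/\omega^2 \leq 1$, so $I$ lies inside the neighborhood where the derivative bound above holds. By the mean value theorem,
\[
  f(\omega + Ac/\omega^3) - f(\omega) \;\geq\; \frac{Ac}{\omega^3} \cdot \frac{\omega^3}{12} \;=\; \frac{A}{12} c \;>\; c,
\]
and symmetrically $f(\omega) - f(\omega - Ac/\omega^3) > c$. Since $f$ is continuous, the intermediate value theorem produces some $x \in I$ with $f(x) = f(\omega) + c = \binom{\omega}{4} + c$, and by construction $x = \omega \pm O(c/\omega^3)$.

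There is no real obstacle here; the only mild subtlety is ensuring that the interval is simultaneously short enough for the linear approximation $f'(y) \approx \omega^3/6$ to hold and long enough to contain a preimage of $\binom{\omega}{4} + c$, which is automatic from the hypothesis $c \ll \omega$.
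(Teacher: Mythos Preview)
Your argument is correct. The paper itself does not supply a proof of this fact; it is stated as ``easy to prove'' and then invoked in the construction of the corrected pseudo-expectation and in the proof of \pref{lem:deg4-constraints}. Your intermediate-value-theorem argument with the derivative estimate $f'(y) \sim \omega^3/6$ near $\omega$ is exactly the natural way to justify it. One minor notational point: since $c$ may be negative (in the paper it arises as $\sum_{S} \cL[x_S]$, which has no sign constraint), the interval should be $[\omega - A|c|/\omega^3,\ \omega + A|c|/\omega^3]$ and the inequalities should be stated with $|c|$; your symmetric treatment of the two endpoints already handles both signs, so this is purely cosmetic.
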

We now go on to define the corrected moments $\tE:\P_4^n \rightarrow \R$. 

\begin{definition}[Corrected Pseudoexpectation]
\label{def:corrected-moms}
We first use the correction operator $\cL = \cL_\gamma$ to define the corrected moments on all multilinear monomials of degree $4$.
 For every $S \subseteq [n]$, $|S| = 4$, we set:
  \[
    \pE [x_S] = \pE_0[x_S] + \cL [x_S]\mper
  \]
Next, we want to extend $\tE$ to all the monomials so that $\tE[1] = 1$ and $\tE[ \sum_{i} x_i ] = \omega'$ for some $\omega' \approx \omega.$
 Towards this, we let $c = \sum_{S:  \text{$S$ is a $4$-clique in $G$}} \cL[x_S] \mper$ Then, observe that:
  \[
    \pE \sum_{S: \text{$S$ is a $4$-clique in $G$}} x_S
     = \sum_{S: \text{$S$ is a $4$-clique in $G$}} \pE_0 x_S + \cL x_S = {\omega \choose 4} + c
  \]
Then we know there exists $\omega'= \omega \pm O(c/\omega^3) > 0$ satisfying ${\omega' \choose 4} \eqdef {\omega \choose 4} + c$ (using \pref{fact:so-ugly}).
 Thus, the degree $4$ moments we defined ``think'' $\sum_i x_i = \omega'$.
 We use this relationship to extend the definition to all the monomials.
 For every $S \subseteq [n]$, $|S| = 3$, 
  \[
    \pE[x_S] = \frac {1} {\omega' -3} \sum_{\ell \not \in S} \pE [x_{S \cup \ell}] \mper
  \]
Similarly, for each $S$: $|S| = 2$, 
  \[
    \pE [x_S] = \frac {1} {\omega' - 2} \sum_{\ell \not \in S} \pE [x_{S \cup \ell}]\mper
  \]
  Finally, we set
  \[
    \pE[x_i] = \frac 1 {\omega' -1} \sum_{\ell \neq i} \pE[x_i x_\ell]
  \]
  and $\pE 1 = 1$.
\end{definition}
 
\begin{theorem}[ \pref{thm:deg4-intro}, formal]
  \label{thm:deg-4-sqrtn}
  Let $G \sim G(n,1/2)$.
  There is $\omega = \Omega(\sqrt n / \polylog n)$ so that
  with probability $1 - 1/n$ the operator $\pE$ of \pref{def:corrected-moms} is a valid a degree-4 pseudo-expectation satisfying \pref{eq:prog-formulation} for $d = 2$.
\end{theorem}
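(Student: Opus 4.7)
The plan is to verify, in turn, normalization, the constraints from \pref{eq:prog-formulation}, and positivity of the operator $\pE$ from \pref{def:corrected-moms}. Normalization $\pE[1] = 1$ holds by definition. Working with multilinear polynomials via \pref{fact:homog}, the non-edge constraint $\{x_i x_j = 0\}$ holds on degree-$4$ monomials since both $\pE_0[x_S]$ and $\cL[x_S]$ vanish when $S$ is not a clique, and the averaging definition of the lower-degree moments propagates this. The sum constraint $\sum_i x_i = \omega'$ is built into the extension from the degree-$4$ moments, and since $\omega' - \omega = O(\polylog n)$ by \pref{fact:so-ugly} together with the high-probability concentration of the scalar corrections, any target $\omega = \Omega(\sqrt n / \polylog n)$ can be realized at the cost of adjusting polylog factors.

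The main task is positivity of the moment matrix $\M \in \R^{\nchoose{2} \times \nchoose{2}}$ with entries $\M(I,J) = \pE[x_I x_J]$. I would write $\M = E + D$ where $E = \E_G \M$ is the expectation and $D$ is the graph-dependent deviation. Applying \pref{lem:Eeigenvals} to the MPW contribution and analyzing the expected contribution of the correction, the eigenspaces of $E$ are the Johnson-scheme pieces $V_0, V_1, V_2$ from \pref{fact:johnsondecomp}, except that the correction $N = \gamma(\omega/n)^5 \sum_s (r_s^{\tensor 2})(r_s^{\tensor 2})^\top$ carves out of $V_2$ a new subspace $V_{1.5}$ on which the eigenvalue is boosted substantially (see the ``Analysis of the correction'' discussion in the overview). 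Writing $V_2'$ for the orthogonal complement of $V_{1.5}$ inside $V_2$, I would then apply \pref{lem:blockpsd} with subspaces $V_0, V_1, V_{1.5}, V_2'$, reducing the proof of $\M \succeq 0$ to bounding each off-diagonal block $\|\Pi_i D \Pi_j\|$ by $\sqrt{\lambda_i \lambda_j}$ up to polylog factors.

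For the off-diagonal bounds I would Fourier-expand $D$ as $D = \sum_B Q_B$ over bipartite patterns $B \subseteq [2]\times[2]$, using the framework of \pref{sec:tools}. \pref{lem:kernelsymoperators-intro} controls which $\Pi_i Q_B \Pi_j$ can be nonzero in terms of the left/right active vertex counts of $B$, while \pref{lem:fourier-norm-mpw-disjoint-intro} controls $\|Q_B\|$ in terms of the maximum matching in $B$. Matching pieces to subspaces this way, every Fourier contribution is either small in spectral norm or annihilated by the subspaces carrying small $\lambda_i$. The critical cross term is the one corresponding to the monomial $g_{a_1 b_1} g_{a_1 b_2}$ from Kelner's argument (\pref{sec:Kelner-poly}) with $\|Q_B\| \approx \omega^4 / n^{7/2}$: in the simple moments this had to beat $\sqrt{\lambda_1 \lambda_2}$, forcing $\omega \ll n^{1/3}$, whereas now it can be matched against $\sqrt{\lambda_1 \lambda_{1.5}} \approx \omega^4 / n^{7/2}$, which succeeds up to polylog factors at $\omega = \tOmega(\sqrt n)$ for an appropriate choice of $\gamma$.

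The main obstacle is that $N$ itself does not satisfy the non-edge constraints: it has nonzero entries on pairs $(I,J)$ for which $I \cup J$ is not a clique in $G$, while $\cL$ zeros these out. I would analyze the discrepancy $\cL - N$ as a separately correlated random matrix, bounding its spectral norm via the combinatorial tools of \pref{sec:spectralnorm} together with the trace moment method, so that in every block pairing it remains dominated by $\sqrt{\lambda_i \lambda_j}$. Concentration of the scalar quantities $r_S$ and related small-subgraph counts (\pref{sec:degreevariancecalculations}) ensures that this discrepancy stays within scale and that the pseudo-moments remain nonnegative with high probability. Assembling the block-wise bounds via \pref{lem:blockpsd} then yields $\M \succeq 0$ with probability at least $1 - 1/n$, completing the verification of \pref{def:pseudo-expectation}.
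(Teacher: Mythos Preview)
Your proposal is essentially correct and follows the same route as the paper: reduce to PSDness of the homogeneous degree-$4$ block via \pref{fact:homog}, introduce the refined subspace decomposition $W_0, W_1, W_{1.5}, W_2$ where the rank-$n$ correction $\sum_s (r_s^{\tensor 2})(r_s^{\tensor 2})^\top$ boosts the eigenvalue on $W_{1.5}$, Fourier-decompose the deviation and use the kernel/norm machinery of \pref{sec:tools} to match pieces to blocks (in particular, the bad pattern $g_{a_1 b_1} g_{a_1 b_2}$ now has $W_2$ in its right kernel, since its rows lie in $\Span\{r_s^{\tensor 2}\}$), and separately bound the clique-zeroing discrepancy via \pref{lem:correction-disjoint}. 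The paper additionally handles two technical points you gloss over: the ``fill-in'' reduction from $\cN'$ to $\cN$ (\pref{lem:N4prime}) needed both to remove zero rows and to absorb the difference between the corrected and MPW lower-degree moments, and the separation $D = L + \Delta$ where the globally-dependent piece $\Delta$ cannot be Fourier-decomposed and is instead controlled by a direct norm bound (\pref{lem:deg4-delta}).
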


It is not hard to show that $\pE$ of \pref{def:corrected-moms} satisfies the constraints in \pref{eq:prog-formulation} and the correction above doesn't change $\omega$ by a lot. We defer the proofs to  \pref{sec:deg4-technical}.
\begin{lemma}
  \label{lem:deg4-constraints}
Let $\pE$ be the degree-$4$ corrected moments for clique size $\omega$ (\pref{def:corrected-moms}). Then, there is $\omega'$ such that $\pE$ satisfies
  \[
    \{x_i^2 = x_i\}_{i \in [n]}\mcom \quad \{x_i x_j = 0\}_{i \not \sim j \text{ in } G}\mcom \quad \{ \sum_i x_i = \omega' \}\mper
  \]
Furthermore, if $G \sim G(n,1/2)$, then with probability $1 - O(n^{-25})$, $|\omega' - \omega| < O(\gamma \log(n)^2 \omega^2/n^{5/2})$.
\end{lemma}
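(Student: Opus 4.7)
The plan is to verify each of the three constraints using the recursive definition of $\pE$, and then reduce the probabilistic bound on $|\omega' - \omega|$ to a concentration estimate for a polynomial in the edge indicators of $G$.

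\textbf{Verifying the constraints.} Since $\pE$ is defined on multilinear monomials and extended $\R$-linearly modulo $x_i^2 - x_i$, the constraints $\{x_i^2 = x_i\}$ are automatic. For $\{x_i x_j = 0\}$ with $i \not\sim j$, I would show by downward induction on $|S| \in \{4,3,2\}$ that $\pE[x_S] = 0$ whenever $\{i,j\} \subseteq S$: at $|S|=4$ both $\pE_0[x_S]$ and $\cL[x_S]$ vanish because $S$ is not a clique, and for $|S|<4$ the recursion $\pE[x_S] = (\omega'-|S|)^{-1}\sum_{\ell\notin S}\pE[x_{S\cup\ell}]$ inherits the property since $\{i,j\}\subseteq S\cup\ell$ in each term. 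The required identity $\pE[x_i x_j q] = 0$ for every $q$ of degree at most $2$ then follows after multilinearizing. For $\{\sum_i x_i = \omega'\}$, a short induction on $k$ using the recursion and the identity $\sum_{|S|=k-1}\sum_{\ell\notin S}\pE[x_{S\cup\ell}] = k\sum_{|T|=k}\pE[x_T]$ gives $\sum_{|S|=k}\pE[x_S] = \binom{\omega'}{k}$ for $k=0,\dots,4$ (the base case $k=4$ being exactly the defining equation of $\omega'$, using that $\pE[x_S]=0$ when $S$ is not a clique), which immediately yields $\pE[(\sum_i x_i)x_T] = |T|\pE[x_T] + (\omega'-|T|)\pE[x_T] = \omega'\pE[x_T]$ for every $|T|\leq 3$.

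\textbf{Reducing to a polynomial tail bound.} By \pref{fact:so-ugly} applied to $\binom{\omega'}{4} = \binom{\omega}{4} + c$, we have $|\omega'-\omega| = O(|c|/\omega^3)$, so it suffices to show $|c| = O(\gamma(\log n)^2 \omega^5/n^{5/2})$ with probability $1 - O(n^{-25})$. Unpacking \pref{def:update},
\[
    c = \gamma (\omega/n)^5 \, X \quad\text{where}\quad X = \sum_{s\in[n]}\;\sum_{\substack{S \subseteq [n],\, |S| = 4 \\ S \text{ is a clique in } G}} \prod_{j\in S} r_s(j)\mcom
\]
so the goal becomes $|X| = \tO(n^{5/2})$ with the stated probability. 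The expectation vanishes term by term: if $s \in S$ then $r_s(s) = 0$ kills the product; if $s \notin S$ then the clique indicator depends only on edges inside $S$, which is independent of the product $\prod_j r_s(j) = \prod_j g_{\{s,j\}}$, and the latter has mean zero.

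\textbf{Main obstacle.} The heart of the proof is a high-moment estimate of the form $\E X^{2k} \leq (Ckn^5)^k$ for $k$ up to $\Theta(\log n)$. Expanding $X^{2k}$ as a sum over tuples $(s_1, S_1, \ldots, s_{2k}, S_{2k})$, each summand is up to sign a product of edge-indicators $g_e$ and survives expectation only when every $g_e$ appears an even number of times---whether coming from a clique indicator $\mathbf{1}_{S_i \text{ clique}}$ or from a factor $r_{s_i}(j) = g_{\{s_i,j\}}$. This forces a matching structure on the multigraph of used edges whose enumeration, via combinatorial arguments in the same spirit as the spectral norm computations set up in \pref{sec:tools} and carried out in \pref{sec:spectralnorm}, yields the desired bound. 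Applying Markov at $k = \Theta(\log n)$ then gives $|X| = O((\log n)^2 n^{5/2})$ with probability at least $1 - n^{-25}$, and substituting back through the chain $|X| \to |c| \to |\omega'-\omega|$ delivers the stated bound (with $\omega'$ simultaneously bounded away from $\{1,2,3\}$, justifying the divisions in the recursion used above).
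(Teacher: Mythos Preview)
Your verification of the constraints is essentially the paper's argument; the recursion immediately propagates the non-edge vanishing and forces $\sum_{|S|=k}\pE[x_S]=\binom{\omega'}{k}$, exactly as you describe.

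For the concentration bound on $|\omega'-\omega|$ (equivalently on $X$), your route diverges from the paper's. You propose a direct high-moment estimate $\E X^{2k}\leq (Ckn^5)^k$ and Markov at $k=\Theta(\log n)$, counting surviving tuples $(s_i,S_i)$ in the spirit of the trace-power arguments of \pref{sec:spectralnorm}. The paper instead proves a separate scalar concentration lemma (\pref{lem:deg4-scalar}) by \emph{iterated McDiarmid}: first bound $|\sum_{\ell\text{ completes a clique}}r_s(\ell)|$ for each fixed $s,i,j,k$ by Chernoff, then condition on this good event, bound the resulting Lipschitz constant, apply McDiarmid to $\sum_{k,\ell}$, and repeat two more times, peeling off one index per layer until the full sum $\sum_s\sum_{i,j,k,\ell}$ is controlled. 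Each layer costs a $\sqrt{n\log n}$ factor (the last a factor of $n$, since $\Theta(n^2)$ edges are revealed), giving the $\tO(n^{5/2})$ bound. Your moment approach is viable and thematically consistent with the paper's matrix-norm machinery, but the McDiarmid route avoids having to expand the clique indicators $\mathbf{1}[S\text{ clique}]=\prod_{e\in\binom{S}{2}}\tfrac{1+g_e}{2}$ into $2^6$ monomials and then enumerate the resulting constraint graphs; it trades that combinatorics for a clean bounded-differences computation at each of four layers.
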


Thus, to show \pref{thm:deg-4-sqrtn}, the remaining task is to show that $\tE$ satisfies positive semidefiniteness. Using Fact \pref{fact:homog}, it is enough to show that $\tE[p^2] \geq 0$ for every homogeneous, multilinear polynomial $p$ of degree $2$. This is equivalent to showing that the matrix $\cN' \in \R^{\nchoose{2} \times \nchoose{2}}$ defined by 
$\cN'(I,J) = \tE[ x_{I \cup J}]$ is PSD. Thus, to complete the proof of Theorem \pref{thm:deg-4-sqrtn} we will show the following lemma which is the most technical part of the proof. 

\begin{lemma}\label{lem:deg4-PSD-claim}
  There is $\omega_0 = \Omega(\sqrt{n} / \polylog n)$ and $\gamma = \Theta(1)$ so that for $\omega \leq \omega_0$, with probability at least $1-1/n$ over the draw of $G \sim G(n,\frac{1}{2})$, $\cN' \succ 0$.
\end{lemma}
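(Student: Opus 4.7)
The plan is to establish PSDness of $\cN'$ by refining the eigenspace decomposition used for the simple moments, matching each Fourier piece of the deviation to an appropriate pair of subspaces, and applying the block-PSD lemma (\pref{lem:blockpsd}).

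First I would write $\cN' = E + D + N - Z$, where $E = \E_G[\M']$ is the expectation of the MPW filled-in matrix from \pref{sec:MPWop} for $d = 2$, $D = \M' - E$ captures the deviation of the simple moments, $N \in \R^{\nchoose{2} \times \nchoose{2}}$ is the ``idealized'' correction matrix with $N(I,J) = \gamma(\omega/n)^5 \sum_s u_s(I) u_s(J)$ where $u_s(\{a,b\}) \eqdef r_s(a) r_s(b)$, and $Z = N - \cN^{\cL}$ is the discrepancy between $N$ and the actual correction (which is zero on entries whose index set is not a clique). By \pref{fact:johnsondecomp} the space $\R^{\nchoose{2}}$ admits orthogonal subspaces $V_0, V_1, V_2$ diagonalizing $E$. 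The key conceptual step is to carve out of $V_2$ a new subspace $V_{1.5}$ spanned by the projections of the vectors $u_s$ onto $V_2$. By \pref{lem:Eeigenvals}, $E$ contributes distinct eigenvalues $\lambda_0, \lambda_1, \lambda_2$ on $V_0, V_1, V_2$; a second-moment calculation for the $u_s$'s (using that the $r_s(i)$ for $i \neq s$ are i.i.d.\ $\pm 1$) will show that $N$ contributes an eigenvalue $\lambda_{1.5}$ on $V_{1.5}$ with $\lambda_{1.5}/\lambda_2 \approx \gamma \omega^3 n$, which is exactly what is needed to absorb the bad ``rectangle'' pattern discussed in \pref{sec:Kelner-poly}.

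Next I would carry out the Fourier decomposition $D = \sum_B D_B$ described in the overview, where $B$ ranges over bipartite patterns on $[2] \times [2]$. For each $B$, \pref{lem:fourier-norm-mpw-disjoint-intro} bounds $\|D_B\|$ in terms of its maximum matching, and \pref{lem:kernelsymoperators-intro} identifies which of $V_0, V_1, V_2$ lie in the left/right kernel of $D_B$. Combining these, I would verify pattern-by-pattern that each off-diagonal block $\Pi_{V_i} D \Pi_{V_j}$, with respect to the refined decomposition $V_0, V_1, V_{1.5}, V_2 \ominus V_{1.5}$, satisfies $\|\Pi_{V_i} D \Pi_{V_j}\| \leq \tfrac{1}{10}\sqrt{\lambda_i \lambda_j}$; \pref{lem:matrix-cs} then converts each such bound into an actual PSD inequality. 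The one pattern that previously broke this scheme for the simple moments---the ``rectangle'' $Q_B$ whose left kernel contains $V_2$ but whose right kernel does not---now matches $V_1$ against $V_{1.5}$, and the larger eigenvalue $\lambda_{1.5}$ keeps the inequality afloat as long as $\omega \ll \sqrt n / \polylog n$.

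The main obstacle, and by far the most technical step, is controlling the discrepancy $Z$. I expect to expand $Z$ itself as a sum over bipartite patterns by Fourier-expanding the clique indicator $\prod_{e \in \sE_{ext}(I,J)} \tfrac{1 + g_e}{2}$ against the rank-structure of $u_s u_s^\top$. Each resulting summand is a matrix of the same general shape as the $D_B$'s, so I would bound its spectral norm using the trace-power method from \pref{sec:spectralnorm} while identifying its left/right kernel among $V_0, V_1, V_{1.5}, V_2$ via the representation-theoretic tools behind \pref{lem:kernelsymoperators-intro}. The hard part is that the edge patterns coming simultaneously from the $r_s$'s, the clique indicator, and the tensor structure of $u_s u_s^\top$ produce more elaborate shapes than those appearing in the MPW analysis; proving that each such shape, when matched to the correct pair of subspaces, is smaller than $\sqrt{\lambda_i \lambda_j}$ is what forces the polylog loss in the final bound on $\omega_0$. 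Once every off-diagonal norm is bounded by $\tfrac{1}{10}\sqrt{\lambda_i \lambda_j}$ and each diagonal block is lower-bounded by $\lambda_i$, \pref{lem:blockpsd} applied to $\R^{\nchoose{2}} = V_0 \oplus V_1 \oplus V_{1.5} \oplus (V_2 \ominus V_{1.5})$ yields $\cN' \succeq 0$, and a union bound over the finitely many patterns $B$ together with \pref{lem:deg4-constraints} gives overall probability at least $1 - 1/n$ for $G \sim G(n, 1/2)$.
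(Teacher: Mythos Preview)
Your overall strategy---carve $V_{1.5} \subseteq V_2$ from the $V_2$-projections of $r_s^{\otimes 2}$, Fourier-decompose the deviation over bipartite patterns, match each pattern to a pair of subspaces in $V_0 \oplus V_1 \oplus V_{1.5} \oplus (V_2 \ominus V_{1.5})$, and close with \pref{lem:blockpsd}---is exactly the paper's. Your treatment of the discrepancy $Z$ is also what the paper does for $\cR - \tilde\cR_0$ (\pref{lem:deg4-corr-dev}, via \pref{lem:correction-disjoint}).

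Two genuine gaps remain. First, the identity $\cN' = E + D + N - Z$ is false: $E + D = \M'$ is the \emph{filled-in} MPW matrix, whereas $\cN'(I,J) = \tE[x_{I \cup J}]$ has zero rows on non-edge pairs and, more subtly, its entries with $|I \cup J| \leq 3$ differ from those of $\M'$ because the degree-$4$ correction propagates to lower degrees through the constraint $\sum_i x_i = \omega'$. The paper handles this by first reducing (\pref{lem:N4prime}) to the filled-in matrix $\cN = \M' + \cR$ and proving $\cN \succ (\omega^2 n^2/c)\, I$; the extra slack is then spent to absorb the low-degree error bounded in \pref{lem:nondisjoint-entry}. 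Second, you cannot Fourier-decompose all of $D$ over bipartite patterns: $D(I,J)$ depends on $\deg_G(I\cup J)$, which is a \emph{global} function of $G$, not just of the edges crossing $I,J$. The paper therefore splits $D = L + \Delta$ (\pref{def:Ddecomp}), applies the pattern analysis only to the locally random part $L$ (\pref{lem:deg4-main}), and bounds $\|\Delta\|$ by a separate crude row-sum estimate (\pref{lem:deg4-delta}). Without these two reductions the pattern-by-pattern argument does not attach to $\cN'$ as you have written it.
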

%
%
%
\subsection{Technical Lemmas and Proofs}
\label{sec:deg4-technical}
We proceed here to show that $\pE$ from \pref{def:corrected-moms} satisfies the appropriate constraits.
We will need the following lemma giving concentration for certain scalar random variables, including the extent to which the correction changes the (pseudo)-expected clique size when $G \sim G(n,1/2)$.
\begin{lemma}
  \label{lem:deg4-scalar}
  Let $G \sim G(n,1/2)$.
  Let the vectors $r_s \in \R^n$ be as in \pref{def:corrected-moms}.
  There is a universal constant $C$ so that with probability $1 - O(n^{-25})$,
  \begin{enumerate}
  \item
   \[
     \left | \sum_s \sum_{i,j,k,\ell \text{ a clique}} r_s(i) r_s(j) r_s(k) r_s(\ell) \right | \leq C n^{5/2} \log(n)^2\mper
   \]
  \item For every $i,j,k$ distinct,
  \[
    \left | \sum_s \sum_{\ell \text{ in a clique with $i,j,k$}} r_s(i)r_s(j)r_s(k)r_s(\ell) \right | \leq  C n \log(n) \mper
  \]
  \item For every $i,j$ distinct and every $s$,
  \[
    \left | \sum_{k \text{ in a clique with $i,j$}} r_s(i)r_s(k) \right | \leq  C \sqrt{n \log n}\mper
  \]
  \end{enumerate}
\end{lemma}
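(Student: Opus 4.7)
All three bounds are concentration statements for low-degree polynomials in the independent Rademacher variables $\{g_e\}_{e \in \binom{[n]}{2}}$, where $g_e = +1$ if $e \in E(G)$ and $-1$ otherwise. In each case $\E[\cdot]=0$: every factor $r_s(\ell) = g_{s\ell}$ appears linearly in a variable independent of the clique indicators, so every monomial in the expansion contains some $g_{s\ell}$ with odd multiplicity. The common strategy is to find a conditioning under which the remaining randomness is a linear combination of conditionally independent Rademachers, and then apply Hoeffding/Bernstein.

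For part 3, fix $i \neq j$ and $s \notin \{i,j\}$ (the cases $s \in \{i,j\}$ are immediate or trivially handled). Condition on all $g_e$ with $s \notin e$; this determines every indicator $\1[\{i,j,k\}\text{ triangle}]$ and reduces the sum to $g_{si} \cdot \1[i \sim j] \cdot \sum_{k \in N(i) \cap N(j),\, k \neq s} g_{sk}$, a signed sum of at most $|N(i) \cap N(j)| \leq n$ independent Rademachers. Hoeffding gives $O(\sqrt{n \log n})$ with failure probability $n^{-30}$, and a union bound over $O(n^3)$ triples $(i,j,s)$ completes the argument. For part 2, fix distinct $i, j, k$ and condition on every $g_e$ incident to $\{i,j,k\}$; then $h_s \defeq r_s(i)r_s(j)r_s(k) \in \pmo$ is determined for every $s$, as is $N \defeq \{\ell \notin \{i,j,k\} : \ell \sim i,j,k\}$, and $|N| = O(n)$ with probability $1 - n^{-30}$ by Chernoff. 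The sum rewrites as
\[
  X \;=\; \sum_{\{s,\ell\}\subset [n]\setminus\{i,j,k\}} \bigl(h_s \1[\ell \in N] + h_\ell \1[s \in N]\bigr)\, g_{\{s,\ell\}}\mcom
\]
a linear combination of independent Rademachers with coefficients in $\{-2,-1,0,1,2\}$ and sum of squared coefficients $O(n|N|) = O(n^2)$. Hoeffding gives $|X| = O(n\sqrt{\log n})$ with room to spare in the $\log$ exponent, and a union bound over $O(n^3)$ triples $(i,j,k)$ concludes.

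Part 1 is the most involved, since no fixed vertex set can play the role of $\{i,j,k\}$ in the outer sum. Swap the order of summation to write
\[
  X \;=\; \sum_{C \in \binom{[n]}{4}} \1[C\text{ is a 4-clique}] \cdot U_C\mcom \quad U_C \;=\; \sum_{s \notin C} \prod_{m \in C} g_{sm}\mper
\]
For each fixed $C$, $U_C$ is a sum of $n-4$ independent Rademachers (the monomials for distinct $s$ involve disjoint variables), and $U_C$ is independent of $\1_C$ (disjoint variables). A direct computation gives $\E[X]=0$ and $\Var(X) = O(n^5)$: the diagonal $\sum_C \E[\1_C] \cdot \E[U_C^2]$ contributes $O(n^4)\cdot O(n) = O(n^5)$, while cross terms $\E[\1_C \1_{C'} U_C U_{C'}]$ for $C \neq C'$ either vanish (because at least one $g_{sm}$ from $U_C$ cannot be paired within the product) or are dominated after enumeration of the overlap patterns of $C$ and $C'$. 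One then applies a concentration inequality for low-degree Rademacher polynomials: the cleanest route is a moment-method bound showing $\E[X^{2k}] \leq (C k^c \cdot n^5)^k$ by classifying surviving configurations $(\vec s, \vec C)$ according to the edge-coincidence multigraph they induce, followed by Markov's inequality at $k = \Theta(\log n)$, yielding $|X| \leq O(n^{5/2} \polylog n)$ with the claimed failure probability.

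The main obstacle is the moment-method enumeration in part 1. One has to verify that among products $\prod_{i=1}^{2k} \1_{C_i} \prod_{m \in C_i} g_{s_i m}$, the dominant contributions arise from configurations where the pairs $(s_i, C_i)$ match two-by-two, giving Gaussian-like moment growth $\E[X^{2k}] \approx (2k-1)!!\,\Var(X)^k$ up to polynomial-in-$k$ corrections. This requires carefully accounting for the many ways monomial edges $g_{sm}$ and indicator edges $g_e \in \binom{C}{2}$ can coincide across the $2k$ factors. The conditioning strategies that make parts 2 and 3 nearly trivial do not apply cleanly here because the outer sum over $(i,j,k,\ell)$ leaves no fixed set of ``indicator edges'' on which to condition. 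A looser $\log^{O(1)} n$ bound can alternatively be obtained by hypercontractivity applied to this degree-$10$ polynomial, which would suffice for most applications in the paper at the cost of a larger power of $\log n$.
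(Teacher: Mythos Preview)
Your arguments for parts 2 and 3 are correct and take a different route from the paper. The paper handles all three parts by a single technique: a chain of McDiarmid applications with truncation. First, Chernoff gives $\bigl|\sum_{\ell} r_s(\ell)\bigr| \le O(\sqrt{n\log n})$ for every fixed $s,i,j,k$ (this is essentially part~3). Truncating to this event and applying McDiarmid yields $\bigl|\sum_{k,\ell} r_s(k)r_s(\ell)\bigr| \le O(n\log n)$ for every fixed $s,i,j$; another round gives $\bigl|\sum_{j,k,\ell} r_s(j)r_s(k)r_s(\ell)\bigr| \le O(n^{3/2}(\log n)^{3/2})$ for every fixed $s,i$; and a final McDiarmid over the $\Theta(n^2)$ edge variables yields the full bound $O(n^{5/2}(\log n)^2)$. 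Your conditioning-to-linear-then-Hoeffding device for parts 2 and 3 is arguably cleaner for those two parts in isolation and even sharpens part~2 to $O(n\sqrt{\log n})$, but as you yourself observe it does not extend to part~1.

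For part 1 your proposal is only a sketch: the moment-method enumeration is left open, and the hypercontractivity fallback loses $\log(n)^5$ rather than $\log(n)^2$. The paper's iterated McDiarmid is precisely the tool that fills this gap without any enumeration. Once the bound $O(n^{3/2}(\log n)^{3/2})$ on $\bigl|\sum_{j,k,\ell} r_s(j)r_s(k)r_s(\ell)\bigr|$ holds for every $s,i$, one observes that flipping a single edge of $G$ changes the full sum $\sum_s\sum_{i,j,k,\ell}(\cdots)$ by at most a constant multiple of this quantity; McDiarmid with $\Theta(n^2)$ coordinates each of sensitivity $O(n^{3/2}(\log n)^{3/2})$ then gives the claimed $O(n^{5/2}(\log n)^2)$ directly. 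So the idea you use for parts 2 and 3 does have a natural continuation to part~1 --- not by finding a larger vertex set on which to condition, but by converting each level of concentration into a bounded-differences estimate for the next.
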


\begin{proof}
  We prove the first item; the others are similar.

  The proof is by several applications of McDiarmid's inequality.
  By a standard Chernoff bound there is a universal constant $C_0$ so that for every $s \in [n]$ and every $i,j,k \in [n]$, with probabilty $1 - O(n^{-40})$,
  \[
    \left | \sum_{\ell \text{ in a clique with $i,j,k$}} r_s(\ell) \right | \leq C_0 \sqrt{n \log n}\mper
  \]
  Call $E_1$ the event that this occurs for every $s,i,j,k$.
  Clearly $\Pr(E_1) \geq 1 - O(n^{-36})$.

  Now for every $s,i,j \in [n]$, we apply McDiarmid's inequality to $| \sum_{k,\ell \text{ in a clique with $i,j$}} r_s(k) r_s(\ell) |$.
  We truncate to get rid of the bad event $\neg E_1$.
  For a graph $G$, let $f(G) = \sum_{k,\ell \text{ in a clique with $i,j$}} r_s(k) r_s(\ell)$ if $E_1$ occurs for $G$ and $f(G) = 0$ otherwise.
  Now consider any pair of graphs $G,G'$ differing on a single edge $(u,v)$.
  It is straightforward to show that if $\{ u,v \} \cap \{s,i,j\} = \emptyset$ then $|f(G) - f(G')| = O(1)$, while otherwise
  \[
    |f(G) - f(G')| \leq \left | \sum_{\ell \text{ in a clique with $i,j,k$}} r_s(\ell) \right | \leq C_1 \sqrt{n \log n}
  \]
  for some other universal constant $C_1$.
  So by McDiarmid's inequality there is $C_2$ so that with probability $1 - O(n^{-34})$,
  \[
    \left | \sum_{k,\ell \text{ in a clique with $i,j$}} r_s(k) r_s(\ell) \right | \leq  C_2 n \log n\mper
  \]
  By a similar argument there is $C_3$ so that for every $s,i \in [n]$, with probability $1 - O(n^{-30})$,
  \[
    \left | \sum_{j,k,\ell \text{ in a clique with $i$}} r_s(j)r_s(k) r_s(\ell) \right | \leq C_3 n^{3/2} (\log n)^{3/2}\mper
  \]
  Let $E_2$ be the event that this bound holds for every $s,i \in [n]$.
  So $\Pr(E_2) \geq 1 - O(n^{-27})$.
  Then, letting $f'(G) = \sum_s \sum_{i,j,k,\ell \text{ a clique}} r_s(i) r_s(j) r_s(k) r_s(\ell)$ if $E_2$ occurs for $G$ and $0$ otherwise,
  we get that on graphs $G,G'$ differing on an edge $(u,v)$
  \[
    |g(G) - g(G')| \leq C_4 n^{3/2} \log(n)^{3/2}
  \]
  for some other constant $C_4$.
  The result follows by a final application of McDiarmid's inequality (we lose a factor of $n$ at this step as opposed to the $\sqrt n$ at previous steps because there are $\approx n^2$ edges to be revealed).
\end{proof}

We can now complete the proof of \pref{lem:deg4-constraints} using \pref{lem:deg4-scalar} and \pref{fact:so-ugly}.
\begin{proof}[Proof of \pref{lem:deg4-constraints}]
  The functional $\pE$ satisfies the constraints $\{x_i^2 = x_i\}_{i \in [n]}, \{x_i x_j = 0\}_{i \not \sim j \text{ in } G}$ by construction.
  Let $\omega'$ be as in \pref{def:corrected-moms}.
  It is routine to check that for $p(x)$ homogeneous of degree $1,2,$ or $3$ that $\pE p(x) \sum_i x_i = \omega' \pE p(x)$ by definition, so it will be enough to check that $\pE \sum_i x_i = \omega'$.

  Recall that $\omega'$ satisfies $\omega'(\omega'-1)(\omega'-2)(\omega'-3) = 4! \cdot \pE \sum_{i,j,k,\ell \text{ a 4 clique}} x_i x_j x_k x_\ell$.
  Now we expand:
  \begin{align*}
    \pE \sum_i x_i & = \pE \frac 1 {\omega' -1} \sum_i \sum_{j \neq i} x_i x_j\\
    & = \pE \frac{1}{(\omega'-1)(\omega'-2)}\sum_{\substack{i,j,k\\\text{all distinct}}} x_i x_j x_k\\
    & = \frac{1}{(\omega'-1)(\omega'-2)(\omega'-3)}\sum_{\substack{i,j,k,\ell\\\text{all distinct}}} x_i x_j x_k x_\ell\\
    & = \frac{1}{(\omega'-1)(\omega'-2)(\omega'-3)} \cdot 4! \cdot \pE \sum_{i,j,k,\ell \text{ a 4 clique}} x_i x_j x_k x_\ell\\
    & = \omega'\mper
  \end{align*}
  It remains just to show our claim on $|\omega - \omega'|$.
  By our choice of $\omega'$ and the guarantees of \pref{fact:so-ugly}, we get that $|\omega - \omega'| \leq |\cL \sum_{i,j,k,\ell \text{ a 4-clique}} x_i x_j x_k x_\ell| / \omega^3$ where $\cL$ is the correction operator from \pref{def:update}.
  By \pref{lem:deg4-scalar}, this is with probability $1 - O(n^{-25})$ at most $O(\gamma \omega^2 \log(n)^2 / n^{5/2})$ when $G \sim G(n,1/2)$.
\end{proof}

In \pref{sec:tools}, we develop some general tools for analyzing the matrices that we encounter before going on to prove \pref{thm:main-degree-d} and \pref{lem:deg4-PSD-claim}.

\section{Tools} \label{sec:tools}
In this section, we build some general purpose tools helpful in the analysis of the matrices of interest to us. We give statements and proofs that are more or less independent of the rest of the paper with an eye towards future work on planted clique and related problems where one deals with random matrices with structure dependencies. The first three sections focus on building an understanding of the symmetries of the eigenspaces $V_0,V_1, \ldots, V_d$ of set symmetric matrices on $\R^{\nchoose{d} \times \nchoose{d}}$. The last section uses moment method with some combinatorial techniques to obtain tight estimates of spectral norm for certain random matrices with dependent entries. 

\subsection{Background on Representations of Finite Groups}
We provide background in the required tools from basic representation theory below.

\begin{definition}[Representation]
For a finite dimensional complex vector space $V$, let $Hom(V,V)$ be the set of all linear maps from $V$ into $V$. For any finite group $G$ and $\pi:G \rightarrow Hom(V,V)$, the pair $(\pi, V)$ is said to be a representation of $G$ if $\pi$ satisfies, for any $g_1, g_2 \in G$, $$\pi(g_1 \cdot g_2) = \pi(g_1) \cdot \pi(g_2),$$ where the ``$\cdot$'' on the LHS corresponds to the group operation and on the RHS, the composition of linear maps on $V$. When the map $\pi$ is clear from the context (as some natural action of the group $G$ on $V$), we abuse notation and just say that $V$ is a representation of $G$.
\end{definition}

Let $(\pi, V)$ be a representation of a group $G$. A subspace $W \subseteq V$ is said to be a \emph{subrepresentation} if for every $w \in W$, $\pi(g) w \in W$ for every $g \in G$. That is, $W$ is a stable or invariant subspace for all the linear maps $\pi(g)$, one for each $g \in G$. Observe that in this case, $(\pi,W)$ is another representation of $G$. A representation $(\pi,V)$ of $G$ is said to be \emph{irreducible} if for any subspace $W$ invariant under all the linear maps $\pi(g)$ for $g \in G$, $W = V$ or $W = \{0\}$. 

Every representation $V$ of $G$ can be decomposed as a direct sum of subspaces each of which is an irreducible representation of $G$. Further, for any finite group $G$, there are at most $|G|$ distinct irreducible representation up to isomorphism. For well studied finite groups such as the symmetric group on $n$ elements $S_n$, the set of irreducible representations are well known and well studied. The power of representation theory in the present context comes from understanding the eigenspace structure of linear operators that are invariant under some action of the group $G$ (in our case $\S_n$). 

There is a natural linear action of the permutation group $\S_n$ on $\P_q$ for any $q$, denoted by $\pi:\S_n \rightarrow Hom(\P_q, \P_q)$:  A permutation $\sigma \in \S_n$ when applied to a vector 
$v \in \P_q$ produces the vector $v' \in \P_q$ such that $v'_I = v_{\sigma(I)}$ for every $I \in \nchoose{q}$. This can be alternately described as multiplication by the permutation matrix associated with $\sigma$. Observe that $(\sigma_1 \cdot \sigma_2) \cdot v = \sigma_1 \cdot (\sigma_2 \cdot v)$ and thus, $(\pi, \P_q)$ is a representation of $\S_n$. It is known (See Section 3.2, \cite{Bannai-Ito}) that under this action, $\P_d$ can be decomposed as direct sum of subspaces $V_0, V_1, \ldots, V_q$ such that each $V_i$ is an irreducible representation of $\S_n$ and none of $V_i, V_j$ for $i \neq j$ are isomorphic to each other. 

The expected moment matrix $E = \E[ \M']$ is set symmetric and therefore commutes with the action of $\S_n$ on $\P_d$ described above. This can be easily used to obtain that $E$ has the eigenspaces $V_0, V_1, \ldots, V_d$ discussed above.
We need the following consequence of a basic representation-theoretic result.

\begin{fact}[Consequence of Schur's Lemma \cite{Serre}] \label{lem:schur}
Suppose $(\pi, V)$ and $(\pi', W)$ are representations of a group $G$. Suppose $L
:V \rightarrow W$ is a linear map such that for any $g \in G$ and $ v \in V$, $$L ( \pi(g)\cdot v)  = \pi'(g) \cdot L (v).$$ Then, for any irreducible representation $V_i \subseteq V$ under $\pi$, $L(V_i) \subseteq W$ is an irreducible representation in $W$ under $\pi'$.
\end{fact}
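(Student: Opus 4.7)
The plan is to derive the fact directly from the standard version of Schur's Lemma by examining the restriction $L|_{V_i} : V_i \to W$ and then using the kernel--image decomposition of this $G$-equivariant map.

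First I would verify that $L(V_i)$ is itself a subrepresentation of $(\pi', W)$. For any $w = L(v) \in L(V_i)$ and any $g \in G$, the intertwining hypothesis gives $\pi'(g)\cdot w = \pi'(g)\cdot L(v) = L(\pi(g)\cdot v)$, which lies in $L(V_i)$ because $V_i$ is $\pi$-invariant. So $L(V_i)$ is stable under $\pi'$.

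Next I would study $K := \ker(L|_{V_i}) \subseteq V_i$. If $v \in K$ then for every $g \in G$ we have $L(\pi(g)v) = \pi'(g)L(v) = 0$, so $\pi(g)v \in K$; hence $K$ is a $\pi$-invariant subspace of $V_i$. Irreducibility of $V_i$ forces $K = V_i$ or $K = \{0\}$. In the first case $L|_{V_i} \equiv 0$ and $L(V_i) = \{0\}$ (the trivial subrepresentation, which we exclude when requiring irreducibility to mean nonzero). In the second case $L|_{V_i}$ is an injective $G$-equivariant map onto its image, i.e.\ an isomorphism of representations $V_i \cong L(V_i)$.

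Finally, I would transport irreducibility across this isomorphism: any nonzero $\pi'$-invariant subspace $W' \subseteq L(V_i)$ pulls back via $(L|_{V_i})^{-1}$ to a nonzero $\pi$-invariant subspace of $V_i$ (invariance is preserved because $L|_{V_i}$ intertwines the actions), which by irreducibility of $V_i$ equals all of $V_i$; hence $W' = L(V_i)$, proving $L(V_i)$ is irreducible under $\pi'$. There is no real obstacle here; the only minor subtlety is the edge case $L(V_i) = \{0\}$, which one either rules out by hypothesis or incorporates by reading the fact as the dichotomy ``$L(V_i)$ is either zero or irreducible and isomorphic to $V_i$.''
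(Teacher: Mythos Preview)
Your argument is correct and is exactly the standard derivation one would give: check that $L(V_i)$ is $\pi'$-stable, use irreducibility of $V_i$ to conclude that $\ker(L|_{V_i})$ is either $0$ or all of $V_i$, and in the nontrivial case transport irreducibility along the resulting $G$-equivariant isomorphism. The edge case $L(V_i)=\{0\}$ you flag is the only caveat, and you handle it appropriately.

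As for comparison: the paper does not actually prove this fact. It is stated as a \emph{Fact} with a citation to Serre and used as a black box (in the proof of \pref{lem:eigenbasisstructure}, together with a dimension count to pin down which irreducible the image is). So there is no alternative approach in the paper to contrast with yours; your write-up simply supplies the omitted textbook argument.
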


\subsection{Eigenspaces of the Set Symmetric Matrices}
We often encounter random ${n \choose d} \times {n \choose d}$ matrices $M$ indexed by subsets of $[n]$ of size $d$.
For example, a common feature in our setting (as observed in \cite{MPW15}) is that $E= \E [M ]$ depends only on $|I \cap J|$.

\begin{definition}[Set Symmetry]
A matrix $A \in \R^{\nchoose{d} \times \nchoose{d}}$ is said to be \emph{set symmetric} if for every $S,T, S', T' \in \nchoose{d}$ such that $|S \cap T| = |S' \cap T'|$, $A(S,T) = A(S',T')$.
\end{definition}

The set of all set symmetric matrices is known as the \emph{Johnson} scheme in algebraic combinatorics.
 All such matrices commute and thus share eigenspaces. While the matrices in the Johnson scheme are well studied, the description of the eigenspaces in the literature is hard to use for the purpose of our proofs.
  We thus take a more direct approach and use basic representation theory in what follows to identify a simple symmetry condition on the eigenspaces of set symmetric matrices which will be useful to understand the spectral properties of the matrices we study.

\begin{lemma} \label{lem:eigenbasisstructure}
Let $V_0,V_1, \ldots, V_{d} \subseteq \R^{\nchoose{d}}$ be the eigenspaces of set symmetric matrices on $\R^{\nchoose{d} \times \nchoose{d}}$ described in the previous section. For any $u \in \R^{ \nchoose{t}}$, let $v \in \R^{\nchoose{d}}$ for $d \geq t$ be defined so that for each $I \in \nchoose{d}$, $$v_I = \sum_{I' \subseteq I, |I'| = t} u_{I'}.$$ Then, $v \in V_{0} \oplus V_1 \oplus \cdots \oplus V_{t}$. 
\end{lemma}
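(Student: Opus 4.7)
The plan is to define the natural linear map $L : \R^{\binom{[n]}{t}} \to \R^{\binom{[n]}{d}}$ by $(Lu)_I = \sum_{I' \subseteq I, \,|I'| = t} u_{I'}$, so that the vector $v$ in the lemma is exactly $L(u)$. Then I will establish that $L$ is $S_n$-equivariant with respect to the natural permutation actions $\pi_t$ and $\pi_d$ on the two spaces, and invoke Schur's lemma in the form of \pref{lem:schur} to locate the image of $L$ inside the eigenspace decomposition of $\R^{\binom{[n]}{d}}$.

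First I will verify equivariance by a direct calculation: for every $\sigma \in \S_n$ and $I \in \binom{[n]}{d}$,
\[
(L(\pi_t(\sigma) u))_I = \sum_{\substack{I' \subseteq I\\ |I'| = t}} u_{\sigma^{-1}(I')} = \sum_{\substack{J \subseteq \sigma^{-1}(I)\\ |J| = t}} u_J = (Lu)_{\sigma^{-1}(I)} = (\pi_d(\sigma) Lu)_I,
\]
using the substitution $J = \sigma^{-1}(I')$ and the fact that $I' \subseteq I$ iff $\sigma^{-1}(I') \subseteq \sigma^{-1}(I)$.

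Next I will apply \pref{fact:johnsondecomp} with parameter $t$ to decompose $\R^{\binom{[n]}{t}} = W_0 \oplus W_1 \oplus \cdots \oplus W_t$ into pairwise non-isomorphic irreducible $\S_n$-subrepresentations. By the equivariance of $L$ and \pref{lem:schur}, each image $L(W_j)$ is either zero or an irreducible subrepresentation of $\R^{\binom{[n]}{d}}$ isomorphic to $W_j$. Since by \pref{fact:johnsondecomp} applied with parameter $d$, the spaces $V_0, \ldots, V_d$ are pairwise non-isomorphic irreducibles whose direct sum is all of $\R^{\binom{[n]}{d}}$, each $L(W_j)$ must lie inside the unique $V_i$ (if any) with $V_i \cong W_j$.

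The last step is to show that for each $j \in \{0,\ldots,t\}$ the unique $V_i$ isomorphic to $W_j$ satisfies $i \leq t$; equivalently, no irreducible constituent $V_{t+1}, \ldots, V_d$ of $\R^{\binom{[n]}{d}}$ appears in $\R^{\binom{[n]}{t}}$. I will invoke the standard classification of the permutation representation of $\S_n$ on $k$-subsets (a consequence of Young's rule), which identifies its irreducible constituents with the Specht modules $S^{(n-j,j)}$ for $j = 0, 1, \ldots, k$; this labeling is consistent across different values of $k$, so the constituents of $\R^{\binom{[n]}{t}}$ are precisely the first $t+1$ irreducibles of $\R^{\binom{[n]}{d}}$. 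This yields $L(W_j) \subseteq V_j$ for each $j \leq t$, and summing gives $v = L(u) \in V_0 \oplus V_1 \oplus \cdots \oplus V_t$ as required. The main subtlety I expect is ensuring that the labeling of the irreducibles in \pref{fact:johnsondecomp} is consistent between the $t$-subset and $d$-subset representations; this is precisely what the classical Johnson/Young decomposition provides, and I will cite it rather than reprove it.
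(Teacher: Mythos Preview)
Your proposal is correct and follows essentially the same route as the paper: define the equivariant ``inclusion'' map $C = L$, verify it intertwines the $\S_n$-actions on $\P_t$ and $\P_d$, and then use Schur's lemma together with the multiplicity-free decomposition from \pref{fact:johnsondecomp} to place each $L(W_j)$ inside a single $V_i$. The only difference is in the final identification: where you invoke Young's rule to match the Specht modules across the two representations, the paper simply uses the dimension formula $\dim V_i = \binom{n}{i} - \binom{n}{i-1}$ from \pref{fact:johnsondecomp}, noting that these dimensions are pairwise distinct (for $d \leq n/2$), so $L(W_j)$, being either zero or isomorphic to $W_j$, must land in $V_j$.
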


\begin{proof}
Let $\P_q$ for any positive integer $q$ be the space of all vectors indexed by elements of $\nchoose{q}$. Consider the standard action of $S_n$ on $[n]$ that sends $i \rightarrow \sigma(i)$ for any $\sigma\in \S_n$. This induces a natural action on $\nchoose{q}$ where any $I \in \nchoose{q}$ is sent to $\sigma(I) = \{ j \mid \exists i \in I \text{ such that } \sigma(i) = j \}$. This further induces a natural action on $\P_q$ by taking $v = \{v_I\}_{I \in \nchoose{q}}$ and sending it to $v'$ where $v'_{I} = v_{\sigma^{-1}(I)}$ for every $I$. A quick check ensures that the action defined above satisfies $\sigma \circ \tau (v) = \sigma ( \tau (v))$ for any $\sigma, \tau \in \S_n$. Thus, $\P_q$ is a representation of $\S_n$ under the action defined above for any $q$. It is easy to check that left multiplication by any set symmetric matrix from $\R^{\nchoose{q} \times \nchoose{q}}$ commutes with the action of $\S_n$ defined above. From \pref{fact:johnsondecomp}, the eigenspaces of any set symmetric matrix acting on $\P_q$ are given by $V_0, V_1, \ldots, V_q$ such that $\dim(V_i) = {n \choose i} - {n \choose {i-1}}$. By \pref{fact:johnsondecomp} each $V_i$ is isomorphic to distinct irreducible representations of $\S_n$. 

Next, consider the map $C:\P_t \rightarrow \P_d$ such that for any $u \in \R^{\nchoose {t}}$, the value $C(u) \in \R^{ \nchoose{d}}$ is given by $v$ such that $v_I = \sum_{I' \subseteq I \text{, } |I'| = t} u_I$. Then, $C$ is linear and we claim that $C$ commutes with the action of $\S_n$ defined above: $\sigma (C(u)) = C ( \sigma(u))$. Note that on the LHS, $\sigma$ refers to the action of $\S_n$ on $\P_d$ while on the RHS, it refers to the action on $\P_{t}$. We follow the definition to verify this: $$\left(\sigma (C(u)) \right)_{I} = \sigma ( \sum_{I' \subseteq I, |I'| = t} u_{I'}) = \sum_{I' \subseteq I, |I'| = t} u_{\sigma^{-1}(I')} = \sum_{I' \subseteq \sigma(I)} u_{I'} = C( \sigma(u))_{I}.$$

Suppose $u \in V_i^t$ for $i \leq t$ where $V^t_i$ is some eigenspace of a set-symmetric ${n \choose t} \times {n \choose t}$ matrix.
 Then, by \pref{lem:schur} $C(V^t_i)$ is an irreducible representation of $\S_n$ and is thus an invariant subspace for the action of $\S_n$ in $\P_d$.
  By a dimension argument, $C(V^t_i)= V_i$.
 Thus, $C(u) \in V_i$. 

\end{proof}

\subsection{Kernels of Patterned Matrices}\label{sec:patterned}
In this section we design some general tools to understand the spectral structure of matrices that have restricted variations around the set symmetric structure discussed in the previous section. The main tool we will use to establish these results is \pref{lem:eigenbasisstructure} shown in the previous section. Before moving on to this task, we describe a high level overview of what we intend to do. The following paragraph can be skipped to dive directly into the technical details without the loss of continuity.

The study of the eigenspaces of set symmetric matrices lets us completely understand the spectral structure of the expectation matrix $E$. In the next section when we analyze the spectrum of $\M'$, we will encounter matrices that depend on the underlying graph $G$ and thus are not set symmetric. However, if the dependence on the underlying graph $G$ is in some sense limited, we hope that some of the nice algebraic properties that set symmetry grants us should perhaps continue to hold. In our case, we will be able to decompose $E$ into various pieces and for each of these pieces, the entry at $(I,J)$ has dependence on the graph $G$ based only on the status (edge vs no edge) of a small number of pairs $(i,j) \in I \times J$. The goal of this section is to develop tools to understand certain (coarse) spectral properties of such matrices. 

Our aim is to study matrices in $Q = Q(G) \in \R^{\nchoose{d} \times \nchoose{d}}$ for a graph $G$ on $[d]$ such that $Q(I,J)$ depends on a) the intersections between $I$ and $J$ b) the values of $g_b$ (the edge indicator of $G$) for pairs $b$ of vertices (from the non intersecting parts of $I$ and $J$). We first develop some notation to talk about such matrices.

Next, we define \emph{patterns}:
\begin{definition}[Pattern]
For $Z_\ell, Z_r \subseteq [d]$, let $\B_{Z_\ell,Z_r}$ be the set of all non-empty bipartite graphs on left and right vertex sets each given by $[d] \setminus Z_\ell $ on the left and  $[d] \setminus Z_r$ on the right. Define $\B^q = \cup_{|Z_\ell|,|Z_r| = q} \B_{Z_\ell,Z_r}$. Then, a tuple $(B,Z_\ell,Z_r)$ for $B \in \B_{Z_\ell,Z_r}$ is said to be a $q$ \emph{pattern} where $q = |Z_\ell| = |Z_r|$. When $q = 0$, we call $B$ itself a pattern.
\end{definition}

For any set $I$, $J$, consider the ``sorting maps'' $\zeta_I:[d] \rightarrow I$, $\zeta_J: [d] \rightarrow J$ i.e, $\zeta_I(1)$ is the least element of $I$, $\zeta_I(2)$, the next to the least and so on. We can extend $\zeta_I$, $\zeta_J$ to subsets of $[d]$ in the natural way. Let $B_{\ell}$ ($B_r$) be the subset of vertices on the left (right) hand side with non zero degrees in $B \in \B_{Z_\ell,Z_r}$. For any $I, J \in \nchoose{d}$, there is a natural map that takes $B$ and obtains a copy of $B$ on vertex sets $I$ and $J$, via the sorting maps $\zeta_I$ and $\zeta_J$ from above: $\zeta_{I,J}(B)$ is the bipartite graph on $I$, $J$ with the edges obtained by taking every edge $b = \{i,j\} \in B$ and adding the edge $\{\zeta_I(i), \zeta_J(j)\}$ to $\zeta_{I,J}(B)$.

We need to understand the effect of applying a permutation $\sigma \in \S_d$ to $(B,Z_\ell, Z_r)$ for $B \sim \B_{Z_\ell,Z_r}$. Let $\sigma \in \S_d$ be a permutation on $[d]$. Given $(B,Z_\ell,Z_r)$, $\sigma$ has two natural actions. The left action of $\sigma$ on $(B,Z_\ell, Z_r)$ produces $\sigma \circ (B,Z_\ell, Z_r) \eqdef (\sigma \circ B, \sigma(Z_\ell), Z_r)$ where each edge $(i,j) \in B$ is sent into $(\sigma(i), j)$ in $\sigma \circ B$. We similarly define the right action of $\sigma$ on $(B,Z_\ell, Z_r)$ that produces $(B,Z_\ell, Z_r) \circ \sigma \eqdef (B \circ \sigma, Z_\ell, \sigma(Z_r))$. Each of these two actions defines a subgroup that leaves $B$ invariant. 

\begin{definition}[Automorphism Groups]
Let $B \in \B_{Z,Z'}$ be a labeled bipartite graph. We define the left automorphism group of $(B,Z_\ell, Z_r)$ as $$Aut_\ell(B,Z_\ell, Z_r) = \{ \sigma \in \S_d \mid  \sigma \circ (B,Z_\ell, Z_r) = (B,Z_\ell, Z_r)\},$$ and the right automorphism group of $B$ as $$Aut_r(B,Z_\ell, Z_r) = \{ \sigma \in \S_d \mid (B,Z_\ell, Z_r) \circ \sigma = (B,Z_\ell, Z_r) \}.$$
\end{definition}

Next, we define equivalence classes of the patterns $(B,Z_\ell, Z_r)$.
\begin{definition}[Similar Patterns]
For patterns $(B,Z_\ell, Z_r)$ is \emph{left} similar to $(B',Z'_\ell, Z'_r)$ and write $(B,Z_\ell, Z_r)\sim_\ell (B',Z'_\ell, Z'_r)$ if there exists a $\sigma \in \S_d$ such that $\sigma \circ (B,Z_\ell, Z_r)= (B',Z'_\ell, Z'_r)$. Similarly, we say that $(B,Z_\ell, Z_r)$ is \emph{right} similar to $ (B',Z'_\ell, Z'_r)$ and write $(B,Z_\ell, Z_r)\sim_r (B',Z'_\ell, Z'_r)$ if there exists a $\sigma \in \S_d$ such that $(B',Z'_\ell, Z'_r) = (B,Z_\ell, Z_r)\circ \sigma.$ 
\end{definition}

We are now ready to define patterned matrices:
\begin{definition}[Patterned Matrices] \label{def:patterned}
Let $(B,Z_\ell,Z_r)$ be a $q$-pattern. Let $f: \{-1,1\}^{B} \rightarrow \R$ be a function that maps a $\{-1,1\}$ labeling of the pairs in $B$ to $\R$. For a graph $G$ on $[n]$ vertices, the patterned matrix with pattern $(B,Z_\ell,Z_r)$ defined by $f$ is a matrix in $Q = Q_{B,Z_\ell,Z_r,f}(G) \in \R^{\nchoose{d} \times\nchoose{d}}$ such that 
\[
Q(I,J) = \begin{cases}
f( \{g_b\}_{b \in \zeta_{I,J}(B)} ), \text{  for every $I, J$ } \zeta_I(Z_\ell) = \zeta_J(Z_r)\\
0 \text{,  otherwise.} 
\end{cases}
\]
When $q=0$, we write $Q_{B,f}$ for the corresponding patterned matrix.
\end{definition}
The following result describes the kernels of certain symmetrized sums of $Q_{B,Z,f}$ and is the main claim of this section. 
\begin{lemma} \label{lem:kernelsymoperators}
For graph $G$, a $q$-pattern $(B,Z_\ell,Z_r)$ and $f:\{-1,1\}^B \rightarrow \R$, let $Q = Q_{B,Z_\ell,Z_r,f}(G) \in \R^{\nchoose{d} \times \nchoose{d}}$ be the corresponding patterened matrix. Define the left and right symmetrized version of $Q$ by:
 $$Q^{\ell} = \sum_{(B',Z_\ell',Z_r') \sim_{\ell} (B,Z_\ell,Z_r)} Q_{B',Z_\ell',Z_r',f},$$ and $$Q^r = \sum_{(B',Z_\ell',Z_r') \sim_{r} (B,Z_\ell,Z_r)} Q_{B',Z_\ell,Z_r,f},$$ respectively. Let $B_{\ell}$ ($B_r$) be the subset of vertices on the left (right, respectively) hand side with non zero degrees in $B$.
 Then,
  \begin{enumerate}
    \item For every $j > |B_\ell| +q$,  $$\Pi_j^{\dagger} Q^{\ell}_{B,Z,f}  = 0,$$
    \item For every $i > |B_r| +q$, $$ Q^{r}_{B,Z,f} \Pi_i = 0.$$
  \end{enumerate}
\end{lemma}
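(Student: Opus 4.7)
The plan is to apply \pref{lem:eigenbasisstructure} to the rows of $Q^r$ (for part 2) and to the columns of $Q^\ell$ (for part 1). I will focus on part 2, since part 1 follows by a completely symmetric argument applied to $Q^\ell$ (or equivalently, by transposing).

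First I would reduce to a statement about individual rows. Since the subspaces $V_0,\ldots,V_d$ are mutually orthogonal and span $\R^{\nchoose{d}}$, saying $Q^r \Pi_i = 0$ for every $i > s$, with $s \eqdef |B_r| + q$, is equivalent to saying every row $(Q^r(I,J))_J$ of $Q^r$ lies in $V_0 \oplus V_1 \oplus \cdots \oplus V_{s}$. By \pref{lem:eigenbasisstructure}, it suffices to exhibit, for each fixed $I \in \nchoose{d}$, a function $\phi_I : \nchoose{s} \to \R$ such that
\[
  Q^r(I,J) \;=\; \sum_{J' \subseteq J,\,|J'| = s} \phi_I(J') \quad \text{for all } J \in \nchoose{d}\mper
\]

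The main work is establishing this identity. I would unfold the definition of $Q^r$ as a sum of $Q_{B\circ\sigma,\,Z_\ell,\,\sigma(Z_r),\,f}$ over orbit representatives of the right $\S_d$-action on $(B, Z_\ell, Z_r)$. For a given $\sigma$, the $(I,J)$-entry is nonzero only when $\zeta_I(Z_\ell) = \zeta_J(\sigma(Z_r))$, so a specific $q$-subset of $J$ is forced to match the frozen positions of $I$. The remaining data is the value $f(\{g_b\}_{b \in \zeta_{I,J}(B\circ\sigma)})$, which depends only on edges whose right endpoints lie in $\zeta_J(\sigma(B_r))$, a set of exactly $|B_r|$ further vertices of $J$ (disjoint from the frozen ones since $B_r \subseteq [d]\setminus Z_r$, so $B_r \cap Z_r = \emptyset$). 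Hence every orbit representative $\sigma$ picks out a single subset $J'(\sigma) \eqdef \zeta_J(\sigma(B_r \cup Z_r)) \subseteq J$ of size exactly $s$, and the value of the corresponding summand depends only on $I$ and $J'(\sigma)$ (not on the rest of $J$). Grouping the sum by $J' = J'(\sigma)$ and collecting all $\sigma$ mapping to the same $J'$ defines $\phi_I(J')$ and yields the desired identity.

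With this identity in hand, \pref{lem:eigenbasisstructure} immediately places each row of $Q^r$ in $V_0 \oplus \cdots \oplus V_s$, which gives part 2; part 1 is the mirror statement for columns of $Q^\ell$ and follows by running the same argument with left and right swapped. The main obstacle I expect is purely bookkeeping: one has to verify carefully that the frozen set $Z_r$ and the nontrivial right endpoints $B_r$ are disjoint so that the "participating" right vertices really form a subset of $J$ of size exactly $s = |B_r|+q$, and that $\phi_I(J')$ is well-defined in the sense of depending only on $(I,J')$ regardless of the choice of orbit representatives summed. Once these are checked, the representation-theoretic input from \pref{lem:eigenbasisstructure} does all the work.
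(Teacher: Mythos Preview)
Your proposal is correct and follows essentially the same route as the paper: both arguments reduce to \pref{lem:eigenbasisstructure} by showing that the relevant vector (the paper works with $(Q^\ell v)_I$ for arbitrary $v$, you work with the row $Q^r(I,\cdot)$) can be written as $\sum_{J' \subseteq J,\,|J'|=s} \phi(J')$, after expanding the symmetrized sum over $\S_d$ and grouping terms by which size-$s$ subset of the index they touch. The well-definedness check you flag---that $\phi_I(J')$ depends only on $J'$ and not on the ambient $J$---is exactly the content of the Claim embedded in the paper's proof, established there by exhibiting the permutation $\tau$ that realigns one ambient superset to another.
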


\begin{proof}
Observe that $Q^{\ell}_{B,Z_\ell,Z_r,f} = Q^{\ell}_{B',Z_\ell',Z_r',f}$ for any $(B,Z_\ell,Z_r) \sim_{\ell} (B',Z_\ell',Z_r')$. This motivates us to first obtain a more symmetric looking expression for $Q^{\ell}$ and $Q^r$. Let $\S_d/ Aut_{\ell}((B,Z_\ell,Z_r))$ (and correspondingly, $\S_d/Aut_r((B,Z_\ell,Z_r))$) be the group of left (right) cosets of $ Aut_{\ell}((B,Z_\ell,Z_r))$ ($Aut_r((B,Z_\ell,Z_r))$ respectively). We have: 
\begin{align*}
Q^{\ell} = \sum_{(B,Z_\ell,Z_r) \sim_{\ell} (B',Z_\ell',Z_r')} Q_{B',Z_\ell',Z_r',f}
= \sum_{\tau \in \S_d/Aut_{\ell}((B,Z_\ell,Z_r))} Q_{\tau \circ (B,Z_\ell,Z_r) }
= \frac{1}{|Aut_{\ell}((B,Z_\ell,Z_r))|} \sum_{\sigma \in \S_d} Q_{\sigma \circ (B,Z_\ell,Z_r)}.
\end{align*}

Similarly, we have: $$Q^r_{B,Z_\ell,Z_r,f} = \frac{1}{|Aut_r((B,Z_\ell,Z_r))|} \sum_{\sigma \in \S_d} Q_{(B,Z_\ell,Z_r) \circ \sigma, f}.$$ We now begin the argument for proving the first claim. The second claim has an analogous proof. Consider an arbitrary $v = \{v_{I}\}_{I \in \nchoose{d}} \in \R^{\nchoose{d}}$. We will show that $Q^{\ell} v \in V_0 \oplus V_1 \oplus \cdots \oplus V_{q+|B_{\ell}|}$. Towards this goal, define a vector $u = \{u_T\}_{T \in \nchoose{k}} \in \R^{\nchoose{k}}$ as follows: for each $T \in \nchoose{k}$, let $I_T \in \nchoose{d}$ be arbitrary subject to the constraint that $I_T \supseteq T$. We define:
$$ u_T =  \sum_{\begin{subarray}[(B',Z'_\ell,Z'_r) \\\sim_{\ell} (B,Z_\ell,Z_r)\\ B'_\ell \cup Z_\ell' = T \end{subarray}} \sum_{J:\zeta_J(Z_r) = \zeta_{I_T}(Z'_{\ell})} v_J f( \{g_b\}_{b \in \zeta_{I_T,J}(B')}).$$ 

We first show that $u$ above is well defined in that the definition does not depend on the specific subset $I_T$ used so long as $I_T \supseteq T$. We adopt the notation (that ignores the ``direction'' of action) $B^{\sigma} \eqdef \sigma \circ B$ only for the calculations that follow. 

\begin{claim}
Fix $T \in \nchoose{q+|B_\ell|}$ and let $I_1, I_2 \in \nchoose{d}$ such that $T \subseteq I_1, I_2$. Then, $$ \sum_{\begin{subarray}[(B',Z'_\ell,Z'_r) \\\sim_{\ell} (B,Z_\ell,Z_r)\\ B'_\ell \cup Z_\ell' = T \end{subarray}} \sum_{J:\zeta_J(Z_r) = \zeta_{I_1}(Z'_{\ell})} v_J f( \{g_b\}_{b \in \zeta_{I_1,J}(B')}) = \sum_{\begin{subarray}[(B',Z'_\ell,Z'_r) \\\sim_{\ell} (B,Z_\ell,Z_r)\\ B'_\ell \cup Z_\ell' = T \end{subarray}} \sum_{J:\zeta_J(Z_r) = \zeta_{I_2}(Z'_{\ell})} v_J f( \{g_b\}_{b \in \zeta_{I_2,J}(B')}) .$$ \label{claim:well-defined}
\end{claim}
\begin{proof}[Proof of Claim] 
We can equivalently write the claim above as: 
\begin{equation}
\sum_{\sigma \in \S_d} \sum_{J: \zeta_J(Z_r) = \zeta_{I_1}(\sigma(Z_\ell))} v_J \cdot f( \{g_b\}_{b \in \zeta_{I_1,J}(B^{\sigma})}) = \sum_{\sigma \in \S_d} \sum_{J: \zeta_J(Z_r) = \zeta_{I_2}(\sigma(Z_\ell))} v_J \cdot f( \{g_b\}_{b \in \zeta_{I_2,J}(B^{\sigma})})
\end{equation}

We start with the LHS and observe that for any $\tau \in \S_d$ it equals: $$\sum_{\sigma \in \S_d} \sum_{J: \zeta_J(Z_r) = \zeta_{I_1}(\sigma \circ \tau (Z_\ell))} v_J \cdot f( \{g_b\}_{b \in \zeta_{I_1,J}(B^{\sigma \circ \tau})}).$$ We show that there exists a $\tau$ such that $\zeta_{I_1}(\sigma \circ \tau(Z_\ell)) = \zeta_{I_2}(\sigma(Z_\ell))$ and $\zeta_{I_1,J}( \sigma \circ \tau  \circ B ) = \zeta_{I_2,J}( \sigma \circ B).$

For each $i \in I'$, let $b^2_i \in [d]$ be such that $\zeta_{I_2} (b^2_i) = i$. Similarly, for each $i \in I'$, let $b^1_i \in [d]$ be such that $\zeta_{I_1}(b^1_i) = i$. For each $i \in I'$, choose $\tau$ such that $\tau(b^2_i) = b^1_i$. Then, $\zeta_{I_1}( \tau (b^2_i)) = \zeta_{I_1}(b^2_i) = i$. Thus, $\zeta_{I_1,J}( B'^{\tau}) = \zeta_{I_2,J}(B')$ for every $(B',Z_\ell',Z_r') \sim_\ell (B,Z_\ell,Z_r).$ 
\end{proof}

We can now show that $(Q^{\ell} v)_{I} = \sum_{I' \subseteq I, |I'| = q+|B_\ell|} u_{I'}$. We now observe:
\begin{align}
(Q^{\ell} v)_I  &= \sum_{J \in \nchoose{d}} Q^{\ell} (I,J) \cdot v_J \\
&= \frac{1}{|Aut_{\ell}(B,Z_\ell,Z_r)|} \sum_{J \in \nchoose{d}} \sum_{\sigma \in \S_d}  Q_{B',Z_\ell',Z_r',f} v_J \notag \\
&\text{ Keeping $J$s that correspond to non-zero entries in $Q_{B',Z_\ell',Z_r\,f}$}\\
&= \frac{1}{|Aut_{\ell}(B,Z_\ell,Z_r)|} \sum_{\sigma \in \S_d} \sum_{J:\zeta_J(Z_r) = \zeta_I(Z'_{\ell})} Q_{B',Z_\ell',Z_r',f} v_J \notag \\
&= \frac{1}{|Aut_{\ell}(B,Z_\ell,Z_r)|} \sum_{\sigma \in \S_d} \sum_{J:\zeta_J(Z_r) = \zeta_I(Z'_{\ell})} f(\{g_b\}_{b \in \zeta_{I,J}(B^{\sigma})}) v_J \notag \\ 
&\text{ Using that} \cup_{I' \subseteq I} \{ \sigma \mid \zeta_{I}(B^{\sigma}_{\ell}) \cup \sigma(Z_{\ell}) = I' \} \text{ forms a partition of $\S_d$ indexed by $I'$,}\\
&= \sum_{I' \subseteq I} \frac{1}{|Aut_{\ell}(B,Z_\ell,Z_r)|} \sum_{\sigma \in \S_d \text{, } B^{\sigma}_{\ell} \cup \sigma(Z_\ell) = I' \text{, }  J:\zeta_J(Z_r) = \zeta_I(Z'_{\ell})}  f(\{g_b\}_{b \in \zeta_{I,J}(B^{\sigma})}) v_J  \notag\\
&\text{ Since each $(B',Z_\ell',Z_r') \sim_\ell (B,Z_\ell,Z_r)$ s.t. $B'_{\ell} \cup Z_\ell' = I'$ occur $|Aut_{\ell}(B,Z_\ell,Z_r)|$ times in inner sum,} \notag\\
&= \sum_{I' \subseteq I} \sum_{(B',Z_\ell',Z_r') \sim_\ell (B,Z_\ell,Z_r) \text{, } B'_{\ell} \cup Z_\ell' = I' }  \sum_{J:\zeta_J(Z_r) = \zeta_I(Z'_{\ell})}f(\{g_b\}_{b \in \zeta_{I}( B'_{\ell})}) v_J \notag\\
&\text{ Using \pref{claim:well-defined}: }\\
&= \sum_{I' \subseteq I} u_{I'}.
\end{align}
This completes the proof using \pref{lem:eigenbasisstructure}.
\end{proof}

\subsection{Concentration for Locally Random Matrices over $G(n,\frac{1}{2})$}
\label{sec:spectralnorm}
The goal of this section is to prove strong concentration bounds for the matrices will encounter in our analysis. The first result is a spectral concentration bound for the patterned matrices $Q = Q_{B,f}(G)$ for $G \sim G(n, \frac{1}{2})$ when $f:\{-1,0,1\}^B \rightarrow \R$ is given by $f(x) = \Pi_{b \in B} x_b$. In other words, the entry $Q(I,J)$ is the product of the edge indicator variables $g_b$ for $b \in \zeta_{I,J}(B)$. These bounds will be used in \pref{sec:degree-d}.

\begin{lemma}
  \label{lem:fourier-norm-mpw-disjoint}
  For $d \geq 2$, $d = O(\log{(n)})$, and a bipartite graph $B \in \B$, let $Q = Q_{B,f}$ be a patterned matrix with $f(x) = \prod_{b \in B} x_b$. That is, 
  \[
    Q(I,J) = \begin{cases}
      \prod_{b \in \zeta_{I,J}(B)} g_{b} & \text{ if $I \cap J = \emptyset$}\\
      0 & \text{ otherwise}
     \end{cases}\mcom
  \]
  Then:
  \begin{enumerate}
    \item When $B$ contains a $2$-matching, then $\Pr (\|Q \| \geq n^{d-1}(\log n)^3) \leq O(n^{-10})$.
    \item When $B$ is not the empty graph, $\Pr(\|Q\| \geq n^{d-1/2}(\log n)^3) \leq O(n^{-10})$.
  \end{enumerate}
\end{lemma}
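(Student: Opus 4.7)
My plan is to apply the trace power method. For an even integer $k \approx \log n$, we have $\|Q\|^{2k} \le \Tr((QQ^\top)^k)$, so it suffices to bound $\E\Tr((QQ^\top)^k)$ and apply Markov's inequality at the end. Expanding,
\[
  \E\Tr((QQ^\top)^k) \;=\; \sum_{I_1,\ldots,I_k,J_1,\ldots,J_k} \E\Bigl[Q(I_1,J_1)\,Q(I_2,J_1)\,Q(I_2,J_2)\cdots Q(I_1,J_k)\Bigr]\mper
\]
Since $Q(I,J)$ vanishes when $I \cap J \ne \emptyset$ and otherwise equals $\prod_{b \in \zeta_{I,J}(B)} g_b$, each summand is a signed product of $2k|B|$ edge indicators. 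By independence and mean-zero of the $g_e$'s, the expectation is zero unless every edge $e \in \binom{[n]}{2}$ appears with even multiplicity in the induced multigraph on $[n]$ formed by the $2k$ copies of $B$ along the closed walk.

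The heart of the argument is a combinatorial count of the contributing tuples, organized by their \emph{shape}: the isomorphism class of the induced multigraph after canonically relabeling the vertices of $[n]$ that appear. For a shape on $V$ distinct vertices there are at most $n^V$ instantiations, and the number of shapes on $V$ vertices is crudely $(Ckd)^{O(kd)}$, which will be $n^{o(k)}$ for our choice of $k$. The two cases of the lemma correspond to two bounds on the maximum achievable $V$ under the even-multiplicity constraint:
\begin{itemize}
  \item If $B$ contains a $2$-matching, then traversing the walk step by step and maintaining the multiset of currently-unpaired copies of $B$-edges, each new set $I_\ell$ (resp.\ $J_\ell$) must pin at least two of its vertices to previously seen vertices in order to simultaneously pair off the two disjoint matching edges; this yields $V \le 2k(d-1) + O(d)$.
  \item If $B$ merely has at least one edge, only a single forced identification per step is guaranteed, yielding the weaker $V \le 2k(d-\tfrac12) + O(d)$.
\end{itemize}
Plugging these into the sum and taking $2k$-th roots with $k = \Theta(\log n)$ yields $\|Q\| \le n^{d-1}(\log n)^{O(d)}$ in case~(1) and $\|Q\| \le n^{d-1/2}(\log n)^{O(d)}$ in case~(2). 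A union bound and Markov applied to the moment inequality give the claimed $O(n^{-10})$ failure probability.

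The main obstacle I expect is the vertex-count bound under the pairing constraint, in particular making the ``two vs.\ one forced identification per step'' dichotomy rigorous as an inductive argument over walk positions $\ell = 1, 2, \ldots, k$. The sorting maps $\zeta_I, \zeta_J$ contribute an additional complication: when two edges of $B$ from different steps pair up, the identification they force on vertices of $[n]$ depends on the relative ranks of those vertices in the underlying sets. My plan is to absorb this into a $(d!)^{O(k)}$ multiplicative loss (harmless for $d = O(\log n)$) by first carrying out the count over arbitrary injective labelings $[d] \hookrightarrow [n]$, then quotienting by the $(d!)^{2k}$ orderings. The core combinatorial step—showing that a $2$-matching in $B$ forces two vertex-identifications per step whereas a single edge forces only one—closely mirrors the ad-hoc case analysis of \cite{DM15} but must be made uniform in $|B_\ell|, |B_r| \le d$; the cleanest route will likely be to encode each shape as a rooted ``skeleton'' on $O(kd)$ vertices and charge each step's vertex budget against the skeleton's edge structure.
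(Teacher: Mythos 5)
There is a genuine gap at exactly the step you flag as the ``main obstacle'': the per-step counting claim is false as stated, and the fix is not a routine induction. In the expansion of $\E\Tr((QQ^\top)^k)$, an edge copy created when a new set is revealed need not be paired with a \emph{previously seen} edge copy --- it can be paired with one appearing later in the walk. Concretely, take $B$ containing the matching $\{(1,1),(2,2)\}$, set all $I_\ell$ equal to one fixed pair and let the $J_\ell$ be $k$ fresh, pairwise disjoint pairs: every edge copy between $I_\ell$ and $J_\ell$ is paired by the next factor $Q(I_{\ell+1},J_\ell)$, so the walk is contributing, yet each newly revealed $J_\ell$ pins \emph{zero} of its vertices to previously seen ones. (This configuration still satisfies the final bound $V\le 2k(d-1)+O(1)$, so it does not refute the lemma, but it refutes the inductive mechanism you propose.) A second issue your sketch does not address is cross-pairing between copies of \emph{different} $B$-edges (e.g.\ a copy of $(1,1)$ at step $i$ coinciding, as a labeled pair in $[n]$, with a copy of $(2,2)$ or of $(1,2)$ at step $j$); once such pairings are allowed, the ``two matching edges give two forced identifications'' dichotomy has no local justification, and a correct global argument needs either an amortized/charging scheme of the minimum-vertex-separator type or some device that forbids cross-pairings outright.

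The paper takes the second route. It first passes to the tuple-indexed $n^d\times n^d$ matrix (\pref{lem:embed}), which eliminates the sorting-map bookkeeping losslessly (your $(d!)^{O(k)}$ loss, incidentally, is not harmless at the claimed parameter range $d=O(\log n)$, nor is the $(Ckd)^{O(kd)}$ shape count after taking $2k$-th roots), and then applies the random-partitioning decomposition $Q=\sum_{i\le O(\log n)}Q_i$ (\pref{lem:partitioning}), where each piece constrains the first coordinates of all index tuples to a bin $S_1$, the second coordinates to $S_2$, and the rest to $S_3$. This disjointness forces every edge copy coming from the matching edge $(1,1)$ to be paired only with other copies from $(1,1)$, and likewise for $(2,2)$; hence each of these two closed walks of length $2\ell$ is \emph{individually} even, has at most $\ell$ distinct edges and, being connected, at most $\ell+1$ distinct vertices, which is precisely \pref{lem:ribbon-unique-bound} and yields the $(2d-c)\ell+c$ vertex count. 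Your plan can likely be completed, but only by importing either this partitioning trick or a substantially more involved global separator-based count; as written, the central combinatorial bound is asserted rather than proved.
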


The next main result of this section considers a different class of matrices that appear in the analysis in \pref{sec:deg4-psd}. 
\begin{lemma}
  \label{lem:correction-disjoint}
  Let $U \subseteq [2] \times [2]$ be a bipartite graph on $4$ vertices and suppose $U$ is nonempty.
  Let $M  \in \nchoose{2} \times \nchoose{2}$ be a matrix with the entry at $\{a_1,a_2\}$, $\{b_1,b_2\}$ for $a_1 \leq a_2$ and $b_1 \leq b_2$ are given by
  \[
    M[\{a_1,a_2 \},\{b_1,b_2 \}] = \begin{cases}
      \sum_{k \in [n]} g_{k,a_1}g_{k,a_2}g_{k,b_1}g_{k,b_2}\prod_{(i,j) \in U} g_{a_i, b_j} & \text{ if $|\{ a_1,a_2,b_1,b_2 \}| = 4$}\\
      0 & \text{ otherwise}
     \end{cases}\mcom
  \]
  Recall that we set $g_{aa} = 0$ for every $a\in [n]$ by convention. Then, whenever $U$ is non-empty, $\Pr(\|M\| \geq n^{3/2} (\log n)^3) \leq O(n^{-10})$. If $U$ is the empty graph, then $\Pr(\|M \| \geq n^2 (\log n)^3) \leq O(n^{-10})$.
\end{lemma}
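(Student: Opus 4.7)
The plan is to apply the trace power method: bound $\E[\Tr((MM^\top)^q)]$ for $q = \Theta(\log n)$, then use $\|M\|^{2q} \leq \Tr((MM^\top)^q)$ together with Markov's inequality. Writing $M[I,J] = \sum_k \chi(I,J,k)$ with $\chi(I,J,k) = g_{k,a_1}g_{k,a_2}g_{k,b_1}g_{k,b_2}\prod_{(i,j)\in U} g_{a_i,b_j}$, we expand
\[
\E\Tr((MM^\top)^q) = \sum_{\substack{I_0,\ldots,I_{q-1}\\ J_1,\ldots,J_q\\ k_1,k'_1,\ldots,k_q,k'_q}} \E\Brac{\prod_{t=1}^q \chi(I_{t-1},J_t,k_t)\,\chi(I_{t \bmod q},J_t,k'_t)}\mper
\]
Each term is a product of $2q(4+|U|)$ edge variables $g_e$, with expectation equal to $1$ if every distinct edge appears an even number of times and $0$ otherwise. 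I would classify the nonzero terms by their \emph{type} $\Gamma$: the equivalence relation on the $6q$ vertex-labels $\{a_1^t,a_2^t,b_1^t,b_2^t,k_t,k'_t\}_{t=1}^q$ together with the pairing of the $g$-factors induced by this equivalence.

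The key step is to bound $V(\Gamma)$, the number of distinct vertex-labels in a valid type $\Gamma$ (one yielding nonzero expectation). The combinatorial claim is: (a) if $U = \emptyset$, then $V(\Gamma) \leq 4q + O(1)$; and (b) if $U \neq \emptyset$, then $V(\Gamma) \leq 3q + O(1)$. To prove this, view the $g$-factors as a multigraph on the $V(\Gamma)$ vertex-labels in which each distinct edge has even multiplicity. In case (a) each $\chi$ is a $4$-star rooted at some $k$-vertex, and a traversal-style argument in the spirit of Wigner-type trace computations shows that the extremal diagrams are tree-like: star edges introduce new leaves one at a time and $k$-centers are matched cyclically, yielding $V(\Gamma) \leq 4q + O(1)$. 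In case (b) each $\chi$ additionally contributes $|U|$ non-star edges inside $I \cup J$; each pairing of two such edges $g_{a_i^t,b_j^t}$ and $g_{a_{i'}^{t'},b_{j'}^{t'}}$ forces an identification $\{a_i^t,b_j^t\} = \{a_{i'}^{t'},b_{j'}^{t'}\}$ between pair-vertices of different factors, and a careful accounting shows that these extra identifications save at least $q$ distinct labels overall, giving $V(\Gamma) \leq 3q + O(1)$.

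Combining the combinatorial claim with the easy bound $q^{O(q)}$ on the number of types of each given vertex count, we obtain $\E\Tr((MM^\top)^q) \leq q^{O(q)} \cdot n^{\alpha q}$ with $\alpha = 4$ if $U=\emptyset$ and $\alpha = 3$ if $U \neq \emptyset$. Choosing $q = \Theta(\log n)$ and applying Markov's inequality to $\|M\|^{2q}$ yields the claimed tail bound $\Pr(\|M\| \geq n^{\alpha/2}(\log n)^3) \leq O(n^{-10})$. The main obstacle is the combinatorial argument for case (b): one must distinguish subcases according to how the $U$-edges pair (with other $U$-edges in the same or a different factor, or against star edges across distinct factors), and carefully handle ``degenerate'' configurations in which a $k$-vertex coincides with a pair-vertex. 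Such degenerate configurations only decrease $V(\Gamma)$ and therefore help the bound, but handling them cleanly requires traversing the trace cycle step by step and tracking ``new'' vs.\ ``previously identified'' vertices, closely paralleling the proof of \pref{lem:fourier-norm-mpw-disjoint} for the star-shaped Fourier characters.
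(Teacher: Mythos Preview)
Your overall strategy---trace power method with the combinatorial target $V(\Gamma)\le 3q+O(1)$ (resp.\ $4q+O(1)$) and then Markov---is exactly what the paper does, and your targets match the paper's $3\ell+2$ and $4\ell+2$ from \pref{lem:fancy-ribbon-unique-bound}. The gap is in how you propose to prove the combinatorial bound.

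You attempt to bound $V(\Gamma)$ over \emph{all} nonzero types directly, acknowledging that the main obstacle is the case analysis over how $U$-edges pair (with each other, with star edges, across factors) and the degenerate configurations where a $k$-vertex coincides with a pair-vertex. The paper avoids this entire case analysis by a preprocessing step you are missing: the random partitioning trick of \pref{lem:partitioning}, \pref{itm:partition-fancy}. Concretely, one first writes $M=\sum_k Q^k$ and then decomposes $M$ into $O(\log n)$ pieces $Q_1,\dots,Q_r$, each associated to a partition $(S_1,S_2,T)$ of $[n]$, so that in $Q_i$ one only keeps terms with $a_1,b_1\in S_1$, $a_2,b_2\in S_2$, $k\in T$. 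This costs only an extra $\log n$ factor in the final bound (absorbed into the $(\log n)^3$) but buys that in every contributing term the three label classes $\{F(a_i^1),F(b_i^1)\}$, $\{F(a_i^2),F(b_i^2)\}$, $\{F(c_i)\}$ are forced to be disjoint. Under this disjointness, the ribbon decomposes into two edge-disjoint cycles (when $U\neq\emptyset$, after permuting so that $(1,1)\in U$): a $2\ell$-cycle on $\{a_i^1,b_i^1\}$ and a $4\ell$-cycle on $\{a_i^2,b_i^2,c_i\}$. Each contributing cycle has at most half its edges distinct, hence at most $\ell+1$ and $2\ell+1$ distinct vertex labels respectively, giving $3\ell+2$ with no case analysis at all (\pref{lem:fancy-ribbon-unique-bound}).

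Without the partitioning, your sketch is not yet a proof: the assertion that degenerate configurations ``only decrease $V(\Gamma)$'' is true but does not by itself establish the $3q$ bound, because a coincidence $k=a_i$ can allow a star edge to pair with a $U$-edge and thereby relax the forced identifications you are counting on to ``save $q$ labels.'' You would need a genuine step-by-step traversal argument tracking new versus repeated vertices across all these interactions. That can be made to work, but it is exactly the mess the partitioning trick is designed to eliminate; I recommend inserting it.
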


The proofs of both these results are based on the standard idea of analyzing the trace of higher powers of a matrix to prove bounds on its spectral norm. The proof of \pref{lem:fourier-norm-mpw-disjoint} is similar to the proofs via the trace power method for bounding the norms of matrices as presented in \cite{DM15}. The general format we present here will come in handy for multiple applications to various matrices in \pref{sec:degree-d}. \pref{lem:correction-disjoint} deals with somewhat more complicated matrices that appear in the analysis of the corrected operator for degree $4$ lower bound. Nevertheless, as is common in such proofs, the analysis is based on a combinatorial analysis of the terms that make non zero contribution to the trace powers combined with the simplifying effect of random partitioning based arguments. We describe the details of the proof in the following section.

\subsubsection{General Tools}
Before diving into the details, we present three general purpose tools that we will employ repeatedly in our analyses. For analyzing the spectral norm of a matrix $Q \in \R^{\nchoose{d} \times \nchoose{d}}$, the first tool allows us to analyze instead a related matrix $Q' \in \R^{n^d \times n^d}$. That is, instead of rows and columns being indexed by subsets of vertices as in $Q$, $Q'$ has rows and columns indexed by ordered tuples of vertices of size $d$. This transformation is not hard as one can find $Q$ as a principal submatrix of $Q'$.

\begin{lemma}[Sets to Ordered Tuples]
  \label{lem:embed}
For any $Q  \in \R^{\nchoose{d} \times \nchoose{d}}$ define the matrix $Q' \in \R^{n^d \times n^d}$ such that for any ordered tuple $S = (a_1, a_2, \ldots, a_d), T = (b_1, b_2, \ldots, b_d) \in [n]^d$, $Q'(S,T) = Q(\{a_1, a_2, \ldots, a_d\}, \{b_1, b_2, \ldots, b_d\})$. Then, $\|Q\| \leq \|Q'\|$.  
\end{lemma}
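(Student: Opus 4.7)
The plan is to realize $Q$ itself as a submatrix of $Q'$ and then invoke the standard fact that passing to a submatrix cannot increase the spectral norm. Concretely, let $\iota \colon \nchoose{d} \to [n]^d$ be the map sending a $d$-subset $I = \{i_1 < i_2 < \cdots < i_d\}$ to the sorted ordered tuple $(i_1, \ldots, i_d)$, and let $R = \iota\bigl(\nchoose{d}\bigr) \subseteq [n]^d$ be the set of strictly increasing tuples. By the definition of $Q'$, whenever $S = \iota(I)$ and $T = \iota(J)$ we have $Q'(S,T) = Q(\{i_1,\ldots,i_d\},\{j_1,\ldots,j_d\}) = Q(I,J)$. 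Hence the principal submatrix $Q'[R,R]$, after relabeling its rows and columns via $\iota^{-1}$, is exactly equal to $Q$.

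It then suffices to recall the elementary inequality: for any matrix $N$ and any index sets $R, C$, the submatrix $N[R,C]$ satisfies $\|N[R,C]\| \leq \|N\|$. A one-line proof writes $N[R,C] = P_R N P_C^\top$, where $P_R$ denotes the $|R| \times n^d$ matrix obtained from the identity by retaining only the rows indexed by $R$ (and similarly for $P_C$). Since the rows of $P_R$ and $P_C$ are orthonormal, $\|P_R\|, \|P_C\| \leq 1$, so $\|N[R,C]\| \leq \|P_R\| \cdot \|N\| \cdot \|P_C\| \leq \|N\|$. Applying this with $N = Q'$ and $R = C = \iota(\nchoose{d})$ gives $\|Q\| = \|Q'[R,R]\| \leq \|Q'\|$, as required.

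There is no real obstacle here; the only small subtlety is the case where the tuples $S$ or $T$ have repeated entries, which fall outside $R$ and so do not affect the submatrix argument. (Whatever convention is chosen for $Q'(S,T)$ on such tuples, our proof only uses that the submatrix indexed by the strictly increasing tuples reproduces $Q$.)
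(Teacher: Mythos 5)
Your proof is correct and follows essentially the same route as the paper: identify $Q$ with the principal submatrix of $Q'$ indexed by sorted (strictly increasing) tuples, and use the fact that a principal submatrix has spectral norm at most that of the full matrix. Your write-up simply makes the submatrix-norm step explicit via the projection matrices $P_R, P_C$, which the paper leaves implicit.
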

\begin{proof}
  It is enough to show that $Q'$ occurs as a principal submatrix of $Q$. For this, take the submatrix of rows and columns of $M$ indexed by tuples $(a_1,\ldots,a_d)$ in sorted order, i.e., with $a_1 \leq a_2 \leq \ldots a_d$.
\end{proof}

We will use the following lemma to break dependencies in certain random matrices by decomposing them into matrices whose entries, while still dependent, have additional structure.

\begin{lemma}[Random Partitioning]
  \label{lem:partitioning}
 For $d \in \N$, let $Q \in \R^{n^d \times n^d}$.
  \begin{enumerate}
  \item
  \label{itm:partition-basic}
  Suppose $Q(I,J) = 0$ when $I \cap J \neq \emptyset$.
 Let $(S_1^1,\ldots,S_k^1),\ldots,(S_1^r,\ldots,S_k^r)$ be a sequence of partition of $[n]$ into $k$ bins. Each partition induces a matrix based on $Q$ as follows:
  \begin{align*}
    & Q_{i}[(a_1,\ldots,a_d),(b_1,\ldots,b_d)]  \\
    & \quad = \begin{cases}
      Q[(a_1,\ldots,a_d),(b_1,\ldots,b_d)] & \text{ if $a_j, b_j \in S_j^i$ for $j <k$}\\
      & \text{ and $a_j, b_j \in S_k$ for $j \geq k$}\\
      & \text{ and for all $i' < i$, $M_{i'}[(a_1,\ldots,a_d),(b_1,\ldots,b_d)] = 0$}\\
      0 & \text{ otherwise}
      \end{cases}\mper
  \end{align*}
  Then, there is a family of partitions $(S_1^1,\ldots,S_k^1),\ldots,(S_1^r,\ldots,S_k^r)$ such that $Q = \sum_{i = 1}^r Q_{i}$ with $r \leq O(k^k \log n)$.
  \item 
  \label{itm:partition-fancy}
 Let $Q^j \in \R^{n^d \times n^d}$, for each $1 \leq j \leq n$, be matrices such that $Q^j(I,J) = 0$ whenever $I \cap J \neq \emptyset$ or $j \in I \cup J$. Suppose $Q = \sum_{j = 1}^n Q^j$. For a partition $(S_1,\ldots,S_k,T)$ of $[n]$ into $k+1$ parts, say that $j,(a_1,\ldots,a_d),(b_1,\ldots,b_d)$ respect the partition if $j \in T$, $a_i,b_i \in S_i$ for all $i$.
  Let a sequence of partitions $(S_1^1,\ldots,S_k^1,T^1),\ldots,(S_1^r,\ldots,S_k^r,T^r)$ of $[n]$ into $k+1$ parts induce matrices $M_{i}$ in the following way.
  \begin{align*}
   Q_{i}[(a_1,\ldots,a_d),(b_1,\ldots,b_d)] = \sum_{j \in T^i_{(a_1,\ldots,a_d),(b_1,\ldots,b_d)}} Q^j[(a_1,\ldots,a_d),(b_1,\ldots,b_d)] 
  \end{align*}
  where $T^i_{(a_1,\ldots,a_d),(b_1,\ldots,b_d)}$ is the set of indices $j$ so that $j,(a_1,\ldots,a_d),(b_1,\ldots,b_d)$ respect the partition $(S_1^i,\ldots,S_k^i,T^i)$ and do not respect any partition $(S_1^{i'},\ldots,S_k^{i'},T^{i'})$ for any $i' < i$.
  \end{enumerate}
  Then, there is a family $(S_1^1,\ldots,S_k^1,T^1),\ldots,(S_1^r,\ldots,S_k^r,T^r)$ of partitions of $[n]$ so that $Q = \sum_{i = 1}^r Q_{i}$ with $r \leq O((k+1)^{k+1} \log n)$.
\end{lemma}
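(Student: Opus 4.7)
The plan is to prove both parts by a probabilistic construction: sample $r$ independent random partitions of $[n]$, form the induced matrices $Q_i$ with the ``first-capture'' convention as defined in the lemma, and show that with positive probability $Q = \sum_i Q_i$. Since the definition of $Q_i$ assigns each entry of $Q$ to at most one $Q_i$ (namely, the first partition that satisfies the structural condition for that entry), the task reduces to showing that for every pair $(I,J)$ with $I \cap J = \emptyset$, at least one of the random partitions captures $(I,J)$ in the sense of the lemma.

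For the first part, I would sample each partition by placing every element of $[n]$ independently and uniformly into one of $k$ bins; equivalently, $S_j^i = \phi_i^{-1}(j)$ for a uniformly random function $\phi_i \colon [n] \to [k]$. For a fixed tuple pair $(I,J) = ((a_1,\ldots,a_d),(b_1,\ldots,b_d))$ with disjoint entries, I would lower-bound the capture probability $p$ (placing $a_j,b_j$ in $S_j$ for $j < k$ and in $S_k$ for $j \geq k$) by $k^{-O(k)}$. The sharper bound than the naive $k^{-2d}$ comes from the fact that within the bin $S_k$ the membership constraints are symmetric in the remaining indices, and capture for a given $(I,J)$ really only depends on the unordered partition structure of $I \cup J$ across the $k$ bins; thus there are only $n^{O(k)}$ distinct equivalence classes of such tuples to union-bound over. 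Taking $r = O(k^k \log n)$ independent random partitions and applying a union bound over these classes then gives with positive probability that every nonzero entry of $Q$ is captured. For the second part, I would sample random partitions of $[n]$ into $k+1$ parts analogously, and the additional capture requirement $j \in T$ contributes only one more bin constraint; the same accounting gives a family of size $O((k+1)^{k+1} \log n)$.

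The main obstacle I anticipate is the careful combinatorial accounting that reduces the effective number of tuple equivalence classes from $n^{2d}$ to $n^{O(k)}$ and the capture probability lower bound to $k^{-O(k)}$. A straightforward union bound over all ordered $2d$-tuples of elements of $[n]$, with capture probability $k^{-2d}$ per tuple, only gives $r = O(k^{2d} \log n)$. In the applications of this lemma (\pref{lem:fourier-norm-mpw-disjoint} and \pref{lem:correction-disjoint}), $d$ may be as large as $O(\log n)$, so the refinement from $k^{2d}$ to $k^k$ is what ultimately translates into polylogarithmic (rather than polynomial) overhead in the spectral norm estimates obtained by the trace power method.
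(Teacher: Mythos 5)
Your high-level plan is the same as the paper's: sample $r$ independent random partitions, define the $Q_i$ by the first-capture rule, and show every disjoint pair of $d$-tuples in the support of $Q$ is captured by some partition, via a union bound. The gap is the step where you claim the per-tuple capture probability is $k^{-O(k)}$. Under your sampling model (each element of $[n]$ placed independently and uniformly into one of $k$ bins), the capture event for a fixed tuple pair prescribes the bin of each of the $2d$ distinct elements ($a_j,b_j\in S_j$ for $j<k$, and \emph{all} of $a_k,\dots,a_d,b_k,\dots,b_d$ in $S_k$), so its probability is exactly $k^{-2d}$; the symmetry ``within $S_k$'' does not change this, since the event is still a conjunction of $2d$ independent constraints. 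Nor can coarsening the union bound compensate: the identity $Q=\sum_i Q_i$ requires \emph{every individual} nonzero entry, hence every disjoint tuple pair, to be captured by some partition in the family, and for a fixed pair the probability of capture within $r$ rounds is at most $r\,k^{-2d}$, which already forces $r\gtrsim k^{2d}$ for this construction. Your claim that capture depends only on the unordered partition structure of $I\cup J$, leaving $n^{O(k)}$ classes, is also not right: which specific elements must land in $S_k$ matters, and there are $\binom{n}{2(d-k+1)}=n^{\Theta(d)}$ possible remainder sets. So the proposal as written does not yield $r=O(k^k\log n)$ once $d\gg k$, which is exactly the regime ($d$ up to $O(\log n)$, $k=3$) needed for \pref{lem:fourier-norm-mpw-disjoint}.

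For comparison, the paper's proof runs the same scheme with random equipartitions into parts of size $n/k$ and simply asserts a per-tuple capture probability of at least $k^{-k}$ before union bounding over $O(n^{2d})$ tuples; that assertion is only valid when $d=O(k)$, so your instinct that something extra is needed for large $d$ was correct --- but the repair has to change the distribution of partitions, not the union bound. Taking unequal bins, say $|S_j|\approx n/d$ for $j<k$ and the remaining $\approx(1-(k-1)/d)n$ elements in $S_k$, gives capture probability $\Omega(d^{-2(k-1)})$ per round and hence $r=O(d^{O(k)}\log n)=\mathrm{polylog}(n)$ for constant $k$, which is weaker than the stated $O(k^k\log n)$ but suffices for \pref{lem:fourier-norm-mpw-disjoint} and \pref{lem:correction-disjoint}, since those bounds absorb polylogarithmic factors. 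In fact a simple counting argument (a single partition captures at most a $\prod_{j<k}(|S_j|/n)^2\,(|S_k|/n)^{2(d-k+1)}\le d^{-2(k-1)}$ fraction of the required configurations) shows any valid family must have $r=d^{\Omega(k)}$, so no bookkeeping of equivalence classes can recover a bound of the form $k^{O(k)}\log n$ independent of $d$.
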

\begin{proof}[Proof of \pref{lem:partitioning}]
  We present the proof of \pref{itm:partition-basic}; the proof of \pref{itm:partition-fancy} is almost identical.
  For $r$ to be chosen later, we pick partitions $(S_1^1,\ldots,S_k^1),\ldots,(S_1^r,\ldots,S_k^r)$ 
 uniformly at random and independently so that each is partition of $[n]$ into sets of size $n/k$ each.

  Call $(a_1,\ldots,a_d),(b_1,\ldots,b_d)$ \emph{good} at step $i$ if $a_j, b_j \in S_j^i$ for every $j < k$ and $a_j, b_j \in S_k^i$ if $j \geq k$.
  It is enough to show that after $r \leq O(k^k \log n)$ steps the probability that every $\{a_1,\ldots,a_d,b_1,\ldots,b_d \}$ of size $2d$ is good at some step $i \leq r$.

  Fix some $(a_1,\ldots,a_d),(b_1,\ldots,b_d)$ with $| \{ a_1,\ldots,a_d,b_1,\ldots,b_d \}| = 2d$.
  It is good at step $i$ with probability at least $k^{-k}$.
  Since the steps are independent, after $r$ steps
  \begin{align*}
    \Pr((a_1,\ldots,a_d),(b_1,\ldots,b_d) \text{ is good}) & \geq (1 - \tfrac 1 {k^k})^r\\
    & = ((1 - \tfrac 1 {k^k})^{k^k})^{r/k^k}\\
    & \leq (\tfrac 1 e)^{r/k^k}
  \end{align*}
  which is at most $1/n^{10d}$ for some $r = O(k^k \log n)$.

  Taking a union bound over all $O(n^{2d})$ tuples $(a_1,\ldots, a_d),(b_1,\ldots,b_d)$ with $|\{a_1,\ldots,a_d, b_1,\ldots,b_d\}| = 2d$ completes the proof.
\end{proof}

Finally, the following lemma relates the norms of certain matrices in $X \in \R^{\nchoose{d} \times \nchoose{d}}$ that have non zero entry $(I,J)$ only if $|I \cap J | = q$ to a certain \emph{lift} of $X$ that lives in $\R^{\nchoose{d-q} \times \nchoose{d-q}}$ and has non zero entries $I,J$ only when $I \cap J = \emptyset$.  The latter case is easier to handle and the idea of lifts helps reducing the norm computation for lifts of $X$ to that of $X$.

\begin{definition}[Lifts of Matrices, Equation 8.5 in \cite{MPW15}] \label{def:liftsofnondisjoint}
For a matrix $X \in \R^{\nchoose{d-i} \times \nchoose{d-i}}$ for some $0 \leq i \leq d$ such that $X(I',J') = 0$ whenever $I' \cap J' \neq \emptyset$, define the lift $X^{(i)} \in \R^{\nchoose{d} \times \nchoose{d}}$ to be the matrix defined by:
\[
X^{(i)}(I,J) = \begin{cases}
X(I \setminus (I \cap J), J \setminus (I \cap J)), \text{ if} |I \cap J| = i\\
0, \text{ otherwise.}
\end{cases}
\]
\end{definition}

The usefulness of the above definition is captured by the following claim:
\begin{fact}[Lemma 8.4 in \cite{MPW15}]\label{fact:liftsofnondisjoint}
Let $X \in \R^{\nchoose{d-i} \times \nchoose{d-i}}$ for some $0 \leq i \leq d$ such that $X(I',J') = 0$ whenever $I' \cap J' \neq \emptyset$. Then, for the lift $X^{(i)}$ of $X$, we have: $$\| X^{(i)}\| \leq {d \choose i}^2 \cdot \|X\|.$$
\end{fact}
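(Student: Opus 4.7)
The plan is to decompose $X^{(i)}$ according to the intersection $K = I \cap J$ and then reassemble the contributions via Cauchy--Schwarz. Fix unit vectors $v, w \in \R^{\nchoose{d}}$ and expand the bilinear form, using that $X^{(i)}(I,J)$ is nonzero only when $|I \cap J| = i$:
\[
  v^\top X^{(i)} w \;=\; \sum_{K \in \nchoose{i}} \sum_{\substack{I', J' \in \binom{[n]\setminus K}{d-i} \\ I' \cap J' = \emptyset}} v_{I' \cup K}\, w_{J' \cup K}\, X(I', J')\mper
\]
For each fixed $K$ define vectors $v^K, w^K \in \R^{\binom{[n]\setminus K}{d-i}}$ by $v^K_{I'} = v_{I' \cup K}$ and $w^K_{J'} = w_{J' \cup K}$. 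Since $X$ vanishes on intersecting pairs, the inner sum is exactly $(v^K)^\top \widetilde{X}^K w^K$, where $\widetilde{X}^K$ is the principal submatrix of $X$ obtained by restricting rows and columns to $\binom{[n] \setminus K}{d-i}$; in particular $\|\widetilde{X}^K\| \leq \|X\|$, so the inner sum is bounded by $\|X\|\cdot \|v^K\|\cdot \|w^K\|$.

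Next I would apply Cauchy--Schwarz across $K$ to get
\[
  v^\top X^{(i)} w \;\leq\; \|X\| \sum_{K} \|v^K\| \|w^K\| \;\leq\; \|X\| \sqrt{\textstyle\sum_{K} \|v^K\|^2} \sqrt{\textstyle\sum_{K} \|w^K\|^2}\mper
\]
A direct count gives $\sum_{K \in \nchoose{i}} \|v^K\|^2 = \sum_{K, I'} v_{I' \cup K}^2 = \sum_{I \in \nchoose{d}} \binom{|I|}{i}\, v_I^2 = \binom{d}{i}\, \|v\|^2$, where each $I$ of size $d$ contributes once for each of its $\binom{d}{i}$ size-$i$ subsets $K \subseteq I$. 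The same holds for $w$, so taking a supremum over unit $v, w$ yields $\|X^{(i)}\| \leq \binom{d}{i} \|X\|$. This is in fact stronger than the claimed $\binom{d}{i}^2 \|X\|$, and so certainly suffices.

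I do not expect any serious obstacle: the lemma is purely combinatorial/linear-algebraic. The one place requiring care is the argument that the inner sum over disjoint $I', J'$ is controlled by $\|X\|$ rather than by some larger norm of a dilated matrix; this uses crucially that $X$ already vanishes on non-disjoint pairs, so restricting indices to $[n]\setminus K$ really is a principal submatrix and cannot increase the operator norm. The only other bookkeeping detail is the identity $\sum_K \|v^K\|^2 = \binom{d}{i}\|v\|^2$, which is immediate by double counting pairs $(I,K)$ with $K \subseteq I$, $|K|=i$, $|I|=d$.
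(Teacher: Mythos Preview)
Your argument is correct, and in fact yields the sharper bound $\|X^{(i)}\| \leq \binom{d}{i}\,\|X\|$. The paper does not give its own proof of this fact, merely citing it as Lemma~8.4 of \cite{MPW15}, so there is nothing to compare against here; your decomposition over the intersection $K$ together with the observation that restricting $X$ to indices avoiding $K$ gives a principal submatrix (using crucially that $X$ vanishes on intersecting pairs) is exactly the right idea. One cosmetic point: after summing over $K$ you should take absolute values before applying the triangle inequality, i.e.\ bound $|v^\top X^{(i)} w| \leq \sum_K |(v^K)^\top \widetilde X^K w^K|$, since the individual terms need not have a fixed sign; this is clearly what you intend.
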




\subsubsection{Graph-Theoretic Definitions and Lemmas}
In this section, we set up some notation and definitions helpful in our proofs of the main results of this section. The next few definitions and notation are generalization of the ones used in \cite{DM15} to general degrees $d$ and are useful in the proof of \pref{lem:fourier-norm-mpw-disjoint}.
%
%

\begin{definition}
  Let $U$ be a bipartite graph on vertices $\{1,2, \ldots, d\} \times \{1',2',\ldots, d'\}$.
  A $U$-\emph{ribbon} of length $2\ell$ is a graph $R$ on $2\ell d$ vertices
  \begin{align*}
    a_1^1,\ldots,a_d^1,\ldots,a_1^\ell,a_d^\ell\\
    b_1^1,\ldots,b_d^1,\ldots,b_1^\ell,b_d^\ell\mper
  \end{align*}
  We install edges in $R$ by placing a copy of $U$ on vertices $1,2,\ldots, d $ and $1',2',\ldots, d'$ (with the label $i$ or $i'$ matching the upper index of $a$s and $b$s respectively)
  on $a_1^i,\ldots,a_d^i, b_1^{i-1},\ldots,b_d^{i-1}$ for every $i \leq d$. For $i = 0$, we treat $i-1$ as $d$ (modular addition). Often we will omit the length parameter $2\ell$ when it is clear from context.
\end{definition}

\begin{definition}
  Let $G$ be a graph.
  A \emph{labeled} $U$-ribbon $R$ is a tuple $(R,F)$ where $R$ is a $U$-ribbon and $F : R \rightarrow G$ is a map labeling each vertex of $R$ with a vertex in $G$.
  We require that for $(u,v)$ an edge in $R$, $F(u) \neq F(v)$.
\end{definition}

\begin{definition}
  Let $(R,F)$ be a labeled $U$-ribbon where $U$ has $2d$ vertices.
  We say $(R,F)$ is \emph{disjoint} if for every $i$,
  \[
    |\{ F(a_{i}^1),\ldots,F(a_{i}^d),F(b_{i}^1),\ldots,F(b_{i}^d) \}| = |\{ F(a_{i-1}^1),\ldots,F(a_{i-1}^d),F(b_{i-1}^1),\ldots,F(b_{i-1}^d) \}| = 2d\mper
  \]
\end{definition}
\begin{definition}
  Let $(R,F)$ be a labeled $U$-ribbon where $U$ has $2d$ vertices.
  We say that $(R,F)$ is \emph{contributing} if no element of the multiset $\{ (F(u),F(v)) \, : \, (u,v) \in R \}$ occurs with odd multiplicity.
\end{definition}

The following combinatorial lemma will serve as a tool in the proofs of the main results for this section.
\begin{lemma}
  \label{lem:ribbon-unique-bound}
  Let $(R,F)$ be a contributing labeled $U$-ribbon of length $2\ell$.
  Recall that $R$ has vertex set $a_j^i, b_j^i$ for $i \in \ell$ and $j \in [d]$.
  Let $k \leq d$.
  Suppose that the sets
  \[
    \{F(a^i_1),F(b^i_1) \}_{i \in [\ell]}, \ldots, \{F(a_k^i),F(b_k^i) \}_{i \in [\ell]}, \quad
    \{F(a_j^i),F(b_j^i) \}_{i \in [\ell], j \in [k+1,d]}
  \]
  are disjoint.
  Then if $U$ contains the edges $\{ (1,1),\ldots,(k,k)\}$ (where we identify the vertex set of $U$ with $[d] \times [d]$),
  $\{ F(u) \, : \, u \in R\}$ has size at most $(2d - k)\ell + k$.
\end{lemma}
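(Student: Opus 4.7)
My plan is to decompose the label count by column. For $j \in [d]$ let $C_j = \{a_j^i, b_j^i : i \in [\ell]\} \subseteq V(R)$. The hypothesis is exactly that $F(C_1), \ldots, F(C_k)$ and $F(C_{k+1} \cup \cdots \cup C_d)$ are pairwise disjoint as subsets of labels, so
\[
    |\{F(u) : u \in R\}| \;=\; \sum_{j=1}^k |F(C_j)| \;+\; |F(C_{k+1} \cup \cdots \cup C_d)|\mper
\]
The second term is at most $2(d-k)\ell$ trivially, so the heart of the argument is to show $|F(C_j)| \leq \ell + 1$ for each $j \in [k]$; summing then yields the target $(2d-k)\ell + k$.

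Fix $j \leq k$. Because $(j,j) \in U$, each of the $\ell$ copies of $U$ installed in $R$ contributes an edge $e_i = (a_j^i, b_j^{i-1})$ with both endpoints in $C_j$, and these are the \emph{only} edges of $R$ with both endpoints in $C_j$: any other edge $(j_1, j_2) \in U$ has $j_1 \neq j$ or $j_2 \neq j$, which places one endpoint of the corresponding ribbon edge into some $C_{j''}$ with $j'' \neq j$. Now invoke the contributing hypothesis: any labeled edge $(F(u), F(v))$ with $u, v \in C_j$ has both labels in $F(C_j)$, and by disjointness no ribbon edge with an endpoint outside $C_j$ can coincide with it (its other label lies outside $F(C_j)$). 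So already the multiset $\{(F(a_j^i), F(b_j^{i-1}))\}_{i \in [\ell]}$ of $\ell$ labeled edges must have each distinct labeled edge appearing with even multiplicity.

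From this combinatorial constraint the bound is immediate: at most $\ell/2$ distinct labeled edges occur, each contributes at most $2$ labels (the definition of labeled ribbon forbids $F(u) = F(v)$ on $R$-edges), and every vertex of $C_j$ is incident to exactly one $e_i$, so $|F(C_j)| \leq \ell \leq \ell + 1$. Summing over $j$ gives the statement. I expect no substantive obstacle here: the one conceptual point is recognizing that the disjointness hypothesis decouples column $j$'s contribution to the contributing constraint from the rest of the ribbon, after which the per-column bound is a routine counting argument on an Eulerian-type multigraph with $\ell$ edges.
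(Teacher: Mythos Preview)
Your column decomposition and the decoupling step (disjointness confines the contributing constraint to each $C_j$ separately) are correct and coincide with the paper's approach. The gap is the per-column edge count. A length-$2\ell$ ribbon encodes a term of $\Tr(QQ^\top)^\ell$, which has $2\ell$ factors of $Q$ and hence installs $2\ell$ copies of $U$; the paper's definition is garbled here, but its own proof (``each of the $2\ell$ edges must be partnered'') and the intended application make this unambiguous. So for $j\le k$ the edges of $R$ inside $C_j$ are $(a_j^i,b_j^i)$ and $(a_j^{i+1},b_j^i)$ for $i\in[\ell]$ --- a $2\ell$-cycle, not an $\ell$-edge matching.

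With $2\ell$ edges, your pairing step gives at most $\ell$ distinct labeled edges, but ``each contributes at most two labels'' then only yields $|F(C_j)|\le 2\ell$, which is too weak; and your claim ``every vertex of $C_j$ is incident to exactly one $e_i$'' is false in the cycle (each vertex has degree~$2$). The missing idea is connectedness: the $2\ell$-cycle is connected, so the quotient multigraph on the label set $F(C_j)$ (identify vertices with equal labels) is still connected and has at most $\ell$ edges, hence at most $\ell+1$ vertices. That gives $|F(C_j)|\le\ell+1$, and summing over columns recovers $(2d-k)\ell+k$.
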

\begin{proof}
  The assumption on $U$ implies that $R$ contains the cycles
  \begin{align*}
    & C_1 \defeq (a_1^1,b_1^1,\ldots,b_\ell^1,a_1^1)\\
    & \ldots\\
    & C_k \defeq (a_1^k,b_1^k,\ldots,b_\ell^k,a_1^k)\mper
  \end{align*}
  In order for $(R,F)$ to be contributing, every edge $(u,v) \in R$ must have a partner $(u',v') \neq (u,v)$ so that $F(u') = F(u)$ and $F(v') = F(v)$.
  By our disjointness assumption, every edge in cycle $C_i$ must be partnered with another edge in $C_i$.
  Thus, now temporarily identifying edges when they are labeled identically, each $C_i$ is a connected graph with at most $\ell$ unique edges (since each of the $2\ell$ edges must be partnered).
  It thus has at most $\ell + 1$ unique vertex labels.
  Among the cycles $C_1,\ldots,C_k$, there are thus at most $k(\ell + 1)$ unique vertex labels.
  In the rest of the ribbon $R$ there can be at most $2\ell (d -k)$ unique vertex labels, because once the cycles $C_1,\ldots,C_k$ are removed there are only that many vertices left in $R$.
  So in total there are at most $k(\ell + 1) + 2\ell(d - k) = (2d-k)\ell + k$ unique labels.
\end{proof}

The next few definitions and notation are needed in  the proof of \pref{lem:correction-disjoint}.

\begin{definition}
  Let $U$ be a bipartite graph on vertices $a_1,a_2, b_1, b_2$.
  A \emph{fancy $U$-ribbon} $R$ of length $2\ell$ is a graph on vertices $c_1,\ldots,c_{2\ell}, a^1_1, a_1^2,\ldots,a^1_{\ell},a^2_\ell, b_1^1,b_1^2,\ldots,b_\ell^1, b_\ell^2$.
  On the $a$ and $b$ vertices, $R$ restricts to a $U$-ribbon of length $2\ell$.
  Additionally, it has edges $(c_i, a_i^1),(c_i, a_i^2),(c_i, b_i^1),(c_i,b_i^2)$.

  Where $G$ is a graph, a labeled fancy $U$-ribbon is a tuple $(R,F)$ where $R$ is a fancy $U$-ribbon and $F : R \rightarrow G$ labels each vertex of $R$ with a vertex in $G$.
  We require for any edge $(u,v) \in R$ that $F(u) \neq F(v)$.
\end{definition}

\begin{lemma}
  \label{lem:fancy-ribbon-unique-bound}
  Let $U$ be a nonempty bipartite graph on vertices $a_1,a_2,b_1,b_2$.
  Let $(R,F)$ be a contributing fancy $U$-ribbon of length $2\ell$.
  Suppose that the sets
  \[
    \{ F(a_i^1), F(b_i^1) \}, \quad \{ F(a_i^2), F(b_i^2) \}, \quad \{F(c_i)\}
  \]
  are disjoint.
  Then $\{ F(u) \, : \, u \in R \}$ contains at most $3\ell + 2$ distinct labels.
  If $U$ is empty, then $\{ F(u) \, : \, u \in R \}$ contains at most $4\ell + 2$ distinct labels.
\end{lemma}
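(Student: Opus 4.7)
The plan is to extend the cycle-counting argument of \pref{lem:ribbon-unique-bound} by first bounding contributions from the $c$-vertex edges and then incorporating the $U$-ribbon cycles. Under the disjointness assumption, $c$-labels lie in a ``universe'' disjoint from both columns of $a,b$-labels, so every $c$-edge carries a label of the form $(F(c_i),\alpha)$ with $\alpha$ an $a,b$-label, and such a label can never be matched by any $U$-ribbon edge (whose endpoints are both $a,b$-labels). Thus in a contributing labeling the $c$-edges must pair amongst themselves. A direct double-counting at each $c$-label then shows that the multiset $\{F(c_1),\ldots,F(c_{2\ell})\}$ has all even multiplicities, so there are at most $\ell$ distinct $c$-labels. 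Moreover, within each group of $c$-indices sharing an $F$-value, the column-$1$ and column-$2$ ``signatures'' $\{F(a_i^1),F(b_i^1)\}$ and $\{F(a_i^2),F(b_i^2)\}$ must pair as multisets, since the column-$1$ and column-$2$ universes are also disjoint.

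Next, I would view the $c$-edges as forming a bipartite multigraph $H$ with $8\ell$ edges on vertex set (distinct $c$-labels) $\sqcup$ (distinct $a,b$-labels). The contributing condition forces every edge of the underlying simple graph $H_{\text{simple}}$ to have multiplicity at least $2$, so $|E(H_{\text{simple}})| \leq 4\ell$. Because the $c$-substructure of $R$ links consecutive $c$-vertices through shared $a,b$-neighbors along the ribbon cycle, $F(R)$ has at most $2$ connected components. The standard bound $|V(G)| \leq |E(G)| + c(G)$ then yields $|V(H_{\text{simple}})| \leq 4\ell + 2$, which establishes the $U$-empty case.

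For non-empty $U$, pick any edge $(p,q') \in U$. The $U$-ribbon produces $2\ell$ edges of the form $(a_p^{?}, b_q^{?})$ tracing out a cycle through the $a_p, b_q$-vertices, exactly as in the cycles $C_j$ in the proof of \pref{lem:ribbon-unique-bound}. By disjointness, these cycle edges cannot pair with any $c$-edge (nor with any other $U$-edge carrying a different column-pair), so they must pair amongst themselves, giving at most $\ell$ distinct edges and hence at most $\ell+1$ distinct vertex labels on the $2\ell$ cycle vertices — a savings of roughly $\ell$ labels compared to the worst case. Combining this with a slightly sharper version of the previous paragraph's count (now with $c(H_{\text{simple}}) \leq 1$) yields $3\ell + 2$.

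The principal obstacle is the bookkeeping required to combine the $c$-pairing constraints with the $U$-ribbon cycle constraint. The $c$-pairing has already forced nontrivial equalities among $a,b$-labels through the signature-matching condition, and one must verify that the label savings extracted from the $U$-cycle are not already implicit in those signature constraints. I expect this to require a careful case analysis distinguishing whether the $U$-edge $(p,q')$ is ``within-column'' (so that the cycle lives entirely in a single column and interacts strongly with the signature groups) or ``cross-column'' (so that the cycle forces a matching between the two column universes), and tracking the resulting impact on $|A_1|$, $|A_2|$, and $|C|$ separately.
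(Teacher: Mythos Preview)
Your empty-$U$ argument via the bipartite multigraph $H$ is plausible, though the claim in step 2 that the $c$-label multiplicities must all be even is false: a single $c_i$ with $F(a_i^1)=F(b_i^1)$ and $F(a_i^2)=F(b_i^2)$ already has all four incident $c$-edges paired while contributing an odd-multiplicity $c$-label. (You do not actually use this claim downstream, so it is not fatal.) The real gap is in the nonempty-$U$ case, where you correctly identify but do not resolve the interaction between the $c$-pairing constraints and the $U$-cycle constraint, and your proposed case split on within-column versus cross-column $U$-edges is left unexecuted.

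The paper sidesteps this interaction entirely. First, since globally swapping the column indices $1\leftrightarrow 2$ on the $a$'s (or on the $b$'s) preserves both the disjointness hypothesis and the contributing property, one may assume the $U$-edge is $(a_1,b_1)$, so the $U$-ribbon cycle lives wholly in column $1$. Now split the vertex set of $R$ into two \emph{vertex-disjoint} pieces: (i) the column-$1$ vertices $\{a_i^1,b_i^1\}$, carrying a $2\ell$-edge cycle from the $U$-edge; and (ii) the column-$2$ vertices together with all the $c$'s, through which the $c$-to-column-$2$ edges alone trace a single $4\ell$-edge cycle. By the disjointness hypothesis, edges in group (i) can only partner within (i), and edges in group (ii) only within (ii), so each cycle independently has at most half its edge-labels distinct, yielding $\ell+1$ and $2\ell+1$ vertex labels respectively, hence $3\ell+2$ total. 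The $c$-to-column-$1$ edges are simply \emph{ignored}: discarding them is precisely what dissolves the bookkeeping obstacle you flagged, and it is the idea your plan is missing.
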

\begin{proof}
  First suppose $U$ is nonempty.
  By swapping $a_i^1, a_i^2$ or $b_i^1, b_i^2$ or both as necessary (which does not change whether $(R,F)$ is contributing),
  we may assume that $U$ contains the edge $(a_1, b_1)$ and thus that $R$ contains the edges $(a_i^1, b_i^1)$ and $(b_i^1, a_{i+1}^1)$
  (where as usual addition is modulo $\ell$).

  Because $(R,F)$ is contributing, every edge must have an identically-labeled partner.
  By our disjointness assumptions, edges among $\{a_i^1, b_i^1\}$ may be partnered only to edges similarly among $\{a_i^1, b_i^1\}$.
  Also, edges between $\{c_i\}$ and $\{ a_i^2, b_i^2 \}$ may be partnered only to edges between $\{c_i\}$ and $\{a_i^2, b_i^2 \}$.
  Thus, the $2\ell$-edge-long cycle on vertices $\{ a_i^1, b_i^1 \}$ may have at most $\ell$ uniquely-labeled edges,
  and the $4\ell$-edge-long cycle on vertices $\{ a_i^2, b_i^2, c_i \}$ may have at most $2\ell$ uniquely-labeled edges.
  Since both are connected, the former may have at most $\ell + 1$ unique vertex labels and the latter at most $2\ell + 1$ unique vertex labels.
  Thus there are at most $3\ell + 2$ unique vertex labels in $(R,F)$.

  When $U$ is empty the proof is similar: there are two paths, $\{a_i^1, b_i^1, c_i \}$ and $\{a_i^2, b_i^2, c_2 \}$.
\end{proof}

\subsubsection{Proofs of \pref{lem:fourier-norm-mpw-disjoint} and \pref{lem:correction-disjoint}}

\begin{proof}[Proof of \pref{lem:fourier-norm-mpw-disjoint}]
  By \pref{lem:embed} it is enough to prove the analogous claims for the $n^d \times n^d$ matrix $Q$ with entries given by
  \[
    Q[(a_1,\ldots,a_d), (b_1,\ldots,b_d)] = \begin{cases}
      \prod_{(i,j) \in B} g_{a_i, b_j} & \text{ if $|\{ a_1,\ldots,a_d,b_1,\ldots,b_d \}| = 2d$}\\
      0 & \text{ otherwise}
     \end{cases}\mcom
  \]

  By multiplying $Q$ by suitable permutation matrices $P,P'$ to give $PQP'$, we may assume in the $2$-matching case above that the matching is $\{(1,1), (2,2)\}$ and in the nonempty graph case that the edge contained is $(1,1)$ (where we think of the vertex set of $U$ as $[d] \times [d]$).
  Note that $\|Q\| = \|PQP'\|$.

  We apply \pref{lem:partitioning} to obtain a family of matrices $\{ Q_i \}_{i \in [r]}$ 
  for some $r = O(3^3 \log n) = O(\log n)$ satisfying $Q = \sum_i Q_i$.
  On any entry $(a_1,\ldots,a_d),(b_1,\ldots,b_d)$ on which $Q_i$ is nonzero it is equal to $Q$ at that entry,
  and furthermore for each $Q_i$ there is a partition $(S_1^i,\ldots,S_3^i)$ of $[n]$ so that if $Q_i[(a_1,\ldots,a_d),(b_1,\ldots,b_d)] \neq 0$ then $a_1,b_1 \in S_1^i, a_2,b_2 \in S_2^i$, and $a_j, b_j \in S_3^i$ for all $j > 2$.

  We show that every matrix $\|Q_i\|$ has bounded spectral norm.
  To save on indices, let $N = Q_i$.
  Let $(S_1,S_2,S_3)$ be the partition of $[n]$ corresponding to $N$.
  We bound $\E \Tr (N N^\top)^\ell$ for some $\ell$ to be chosen later.

  Let $\cR(N)$ be the set of contributing disjoint labeled $U$-ribbons $(R,F)$ of length $2\ell$ with $F(a_1^i), F(b_1^i) \in S_1, F(a_2^i), F(b_2^i) \in S_2$ and $F(a_j^i),F(b_j^i) \in S_3$ for $j > 2$.
  Then $\E \Tr (NN^\top)^\ell \leq O(\ell^\ell) |\cR(N)|$.
  (Here we have an inequality rather than an equality because some elements of $\cR(N)$ may correspond to entries of $N$ which are zero because they appeared in some other part of the partitioning scheme and $\ell^\ell \geq \ell!$ accounts for reorderings of the labels.)

  Supposing that $B$ contains a $2$-matching, by \pref{lem:ribbon-unique-bound}, each $(R,F) \in \cR(N)$ contains at most $(2d - 2)\ell + 2$ unique $\{ F(u) \, : \, u \in R \}$.
  So there are at most $n^{2\ell(d - 1) + 2}$ elements of $\cR(N)$.
  It follows by Markov's inequality that for any $\alpha > 0$,
  \[
    \Pr(\|N\| > \alpha) \leq \Pr(\Tr(NN^T)^\ell > \alpha^{2\ell}) \leq \frac{O(\ell^\ell) n^{2\ell(d-1) + 2}}{\alpha^{2\ell}}
  \]
  Choosing $\alpha \geq O(\ell) n^{d-1 + 10/\ell} (\log n)^{1/2\ell} 2^{d^2/2\ell}$ makes this at most $(\ell^\ell n^{10} \log (n)2^{d^2})^{-1}$.
  Choose $\ell = (\log n)^2$ so that there is such an $\alpha$ also satisfying $\alpha = O(n^{d-1} \log(n)^2 )$ (so long as $d \leq O(\log n)$ as assumed).

  Taking a union bound over the $\log n$ matrices $Q_i$, we get that
  \[
    \Pr(\text{exists $i$ with} \|Q_i \| > O(\log(n)^2 n^{d-1})) \leq n^{-10} 2^{-d^2}
  \]
  and so by the triangle inequality applied to $\|M\| = \|\sum_i Q_i\|$, we get
  \[
    \Pr(\|Q\| > O(n^{d-1} \log(n))^3) \leq n^{-10} 2^{-d^2}\mper
  \]

  The case that $B$ contains only a $1$-matching is similar, replacing the $(2d - 2)\ell + 2$ unique vertices in a contributing $B$-ribbon with $(2d - 1)\ell + 1$, again by \pref{lem:ribbon-unique-bound}.
\end{proof}

\begin{proof}[Proof of \pref{lem:correction-disjoint}]
  We first handle the case when $U$ is non empty. By \pref{lem:embed} it is enough to prove the analogous statement for the $n^2 \times n^2$ matrix, also by abuse of notation denoted $Q$,
  which is the sum of matrices (abusing notation again) $Q_k$ with entries given by
  \[
    Q^k[(a_1,a_2 ),(b_1,b_2 )] = \begin{cases}
      g_{k,a_1}g_{k,a_2}g_{k,b_1}g_{k,b_2}\prod_{(i,j) \in U} g_{a_i, b_j} & \text{ if $|\{ a_1,a_2,b_1,b_2 \}| = 4$}\\
      0 & \text{ otherwise}
     \end{cases}\mper
  \]
By multiplication with an appropriate permutation matrix (which cannot change the spectral norm), we may assume that $U$ contains the edge $(1,1)$.  We begin with \pref{itm:partition-fancy} from \pref{lem:partitioning}, whose hypotheses are satisfied by our convention $y_{a,a} = 0$.
  This gives a family $Q_1,\ldots,Q_r$ with $r = O(3^3 \log n) = O(\log n)$ so that $\sum_{i = 1}^r Q_i$ and a corresponding family of partitions $(S^1_1,S^1_2,T^1),\ldots,(S_1^r,S_2^r,T^r)$.
  \pref{lem:partitioning} guarantees that $Q_i[(a_1,a_2),(b_1,b_2)] = \sum_{k \in T} y_{k,a_1}y_{k,a_2}y_{k,b_1}y_{k,b_2} \prod_{(i,j) \in U} y_{a_i,b_j}$ for some $T \subseteq T^i$ when $a_1,b_1 \in S_1^i$ and $a_2,b_2 \in S_2^i$ and is zero otherwise.

  Fix some $i \in [r]$ and let $N = Q_i$ (to save on indices).
  We will bound $\E \Tr (NN^\top)^\ell$ for some $\ell$ to be chosen later.
  Let $(S_1,S_2,T)$ be the partition of $[n]$ corresponding to $N$.

  Let $\cR(N)$ be the set of contributing labeled fancy $U$-ribbons of length $2\ell$ so that for each $c_i \in R$, $F(c_i) \in T$, for each $a_i^1, b_i^1$ we have $F(a_i^1), F(b_i^1) \in S_1$, and for each $a_i^2, b_i^2$ we have $F(a_i^2), F(b_i^2) \in S_2$.

  Expanding $\E \Tr (NN^\top)^\ell$ as usual, we see that $\E \Tr (NN^\top)^\ell \leq \ell^\ell |\cR(N)|$.
   (As in the proof of \pref{lem:fourier-norm-mpw-disjoint}, we have an inequality rather than an equality because some entries of $N$ may not have a sum over all elements of $T$ if there is overlap with previous parts of the partitioning scheme.)
   By \pref{lem:fancy-ribbon-unique-bound}, $|\cR(N)| \leq {n \choose {3\ell + 2}} \leq n^{3\ell + 2}$.

   By Markov's inequality,
   \[
     \Pr(\|N\| > \alpha) \leq \Pr(\Tr(NN^\top)^\ell > \alpha^{2\ell}) \leq \frac{\ell^\ell n^{3\ell + 2}}{\alpha^{2\ell}}\mper
   \]
   Taking $\alpha \geq \ell n^{3/2 + 13\ell /2}$ guarantees that this is at most $n^{-11}$.
   If $\ell = \Theta(\log n)$, then there is such an $\alpha$ satisfying also $\alpha = O(n^{3/2} \log(n))$.

   By a union bound and triangle inequality, we then get
   \[
     \Pr(\|Q\| > O(n^{3/2} (\log n)^2 ) \leq O(n^{-10})\mper\qedhere
   \]

   The proof in the case of $U$ empty is similar, using \pref{lem:fancy-ribbon-unique-bound} in the empty $U$ case.
\end{proof}

\section{Analyzing Deviations for the Degree-$d$ MPW Operator}
\label{sec:degree-d}
In this section, we use the tools developed in \pref{sec:tools} to analyze the spectrum of the deviation matrix $D = \M' -E$ and prove \pref{lem:M'-is-PSD}. 

As noted in \pref{sec:degree-d}, we decompose $\M'= E+D$. For any $I,J \in \nchoose{d}$, $D(I,J)$ depends on a) $\deg(I \cup J)$ and b) whether $\sE_{ext}(I,J) \subseteq G$.
  If $D(I,J)$ depended only on b) above, then it could be decomposed into a sum of patterned matrices defined in \pref{sec:tools}; analyzing these is tractable.
Our first step is thus to get rid of the dependence on $\deg(I \cup J)$---the only part depending on the entire graph.
We will obtain a matrix $L$ that depends only on whether $\sE_{ext}(I,J) \subseteq G$ or not (and thus is ``locally random" in the sense of \cite{MPW15}).

Specifically, we write $D = L + \Delta$ where $L$ is the locally random part obtained by replacing $D(I,J)$ by $\E[ D(I,J) \mid \sE_{ext}(I,J) \subseteq G]$ whenever $\sE_{ext}(I,J) \subseteq G$ and an appropriate negative constant when $\sE_{ext}(I,J) \nsubseteq G$ (this makes the expectation of each entry over $G \sim G(n,\frac{1}{2})$ to be $0$). More concretely, following \cite{MPW15}, we define:
\begin{definition}
\label{def:Ddecomp}

\begin{equation} \alpha(i) = \frac{{\omega \choose {2d-i}}}{{{2d} \choose {2d-i}}} \cdot {{n-2d+i} \choose i} \cdot 2^{-d^2 - {i \choose 2}},
\end{equation} 
and $p(i) = 2^{-(d-i)^2}$ for each $i$. We set $L(I,J)$ for every $I, J \in \nchoose{d}$ to be
\[
L(I,J) = \begin{cases}
\alpha(|I \cap J|) \cdot \frac{(1-p(|I \cap J|))}{p(|I \cap J|)} \text { if } \sE(I \cup J) \setminus (\sE(I) \cup \sE(J)) \subseteq G\\
- \alpha(|I \cap J|) \text{, otherwise}.
\end{cases}
\]
We define $\Delta = D-L$.
\end{definition}
The idea behind the definition is that $L(I,J) = \E_{G \sim G(n,\frac{1}{2}}[ \M'(I,J) \mid \sE_{int}(I,J) \subseteq G]$ whenever $\sE_{ext}(I,J) \subseteq G$ and in the other case, chosen to make $\E_{G}[ L(I,J)] = 0$. We will analyze $L$ and $\Delta$ separately. The proof of \pref{lem:M'-is-PSD} is broken into two main pieces. Each piece analyzes the action of $L$ and $\Delta$ split across various eigenspaces $V_0, V_1, \ldots, V_d$ of the matrix $E$. Such fine grained analysis for the case of $d=2$ was done in \cite{DM15}. A few points of distinctions from \cite{DM15} are in order at this point. 

The first is regarding the high level approach. The approach of \cite{DM15} used explicit expressions for a canonical set of eigenvectors in $V_1$ to obtain similar conclusions as us for the case of $d=2$. This approach gets unwieldy very quickly because the explicit entries of eigenvectors for $V_i$ for $i >1$ are hard to work with \cite{Bannai-Ito}. We tackle this issue by developing an argument that doesn't need explicit entries of the eigenvectors. Instead, we use basic representation theory (\pref{sec:patterned}) to identify a set of symmetries satisfies by vectors in $V_i$ for each $i$ and use it obtain the conclusions we require. 

Second, \cite{DM15} deal with the optimization version of the degree $4$ SOS program which, as noted in the introduction, could be potentially weaker than the one we analyze here (and thus our lower bound is technically stronger). This simplifies the analysis in \cite{DM15} a little bit as the matrix $\Delta$ defined above is identically zero for the operator analyzed. We explicitly work with the feasibility version of the degree 4 SOS program and thus, must deal with the additional complexity of handling $\Delta$. It turns out that we have to do a fine grained analysis of the $\Delta$ matrix itself. The decomposition we use for $\Delta$ is somewhat different from the case of $L$ even though, the analysis of each piece of the decomposition proceeds similar to the case of $L$. 

Third, for the special case of $d=2$, essentially the only matrix one has to analyze is the $L_0$, the matrix obtained by zeroing out all entries $(I,J)$ in $L$ such that $I \cap J \neq \emptyset$: a uniform bound on spectral norm of the remaining component suffices. However, for higher $d$, one has to deal with the ``non-disjoint" entries with some care and an argument analogous to the one in \cite{DM15} fails to show PSDness of $\M'$ beyond $\omega \approx n^{\frac{1}{2d}}$ giving no asymptotic improvement over \cite{MPW15}. 

Finally, our argument for analyzing the spectral norms of each of the pieces also needs to be much more general than in case of \cite{DM15} to handle higher degrees. For this, we identify a simple combinatorial structure (size of maximum matchings in appropriate bipartite graphs on $2d$ vertices) that controls the bounds and could also be used to obtain slick proofs of the conclusions required in \cite{DM15} in the context of analyzing $\M'$ for $d=2$. Our combinatorial argument itself is a generalization of the one given by \cite{DM15} for this case.

We now go on to describe the two lemmas that encapsulate the technical heart of the proof of \pref{lem:M'-is-PSD}. The first does a fine grained analysis of spectrum of $L$. In the following, let $\cM'$ be the filled-in matrix for the degree-$d$ MPW operator at clique size $\omega$ (\pref{def:MPW-Mprime}), $E = \E_{G \sim G(n,1/2)} [\cM']$, $D = \cM' - E$, $L$ be as \pref{def:Ddecomp}, $\Pi_i$ be the projectors to the spaces $V_i$ of \pref{fact:johnsondecomp}.
\begin{lemma}[Bounding Blocks of $L$] \label{lem:dev-d}
With probability at least $1- \frac{1}{n}$ over the draw of $G \sim G(n,\frac{1}{2})$, 
each $0 \leq i \leq d$ satisfies
\begin{enumerate}
\item  $$ \left|\Pi_i L \Pi_j \right| \leq 2^{2d} \tilde{O}(\omega^{2d} n^{d-\frac{1}{2}}),$$

\item If $i, j \geq 2$, then 
$$ \left| \Pi_i L \Pi_j \right| \leq 2^{2d} \tilde{O}(\omega^{2d} n^{d-1}).$$

\end{enumerate}
\end{lemma}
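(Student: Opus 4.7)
The plan is to Fourier-decompose $L$ along the $\pm 1$ edge indicators and group the resulting patterned matrices (\pref{def:patterned}) into orbits under $\S_d \times \S_d$, so that the machinery of \pref{sec:tools} applies. Split $L = \sum_{t=0}^d L^{(t)}$ by intersection size $t = |I \cap J|$; on the slice $|I \cap J| = t$ we have $L^{(t)}(I,J) = \alpha(t) \paren{\prod_{b \in \sE_{ext}(I,J)} (1+g_b) - 1}$. Expanding the product as a Fourier sum over non-empty subsets of $\sE_{ext}(I,J)$ exhibits $L^{(t)}$ as a sum of patterned matrices $Q_{B, Z_\ell, Z_r, f_B}$ with $f_B(x) = \prod_b x_b$, where $B$ ranges over non-empty bipartite graphs on $[d-t] \times [d-t]$ vertices and $(Z_\ell, Z_r)$ encodes the intersection positions.

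Next, group these patterned matrices by $\S_d \times \S_d$-orbits. Each orbit-sum is invariant under both left and right $\S_d$ actions, so both directions of \pref{lem:kernelsymoperators} apply simultaneously, yielding $\Pi_j \cdot (\text{orbit-sum}) \cdot \Pi_i = 0$ whenever $j > |B_\ell| + t$ or $i > |B_r| + t$. Thus only orbits with $|B_\ell| \geq j - t$ and $|B_r| \geq i - t$ contribute to $\Pi_i L \Pi_j$. In particular, for Claim 2 (with $i, j \geq 2$) at $t = 0$, K\"onig's theorem forces every surviving $B$ to contain a $2$-matching, since a bipartite graph whose left and right supports both have size at least $2$ cannot be covered by a single vertex.

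For each surviving orbit, bound its spectral norm by a triangle inequality over at most $(d!)^2$ summands, each bounded by $\binom{d}{t}^2 \|Q_B\|_{\dim(d-t)}$ via the lift inequality \pref{fact:liftsofnondisjoint}. Then \pref{lem:fourier-norm-mpw-disjoint} applied at dimension $d-t$ gives $\tilde O(n^{d-t-1})$ when $B$ contains a $2$-matching and $\tilde O(n^{d-t-1/2})$ for a general non-empty $B$, with failure probability $O(n^{-10})$ per pattern. Combining this with the estimate $\alpha(t) \lesssim 2^{-d^2} \omega^{2d-t} n^t / (2d-t)!$, summing over $t$ and $B$, and union-bounding over the at most $2^{O(d^2)}$ patterns considered, one obtains bounds of the form $2^{2d} \tilde O(\omega^{2d} n^{d - c/2})$ where $c \geq 1$ for any non-empty $B$ (Claim 1) and $c \geq 2$ when $B$ contains a $2$-matching (Claim 2). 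All $d$-dependent combinatorial factors are absorbed into the $2^{2d}$ prefactor, and the total failure probability is at most $1/n$.

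The main obstacle is the bookkeeping across intersection sizes $t \geq 1$: the coefficient $\alpha(t)$ grows as $(n/\omega)^t$ relative to $\alpha(0)$, so one must verify that the compensating $n^{-t}$ coming from the dimension reduction in \pref{lem:fourier-norm-mpw-disjoint} keeps these slices subdominant, yielding bounds of the claimed form uniformly in $t$. The representation-theoretic identification of the kernels via \pref{lem:kernelsymoperators}, rather than explicit eigenvector calculations as in \cite{DM15}, is what makes the combinatorics tractable at arbitrary degree $d$.
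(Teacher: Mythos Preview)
Your approach is essentially the same as the paper's: decompose $L$ by intersection size $q = |I \cap J|$, Fourier-expand each slice over nonempty subsets of $\sE_{ext}(I,J)$, invoke \pref{lem:kernelsymoperators} to kill the large-norm pieces when $i,j \geq 2$, and use \pref{fact:liftsofnondisjoint} together with \pref{lem:fourier-norm-mpw-disjoint} for the survivors. Your grouping by full $\S_d \times \S_d$-orbits so that both directions of \pref{lem:kernelsymoperators} apply to the same sum is a clean way to organize the argument; the paper instead applies the left-kernel and right-kernel conclusions separately to left- and right-similarity classes, which comes to the same thing since the full pattern sum decomposes either way.

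There is one genuine omission. You write that the Fourier expansion ``exhibits $L^{(t)}$ as a sum of patterned matrices $Q_{B, Z_\ell, Z_r, f_B}$'', but this is not quite right: a patterned matrix in the sense of \pref{def:patterned} is nonzero whenever $\zeta_I(Z_\ell) = \zeta_J(Z_r)$, which allows $|I \cap J| > t$, whereas $L^{(t)}$ is supported exactly on $|I \cap J| = t$. Thus $L^{(t)}$ is a sum of patterned matrices \emph{with the over-intersecting entries zeroed out}, and \pref{lem:kernelsymoperators} does not apply directly to these zeroed-out matrices. The paper handles this by introducing filled-in matrices $\tilde L_q^{B,Z_\ell,Z_r}$ (which are genuine patterned matrices) and bounding the difference $\|L_q^{B,Z_\ell,Z_r} - \tilde L_q^{B,Z_\ell,Z_r}\| \leq \tilde O(\omega^{2d-q} n^{d-1})$ by a Gershgorin row-sum argument (\pref{claim:Lltilde}). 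This difference is already within the Claim~2 bound, so the fix is routine, but it is a necessary step you should make explicit before invoking the kernel lemma.
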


The next lemma does a (even more) fine grained analysis of the spectrum of $\Delta$:

\begin{lemma}\label{lem:Delta-bound}
With probability at least $1-\frac{1}{n}$ over the draw of $G \sim G(n,\frac{1}{2})$, for each $0 \leq i \leq d$:
$$ \left| \Pi_i \Delta \Pi_j \right| \leq \tilde{O}( 2^{O(d)} \omega^{2d-\min\{i,j\}} n^{d-\frac{1}{2}}) + \tO( 2^{O(d)} \omega^{2d - q} n^{d-1}).$$
\end{lemma}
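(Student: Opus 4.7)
The plan is to decompose $\Delta$ into ``shape-based'' pieces via a Fourier expansion of $\cM'$ with respect to the $\pm 1$ edge indicators $g_e$, analogous to the decomposition underlying $L$ but now including characters that touch the ``anchor'' vertices $T \setminus (I \cup J)$ coming from the filling sum over $T \supseteq I \cup J$ of size $2d$. Starting from
\[
  \cM'(I,J) = \beta(|I \cap J|) \sum_{T \supseteq I \cup J,\, |T|=2d} \prod_{e \in \sE(T) \setminus (\sE(I) \cup \sE(J))} \tfrac{1 + g_e}{2}
\]
and expanding each product into Fourier characters, those supported entirely on $\sE_{ext}(I, J)$ assemble into $E + L$, while the remaining characters---those with at least one edge incident to an anchor vertex of $T \setminus (I \cup J)$---assemble into $\Delta$. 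Grouping the latter by a pair $(B, W)$ describing the bipartite shape on the positions of $I, J$ together with the anchor incidences, I obtain $\Delta = \sum_{(B, W)} \Delta_{B, W}$ summed over an $O(2^{O(d^2)})$-size family, where each $\Delta_{B, W}$ is a patterned matrix in the sense of \pref{def:patterned} augmented by a free summation over anchor labels in $[n]$, with an overall coefficient of order $\beta(|I \cap J|) \cdot 2^{-\Theta(d^2)}$.

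With this in hand, I would bound $\|\Pi_i \Delta_{B, W} \Pi_j\|$ in two steps. First, symmetrizing $\Delta_{B, W}$ over its left/right orbits (which, as in the proof of Lemma~\ref{lem:kernelsymoperators}, does not affect the norm) and applying \pref{lem:kernelsymoperators} forces $\Pi_i \Delta_{B, W} \Pi_j = 0$ whenever $i$ or $j$ exceeds the number of non-isolated left/right vertices of $B$ (plus the intersection size), so that only shapes whose cross pattern $B$ is sufficiently rich survive projection onto the $V_i$--$V_j$ block. Second, I would bound $\|\Delta_{B, W}\|$ via the trace-moment method, treating each anchor vertex as the center of a small star attached to the positions it touches---the generalization of \pref{lem:correction-disjoint} and the fancy-ribbon count of \pref{lem:fancy-ribbon-unique-bound} to arbitrary shapes $(B, W)$. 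A contributing labeled ribbon of length $2\ell$ on such a shape then has at most $(2d - c)\ell + O(d)$ distinct labels, where $c$ is the size of a maximum matching in the combined bipartite graph $B \cup W$. The partitioning tool \pref{lem:partitioning} breaks the dependencies introduced by the anchor summations, while \pref{fact:liftsofnondisjoint} handles the non-disjoint intersection blocks by lifting the disjoint-case analysis to each fixed value of $|I \cap J|$.

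Summing over the $2^{O(d^2)}$ shapes and union-bounding the concentration events from the trace-moment calculations yields the claim. Shapes whose combined pattern $B \cup W$ contains only a $1$-matching contribute to the $\tO(2^{O(d)} \omega^{2d-\min\{i,j\}} n^{d-1/2})$ term, where the kernel restriction forces $\min(i,j)$ or fewer matched vertices on one side and hence $\beta(|I \cap J|) = \Theta(\omega^{2d - \min(i,j)})$ after the projection; shapes whose combined pattern contains a $2$-matching give the stronger $n^{d-1}$ factor and contribute to the $\tO(2^{O(d)} \omega^{2d-q} n^{d-1})$ term, where $q$ is determined by the internal edge count of the surviving shape. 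I expect the main obstacle to be the combinatorial ribbon analysis in the presence of multiple anchors: each additional anchor contributes a free summation of size $n$ but simultaneously enlarges the ribbon, creating new cycles whose unique-label counts the trace-moment bookkeeping must absorb without blowing up the exponent of $n$. Verifying that this trade-off uniformly respects the exponents $d - \tfrac{1}{2}$ and $d - 1$ across all $2^{O(d^2)}$ shapes---together with the careful reduction of non-disjoint entries through lifting, and the accounting of the $\beta$ and $2^{-\binom{i}{2}}$ coefficients to match the stated $\omega$ exponents---is the bulk of the technical work.
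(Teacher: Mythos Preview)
Your approach---a full Fourier expansion over all characters touching the anchor vertices $T \setminus (I \cup J)$, yielding a family of multi-anchor shapes $(B, W)$---is a genuinely different route from the paper's. The paper instead exploits the scalar concentration of $\deg_G(I \cup J)$ proved in \pref{thm:cliqueboundtheoremtwo}: the deviation of $\deg_G(I \cup J)$ from its mean is, up to a lower-order error, a \emph{linear} function of $|A_{I \cup J}| - 2^{-|I \cup J|}(n - |I \cup J|)$. This collapses the entire multi-anchor structure to a \emph{single} free vertex $s$ summed over $[n]$; concretely, \pref{lem:Delta-decomp} writes $\Delta_{q,K}(I,J)$ as approximately $\eta\,\tfrac{\binom{\omega}{2d-q}}{\binom{2d}{2d-q}}\,(1+e^{1,2}_K)\sum_s\bigl((1+e^1_{s,K})(1+e^2_{s,K})(1+e^3_{s,K}) - 1\bigr)$, where each $e^i_{s,K}$ depends only on edges from the one vertex $s$ to a fixed piece of $I$, $J$, or $K$. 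The kernel argument (\pref{lem:Delta-Kernel}) and trace bounds (\pref{lem:Delta-all-norms}) then run over a handful of single-anchor pieces rather than the $2^{O(d^2)}$ multi-anchor ones your expansion produces, which is precisely what sidesteps the obstacle you identify.

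Two technical points in your sketch also need care. First, \pref{lem:kernelsymoperators} as stated applies only to patterned matrices in the sense of \pref{def:patterned}, whose entries are functions of $\{g_b : b \in \zeta_{I,J}(B)\}$ for edges $b$ between $I$- and $J$-positions; shapes carrying free anchor vertices are not of this form, and the paper proves the needed kernel statement for the $\tilde{e}^i_{s,q}$ matrices (\pref{lem:Delta-Kernel}) directly from \pref{lem:eigenbasisstructure} rather than by invoking \pref{lem:kernelsymoperators}. Second, symmetrizing a single $\Delta_{B,W}$ over its orbit does not preserve the norm; the correct logic (as in the $L$ analysis) is to group shapes by orbit, apply the kernel lemma to the orbit \emph{sum}, and then bound that sum via the triangle inequality over its at most $d!$ constituents.
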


We can now use \pref{lem:dev-d} and \pref{lem:Delta-bound} to complete the proof of \pref{lem:M'-is-PSD}.

\begin{proof}[Proof of \pref{lem:M'-is-PSD}]
For each $0 \leq i,j \leq d$, we compute $\Pi_i \M' \Pi_j$ and use \pref{lem:blockpsd}. We write $\M' = E + L + \Delta$. First, $\Pi_i^{\dagger} E \Pi_j = 0$ whenever $i \neq j$ as $\Pi_i$ are projectors to eigenspaces of $E$. Let $\lambda_0, \lambda_1, \ldots, \lambda_d$ be the eigenvalues of $E$ on eigenspaces $V_0, V_1, \ldots, V_d$. From \pref{lem:Eeigenvals}, we have: $$\lambda_j \geq 2^{-O(d^2)} \cdot n^d \cdot \omega^{2d-j}.$$ Thus, $$\Pi_i E \Pi_i \geq 2^{-O(d^2)} n^d \omega^{2d-j}.$$ 

In what follows, all our statements hold with probability at least $1-O(1)/n$:

Using \pref{lem:dev-d} and \pref{lem:Delta-bound}, for every $i \geq 2$, $$\| \Pi_i (L+\Delta) \Pi_i \| \leq \tilde{O} (\omega^{2d} n^{d-1}) + \tilde{O}(\omega^{2d-i} n^{d-\frac{1}{2}}).$$ On the other hand, when $i \leq 2$, $$ \| \Pi_i (L+\Delta) \Pi_i \| \leq \tilde{O}(\omega^{2d} n^{d-\frac{1}{2}}).$$ Then, it is easy to check that for any $\omega = O(n^{\frac{1}{d+1}})$, $$\Pi_i \M' \Pi_i = \Pi_i (E+ L + \Delta)  \Pi_i \geq 2^{-O(d^2)} \omega^{2d-i} n^d.$$

Next, we bound the cross terms $|\Pi_i (L + \Delta) \Pi_j |$ for $i \neq j$. Again, using \pref{lem:dev-d} and \pref{lem:Delta-bound}, we have for $i, j \geq 2$:

$$|\Pi_i (L + \Delta) \Pi_j| \leq \tilde{O}( \omega^{2d} n^{d-1}) + \tO( \omega^{2d-\min\{i, j \}} n^{d-\frac{1}{2}} ).$$

For $\omega \leq O(n^{\frac{1}{d+1}})$, it is again easy to check that, for $i, j \geq 2$, the above expression is at most $\frac{2}{d} \sqrt{\lambda_i \lambda_j}$. 

In the case when one of $i,j$ is at most $1$, we have the bound $$|\Pi_i (L + \Delta) \Pi_j| \leq \tilde{O}( \omega^{2d} n^{d-\frac{1}{2}}).$$ In this case, it is easy to check that so long as $\omega = O( n^{\frac{1}{d+1}} / \poly \log{(n)})$, $$ |\Pi_i (L + \Delta) \Pi_j| \leq \tilde{O}( \omega^{2d} n^{d-\frac{1}{2}}) \leq \frac{2}{d} \sqrt{\lambda_i \lambda_j}.$$

By an application of \pref{lem:blockpsd}, the proof is complete.
\end{proof}

\subsection{Proof of \pref{lem:dev-d}}
\paragraph{Proof Plan} We first describe the high level idea of the proof.

We start by decomposing $L = \sum_{q = 0}^d L_{q}$ where $L_q(I,J) = L(I,J)$ if $|I \cap J| = q$ and $0$ otherwise.
Notice that each $L_q$ then is obtained by a scaling an appropriate $0/1$ matrix.

Most illuminating is the \emph{disjoint} case $L_0$, which is nonzero only at entries $I,J$ with $I \cap J = \emptyset$.
For any disjoint $I,J$, $L_0(I,J)$ depends only whether $\sE_{ext}(I,J) \subseteq G$, which, one could write as an appropriately scaled AND function of the indicators $g_e$ of edges $e \in \sE_{ext}(I,J)$.
We can expand this AND function in the monomial (parities of subsets of $g_e$ variables) basis. Each such monomial corresponds to the bipartite graph $B$ that contains the pairs $e \in \sE_{ext}(I,J)$ that constitute the monomial. This gives a decomposition of $L_0$ into $2^{d^2}-1$ (since the constant term is $0$, $L$ being zero mean) components, $L_0^B$ for each non empty, labeled bipartite graph on $[d] \times [d]$.

We can bound the spectral norm of each of the pieces $L_0^B$ by direct application of tools derived in \pref{sec:spectralnorm}. The main work in this section goes into showing that depending upon the structure of $B$, an appropriate selection of subspaces $V_i$ lie in left or right kernels of $L_0^{B}$. Thus, for a fixed term $\Pi_i L \Pi_j$, some $L_0^B$ do not contribute. We identify the maximum spectral norm among contributing terms to obtain the final bound.

To accomplish this goal, we rely heavily on the tools built in \pref{sec:patterned} which give us a handle on the symmetries of the eigenspaces $V_0,V_1, \ldots, V_d$. This requires some work based on representation theory of finite groups and is presented in \pref{lem:eigenbasisstructure} and \pref{lem:kernelsymoperators}. 

The case of $L_{q}$ for $q\neq 0$ needs even finer decomposition. We decompose each $L_q$ ($q > 0$) into matrices that identify the ``pattern" of the $q$ intersecting vertices.
In \cite{MPW15} a similar idea is used to reduce the task of bounding the spectral norm of $L_q$ to a calculation similar to one in the case of $L_0$. However, unlike \cite{MPW15}, we also require properties of the kernels of the components of the decomposition. After restricting to a fixed intersection pattern of $q$ vertices, we thus resort to using a generalization of the kernel analysis used for the $L_0$ case.
%
We now proceed with the proof plan as described beginning with the decomposition of each $L_q$.
\subsubsection{Decomposing $L$} \label{sec:D-decomp}
We start by decomposing $L$ further as $L = \sum_{q = 0}^d L_q$ where  for any $I, J \in \nchoose{d}$,
\[
L_{q} = \begin{cases}
L(I,J) &\text{ if } |I \cap J| = q,\\
0 & \text{ otherwise }.
\end{cases}
\]

\paragraph{Decomposing $L_0$}
Recall that $\B$ is the set of all bipartite labeled graphs with left and right vertex sets labeled by $[d]$ and $[d]$.
Recall also from \pref{sec:patterned} that for any $I,J \subseteq [n]$ with $|I| = |J| = d$, the graph $\zeta_{I,J}(B)$ is a copy of $B$ on vertex sets $I,J$ where the correspondence between $I$ and $[d]$ and $J$ and $[d]$ is determined by the sorting map $\zeta$.
Finally, recall that for a graph $G$ on $[n]$, we let $g_b$ be the $\pm 1$ indicator for the presence of edge $b$ in $G$, and by convention $g_{b} = 0$ when $b = (i,i)$ for any $i \in [n]$.
For any $B \in \B$, define an $\nchoose{d} \times \nchoose{d}$ matrix $$\tilde{L}_0^{B}(I,J) = \alpha(0) \cdot (2^{d^2} -1) \Pi_{b \in \zeta_{I,J}(B)} g_b.$$

The idea is to write $L_0$ as a sum of such matrices $\tilde{L}^{B}_0$ with the entries corresponding $I, J$ where $|I \cap J| \neq 0$ zeroed out.
Thus, define $L_0^{B}$ to be the matrix with $(I,J)$ entry given by 
\[
L_0^B(I,J) = \begin{cases}
\tilde{L}_0^B(I,J)  &\text{ if } |I \cap J| = 0\\
0 &\text{ otherwise}.
\end{cases}
\]
We think of $L_0$ as a rescaling and centering of a $0/1$ matrix whose entries are the AND of the $\pm 1$ indicators for the edges in $\sE_{ext}(I,J)$.
Decomposing these ANDs into monomials over those $\pm 1$ indicators, we see that each monomial corresponds exactly to one bipartite graph $B$, and the centering of $L_0$ corresponds to removing the constant monomial, which corresponds to the empty bipartite graph.
Every other monomial recieves equal weight $2^{-d^2}$ in this expansion, and so from these observations it becomes routine to verify that $$2^{-d^2} \sum_{B \in \B} L_0^{B} = L_0.$$

\paragraph{Decomposing $L_q$} Similarly, we further decompose $L_{q}$ for $q > 0$.
Here things are a bit more involved.
Let us motivate our decomposition by understanding the structure of the matrix $L_q$ for $q > 0$ a little bit.
Consider an entry $(I,J)$ such that $I \cap J = K$.
Then, $\sE_{ext}(I , J) \subseteq \sE_{ext} (I \setminus K, J \setminus K)$.
Thus, the edge structure in the bipartite subgraph on vertex sets $I \setminus K$ and $J \setminus K$ decides the value of $L(I,J)$ for any graph $G$ and we can hope to a get a patterned matrix.
We now follow this intuition.

Recall the sorting maps $\zeta_I: [d] \rightarrow I$ and $\zeta_J: [d] \rightarrow J$.
Letting $Z_\ell, Z_r \subseteq [d]$ be subsets of size $q$, we define $L^{Z_\ell,Z_r}_{q}$ such that:
\[
L^{Z_\ell,Z_r}_{q} ( I,J) = \begin{cases}
L_{q}(I,J) & \text{ if } \zeta_{I}(Z_\ell) = \zeta_{J}(Z_r)\\
0 & \text{ otherwise}.
\end{cases}
\]
That is, $L^{Z_\ell,Z_r}_{q}$ is the ``part" of $L_q$ where any $I,J$ intersect in a (size $q$) subset given by $\zeta_{I}(Z_\ell)$ and $\zeta_{J}(Z_r)$.
It is then easy to see that $L_{q} = \sum_{Z_\ell,Z_r} L^{Z_\ell,Z_r}_{q}.$
Next, we decompose each $L^{Z_\ell,Z_r}_{q}$ further based on non-empty labeled bipartite graphs $B \in \B_{Z_\ell,Z_r}$ for each $Z_\ell,Z_r$.

We now define a matrix which is nonzero only on entries which intersect in \emph{at least} $q$ places: 
\[
\tilde{L}_q^{B,Z_\ell,Z_r} (I,J) = \begin{cases}
\alpha(q) \cdot (2^{(d-q)^2}-1) \Pi_{b \in \zeta_{I,J}(B)} g_b & \text{ if } \zeta_I(Z_\ell) = \zeta_J(Z_r)\\
0 & \text{ otherwise.}
\end{cases}
\]
Again, as before, the actual decomposition needs to zero out the entries $(I,J)$ such that $|I \cap J| \neq q$.
Thus, we define $L_q^{B,Z_\ell,Z_r}$ by zeroing entries of $\tilde L_q^{B,Z_\ell, Z_r}$ which intersect also outside of $Z_\ell, Z_r$:
\[
L_q^{B,Z_\ell,Z_r}(I,J) = \begin{cases}
\tilde{L}_q^{B,Z_\ell,Z_r}, \text{ if} |\zeta_I([d] \setminus Z_\ell) \cap \zeta_J([d] \setminus Z_r)| = 0\\
0, \text{ otherwise.}
\end{cases}
\]
Finally, it is again easy to verify that: $$L_q = 2^{-(d-q)^2} \sum_{Z_\ell,Z_r \subseteq [d]} \sum_{B \in \B_{Z_\ell,Z_r}} L_q^{B,Z_\ell,Z_r}.$$
%

\subsubsection{Spectral Analysis of $L$}
In order to prove Lemma \ref{lem:dev-d}, we will first use the decomposition described in the previous section to write $L_q$ as a sum of appropriate patterned matrices.
We will then partition the sum into groups, each group corresponding to an equivalence class of (left or right) similar bipartite graphs $B$.
We will infer some properties about the kernel and finally use the spectral norm bounds from \pref{sec:spectralnorm} to complete the proof. 

More concretely, let $(B,Z_\ell,Z_r)$ be a $q$ pattern (as defined in \pref{sec:patterned}).
From our decomposition from the previous section, we have:
\begin{equation}
   L_q = 2^{-(d-q)^2} \sum_{Z_\ell,Z_r} \sum_{B \in \B_{Z_\ell,Z_r}} L_q^{B,Z_\ell,Z_r} \label{eq:maindecomp}
\end{equation}
Our idea is to analyze appropriate collections of $L_q^{B,Z_\ell,Z_r}$ separately.
When $q = 0$, $Z_\ell,Z_r$ are redundant (being $\emptyset$) and thus $L_0^{B,Z_\ell,Z_r}= L_0^{B}$ in that special case.
In the first step, we observe that $\tilde{L}_q^{B,Z_\ell,Z_r}$ has some symmetries that are helpful to us.
We thus want to deal with the sums of $\tilde{L}_q^{B,Z_\ell,Z_r}$ instead of $L_q^{B,Z_\ell,Z_r}$.
To justify this, we start by showing that the difference of the above two matrices has small norm. 

\begin{claim} \label{claim:Lltilde}
For any $(B,Z_\ell,Z_r)$, $$|| L_q^{B,Z_\ell,Z_r} - \tilde{L}_q^{B,Z_\ell,Z_r} || \leq  \tilde{O}(\omega^{2d-q} n^{d-1}).$$
\end{claim}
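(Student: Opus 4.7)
The matrix $N := \tilde L_q^{B,Z_\ell,Z_r} - L_q^{B,Z_\ell,Z_r}$ is supported on entries $(I,J)$ where $\zeta_I(Z_\ell) = \zeta_J(Z_r)$ but $|I \cap J| > q$, so the plan is to decompose $N = \sum_{s = q+1}^{d} N_s$ by intersection size $s = |I \cap J|$ and bound each $\|N_s\|$ by $\tilde O(\omega^{2d-q} n^{d-1}) / d$. Each $N_s$ is further split into $N_s = \sum_{W_\ell, W_r, \sigma} N_{s,W_\ell,W_r,\sigma}$ according to the pattern of the extra agreement: $W_\ell \subseteq [d]\setminus Z_\ell$ and $W_r \subseteq [d]\setminus Z_r$ of size $s-q$ specify which additional coordinates of $\zeta_I, \zeta_J$ coincide, and $\sigma: W_\ell \to W_r$ specifies the pairing. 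The number of such pieces is at most $\binom{d-q}{s-q}^2 (s-q)! = 2^{O(d)}$.

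For each fixed $(s, W_\ell, W_r, \sigma)$, I will classify the edges of $B$ into four types by whether each endpoint lies in $Z_\ell \cup W_\ell$ (equivalently, whether $\zeta_I$ sends it into the common part $K := I \cap J$) and analogously on the right. Edges with both endpoints in the ``$W$-positions'' become either self-loops (zeroing the entry) or edges within $K$; edges with exactly one such endpoint depend on $K$ together with one of the free parts $I' := I \setminus K$ or $J' := J \setminus K$; and edges with no ``$W$-position'' endpoints form a Type-D subgraph $B^\ast \subseteq ([d]\setminus Z_\ell\setminus W_\ell) \times ([d]\setminus Z_r\setminus W_r)$ whose product contribution depends jointly on $(I', J')$. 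The plan is to show that after absorbing all $\pm 1$ factors contributed by the first three edge types into diagonal rescalings, $N_{s,W_\ell,W_r,\sigma}$ becomes a lift (in the sense of \pref{def:liftsofnondisjoint}) of the $\binom{n}{d-s} \times \binom{n}{d-s}$ patterned matrix $\alpha(q)(2^{(d-q)^2}-1) \cdot Q_{B^\ast, f}$ with $f(x) = \prod_b x_b$.

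Then \pref{fact:liftsofnondisjoint} costs a factor of $\binom{d}{s}^2 = 2^{O(d)}$, and the base matrix is controlled by \pref{lem:fourier-norm-mpw-disjoint}: when $B^\ast$ is non-empty its norm is $\tilde O(n^{d-s-1/2})$, and when $B^\ast$ is empty the base matrix is a scalar multiple of the indicator of disjoint subsets and has norm $O(n^{d-s})$. Combining these with $\alpha(q) \leq \omega^{2d-q} n^q \cdot 2^{-\Omega(d^2)}$, the bound $s - q \geq 1$, and the observation that $n^{d-s} \leq n^{d-1-(s-q-1)} n^q / n^q \leq n^{d-1}$ when $s \geq q + 1$, each $\|N_{s,W_\ell,W_r,\sigma}\|$ is at most $2^{O(d^2)} \cdot \omega^{2d-q} n^{d-1}$. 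Summing over the $2^{O(d)}$ patterns $(s, W_\ell, W_r, \sigma)$ and union bounding the high-probability spectral norm bounds of \pref{lem:fourier-norm-mpw-disjoint} across the polynomially many $Q_{B^\ast, f}$ that arise completes the claim.

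The main obstacle is verifying the ``absorption into diagonal rescalings'' step cleanly: the signs from Type-A/B/C edges depend on the common part $K$, which is itself determined jointly by the row and column, so they do not literally factor as a single left diagonal times a single right diagonal. The clean way to circumvent this is either (i) to organize the lift block-by-block in $K$, noting that within each $K$-block the Type-B and Type-C factors separate as $\pm 1$ row and column rescalings and the Type-A factor is a global sign, and then verify that the block-assembled matrix has the required lifted structure so that \pref{fact:liftsofnondisjoint} applies as stated; or (ii) to bypass the lift formalism entirely and repeat the trace-power-method argument of \pref{lem:fourier-norm-mpw-disjoint} on labeled ribbons that share a prescribed common part $K$, using \pref{lem:ribbon-unique-bound} to count vertices. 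Either route yields the stated bound, but the bookkeeping in (i) is delicate and in (ii) requires a small extension of the ribbon counting to accommodate the fixed common vertices.
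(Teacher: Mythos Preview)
Your approach is correct but substantially more elaborate than necessary. The paper proves this claim in three lines via the Gershgorin row-sum bound (\pref{fact:Gershgorin}): the difference $\tilde L_q^{B,Z_\ell,Z_r} - L_q^{B,Z_\ell,Z_r}$ is nonzero only at entries $(I,J)$ with $\zeta_I(Z_\ell)=\zeta_J(Z_r)$ and $|I\cap J|>q$, and every such entry has absolute value at most $\alpha(q)(2^{(d-q)^2}-1)$. For a fixed $I$, the number of such $J$ is at most $2^{O(d)} n^{d-q-1}$ (one must place the $q$ forced common elements and then force at least one additional coincidence among the remaining $d-q$ coordinates), so the maximum row sum is $2^{O(d)}\alpha(q) n^{d-q-1}=\tilde O(\omega^{2d-q} n^{d-1})$.

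Your decomposition by extra-intersection pattern $(s,W_\ell,W_r,\sigma)$, edge classification, and reduction to lifts of $Q_{B^\ast,f}$ is valid; the ``absorption'' concern you flag is handled exactly as you suggest in route (i), since within each $K$-block the Type-A/B/C contributions are $\pm 1$ diagonal rescalings and the proof of \pref{fact:liftsofnondisjoint} goes through block-by-block when each block has the form $D_K X D_K'$ with $\|D_K X D_K'\|=\|X\|$. But none of this machinery is needed here: the point of isolating $\tilde L_q - L_q$ is precisely that it is so sparse (supported on $|I\cap J|\geq q+1$) that the crude entrywise/row-sum bound already achieves $n^{d-1}$, whereas the spectral tools you invoke are reserved for the pieces $L_q^{B,Z_\ell,Z_r}$ themselves, which are supported on the much larger set $|I\cap J|=q$ and genuinely require the trace-power analysis.
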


Next, we bound each $\| \tilde{L}_{q}^{B,Z_\ell,Z_r} \|$ using the machinery from \pref{sec:spectralnorm}.
\begin{claim}[Norm Bounds on Pieces] \label{claim:norm-bounds}
With probability at least $1-1/n^{10}$ over the draw of $G \sim G(n,\frac{1}{2})$, 
\begin{enumerate}

\item $$\|\tilde{L}^{B,Z_\ell,Z_r}_q\| = \tilde{O}(\omega^{2d-q} \cdot n^{d-\frac{1}{2}} ).$$

\item  If $|B_{\ell}|, |B_r| \geq 2$, then: $$\| \tilde{L}^{B,Z_\ell,Z_r}_q \| = \tilde{O}(\omega^{2d-q} \cdot n^{d-1}).$$

\end{enumerate}
\end{claim}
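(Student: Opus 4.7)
Up to the scalar prefactor $\alpha(q)(2^{(d-q)^2}-1)$, the matrix $\tilde L_q^{B,Z_\ell,Z_r}$ is exactly a patterned matrix in the sense of \pref{def:patterned} with $f(x)=\prod_{b\in B} x_b$ and pattern $(B,Z_\ell,Z_r)$. The strategy is to reduce its spectral norm to that of a disjoint-support patterned matrix on $(d-q)$-subsets, to which \pref{lem:fourier-norm-mpw-disjoint} applies at effective dimension $d-q$, and then reassemble with the scaling.

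First, I will invoke \pref{claim:Lltilde} with the triangle inequality to write
\[
\|\tilde L_q^{B,Z_\ell,Z_r}\|\;\leq\;\|L_q^{B,Z_\ell,Z_r}\|+\tO(\omega^{2d-q}n^{d-1}),
\]
where the additional term is already subsumed by the target bound in both cases of the claim. So it suffices to bound $\|L_q^{B,Z_\ell,Z_r}\|$, whose entries are nonzero exactly when $|I\cap J|=q$ and the intersection occupies the specific sorted positions $(Z_\ell,Z_r)$ within $(I,J)$. Second, I will relate $L_q^{B,Z_\ell,Z_r}$ to a disjoint-support base matrix $X\in\R^{\binom{[n]}{d-q}\times\binom{[n]}{d-q}}$ defined by $X(I',J')=\alpha(q)(2^{(d-q)^2}-1)\prod_{b\in\zeta_{I',J'}(B')} g_b$ when $I'\cap J'=\emptyset$ and $0$ otherwise, where $B'$ is $B$ transferred to a bipartite graph on $[d-q]\times[d-q]$ via the canonical order-preserving bijections $[d]\setminus Z_\ell\to[d-q]$ and $[d]\setminus Z_r\to[d-q]$. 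After a norm-preserving row/column permutation that moves $Z_\ell,Z_r$ to the initial $q$ positions, $L_q^{B,Z_\ell,Z_r}$ is precisely the ``diagonal'' slice of the lift $X^{(q)}$ (in the sense of \pref{def:liftsofnondisjoint}) in which the intersection $K=I\cap J$ consists of the smallest $q$ elements of both $I$ and $J$. \pref{fact:liftsofnondisjoint} gives $\|L_q^{B,Z_\ell,Z_r}\|\leq \binom{d}{q}^2\|X\|$, and in fact its proof decomposes $X^{(q)}$ into $\binom{d}{q}^2$ such position-pattern slices each isometric to $X$, so a single slice satisfies $\|L_q^{B,Z_\ell,Z_r}\|\leq \|X\|$ (the $\binom{d}{q}^2$ factor is absorbed into $\tO(\cdot)$ anyway for $d=o(\sqrt{\log n})$).

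Third, I will apply \pref{lem:fourier-norm-mpw-disjoint} at dimension $d-q$ to $Q_{B',f}=X/[\alpha(q)(2^{(d-q)^2}-1)]$. Since $B$ (hence $B'$) is nonempty, $\|Q_{B',f}\|\leq\tO(n^{(d-q)-1/2})$. Under the additional hypothesis $|B_\ell|,|B_r|\geq 2$, a brief case analysis shows that $B$ contains a $2$-matching: pick two non-isolated vertices $u_1,u_2$ on the left; if their neighborhoods both collapse onto a single right vertex $v_1$, then any third non-isolated right vertex $v_3\neq v_1$ has a left neighbor $u_3\neq u_2$, which combined with either $u_1$ or $u_2$ yields a second matched edge. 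Consequently $\|Q_{B',f}\|\leq\tO(n^{(d-q)-1})$ in this case. Computing the prefactor, $\alpha(q)(2^{(d-q)^2}-1)=\tO(\omega^{2d-q}n^q)$ via $\alpha(q)=\tO\bigl(\omega^{2d-q}n^q\cdot 2^{-d^2-\binom q 2}\bigr)$ together with $2^{(d-q)^2}\leq 2^{d^2}$, so multiplying yields $\tO(\omega^{2d-q}n^{d-1/2})$ in general and $\tO(\omega^{2d-q}n^{d-1})$ under the matching hypothesis, as claimed.

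\textbf{Main obstacle.} The delicate step is the reduction in step two. \pref{def:liftsofnondisjoint}'s lift spreads $X$ across \emph{all} intersection-position patterns, and isolating the single $(Z_\ell,Z_r)$-slice is not \emph{a priori} norm-decreasing, since zeroing out entries can in general inflate the spectral norm. My plan relies on unpacking the proof of \pref{fact:liftsofnondisjoint}, which naturally decomposes the lift into $\binom{d}{q}^2$ position-pattern pieces each isometric to $X$. Should this reading of the fact prove cumbersome, the fallback is to bypass the lift framework entirely and run the trace-moment method of \pref{lem:fourier-norm-mpw-disjoint} directly on $L_q^{B,Z_\ell,Z_r}$: a shared $q$-vertex ``spine'' threading each ribbon block contributes an additional $n^q$ factor to the label count, but leaves the matching-based vertex uniqueness count of \pref{lem:ribbon-unique-bound} unchanged, reproducing the same bound.
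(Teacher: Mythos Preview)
Your proposal is correct and follows essentially the same route as the paper: bound $\|\tilde L_q^{B,Z_\ell,Z_r}\|$ via $\|L_q^{B,Z_\ell,Z_r}\|$ plus the error from \pref{claim:Lltilde}, realize $L_q^{B,Z_\ell,Z_r}$ as (a slice of) the $q$-lift of a disjoint-support matrix on $(d-q)$-subsets, invoke \pref{fact:liftsofnondisjoint} and \pref{lem:fourier-norm-mpw-disjoint} at dimension $d-q$, and use the observation that $|B_\ell|,|B_r|\geq 2$ forces a $2$-matching. Your ``main obstacle'' is well spotted---the paper's assertion that $L_q^{B,Z_\ell,Z_r}=R^{(q)}$ is indeed imprecise (a single position-pattern slice is not literally the full lift), and your resolution via the slice-by-slice reading of the proof of \pref{fact:liftsofnondisjoint} is exactly the right fix.
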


At first, it should be worrisome that some of the $\tilde{L}_q^{B,Z_\ell,Z_r}$ have norms that are much larger than what we need (in the second claim of \pref{lem:dev-d}).  What comes to our rescue is the fact that the components $\tilde{L}_q^{B,Z_\ell,Z_r}$ that have large norm do not contribute to quadratic forms on $\left|\Pi_i^{\dagger} L \Pi_j\right|$ when $i, j$ are at least $2$. The crucial observation that allows us to conclude this is based on the observation that $\tilde{L}_q^{B,Z_\ell,Z_r}$ are patterned matrices in the sense of \pref{def:patterned} and thus clubbing all $(B',Z_\ell',Z_r')$ that are (left or right) similar to $(B,Z_\ell,Z_r)$, we can show that certain $V_i$ lie in their kernels. More specifically:

\begin{claim} \label{claim:disjoint-label}
For $t > q+|B_{\ell}|$, $$\Pi_t^{\dagger} \left( \sum_{(B,Z_\ell',Z_r') \sim_{\ell} (B,Z_\ell,Z_r)} \tilde{L}_q^{B',Z_\ell',Z_r'}\right)  = 0.$$ Similarly, for any  $w > q+ |B_r |$, $$ \left(\sum_{(B',Z_\ell',Z_r') \sim_{r} (B,Z_\ell,Z_r)} \tilde{L}_q^{B,Z_\ell,Z_r} \right) \Pi_{w} = 0.$$
\end{claim}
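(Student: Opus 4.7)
The plan is to observe that $\tilde{L}_q^{B,Z_\ell,Z_r}$ is, up to a scalar factor, a patterned matrix in the precise sense of \pref{def:patterned}, and then invoke \pref{lem:kernelsymoperators} which was designed exactly to handle the kernel structure of symmetrized sums of such matrices.

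First I would unpack the definition. Set $f : \{-1,1\}^{B} \to \R$ to be the parity function $f(y) = \prod_{b \in B} y_b$. Then by direct inspection,
\[
\tilde{L}_q^{B,Z_\ell,Z_r}(I,J) \;=\; \alpha(q)\cdot\bigl(2^{(d-q)^2}-1\bigr) \cdot Q_{B,Z_\ell,Z_r,f}(I,J)
\]
for every $I,J \in \nchoose{d}$: both sides are zero unless $\zeta_I(Z_\ell) = \zeta_J(Z_r)$, and when they are nonzero they evaluate to the same product of edge indicators $g_b$ over $b \in \zeta_{I,J}(B)$. (When $q = 0$ the constraint $\zeta_I(Z_\ell) = \zeta_J(Z_r)$ is vacuous, matching the convention that $L_0^B = L_0^{B,\emptyset,\emptyset}$.)

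Since the equivalence relations $\sim_\ell$ and $\sim_r$ in the statement are exactly the ones used to define the symmetrized matrices $Q^\ell_{B,Z_\ell,Z_r,f}$ and $Q^r_{B,Z_\ell,Z_r,f}$ in \pref{lem:kernelsymoperators}, summing over $(B',Z_\ell',Z_r') \sim_\ell (B,Z_\ell,Z_r)$ gives
\[
\sum_{(B',Z_\ell',Z_r') \sim_\ell (B,Z_\ell,Z_r)} \tilde{L}_q^{B',Z_\ell',Z_r'} \;=\; \alpha(q)\bigl(2^{(d-q)^2}-1\bigr) \cdot Q^\ell_{B,Z_\ell,Z_r,f}\mcom
\]
and analogously for the right-symmetrized sum. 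Now applying \pref{lem:kernelsymoperators}, for $t > q + |B_\ell|$ we immediately get $\Pi_t^\dagger Q^\ell_{B,Z_\ell,Z_r,f} = 0$, so the left-symmetrized sum is also killed by $\Pi_t^\dagger$. The bound on the right-symmetrized sum for $w > q + |B_r|$ follows in the same way from the second part of \pref{lem:kernelsymoperators}.

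Since the heavy lifting (the representation-theoretic argument using \pref{lem:eigenbasisstructure} and Schur's lemma) has already been done in \pref{sec:patterned}, there is no real obstacle here; the only nontrivial step is the bookkeeping to confirm that the intersection constraint $\zeta_I(Z_\ell) = \zeta_J(Z_r)$ and the product structure $\prod_b g_b$ in $\tilde L_q^{B,Z_\ell,Z_r}$ agree with the definition of $Q_{B,Z_\ell,Z_r,f}$, which is a direct comparison of definitions.
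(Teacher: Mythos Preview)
Your proposal is correct and follows exactly the paper's approach: the paper's proof is simply the observation that $\tilde{L}_q^{B,Z_\ell,Z_r}$ is a patterned matrix in the sense of \pref{def:patterned}, followed by an appeal to \pref{lem:kernelsymoperators}. Your write-up in fact supplies more of the bookkeeping (identifying $f$ as the parity function and the scalar $\alpha(q)(2^{(d-q)^2}-1)$) than the paper's two-sentence proof does.
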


Before proving the three claims above, we show how they imply \pref{lem:dev-d}:
\begin{proof}
We use \eqref{eq:maindecomp} to write: 
\begin{equation}
L_q = 2^{-(d-q)^2} \left( \sum_{B,Z_\ell,Z_r} \tilde{L}_q^{B,Z_\ell,Z_r} + \sum_{B,Z_\ell,Z_r} (L_q^{B,Z_\ell,Z_r} - \tilde{L}_q^{B,Z_\ell,Z_r}) \right) \label{eq:centralforthisproof}
\end{equation}
\begin{align*} 
L_q &= 2^{-(d-q)^2} \left( \sum_{B,Z_\ell,Z_r} \tilde{L}_q^{B,Z_\ell,Z_r} + \sum_{B,Z_\ell,Z_r} (L_q^{B,Z_\ell,Z_r} - \tilde{L}_q^{B,Z_\ell,Z_r}) \right)\\
&\leq  2^{-(d-q)^2} \left( \sum_{B,Z_\ell,Z_r} \|\tilde{L}_q^{B,Z_\ell,Z_r}\| \right) + 2^{-(d-q)^2}\cdot \sum_{B,Z_\ell,Z_r} \| L_q^{B,Z_\ell,Z_r} - \tilde{L}_q^{B,Z_\ell,Z_r} \|\\
&\text{ Using \pref{claim:norm-bounds}}\\
&\leq 2^{2d} \cdot \tilde{O}(\omega^{2d-q} n^{d-\frac{1}{2}}).
\end{align*}

For the second part, fix an $i \geq 2$. We first show that some terms in the decomposition in \eqref{eq:centralforthisproof} do not contribute to $\left|\Pi_i^{\dagger} L_q \Pi_i\right|$.  

Consider any bipartite graph $B$ such that $|B_{\ell}| < 2$. Then, we have from Claim \ref{claim:disjoint-label}, $\left( \sum_{(B',Z_\ell',Z_r') \sim_{\ell} (B,Z_\ell,Z_r)}L_q^{B',Z_\ell',Z_r'} \right) \Pi_t = 0$ for every $t \geq 2$. Thus, $|\Pi_i^{\dagger} \sum_{B: |B_{\ell}| < 2} L_q^{B,Z_\ell,Z_r} \Pi_i |= 0$ for any $i$ and every $t \geq 2$. Similarly, when $|B_r| \geq 2$, $|\Pi_i^{\dagger} L_q^{B,Z_\ell,Z_r} \Pi_i |= 0$.  On the other hand, when both $|B_{\ell}|, |B_r| \geq 2$, from Claim \ref{claim:norm-bounds}, we have (with high probability over the draw of $G \sim G(n,\frac{1}{2})$), $\|L_q^{B,Z_\ell,Z_r} \| \tilde{O}(\omega^{2d-q} \cdot n^{d-1}).$ Thus:

Thus, for $i \geq 2$, we have:
\begin{align*}
||\Pi_i^{\dagger} L_q \Pi_i||  &= 2^{-(d-q)^2} \cdot \| \sum_{B, Z_\ell,Z_r} \Pi_i^{\dagger} L_q^{B,Z_\ell,Z_r} \Pi_i \|\\
&\leq 2^{-(d-q)^2} \sum_{B,Z_\ell,Z_r} |\Pi_i^{\dagger} \tilde{L}_q^{B,Z_\ell,Z_r} \Pi_i | + 2^{-(d-q)^2} \cdot \sum_{B ,Z_\ell,Z_r} | L_q^{B,Z_\ell,Z_r} - \tilde{L}_q^{B,Z_\ell,Z_r} |\\
&= 2^{-(d-q)^2} \sum_{B:|B_{\ell}| \geq 2, |B_r| \geq 2,Z_\ell,Z_r} |\Pi_i^{\dagger} \tilde{L}_q^{B,Z_\ell,Z_r} \Pi_i| + \tilde{O}(\omega^{2d} \cdot n^{d-1})\\
&\leq \tilde{O}(\omega^{2d-q} \cdot n^{d-\frac{1}{2}}) .
\end{align*}
Similarly, for $i+j \geq 2$ we must have $i, j \geq 2$. Thus, by a calculation similar to above, $\|\Pi_i^{\dagger} L_q \Pi_j\| \leq \tilde{O}(\omega^{2d} \cdot n^{d-\frac{1}{2}}).$ 
\end{proof}
In the remaining part of this section, we complete the proofs of Claims \ref{claim:disjoint-label} and \ref{claim:norm-bounds}. 

\subsubsection{Proof of Claims}
In this section, we obtain quick proofs of the three claims above using the tools developed in \pref{sec:tools}.

We first prove \pref{claim:Lltilde}.
\begin{proof}[Proof of \pref{claim:Lltilde}]
The proof is by appealing to \pref{fact:Gershgorin}. Observe that 
\[
|\tilde{L}_q^{B,Z_\ell,Z_r}(I,J) - L_q^{B,Z_\ell,Z_r}(I,J)| \leq \begin{cases} 
0 \text{ , if } |I \cap J | \leq q,\\
\alpha(q) \cdot (2^{(d-q)^2}-1) 2^{-(d-q)^2} \text{ , otherwise}.
\end{cases}
\]

We now estimate $$ \max_{I \in \nchoose{d}} \sum_{J \in \nchoose{d}} |\tilde{L}_q^{B,Z_\ell,Z_r}(I,J) - L_q^{B,Z_\ell,Z_r}(I,J)| \leq 2^d n^{d-q-1} \cdot \alpha(q) \cdot (2^{(d-q)^2}-1) 2^{-(d-q)^2} .$$ The claim now follows from \pref{fact:Gershgorin}.
\end{proof}

The next is a direct application of \pref{lem:kernelsymoperators}.
\begin{proof}[Proof of Claim \ref{claim:disjoint-label}]
The main observation is that $\tilde{L}^{B,Z_\ell,Z_r}_q$ is a patterned matrix (with a $q$-pattern $(B,Z_\ell,Z_r)$) in the sense of \pref{def:patterned}. The result then follows immediately by appealing to \pref{lem:kernelsymoperators}.
\end{proof}

Finally, we prove Claim \ref{claim:norm-bounds} using \pref{lem:fourier-norm-mpw-disjoint}.
\begin{proof}[Proof of Claim \ref{claim:norm-bounds}]
First, consider the case of $\tilde{L}_0^{B}$ ($Z_\ell = Z_r = \emptyset$ in this case). We write $||\tilde{L}_0^{B}|| \leq ||L_0^B|| + ||\tilde{L}_0^{B} - L_0^B|| .$ For the second term, we can appeal to \pref{claim:Lltilde}. For the first term, observe that by a direct application of \pref{lem:fourier-norm-mpw-disjoint}, $||L_0^B|| \leq \alpha(0) (2^{d^2}-1) 2^{-d^2} \cdot \tilde{O}(n^{d-\frac{1}{2}}).$ Further, when $|B_{\ell}|, |B_r| \geq 2$, then, $B$ has a $2$-matching and thus, by another application of \pref{lem:fourier-norm-mpw-disjoint}, we obtain that in this case, $||L_0^B|| \leq \alpha(0) \cdot (2^{d^2}-1) 2^{-d^2} \cdot \tilde{O}(n^{d-1}).$ The proof is thus complete for the case of $q = 0$. 

We now reduce the computation for the more general case to similar calculations by appealing to the idea of lifts. Consider the matrix $R \in \R^{\nchoose{d-q} \times \nchoose{d-q}}$ given by 
\[
R(I',J') = \begin{cases}
L_q^{B,Z_\ell,Z_r} (I' \cup K, J' \cup K) \text{ if } I' \cap J' = \emptyset.\\
0 \text{,  otherwise,}
\end{cases} 
\]
where $K \subseteq [n]$ is some fixed subset of size $q$ such that $K \cap I' = K \cap J' = \emptyset$.

Then, $L_q^{B,Z_\ell,Z_r} = R^{(q)}$. Thus, using \pref{fact:liftsofnondisjoint}, $\|L_q^{B,Z_\ell,Z_r}\| \leq 2^{2d} \|R\|$. Since $R$ has non zero entries only when the row and column indices are disjoint sets, we can apply \pref{lem:fourier-norm-mpw-disjoint} to $R$ to obtain $\|R\| \leq \alpha(q) (2^{(d-q)^2}-1) 2^{-(d-q)^2} \cdot \tilde{O}(n^{d-q-\frac{1}{2}})$. Further, when $|B_\ell|, |B_r|\geq 2$, again by an application of \pref{lem:fourier-norm-mpw-disjoint}, we have:$\|R\| \leq \alpha(q) (2^{(d-q)^2}-1) 2^{-(d-q)^2} \cdot \tilde{O}(n^{d-q-1})$. Now, using $\|\tilde{L}_q^{B,Z_\ell,Z_r}\| \leq \| L_q^{B,Z_\ell,Z_r} \| + \| \tilde{L}_q^{B,Z_\ell,Z_r} - L_q^{B,Z_\ell,Z_r}\|$ and using \pref{claim:Lltilde} completes the proof.
\end{proof}

\subsection{Proof of \pref{lem:Delta-bound}}
We now move on to analyzing the spectrum of the matrix $\Delta$.


The high level plan of the proof is similar to that of \pref{lem:dev-d}. We define $\Delta_i$ for each $0 \leq i \leq d$ as follows:
\[\Delta_q(I,J) = \begin{cases}
\Delta(I,J) \text{, if } |I \cap J| = q\\
0 \text{,  otherwise.}
\end{cases}
\]
We further split $\Delta_q = \sum_{K: |K| = q} \Delta_{q,K},$ where:
\[
\Delta_{q,K}(I ,J) = \begin{cases}
\Delta_q(I, J) \text{, if} I \cap J = K\\
0 \text{,  otherwise. }
\end{cases}
\]
First, we observe that $\Delta_0 = 0$. This is because $\deg(I \cup J)$ when $I$ and $J$ are disjoint is exactly $1$ and doesn't depend on the graph. Thus, $\Delta = \sum_{i = 1}^n \Delta_i.$ As before we would like to spot patterned matrices in each $\Delta_i$ to show that appropriate eigenspaces $V_j$ lie in the kernel of $\Delta_i$. In case of $\Delta_q$, however, there's a difference how this needs to be done. This is because each entry $(I,J)$ of $\Delta_i$ potentially depends on the edges from every vertex in the graph $G$ to $I$ and $J$. This is unlike the case of $L$ where the $(I,J)$ entry depends only on the edges between $I$ and $J$ (in fact that's the reason we separated $L$ from $\Delta$ in the analysis). Nevertheless, we give a decomposition below that will help us make claims similar to the ones in the case of analyzing $L$ in this case too.

Let us first explain the main idea in the decomposition. The entry $\Delta_q(I,J)$ depends on two events: a) whether $\sE_{ext}(I,J) \subseteq G$ and b) the number of subsets $S \subseteq [n]$ of size $|S| = q$ that form $2d$-cliques with $I \cup J$. The main observation that motivates our decomposition is the following: in the event that $\sE_{ext}(I,J) \subseteq G$, the deviation in $\deg(I \cup J)$ is completely captured (up to low order terms) by just the number of vertices $s$ that has an edge to all of $I \cup J$ in $G$. This allows us to write entries of $\Delta_q$ as a sum of contribution to the deviation due to each vertex $s$ separately. For the case of $q = 1$, this argument is in fact exact and there are no low order terms. When $q > 1$, the contributions due to individual vertices contribute the bulk of the deviation and only low order terms remain. 

From here on, we are in a situation similar to the one encountered in analyzing $L$ in the previous subsection. We show that the components in the decomposition with large spectral norm do not contribute to quadratic forms over eigenspaces with small eigenvalues of the expectation matrix $E$ using the idea of patterned matrices from \pref{sec:patterned}. We show that the remaining components have small spectral norm using the combinatorial techniques combined with the trace moment method developed in \pref{sec:spectralnorm}. 


We now proceed to make the ideas above more precise. We begin by some notation and a definition. We first define $e^1_{s,K} \in \R^{\nchoose{d}}$ for any $s \in [n]$ and $K \subseteq [n]$, $|K| = q$ as follows:
\[
e^1_{s,K}(I,J) = \begin{cases}
0 & \text{, if } I \cap J \neq K \text{ or } s \in I \cup J\\
2^{|I|-q}-1 \text{ otherwise and if } \sE_{ext}(\{s\}, I \setminus J) \subseteq G\\
-1 \text{,  otherwise}.
\end{cases}
\]

Similarly, we define $e^2_s \in \R^{\nchoose{d}}$ for any $s \in [n]$ and $K \subseteq [n]$, $|K| = q$:\\
\[
e^2_{s,K}(I,J) = \begin{cases}
0 & \text{, if } I \cap J \neq K \text{ or } s \in I \cup J\\
2^{|J|-q}-1 \text{ otherwise and if } \sE_{ext}(\{s\}, J \setminus I) \subseteq G\\
-1 \text{,  otherwise}.
\end{cases}
\]

Next, we define: $e_{s,K} \in \R^{\nchoose{d}}$ for any $s \in [n]$ and $K \subseteq [n]$ satisfying $|K| = q$:

\[
e^3_{s,K}(I,J) = \begin{cases}
0 & \text{, if } I \cap J \neq K \text{ or } s \in I \cup J\\
2^{q}-1 \text{ otherwise and if } \sE_{ext}(\{s\}, K ) \subseteq G\\
-1 \text{,  otherwise}.
\end{cases}
\]

Finally, we define
\[
e^{1,2}_{K}(I,J) = \begin{cases}
0 & \text{ , if } I \cap J \neq K  \\
2^{|I| + |J| -q} -1 \text{ otherwise and if } \sE_{ext}(I,J) \subseteq G\\
-1 \text{, otherwise.}
\end{cases}
\]
 Using the matrices above, we can show the following approximate factorization for the entries of $\Delta_{q,K}$:
 \begin{lemma}\label{lem:Delta-decomp}
For every $I,J$ such that $|I \cap J| = K$,  $$ \left| \Delta_{q,K}(I, J) - \eta \frac{{ {{\omega} \choose {2d-q}}}}{{{2d} \choose {2d-q}}} (1+e_K^{1,2}(I,J))  \sum_{s \in [n]} \left( (1+e^1_{s,K})(1+e^2_{s,K})(1+e^3_{s,K}) -1 \right)\right| \leq 2^{O(d)} \cdot \tilde{O}(\omega^{2d-q} \cdot n^{q-1}),$$ for $\eta = 2^{-2|I \cup J| + {{|I \cup J| + 1} \choose 2} - {{2d} \choose 2}} \frac{(n - |I|)^{2d - |I| - 1}}{(2d - |I| - 1)!} .$
\end{lemma}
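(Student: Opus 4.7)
The plan is to interpret $\Delta_{q,K}(I,J)$ as the fluctuation of the count of $q$-sized completions $S$ of $I \cup J$ to a $2d$-clique in $G$, and to derive the claimed factorization by a Taylor expansion around $M = |\{s \in [n] \setminus (I \cup J) : s \sim I \cup J\}|$. I first reduce to the case $\sE_{ext}(I,J) \subseteq G$: when this event fails, any $2d$-superset $T$ of $I \cup J$ has $\sE_{ext}(I,J) \subseteq \sE(T)$ missing an edge, so $\cM'(I,J) = 0$, and a direct check of \pref{def:Ddecomp} and \pref{def:momentmatrix} gives $L(I,J) + E(I,J) = 0$, hence $\Delta_{q,K}(I,J) = 0$; simultaneously $1 + e^{1,2}_K(I,J) = 0$, so both sides of the target inequality vanish. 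On the complementary event, a routine computation using $\alpha(q) = \beta(q)\binom{n-2d+q}{q} 2^{-d^2 - \binom{q}{2}}$ and $p(q) = 2^{-(d-q)^2}$ shows that $L(I,J) + E(I,J) = \beta(q) \E N(I,J)$, where $N(I,J)$ counts $q$-subsets $S$ disjoint from $I \cup J$ such that each $s \in S$ is adjacent to all of $I \cup J$ in $G$ and $S$ itself is a clique in $G$. Thus $\Delta_{q,K}(I,J) = \beta(q)(N - \E N)$.

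The key identity is $(1 + e^1_{s,K})(1 + e^2_{s,K})(1 + e^3_{s,K}) = 2^{|I|+|J|-q}\, \mathbf{1}[s \sim I \cup J]$, which holds because $e^1,e^2,e^3$ track the three independent events that $s$ is adjacent to $I \setminus K$, $J \setminus K$, and $K$ respectively. Summing the centered product over $s$ gives
\begin{equation*}
    \sum_{s \in [n]} \bigl[(1+e^1_{s,K})(1+e^2_{s,K})(1+e^3_{s,K}) - 1\bigr] = 2^{|I|+|J|-q}(M - \E M).
\end{equation*}
Next I expand $N$ combinatorially: using $\mathbf{1}[\{s,t\} \in E] = (1 + \hat g_{st})/2$,
\begin{equation*}
    N = \frac{1}{2^{\binom{q}{2}}} \sum_S \prod_{s \in S} \mathbf{1}[s \sim I \cup J] \sum_{F \subseteq \binom{S}{2}} \prod_{\{s,t\} \in F} \hat g_{st},
\end{equation*}
where the outer sum is over $q$-subsets $S \subseteq [n] \setminus (I \cup J)$. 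The $F = \emptyset$ contribution is $\binom{M}{q}/2^{\binom{q}{2}}$. Taylor-expanding $\binom{M}{q}$ around $\E M = (n - |I \cup J|) \cdot 2^{-(|I|+|J|-q)}$, the first-order term equals $(\E M)^{q-1}(M - \E M)/((q-1)! \cdot 2^{\binom{q}{2}})$. Substituting the displayed identity, pulling out the factor $(1+e^{1,2}_K) = 2^{|I|+|J|-q}$ that is available on the event $\sE_{ext}(I,J) \subseteq G$, and multiplying by $\beta(q)$ reproduces the claimed main expression; the scalar $\eta$ collects $1/2^{\binom{q}{2}}$, the polynomial factor $(\E M)^{q-1}/(q-1)!$, and the $2^{-(|I|+|J|-q)}$ needed to convert the displayed sum back to $M - \E M$.

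Finally I bound the error from two sources: (a) the Taylor remainders $(\E M)^{q-k}(M - \E M)^k/(k! \cdot 2^{\binom{q}{2}})$ for $k \geq 2$, and (b) the nonempty-$F$ terms in the expansion of $N$. For (a), a Bernstein concentration bound applied to the sum of independent Bernoullis $\mathbf{1}[s \sim I \cup J]$ gives $|M - \E M| \leq \tO(\sqrt{n/2^{|I|+|J|-q}})$ with probability $1 - O(n^{-20})$, uniformly in $(I,J)$ after a union bound over the $O(n^{2d})$ choices; each $k \geq 2$ remainder is then at most $\tO(n^{q-1}/2^{O(d^2)})$. For (b), each summand is a product of $\mathbf{1}[s \sim I \cup J]$ indicators with independent $\hat g_{st}$ centered variables, so the total is mean zero and its tail is controlled by the same trace-moment machinery underpinning \pref{sec:spectralnorm}, again producing $\tO(n^{q-1}/2^{O(d^2)})$. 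Multiplying by $\beta(q) = O(\omega^{2d-q}/2^{O(d)})$ yields the stated error $2^{O(d)} \tO(\omega^{2d-q} n^{q-1})$. The main obstacle is the bookkeeping for (b): even though each nonempty-$F$ term is mean zero, the factors $\mathbf{1}[s \sim I \cup J]$ and $\hat g_{st}$ share vertices in $S$, and proving that the sum does not exceed the Taylor-remainder order $n^{q-1}$ (as opposed to $n^{q-1/2}$) requires carefully exploiting the independence of the $\hat g_{st}$ edges from both $\sE_{ext}(I,J)$ and the neighborhoods of $I \cup J$, at the price of polylogarithmic factors absorbed by $\tO(\cdot)$.
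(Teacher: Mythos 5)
Your proposal is correct and follows essentially the same route as the paper: the same reduction to the event $\sE_{ext}(I,J)\subseteq G$, the same identity converting the $(1+e^1)(1+e^2)(1+e^3)$ sum into $2^{|I\cup J|}\left(|A_{I\cup J}|-\E|A_{I\cup J}|\right)$, and the same interpretation $\Delta_{q,K}(I,J)=\beta(q)(N-\E N)$ for the clique count $N$ inside the common neighborhood. The only difference is that where the paper simply invokes \pref{thm:cliqueboundtheoremtwo}, you re-derive its content inline (Taylor expansion of $\binom{M}{q}$ around $\E M$ plus a moment-method bound on the centered clique-count fluctuation, which is exactly the argument of \pref{sec:degreevariancecalculations} rather than the matrix machinery of \pref{sec:spectralnorm}), so this is a repackaging rather than a new approach.
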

\begin{proof}[Proof of \pref{lem:Delta-decomp}]

Let $A_{I \cup J}$ be the set of vertices $s$ in $G$ not in $I \cup J$ so that $(s,i) \in G$ for all $i \in I \cup J$.
By definition, if $I \cup J$ is a clique,
\[
  \sum_{s \in [n]} (1+e_{s,K}^{1,2}(I,J))\left( (1+e^1_{s,K})(1+e^2_{s,K})(1+e^3_{s,K}) -1 \right) = 2^{2|I \cup J|} (|A_I| - 2^{-|I \cup J|} (n - |I \cup J|))\mper
\]
Applying the scaling $\eta$, we get
\[
  \eta  (1+e^{1,2}(I,J))  \sum_{s \in [n]} \left( (1+e^1_{s,K})(1+e^2_{s,K})(1+e^3_{s,K}) -1 \right) = \left(|A_I| - \frac{n-|I|}{2^{|I|}}\right)\frac{2^{{{|I|+1} \choose 2} - {{2d} \choose 2}}(n-|I|)^{2d-|I|-1}}{(2d-|I|-1)!}
\]
and hence the lemma follows from \pref{thm:cliqueboundtheoremtwo}.
\end{proof}

We will need another definition before proceeding: For each $i$,
\[
e^i_{s,q} \eqdef \sum_{K: |K| = q} e^i_{s,K} 
\]

As in the case of analyzing $L$, we define the filled in versions of $e^i_{s,K}$ as follows:

\[
\tilde{e}^1_{s,K}(I,J) \eqdef \begin{cases}
0 & \text{, if } K \nsubseteq I \cap J \text{ or } s \in I \cup J\\
2^{|I|-q}-1 \text{ otherwise and if } \sE_{ext}(\{s\}, I) \subseteq G\\
-1 \text{,  otherwise}.
\end{cases}
\]
\[
\tilde{e}^2_{s,K}(I,J) \eqdef \begin{cases}
0 & \text{, if } K \nsubseteq I \cap J \text{ or } s \in I \cup J\\
2^{|J|-q}-1 \text{ otherwise and if } \sE_{ext}(\{s\}, J) \subseteq G\\
-1 \text{,  otherwise}.
\end{cases}
\]
\[
\tilde{e}^1_{s,K}(I,J) \eqdef \begin{cases}
0 & \text{, if } K \nsubseteq I \cap J \text{ or } s \in I \cup J\\
2^{q}-1 \text{ otherwise and if } \sE_{ext}(\{s\}, K) \subseteq G\\
-1 \text{,  otherwise}.
\end{cases}
\]

We start by giving norm bounds on all the matrices involved in the decomposition in \pref{lem:Delta-decomp}.

\begin{lemma} \label{lem:Delta-all-norms}
\begin{enumerate}
\item For each $i \in \{1,2,3\}$, $$\left \| \sum_{s \in [n], K:|K| = q} e^i_{s,K} \right \| \leq \tilde{O}(2^{O(d)} \cdot n^{d-\frac{1}{2}}).$$
\item For each $i \in \{1,2,3\}$, $$\left \| \sum_{s \in [n], K:|K| = q} e^i_{s,K} - \tilde{e}^i_{s,K} \right \| \leq \tO(2^{O(d)} \cdot n^{d-{3/2}}).$$
\item $$ \left \| \sum_{K:|K| = q} (1+e_{K}^{1,2}(I,J)) \odot \sum_{s \in [n]} \left( (1+e^1_{s,K}) \odot (1+e^{2}_{s,K}) \odot (1+e^3_{s,K}) - \sum_{i = 1}^{3} e^i_{s,K} \right) \right \| \leq \tO( 2^{O(d)} \cdot n^{d-1}).$$
\end{enumerate}
\end{lemma}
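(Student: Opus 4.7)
The plan is to follow the template developed for \pref{lem:dev-d}, adapting the ``Fourier decomposition plus patterned matrices'' strategy to each of the three matrices appearing in \pref{lem:Delta-all-norms}. Concretely, each $e^i_{s,K}$ (and analogously $\tilde e^i_{s,K}$ and $e^{1,2}_K$) has entries that are a rescaled $\pm 1$ AND-indicator of certain edges of $G$ incident to the special vertex $s$ (or spanning $I,J$), and we will Fourier-expand these indicators over the monomials in the variables $g_b$. After a lift via \pref{fact:liftsofnondisjoint} from $\R^{\nchoose{d-q} \times \nchoose{d-q}}$ to $\R^{\nchoose{d} \times \nchoose{d}}$ (to handle the constraint $I \cap J = K$), each Fourier piece becomes a patterned matrix to which \pref{lem:fourier-norm-mpw-disjoint} and \pref{lem:correction-disjoint} apply. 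As in the analysis of $L$, patterned matrices with large spectral norm will be arranged so that \pref{lem:kernelsymoperators} places the eigenspaces $V_j$ of small index $j$ in their left/right kernels, so only the surviving pieces contribute substantial norm.

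For part~(1), fix $i=1$ for concreteness. The entry $e^1_{s,K}(I,J)$ (when $I \cap J = K$ and $s \notin I \cup J$) is, up to rescaling by $2^{-(d-q)}$, a centered AND of the edges from $s$ to $I \setminus J$. Summing over $s$ and over $K$ with $|K|=q$, the Fourier monomial indexed by a nonempty $A \subseteq \{s\} \times ([d] \setminus Z_\ell)$ yields, after the lift, a patterned matrix whose bipartite-graph pattern $B$ has all its left-vertices collapsed into one endpoint (the variable $s$). Such a $B$ has maximum matching of size at most one, so \pref{lem:fourier-norm-mpw-disjoint} gives norm $\tilde O(n^{d-1/2})$ per Fourier piece, and there are at most $2^{d-q} \binom{d}{q}$ pieces; the $2^{O(d)}$ factor absorbs these multiplicities. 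Matching to eigenspaces via \pref{lem:kernelsymoperators} (not needed for the raw norm bound of part~(1), but used later when plugging into \pref{lem:Delta-bound}) controls which $\Pi_j$ survive.

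Part~(2) is a direct Gershgorin calculation. By construction $e^i_{s,K}(I,J) - \tilde e^i_{s,K}(I,J)$ vanishes unless $K \subseteq I \cap J$ and $I \cap J \neq K$, i.e.\ $|I \cap J| \geq q + 1$; the magnitudes are at most $O(2^d)$. Fixing a row $I$, the number of columns $J$ with $|I \cap J| \geq q+1$ is $O(\binom{d}{q+1} n^{d-q-1})$, and for each such $J$ only $\binom{|I \cap J|}{q}$ choices of $K$ give a nonzero summand. Summing also over the $n$ choices of $s$ and applying \pref{fact:Gershgorin} gives the claimed $\tilde O(2^{O(d)} n^{d-3/2})$ bound (the extra $n^{-1/2}$ vs.\ part~(1) comes from losing one degree of freedom by forcing the intersection strictly larger than $K$).

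For part~(3), we expand
\[
  (1+e^1_{s,K})(1+e^2_{s,K})(1+e^3_{s,K}) - \textstyle\sum_i e^i_{s,K} = 1 + \sum_{i<j} e^i_{s,K} e^j_{s,K} + e^1_{s,K} e^2_{s,K} e^3_{s,K}\mper
\]
Each cross product is a rescaled AND over \emph{two or three disjoint groups} of $s$-incident edges (into $I \setminus J$, $J \setminus I$, and $K$), multiplied by the $I$--$J$ edge indicator coming from $(1 + e^{1,2}_K)$. After Fourier expanding, summing over $s$, and lifting, each resulting matrix has the structure covered by \pref{lem:correction-disjoint}: there is a sum over the hidden vertex $s$ of products $g_{s,a_1}g_{s,a_2}g_{s,b_1}g_{s,b_2}$ (up to additional $g_{a_i,b_j}$ factors from $e^{1,2}_K$), and such matrices have spectral norm $\tilde O(n^{d-1})$ after lifting (the non-lifted case gives $\tilde O(n^{(d-q)-1})$, and the lift factor $\binom{d}{q}^2$ is absorbed into $2^{O(d)}$). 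The constant-$1$ term, summed over $s$ and $K$, contributes a rank-one correction proportional to $(1+e^{1,2}_K)$ which is bounded entrywise and thus by Gershgorin. The main obstacle, and the step that requires the most care, is verifying that after the Fourier expansion of the triple products every surviving monomial does fit the hypothesis of \pref{lem:correction-disjoint} (nonempty $U$ part, and the factor $g_{s,\cdot}$ on all four of $a_1,a_2,b_1,b_2$ arises from the centering), so that the tighter $\tilde O(n^{3/2})$ rather than $\tilde O(n^2)$ bound from \pref{lem:correction-disjoint} applies after lifting.
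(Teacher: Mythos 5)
Your overall template (lift via \pref{fact:liftsofnondisjoint}, expand the centered AND entries into monomials in the $g_b$'s, and bound pieces while exploiting structure in the summed vertex $s$) is in the right spirit, but two of your three parts rest on steps that do not deliver the stated bounds. The clearest failure is part (2): a plain Gershgorin bound cannot give $n^{d-3/2}$. For a fixed row $I$ the difference matrix has $2^{O(d)}n^{d-q-1}$ nonzero columns, and you sum \emph{absolute values} over the $n$ choices of $s$ with entries of size $2^{O(d)}$, so \pref{fact:Gershgorin} yields only $2^{O(d)}n^{d-q}$, which at $q=1$ (the dominant case, since $\Delta_0=0$) is $n^{d-1}$ --- a factor $\sqrt n$ short of the claim. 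The missing $\sqrt n$ comes from cancellation in the sum over $s$ (the summands are centered in the $s$-edges), which a row-sum of absolute values cannot see; your heuristic that the gain comes from ``losing one degree of freedom by forcing the intersection strictly larger than $K$'' only explains the drop in the column count from $n^{d-q}$ to $n^{d-q-1}$, not the $s$-cancellation. The paper gets the extra $\sqrt n$ by running the trace moment method on the $(q{+}1)$-lift, where in any contributing term only $\ell$ of the $2\ell$ labels $s_1,\dots,s_{2\ell}$ can be distinct; alternatively one could first prove a per-entry Chernoff bound on $\sum_s$ and only then apply Gershgorin, but that step is absent from your argument.

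Parts (1) and (3) invoke lemmas outside their scope. The Fourier pieces of $\sum_{s,K}e^i_{s,K}$ have entries of the form $\sum_s \prod_{i\in A} g_{s,\zeta_I(i)}$: every edge runs from the external, summed vertex $s$ into the row set $I$ (or into $J$, or into $K$), and there is no bipartite pattern between $I$ and $J$ at all. Such matrices are not of the form $Q_{B,f}$, so \pref{lem:fourier-norm-mpw-disjoint} --- whose proof is a ribbon count driven by matchings between the row and column index sets --- cannot be applied to them, and its matching-based exponent is not even meaningful here; once again the $n^{d-1/2}$ bound of part (1) hinges on the $s$-cancellation. Likewise \pref{lem:correction-disjoint} is stated only for the degree-4, $\binom n2$-indexed matrices of one specific form and does not cover the general-$d$ lifted cross products $\sum_s e^i_{s,K}\odot e^j_{s,K}$ arising in part (3). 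The paper instead gives bespoke trace-moment arguments for each case: after separating the $s$'s from the index sets by the random partitioning of \pref{itm:partition-fancy}, it counts that every $g_b$ in a contributing term must repeat and that the $s$-vertices are connected to the index sets through those edges, which yields $(d-\tfrac12)\cdot 2\ell$, $(d-\tfrac32)\cdot 2\ell$, and $(d-1)\cdot 2\ell$ distinct labels for parts (1), (2), (3) respectively, hence the three stated exponents. Finally, your treatment of ``the constant-$1$ term'' in part (3) is not right: summed over $s$ it contributes roughly $n$ on each of the $\approx n^{d-q}$ supported columns of a row, so no Gershgorin bound of order $n^{d-1}$ holds for it; the quantity actually needed (and the one the paper bounds) is the sum of products of at least two of the $e^i_{s,K}$'s, with the constant term removed along with the linear ones.
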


\begin{proof}
Fix a $q$ and let $Q $ be any matrix in $\R^{\nchoose{d} \times \nchoose{d}}$ that appears in the statement of the lemma above. Let $R \in \R^{\nchoose{d-q} \times \nchoose{d-q}}$ be defined by 
\[
R(I,J) = \begin{cases}
\sum_{s \in [n], K: |K| = q}  Q (I \cup K, J \cup K) \text {, if} I \cap J = \emptyset\\
0 \text{ , otherwise.}
\end{cases}
\]
Then, $\sum_{s \in [n], K:|K| = q} Q = R^{(q)}$ in the sense of \pref{def:liftsofnondisjoint}. Thus, by \pref{fact:liftsofnondisjoint} $\| R^{(q)} \| \leq 2^{2d} \|R\|.$ Thus, we focus on bounding $\|R\|$ in the following. We will use the trace moment method for this purpose and the argument is similar to the ones made in \pref{sec:spectralnorm} to develop the general purpose spectral concentration results. For this reason, we will be a bit more terse than in the case of the other applications of the trace moment method before. We set up the notation for the general $R$ as above and specialize the combinatorial reasoning for each of the specific matrices involved later.

We expand \begin{multline} \E[ \Tr((RR^{\dagger})^{\ell}] = \E [ \sum_{K_1, K_2, \ldots, K_{2\ell}} \sum_{I_1, I_2, \ldots, I_{2\ell}, s_1, s_2, \ldots, s_{2\ell}} R(I_1\setminus K_1, I_2 \setminus K_2) R(I_3 \setminus K_3,I_2 \setminus K_2)\\ \cdots R(I_{2\ell-1} \setminus K_{2\ell-1}, I_{2\ell} \setminus K_{2\ell}) \cdot R(I_{2\ell} \setminus K_{2\ell},I_1 \setminus K_1)].\end{multline} We now investigate when does a term in the expansion above contribute a non-zero value to the LHS.

First consider the case of $Q = \sum_{s,K} e^{i}_{s,K}$. Fix $i = 1$, the other cases are similar. $e^{1}_{s,K} (I,J)$ is a function of the variables $g_b$ for $b \in \sE_{ext}(\{s\},I)$ (whenever $I \cap J = K$). Writing $e^1_{s,K}(I,J)$ as a polynomial in $g_b$ for $b \in \sE_{ext}(\{s\},I)$ we observe that: $\E_{G}[e^1_{s,K}] = 0$ and that all coefficients of degree $j$ polynomials are equal for every $j$ and at most $2^{d}$. We decompose the matrix $e^1_{s,K}$ so that for each $(I,J)$ we only pick one of the (corresponding) terms in the polynomial expansion of each entry in $g_b$ described above. 

In the expansion of the expected trace above, then, for any such matrix that appears in the decomposition, each term is a (scaled) product of $g_b$ variables for some $b$. For the expectation of such a term to be non zero, each $g_b$ must occur an even number of times. Consider the case of a matrix in the decomposition of $e^1_{s,K}$  with entries being some (corresponding) monomials of degree $1$ for concreteness. The case of other matrices is similar. Fix any term. Let $T$ be the set of all vertices that appear in some $I_{i}$ for $i \leq 2\ell$ and are part of some $b$ for $g_b$ that appears in the term. Then, by a random partitioning argument based on \pref{itm:partition-fancy}, we can first  assume that all $\{s_1, s_2, \ldots, s_{2\ell}\}$ doesn't intersect $T$ (and lose a logarithmic factor in the spectral norm upper bound). It is now immediate that every $s \in \{s_1, s_2, \ldots, s_{2\ell} \}$ for a term to have non zero expectation (otherwise, some $g_b$ will not appear twice in the product sequence describing the term). Thus, the number of distinct vertices in any term with a non zero expectation is $\ell + (d-1) 2\ell = (d-\frac{1}{2}) (2\ell)$. The number of possible terms with the same set of distinct vertices is at most $(2\ell)!$. Finally, each term contributes at most $2^{d}$. Thus, we can upper bound the expected trace of such a matrix by $2^{O(d)} \cdot (2\ell)! \cdot n^{d-\frac{1}{2}} \poly \log {(n)}.$ We now do the standard step of using the Markov's inequality to obtain an upper estimate on $\Tr((QQ^{\dagger})^{2\ell})$ that holds with probability $1-1/n$, take $(2\ell)^{th}$ root and finally use $\ell = O(\log{(n)}$ to obtain the desired bound. Finally, matrix in the decomposition based on polynomial expansion in $g_b$ of the entries of $e^1_{s,K}$ can be similarly upper bounded in spectral norm completing the analysis of this case by an application of the triangle inequality. 

The case of the matrix $ \sum_{s \in [n], K:|K| = q} e^i_{s,K} - \tilde{e}^i_{s,K}$ is similar, except that it is now a $(q+1)^{th}$ lift of some matrix. Repeating the reasoning as above, (except each entry $(I,J)$ being described by sets of size $d-2$ instead of $d-1$), we obtain the stated upper bounded in the statement of the lemma. 

Finally, we now proceed to the analysis for the third case. We first split the Hadamard product into a sum and analyze each term separately. We sketch the main difference from above for the combinatorial picture for the term $Q = \sum_{s,K} e^1_{s,K}e^2_{s,K}$ here. The other arguments are similar. We again write the $\E[ \Tr((QQ^{\dagger})^{\ell})]$ as a sum over $K_1, \ldots, K_{2\ell}, s_1, s_2, \ldots, s_{2\ell}$ and $I_1, I_2, \ldots, I_{2\ell}$ as above. By expanding out each of $e^1_{s,K}$ and $e^2_{s,K}$ as polynomials in appropriate $g_b$ variables, we observe that the least degree of any term in the expansion is at least $2$ (i.e. involves a product of at least $2$ $g_b$s). Decomposing the matrix so that each entry gets the corresponding monomial in the polynomial expansion in terms of $g_b$s as above, we now consider the matrix with the term involving a scaled product of exactly $2$ $g_b$s. The other matrices in the decomposition can be handled similarly. By a random partitioning argument (\pref{itm:partition-fancy}) as before, we can assume that $\{s_1, s_2, \ldots s_{2\ell}\}$ are disjoint from $T$, the subset of vertices from $I_i$ for $i \leq 2\ell$ that appear in $b$ for some $g_b$ and lose a $O(\log{(n)})$ factor in the estimate on the norm. Again reasoning as before that for a non zero expectation, the term must have each $g_b$ appear an even number of times. Since each $g_b$ must appear at least twice (whenever it appears at least once), the number of distinct $g_b$s that appear in a term that contributes non zero expectation is at most $2\ell$. On the other hand, we now observe that the $\{s_1, s_2, \ldots, s_{2\ell}\} \cup T$ are all connected via a path using $b$ from $g_b$ that appear in the term and thus, the number of distinct elements in $\{s_1, s_2, \ldots, s_{2\ell}\} \cup T$ are at most $2\ell+1$. Thus, a non zero contributing term has a total of at most $2\ell(d-2) + 2\ell = 2\ell(d-1)$. Arguing in the standard way as done above, this now yields a norm estimate of $\tilde{O}(2^{O(d)} \cdot n^{d-1})$ as required. 

\end{proof}

Next, we show find out the spaces $\tilde{e}^i_{s,K}$ contribute to. Towards this, for each $i$, we define : $$\tilde{e}^i_{s,q} \eqdef \sum_{K: |K| = q} \tilde{e}^i_{s,K}.$$ Then, we have:

\begin{lemma} \label{lem:Delta-Kernel}
For any $t ,t' > q$, we have:

\begin{enumerate}

\item $$ \Pi_t \tilde{e}^2_{s,q} = 0,$$
\item $$ \tilde{e}^1_{s,q} \Pi_{t} = 0,$$
\item $$ \Pi_{t} \tilde{e}^3_{s,q} \Pi_{t'} = 0.$$
\end{enumerate}

\end{lemma}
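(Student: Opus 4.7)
The plan is to prove each part by identifying the relevant row or column space of $\tilde{e}^i_{s,q}$ as a subspace of $V_0 \oplus V_1 \oplus \cdots \oplus V_q$; the kernel relations then follow from the mutual orthogonality of the eigenspaces $V_0, \ldots, V_d$ of \pref{fact:johnsondecomp}. The main tool is \pref{lem:eigenbasisstructure}, which recognizes any vector of the form $v(I) = \sum_{K \subseteq I,\,|K|=t} u_K$ as an element of $V_0 \oplus \cdots \oplus V_t$.

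For part (1), I will fix $J$ and study the column $v_J(I) = \tilde{e}^2_{s,q}(I,J)$. Summing the piecewise definition of $\tilde{e}^2_{s,K}$ over all $K$ of size $q$ with $K \subseteq I \cap J$ collapses to
$$v_J(I) \;=\; \alpha(J, s, G) \cdot \1[s \notin I \cup J] \cdot \binom{|I \cap J|}{q},$$
where $\alpha(J,s,G) \in \{2^{d-q}-1,\,-1\}$ depends only on $J$, $s$, and the graph. Using the identity $\binom{|I \cap J|}{q} = \sum_{K \subseteq J,\,|K|=q} \1[K \subseteq I]$ and invoking \pref{lem:eigenbasisstructure} with $t = q$ and $u_K = \1[K \subseteq J]$, the factor $\binom{|I \cap J|}{q}$ alone lies in $V_0 \oplus \cdots \oplus V_q$. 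Part (2) then follows by the analogous symmetric argument applied to the rows of $\tilde{e}^1_{s,q}$.

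For part (3), I exploit the especially clean rank structure of $\tilde{e}^3_{s,q}$. Writing $\1[K \subseteq I \cap J] = \1[K \subseteq I] \cdot \1[K \subseteq J]$ yields
$$\tilde{e}^3_{s,q}(I,J) \;=\; \1[s \notin I \cup J] \sum_{K:\,|K|=q} \beta(K, s, G)\, \1[K \subseteq I]\, \1[K \subseteq J],$$
which up to the $\1[s \notin I \cup J]$ factor is the outer-product sum $\sum_K \beta(K,s,G)\, u_K u_K^\top$ with $u_K(I) = \1[K \subseteq I] \in V_0 \oplus \cdots \oplus V_q$. This gives both the row- and column-space bounds at once.

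The main obstacle in all three parts is the factor $\1[s \notin I]$ (and symmetrically $\1[s \notin J]$), which through the decomposition $\1[s \notin I] = 1 - \1[s \in I]$ introduces inclusion-exclusion terms $\1[K \cup \{s\} \subseteq I]$ that formally inhabit $V_0 \oplus \cdots \oplus V_{q+1}$ rather than $V_0 \oplus \cdots \oplus V_q$. I plan to dispatch this by combining the summands across the sum over $K$ and exploiting the cancellation of their level-$(q+1)$ components: either via a direct calculation leveraging the symmetry of $\sum_{K \subseteq J,\,|K|=q}$, or by recasting the analysis in the patterned-matrix framework of \pref{sec:patterned} with $s$ playing the role of a distinguished vertex, so as to invoke \pref{lem:kernelsymoperators}.
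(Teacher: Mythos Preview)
Your approach is the same as the paper's: reduce to \pref{lem:eigenbasisstructure} by showing that the columns (equivalently, the image) of $\tilde e^{\,2}_{s,q}$ lie in $V_0 \oplus \cdots \oplus V_q$. The paper's write-up is terser than yours --- it computes $(\tilde e^{\,2}_{s,q} v)_I$, asserts that the inner sum $\sum_J \tilde e^{\,2}_{s,K}(I,J)\, v_J$ ``depends only on $K$,'' and sets $u_K$ accordingly --- thereby glossing over exactly the $\1[s \notin I]$ obstacle you flag. In that sense you have been more careful than the paper.

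However, your proposed fixes do not close the gap, and with the definitions as written there is no fix to be had. There is no cancellation of the level-$(q{+}1)$ component: take $d=2$, $q=1$, $n=4$, $s=1$, and the column $J=\{2,3\}$; the column vector $I \mapsto \tilde e^{\,2}_{s,q}(I,J)$ then equals $\alpha \cdot (0,0,0,2,1,1)$ in the ordering $\{1,2\},\{1,3\},\{1,4\},\{2,3\},\{2,4\},\{3,4\}$, with $\alpha \in \{1,-1\}$. Forcing the first three coordinates to vanish in the ansatz $w_{\{i,j\}}=a_i+a_j$ gives $a_2=a_3=a_4=-a_1$, hence $a_2+a_3=a_2+a_4=a_3+a_4$, contradicting $(2\alpha,\alpha,\alpha)$. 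So this column has a nonzero $V_2$-component and $\Pi_2\,\tilde e^{\,2}_{1,1}\neq 0$. The patterned-matrix route via \pref{lem:kernelsymoperators} does not help either, since $s$ is a vertex of $[n]$ rather than a pattern index in $[d]$. The obstacle you identified is therefore not a technicality to be ``dispatched'' but a genuine issue with the lemma (or with the stated definition of $\tilde e^{\,i}_{s,K}$, which perhaps should zero out only on $s\in J$ rather than $s\in I\cup J$); the paper's own proof shares this gap.
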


\begin{proof}
We only do the proof for the first case, the others are similar. The idea is again to use \pref{lem:eigenbasisstructure}. Let $v \in \R^{\nchoose{d}}$. We will show that $ \tilde{e}^2_{s,q} \cdot v = w$ such that there exists a vector $u \in \R^{\nchoose{q}}$ such that for every $I$, $w_{I} = \sum_{I' \subseteq I \text{, }  |I'| = q} u_{I'}.$ An application of \pref{lem:eigenbasisstructure} then completes the proof of part (1).

We have:
\begin{align*}
w_I &= \sum_{J \in \nchoose{d}} \tilde{e}^2_{s,q} (I,J) v_J \\
&= \sum_{K:|K| = q} \sum_{J \in \nchoose{d}} \tilde{e}^2_{s,K} (I,J) v_J\\
&= \sum_{K: K \subseteq I \text{, } |K| = q} \sum_{J \in \nchoose{d}} \tilde{e}^2_{s,K} (I,J) v_J.
\end{align*}

The proof now follows by observing that $\sum_{J \in \nchoose{d}} \tilde{e}^2_{s,K} (I,J) v_J$ depends only on $K$. Thus, we set $u_{K}$ for every $|K| = q$ by $u_K = \sum_{J \in \nchoose{d}} \tilde{e}^2_{s,K} (I,J) v_J$ for any $I$ such that $K \subseteq I$.

\end{proof}

We can now complete the proof of \pref{lem:Delta-bound} using \pref{lem:Delta-all-norms}and \pref{lem:Delta-Kernel}.
\begin{proof}[Proof of \pref{lem:Delta-bound}]
For each $q$, let $A_q$ be the expression given by \pref{lem:Delta-decomp} to approximate the entries of $\Delta_q$. Then, we have for any $i,j$:

\begin{align*}
\Pi_i \Delta \Pi_j &= \sum_{q = 1}^{d} \Pi_i \Delta_q \Pi_j\\
&= \sum_{q = 1}^{d} \Pi_i \left( \Delta_q -A_q \right)  \Pi_j + \sum_{q = 1}^{d} \Pi_i A_q \Pi_j \\
\end{align*}

By a simple application of \pref{fact:Gershgorin}, it is easy to observe that $\| \Delta_q -A_q \| \leq 2^{O(d)} \tilde{O}(\omega^{2d-q} \cdot n^{d-1}).$
An application of \pref{lem:Delta-Kernel} and \pref{lem:Delta-all-norms} yields that the terms that contribute to $\Pi_{i} A_q \Pi_j$ have norm at most $2^{O(d)} \tilde{O}(\omega^{2d-\min\{i,j\}} n^{d-\frac{1}{2}}) + 2^{O(d)} \tilde{O}( \omega^{2d-q} n^{d-1}).$ This completes the proof.

\end{proof}

\section{Analyzing Deviations for the Corrected Degree-4 Operator}
\label{sec:deg4-psd}
The goal of this section is to prove \pref{lem:deg4-PSD-claim}, that is, to show that $\cN' \succ 0$. The proof is organized into $5$ main claims that we next present.

We first show that it is enough to prove PSDness of a somewhat simplified matrix $\cN$.
$\cN$ is produced by two simplifications to $\cN'$.
First, to take care of the zero rows as in \pref{sec:degree-d}, we work with a matrix where we ``fill in'' the entries carefully.
Second, $\cN$ and $\M'$ are equal on all entries $(I,J)$ such that $|I \cup J| \leq 3$; in other words, the correction affects only the homogeneous degree $4$ parts.
Specifically, let $\cR \in \R^{\nchoose{2} \times \nchoose{2}}$ be defined so that:
\[
\cR(I,J) = \begin{cases} \gamma (\tfrac {\omega} n)^5 C_4 \cdot \sum_{s \in [n]} \Pi_{a \in I \cup J} r_s(a) & \text{ if $|I \cup J| = 4$ and $\sE_{ext}(I,J) \subseteq G$}\\
  0&  \text{ otherwise}\mper
  \end{cases}
\]
(Recall that $C_4$ is the number of $4$-cliques in $G$.)
We then set $$\cN = \M' + \cR\mcom$$ where $\cM'$ is the filled in MPW matrix (\pref{def:MPW-Mprime}).

Our first claim shows that it is enough to prove PSDness of $\cN$:
\begin{lemma}
  \label{lem:N4prime}
For any $\gamma,c$ there is $\omega_0 = \Omega(\sqrt {n/\log(n)c \gamma})$ so that for any $\omega \leq \omega_0$ with probability $1 - O(n^{-10})$ if $\cN \succ \omega^2 n^2 / c \cdot I$ then $\cN' \succ 0$.
\end{lemma}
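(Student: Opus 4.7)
My plan is to express $\cN'$ as a small perturbation of a rescaling of $\cN$ (restricted to the clique-supported subspace) and then apply a standard perturbation argument to transfer the strict positivity.

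The first step is to restrict attention to the clique subspace. Since $\tE$ satisfies the constraints $\{x_i x_j = 0\}_{i \not\sim j \text{ in } G}$ by \pref{lem:deg4-constraints}, every row and column of $\cN'$ indexed by a non-clique $I \subset [n]$ of size $2$ is zero. Let $Z$ denote the diagonal $\{0,1\}$ matrix with $Z_{I,I} = 1$ iff $I$ is a $2$-clique in $G$, so that $\cN' = Z \cN' Z$. It therefore suffices to prove that $v^\top \cN' v \geq 0$ for every $v$ supported on clique indices.

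The second step is to identify the discrepancy between $\cN$ and $\cN'$ on the clique subspace. On degree-$4$ clique entries $|I \cup J| = 4$, using $\cM'(I,J) = C_4 \cdot \tE_0[x_{I \cup J}]$ and $\cR(I,J) = C_4 \cdot \cL[x_{I \cup J}]$ (both of which hold whenever $I, J$ are cliques), we obtain $\cN(I,J) = C_4 \cdot \cN'(I,J)$ exactly. For $|I \cup J| \leq 3$, however, $\cN(I,J) = C_4 \cdot \tE_0[x_{I \cup J}]$ while $\cN'(I,J) = \tE[x_{I \cup J}]$ picks up the propagation of the correction $\cL$ through the derivation formulas of \pref{def:corrected-moms}. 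I therefore write $C_4 \cdot \cN' = \cN + C_4 \cdot P$ on clique indices, where $P$ is supported on entries with $|I \cup J| \leq 3$ and each entry equals $\tE[x_{I \cup J}] - \tE_0[x_{I \cup J}]$. Expanding the formulas in \pref{def:corrected-moms} shows that $P(I,J)$ equals a normalized sum $\frac{1}{(\omega'-k)\cdots(\omega'-3)} \sum_{\ell_1,\ldots,\ell_{4-k}} \cL[x_{I \cup J \cup \ell_1 \cup \cdots \cup \ell_{4-k}}]$ up to a lower-order error from the $\omega$-vs-$\omega'$ mismatch that is controlled by \pref{lem:deg4-constraints}.

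The third step, which is the main obstacle, is a spectral norm bound $\|P\| \leq \tilde O(\gamma \omega^4 / n^3)$. I decompose $P = P_3 + P_2 + P_1 + P_0$ by the value of $|I \cup J|$. For $P_3$, item (2) of \pref{lem:deg4-scalar} yields the entrywise bound $|P_3(I,J)| \lesssim \gamma \omega^4 \log n / n^4$; a crude Frobenius bound across the $O(n^3)$ nonzero entries is not tight enough, so instead I would exploit the structure $P_3(I,J) \propto \sum_s r_s(a)r_s(b)r_s(c)\, T_s(a,b,c)$ (where $\{a\} = I \cap J$, $\{b,c\} = I \triangle J$ and $T_s$ is a concentrated scalar) and apply the trace-moment / partitioning machinery of \pref{sec:spectralnorm} in the spirit of \pref{lem:correction-disjoint}, treating the inner sum as an almost-factorable rank-$n$ construction. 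The $P_2, P_1, P_0$ pieces are handled analogously using items (1), (3) of \pref{lem:deg4-scalar}, and each contribution is dominated by $\|P_3\|$.

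Given such a bound on $\|P\|$, the conclusion is routine. On the clique subspace, $\cN' = \cN/C_4 + P$ with $\cN/C_4 \succ (\omega^2/(c'n^2)) I$ for some $c' = O(c)$ (using $C_4 = \Theta(n^4)$ with high probability). Choosing $\omega_0 = \Omega(\sqrt{n/((\log n)c\gamma)})$ ensures $\|P\| \leq \tilde O(\gamma \omega^4 / n^3) < \omega^2/(c'n^2)$, so that $v^\top \cN' v > 0$ for every nonzero $v$ supported on cliques. Combined with $\cN' = Z \cN' Z$, this yields $\cN' \succeq 0$ with probability $1 - O(n^{-10})$.
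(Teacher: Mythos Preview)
Your approach is essentially the same as the paper's: restrict to the clique-indexed principal submatrix, identify the discrepancy (the paper calls it $\Err$, you call it $C_4 \cdot P$) on entries with $|I\cup J|\le 3$, bound its spectral norm, and finish by a perturbation argument.

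The one substantive difference is in how you propose to bound $\|P_3\|$. You correctly note that a Frobenius bound over the $O(n^3)$ nonzero entries is too weak, and then reach for the trace-moment / partitioning machinery of \pref{sec:spectralnorm}. This would work, but it is unnecessary: the paper uses the Gershgorin row-sum bound (\pref{fact:Gershgorin}) instead. Each row $I$ of $P_3$ has only $O(n)$ nonzero entries (those $J$ with $|I\cap J|=1$), so the entrywise estimate $|P_3(I,J)|\le O(\gamma\omega^4\log n/n^4)$ from \pref{lem:nondisjoint-entry} already gives $\|P_3\|\le O(\gamma\omega^4\log n/n^3)$ directly. The diagonal piece $P_2$ is handled the same way, entrywise. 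No random-matrix argument is needed here.

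A minor point: since $I,J\in\nchoose{2}$, we have $|I\cup J|\in\{2,3,4\}$, so your $P_1$ and $P_0$ are vacuous and the decomposition is just $P=P_3+P_2$.
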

Next, we decompose the matrix $\cN$ appropriately and study the spectrum of each of the pieces.
Towards this, we define $\tilde \cR_0 \in \R^{\nchoose{2} \times \nchoose{2}}$ as follows.
For every $I,J$:
\[
  \tilde \cR_0(I,J) =
  \frac 1 {16} \sum_{s \in [n]} \gamma (\tfrac {\omega} n)^5 C_4 \cdot \Pi_{a \in (I \cup J) \setminus (I \cap J)} r_s(a)
\]
Recall that from  \pref{def:Ddecomp}, we know that $\cM' = E+ L + \Delta$. By writing $\cR = \tilde \cR_0 + (\cR  - \tilde \cR_0)$, we obtain the decomposition:
\begin{equation}
\cN =  (E + \tilde \cR_0) + L + \Delta + (\cR - \tilde \cR_0). \label{eq:deg4-decomp-N}
\end{equation}
In what follows, we will analyze each piece of the decomposition above separately on a carefully constructed decomposition of $\R^{\nchoose{2}}$.
We now proceed and construct this decomposition.
Recall $\R^{\nchoose{2}} = V_0 \oplus V_1 \oplus V_2$ where $V_0, V_1,V_2$ are the eigenspaces of the matrix $E = \E[ \M'] = \E[ \cN]$ from \pref{fact:johnsondecomp} and \pref{lem:Eeigenvals}.
Let $r_s \in \R^n$ be as described in \pref{def:corrected-moms}.
With slight abuse of notation, we write $r_s^{\otimes 2}$ for the vector in $\R^{\nchoose{2}}$ such that for every $I \in \nchoose{2}$, $$r_s^{\otimes 2} (I) = \Pi_{i \in I} r_s(i).$$

We now define a new decomposition by splitting $V_2$ further and write $\R^{\nchoose{2}} = W_0 \oplus W_1 \oplus W_{1.5} \oplus W_2$, where:
\begin{align}
    W_0 & = V_0\mcom \notag \\
    W_1 & = V_1\mcom \notag \\
    W_{1.5} & \text{ s.t. } W_{1.5} \perp (W_0 \oplus W_1) \text{ and } W_0 \oplus W_1 \oplus W_2 = V_0 \oplus V_1 \oplus \Span \{ r_s^{\oplus 2} \} \notag \\
    W_2 & = (W_1 \oplus W_2 \oplus W_3)^\perp\mper  \label{def:new-decomp}
\end{align}
Let $\Pi_{W_a}$ be the projector to $W_a$ for every $a \in \{0,1,1.5,2\}$. 

We are now ready to analyze the spectrum of each piece from \pref{eq:deg4-decomp-N}.
First, we analyze the spectrum of $(E+ \tilde \cR_0)$:
\begin{lemma}
  \label{lem:deg4-ER}
  For every $\gamma$ there is $\omega_0 = \Omega(\sqrt{\gamma n}/ \log(n)^2)$ so that for $\omega \leq \omega_0$,
  with probability $1 - O(n^{-10})$,
  $$(E + \tilde \cR_0) \succeq \Omega(\omega^4n^2) \cdot \Pi_{W_0} + \Omega(\omega^3n^2) \cdot \Pi_{W_1} + \Omega(\gamma \omega^5 n) \cdot \Pi_{W_{1.5}} + \Omega(\omega^2n^2) \cdot \Pi_{W_2}.$$
\end{lemma}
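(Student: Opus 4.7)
The plan is to analyze $E$ and $\tilde \cR_0$ separately on the decomposition $\R^{\nchoose{2}} = W_0 \oplus W_1 \oplus W_{1.5} \oplus W_2$ and then combine using \pref{lem:blockpsd}. Since $W_0 = V_0$ and $W_1 = V_1$, while $W_{1.5}, W_2 \subseteq V_2$ are orthogonal, \pref{lem:Eeigenvals} at $d=2$ immediately gives
\[
  E = \lambda_0 \Pi_{W_0} + \lambda_1 \Pi_{W_1} + \lambda_2 \Pi_{W_{1.5}} + \lambda_2 \Pi_{W_2}
\]
with $\lambda_j = \Omega(n^2 \omega^{4-j})$ and no cross terms, because $V_0, V_1, V_2$ are $E$-invariant. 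Thus $E$ already delivers the required eigenvalues on $W_0, W_1, W_2$; the only place the correction $\tilde \cR_0$ has to do real work is on $W_{1.5}$, where it must lift $\lambda_2 = \Omega(\omega^2 n^2)$ up to $\Omega(\gamma \omega^5 n)$.

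The first step is to write $\tilde \cR_0 = \alpha P + X$, where $P = \sum_{s \in [n]} r_s^{\otimes 2} (r_s^{\otimes 2})^\top$ and $\alpha = \gamma \omega^5 C_4 / (16 n^5) = \Theta(\gamma \omega^5 / n)$, using the routine concentration $C_4 = (1 \pm o(1)) \binom{n}{4} 2^{-6}$ when $G \sim G(n, 1/2)$. A direct entrywise comparison---using $r_s(s) = 0$ and $r_s(i)^2 = 1$ for $i \neq s$---shows $X$ is supported on entries with $|I \cap J| \geq 1$. Its $|I \cap J| = 1$ component can be written as $\alpha \sum_k R_k R_k^\top$, where each $R_k \in \R^{\nchoose{2}}$ is supported on the sets $I$ containing $k$; since the supports for different $k$ are disjoint, this sum has operator norm $\max_k \|R_k\|^2 = O(n)$, giving $\|X\| = O(\alpha n) = O(\gamma \omega^5)$. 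Crucially, by the very definition of $W_{1.5}$, $\mathrm{range}(P) \subseteq W_0 \oplus W_1 \oplus W_{1.5}$, so $P \Pi_{W_2} = \Pi_{W_2} P = 0$, which will make the cross-terms involving $W_2$ essentially trivial.

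The heart of the argument is the lower bound $\Pi_{W_{1.5}} P \Pi_{W_{1.5}} \succeq \Omega(n^2) \Pi_{W_{1.5}}$ with high probability over $G$. Writing $P = UU^\top$ with $U \in \R^{\nchoose{2} \times n}$ having columns $r_s^{\otimes 2}$, and noting $W_{1.5} = \mathrm{range}(\Pi_{V_2} U)$, this is equivalent to $\sigma_{\min}(U^\top \Pi_{V_2} U) = \Omega(n^2)$. Now $U^\top \Pi_{V_2} U = G - U^\top \Pi_{V_0 \oplus V_1} U$, where $G$ is the Gram matrix of the $r_s^{\otimes 2}$. The diagonal of $G$ is $\binom{n-1}{2} = \Theta(n^2)$. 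For $s \neq t$, the identity $G_{st} = \tfrac{1}{2}\bigl((\sum_{i \neq s, t} r_s(i) r_t(i))^2 - (n - 2)\bigr)$ expresses $G_{st}$ as a centered quadratic in $n - 2$ i.i.d.\ Rademacher random variables; by Hoeffding and a union bound, $|G_{st}| = \tilde O(n)$ for all $s \neq t$ with high probability, and a trace-moment argument in the spirit of \pref{sec:spectralnorm} yields $\|G - \mathrm{diag}(G)\| = \tilde O(n^{3/2})$. A parallel moment calculation shows $\|U^\top \Pi_{V_0 \oplus V_1} U\| = \tilde O(n)$ (its trace is at most $(n-2) \cdot n$ and its rank is at most $n$). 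Combining, $\sigma_{\min}(U^\top \Pi_{V_2} U) \geq (1 - o(1)) n^2/2$, so $\Pi_{V_2} U$ has full column rank, $W_{1.5}$ has dimension $n$, and the desired lower bound holds.

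To finish I assemble the blocks. On the $W_{1.5}$ diagonal, the total contribution is $\lambda_2 + \alpha \cdot \Omega(n^2) - O(\|X\|) = \Omega(\gamma \omega^5 n)$. On $W_0, W_1, W_2$, the $\tilde \cR_0$ contribution has operator norm at most $\|\tilde \cR_0\| \leq \alpha \|P\| + \|X\| = O(\gamma \omega^5 n)$ (using $\|P\| = O(n^2)$ from the same Gram calculation), and this is dominated by $\lambda_0, \lambda_1, \lambda_2$ once $\omega \leq O(\sqrt{\gamma n}/\log(n)^2)$. For the off-diagonal blocks $\Pi_{W_a} \tilde \cR_0 \Pi_{W_b}$ with $a \neq b$: those with $\{a, b\} \ni W_2$ collapse to $\Pi_{W_a} X \Pi_{W_b}$ of norm $O(\gamma \omega^5)$ (using $P \Pi_{W_2} = 0$), and the rest are bounded by $\alpha \|P\| = O(\gamma \omega^5 n)$; comparing each against $\sqrt{\lambda_a \lambda_b}$, with $\lambda_{W_{1.5}} = \Omega(\gamma \omega^5 n)$, verifies the off-diagonal hypothesis of \pref{lem:blockpsd} in the same regime of $\omega$. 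The main obstacle will be producing the sharp lower bound on $\sigma_{\min}(\Pi_{V_2} U)$: one has to quantitatively control the overlap of $\mathrm{range}(U)$ with $V_0 \oplus V_1$ so that the $U^\top \Pi_{V_0 \oplus V_1} U$ perturbation does not eat into $\sigma_{\min}(G)$, and the polylogarithmic slack in the Hoeffding union bound is precisely what drives the $\omega_0 = \Omega(\sqrt{\gamma n}/\log(n)^2)$ threshold.
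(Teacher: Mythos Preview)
Your plan is close to the paper's and largely sound: the paper invokes \pref{lem:cor-eval-lb} to get $P \succeq \Omega(n^2)\,\Pi_{\Span\{r_s^{\otimes 2}\}}$ and then controls the $W$--$W_{1.5}$ cross terms using the global bound $\|\tilde\cR_0\| = \tilde O(\gamma\omega^5 n)$ via \pref{lem:correction-disjoint}; you instead work directly with $\Pi_{W_{1.5}} P \Pi_{W_{1.5}}$ and reduce to $\sigma_{\min}(U^\top \Pi_{V_2} U)$. These are essentially equivalent routes, and your explicit handling of $X = \tilde\cR_0 - \alpha P$ is a useful refinement the paper leaves implicit.

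The genuine gap is your justification of $\|U^\top \Pi_{V_0\oplus V_1} U\| = \tilde O(n)$. For a PSD matrix, ``trace at most $(n-2)n$ and rank at most $n$'' only yields operator norm at most the trace, i.e.\ $O(n^2)$, which is useless here since $\sigma_{\min}(G) = \Theta(n^2)$. To actually get $\tilde O(n)$ you must use the structure of $V_0\oplus V_1$: by \pref{lem:eigenbasisstructure} its elements are $v^{(u)}_{\{i,j\}} = u_i+u_j$, and a direct computation gives
\[
  \iprod{r_s^{\otimes 2}, v^{(u)}} \;=\; S_s\,\iprod{u,r_s} \;-\; \iprod{\mathbf 1 - e_s,\,u},
  \qquad S_s \defeq \tsum_i r_s(i)\,.
\]
Combining $|S_s| = \tilde O(\sqrt n)$ w.h.p., $\|v^{(u)}\|^2 \ge (n-2)\|u\|^2$, and $\|A\| = O(\sqrt n)$ for the $\pm 1$ adjacency matrix $A$ (whose columns are the $r_s$) yields $\|\Pi_{V_0\oplus V_1} U\| = \tilde O(\sqrt n)$, hence $\|U^\top\Pi_{V_0\oplus V_1}U\| = \tilde O(n)$ as you want. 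This is not hard, but it is a real computation, not a trace--rank count. Two smaller points: the supports of the $R_k$ are \emph{not} disjoint (each pair $\{i,j\}$ lies in both $\supp R_i$ and $\supp R_j$), though $\|X\|\le 2\alpha(n-1)$ still follows; and on the $W_2$ diagonal you must use $P\Pi_{W_2}=0$ to get $\|\Pi_{W_2}\tilde\cR_0\Pi_{W_2}\|\le\|X\| = O(\gamma\omega^5)$, since the cruder $\|\tilde\cR_0\| = O(\gamma\omega^5 n)$ is \emph{not} dominated by $\lambda_2 = \Omega(\omega^2 n^2)$ in the target regime.
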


Next, we analyze the spectrum of $L$.
Here at last we see the main technical improvement in these corrected moments---the cross term between $V_1$ and $V_2$ has become a cross term between $W_1$ and $W_2$ and has much-reduced norm.
\begin{lemma}
  \label{lem:deg4-main}
  For any $\omega =\tO(\sqrt{n})$ there is $p = \polylog n$ so that, with probability $1 - O(n^{-10})$ the following bounds hold.
  \begin{enumerate}
    \item Diagonal Terms:
    \begin{align*}
      \|\Pi_{W_a} L \Pi_{W_a}\| & \leq O(p\omega^4 n^{3/2}) \quad \text{for $a \in \{0,1,1.5\}$}\\
      \|\Pi_{W_2} L \Pi_{W_2}\| & \leq O(p\omega^4 n)\mper
    \end{align*}
    \item Off-Diagonal Terms:
     \begin{align*}
      \| \Pi_{W_a} L \Pi_{W_b} \| & \leq O(p\omega^4 n^{3/2}) \quad \text{for $a,b \in \{0, 1, 1.5, 2\}$}\\
      \| \Pi_{W_{a}} L \Pi_{W_2} \| & \leq O(p\omega^3 n^{3/2}) \quad \text{for $a \in \{1,1.5\}$}\mper
    \end{align*}
  \end{enumerate}
\end{lemma}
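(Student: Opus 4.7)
The plan is to combine the Fourier/pattern decomposition of $L$ from \pref{sec:D-decomp} with the kernel properties of \pref{lem:kernelsymoperators} and the spectral norm bounds of \pref{lem:fourier-norm-mpw-disjoint}, together with a direct identification of the row space of the ``left-cherry'' pieces of $L$ with $\Span\{r_s^{\otimes 2}\}$---which by construction of $W_{1.5}$ is orthogonal to $W_2$. The diagonal bounds for $a \in \{0,1,1.5\}$, the bound $\|\Pi_{W_2} L \Pi_{W_2}\| \leq \tO(\omega^4 n)$, and the general off-diagonal bound $\tO(\omega^4 n^{3/2})$ follow directly from \pref{lem:dev-d} specialized to $d = 2$: since $W_0 = V_0$, $W_1 = V_1$, and $W_{1.5}, W_2 \subseteq V_2$, each block $\|\Pi_{W_a} L \Pi_{W_b}\|$ is dominated by the corresponding $\|\Pi_{V_i} L \Pi_{V_j}\|$. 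The substantive content is the improved cross-term bound for $a \in \{1,1.5\}$, which I will establish via the stronger statement $\|L \Pi_{W_2}\| \leq \tO(\omega^3 n^{3/2})$ (combined with symmetry of $L$).

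I decompose $L = L_0 + L_1 + L_2$ by intersection size, and further $L_0 = 2^{-4} \sum_B L_0^B$ over bipartite patterns $B$ on $[2]\times[2]$. By \pref{claim:norm-bounds}, $\|L_1\|$ is already $\tO(\omega^3 n^{3/2})$, and $L_2$ is diagonal of magnitude $O(\omega^2)$, so both satisfy the target; the focus is on $L_0$. Patterns $B$ containing a $2$-matching (the two perfect matchings, four $3$-edge graphs, and $K_{2,2}$) have $\|L_0^B\| \leq \tO(\omega^4 n)$ by \pref{lem:fourier-norm-mpw-disjoint}(1), which is $\leq \tO(\omega^3 n^{3/2})$ in the regime $\omega \leq \tilde O(\sqrt n)$. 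For the remaining patterns---four single edges, two left-cherries ($|B_\ell|=1$, $|B_r|=2$), and two right-cherries---the bare norm $\tO(\omega^4 n^{3/2})$ from \pref{lem:fourier-norm-mpw-disjoint}(2) is too weak, so I will instead show that the appropriately symmetrized filled-in sum $\tilde L^{\text{sym}}$ satisfies $\tilde L^{\text{sym}} \Pi_{W_2} = 0$ exactly, using \pref{claim:Lltilde} to absorb the filled-in correction $\|L - \tilde L\| \leq \tO(\omega^4 n)$.

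For single-edge and right-cherry patterns, the vanishing is immediate from \pref{lem:kernelsymoperators}: the right-symmetrization satisfies $|B_r| + q \leq 1$, so part (2) of the lemma gives $\tilde L^{\text{right-sym}} \Pi_{V_2} = 0$, and $W_2 \subseteq V_2$. The left-cherry is the delicate case---\pref{lem:kernelsymoperators} only yields $\Pi_{V_2} \tilde L^{\text{left-sym}} = 0$, a \emph{left}-kernel statement that does not imply vanishing on the right. The key new observation, which is the whole motivation for carving $W_{1.5}$ out of $V_2$, is that unrolling $\tilde L^{\text{left-sym}}(I,J)$ for the pattern $\{(1,1),(1,2)\}$ summed with its left-conjugate $\{(2,1),(2,2)\}$, the row indexed by $I = \{i_1,i_2\}$ equals (up to a scalar) $r_{i_1}^{\otimes 2} + r_{i_2}^{\otimes 2}$ as a vector in $\R^{\binom{n}{2}}$---the convention $g_{aa}=0=r_a(a)$ makes the identification valid across all of $\R^{\binom{n}{2}}$. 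Thus the row space of $\tilde L^{\text{left-sym}}$ is contained in $\Span\{r_s^{\otimes 2}\}$, and since $W_0 \oplus W_1 \oplus W_{1.5} \supseteq \Span\{r_s^{\otimes 2}\}$ by the definition of $W_{1.5}$, we conclude $W_2 \perp \Span\{r_s^{\otimes 2}\}$ and hence $\tilde L^{\text{left-sym}} \Pi_{W_2} = 0$. This left-cherry argument is the main obstacle: it is exactly the piece whose cross term between $V_1$ and $V_2$ defeats the simple moments for $\omega \gg n^{1/3}$ in \cite{DM15}, and the definition of $W_{1.5}$ is engineered so that the row space of the left-cherry is orthogonal to $W_2$ even though not to $V_2$---the large surviving $W_1$-vs-$W_{1.5}$ cross term is then dominated by the positive eigenvalue on $W_{1.5}$ supplied by the correction $\tilde \cR_0$ in \pref{lem:deg4-ER}.
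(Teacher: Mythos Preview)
Your proposal is correct and follows essentially the same approach as the paper: the easy bounds come from \pref{lem:dev-d} via $W_a \subseteq V_i$, and the key improved cross term $\|L\Pi_{W_2}\|$ is obtained by decomposing $L_0$ into patterns, killing right-degenerate patterns via \pref{lem:kernelsymoperators} (since $W_2\subseteq V_2$), bounding $2$-matching patterns by $\tO(\omega^4 n)$ via \pref{lem:fourier-norm-mpw-disjoint}, and handling the two left-cherry patterns by observing that their rows are exactly $r_a^{\otimes 2}$ and $r_b^{\otimes 2}$, hence orthogonal to $W_2$ by construction. The only cosmetic difference is that the paper treats the two left-cherries $\tilde L_0^{B_1},\tilde L_0^{B_2}$ individually rather than as a left-symmetrized pair, and phrases the target as $\tO(\omega^4 n)$ on $L_0$ before invoking $\omega\le\sqrt n$ to match the $\tO(\omega^3 n^{3/2})$ coming from $\|L_1\|,\|L_2\|$.
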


Next, we bound the spectral norm of $\Delta$.
This is a direct corollary of the more general bound in \pref{lem:Delta-bound},
but to have a self-contained proof of the degree-4 case we also give a proof later in this section.
\begin{lemma}
  \label{lem:deg4-delta}
  Let $\cM'$ be the ${n \choose 2} \times {n \choose 2}$ filled-in matrix for the degree-4 MPW moments with clique size $\omega$ (\pref{def:MPW-Mprime}).
  Let $\Delta$ be as in \pref{def:Ddecomp}.
  With probability $1 - O(n^{-10})$, $\|\Delta \| \leq O(\omega^3 n^{3/2} \log(n)^2)$.
\end{lemma}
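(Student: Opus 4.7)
The plan is to decompose $\Delta = \Delta_0 + \Delta_1 + \Delta_2$, where $\Delta_q$ collects the entries with $|I \cap J| = q$, and bound each piece separately. First, $\Delta_0 = 0$: when $I, J \in \nchoose{2}$ are disjoint, $|I \cup J| = 2d = 4$, so $\deg_G(I \cup J)$ is simply the $0/1$ indicator that $I \cup J$ forms a clique, which is exactly the randomness already captured by the locally random matrix $L$. Thus it suffices to bound $\|\Delta_1\|$ and $\|\Delta_2\|$.

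For $\Delta_2$: the condition $|I \cap J| = 2$ together with $|I| = |J| = 2$ forces $I = J$, so $\Delta_2$ is diagonal and $\|\Delta_2\| = \max_I |\Delta_2(I,I)|$. Each diagonal entry is a fixed scalar multiple of the deviation of $\deg_G(I)$ from its conditional expectation given $\sE(I) \subseteq G$; since $\deg_G(I)$ is a sum of $\binom{n-2}{2}$ mildly-dependent bounded indicators, a McDiarmid-style bound gives $|\deg_G(I) - \E \deg_G(I)| \leq \tilde{O}(n)$ with probability $1 - n^{-20}$. A union bound over the $\binom{n}{2}$ diagonal entries plus the scaling factor $\beta(2) = O(\omega^2)$ yields $\|\Delta_2\| \leq \tilde{O}(\omega^2 n)$, which sits comfortably inside the target.

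For $\Delta_1$ (the main work), I apply \pref{lem:Delta-decomp} with $q = 1$. Up to an entrywise (hence Gershgorin-bounded) correction of spectral norm $\tilde{O}(\omega^3 n)$, each piece $\Delta_{1,K}$ equals a fixed scalar multiple of $(1 + e^{1,2}_K) \odot \sum_s [(1+e^1_{s,K})(1+e^2_{s,K})(1+e^3_{s,K}) - 1]$. Expanding the triple product into its $7$ nonconstant terms and summing over $K \in [n]$ produces $O(1)$ matrix-valued pieces, each (by \pref{fact:liftsofnondisjoint}) a lift of a matrix in $\R^{\nchoose{1} \times \nchoose{1}}$ supported off the diagonal, so it is enough to control the underlying disjoint matrices. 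The dominant linear pieces $\sum_{s,K} e^i_{s,K}$ admit spectral norm $\tilde{O}(n^{3/2})$; after scaling by $\Theta(\omega^3)$ from the prefactor in \pref{lem:Delta-decomp}, this yields the claimed $\tilde{O}(\omega^3 n^{3/2})$ bound. Higher-order pieces from the expansion give strictly smaller contributions.

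The main obstacle will be the trace-moment bookkeeping for the pieces where the vertex $s$ is aggregated over all of $[n]$: unlike in the analysis of $L$, each entry of $\Delta_1$ sums contributions from every $s \in [n]$, so $s$ plays the role of an additional floating label attached to edges $g_{s,a}$ in each ribbon. In the spirit of \pref{lem:ribbon-unique-bound}, the key combinatorial input is that every $\pm 1$ edge indicator in a contributing term must appear an even number of times; combined with the random partitioning of \pref{itm:partition-fancy}, which forces the $s$-labels to be disjoint from the row/column labels, this bounds the number of distinct vertex labels in a length-$2\ell$ ribbon by $\tfrac{3}{2}\ell + O(1)$ for the linear pieces and strictly fewer for the quadratic and cubic pieces (each extra factor $e^j_{s,K}$ introduces an additional parity constraint). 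Markov's inequality applied to $\E \Tr(Q Q^\top)^\ell$ at $\ell = \Theta(\log n)$, followed by a triangle inequality over the $O(\log n)$ partitions produced by \pref{lem:partitioning}, then completes the bound.
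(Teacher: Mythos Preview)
Your route works, but it is far heavier than what the paper does. The paper's proof is a direct Gershgorin (row-sum) argument: since $\Delta(I,J) = 0$ whenever $I \cap J = \emptyset$, each row has only $O(n)$ nonzero off-diagonal entries (those $J$ with $|I \cap J| = 1$), and \cite[Theorem 10.1]{MPW15} gives $|\Delta(I,J)| \leq O(\omega^3 \sqrt{n} \log(n)^2)$ for each of them (the fluctuation of $\deg_G(S)$ for $|S| = 3$ is $\tilde O(\sqrt n)$). Multiplying, the off-diagonal row sum is $O(\omega^3 n^{3/2} \log(n)^2)$. The same reference bounds each diagonal entry by $O(\omega^2 n^{3/2} \log(n)^2)$, and \pref{fact:Gershgorin} finishes. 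There is no appeal to \pref{lem:Delta-decomp}, to lifts, or to trace-moment bookkeeping.

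Two remarks on your argument itself. First, your $\Delta_2$ bound is quantitatively off: for $|I| = 2$ the deviation of $\deg_G(I)$ is $\Theta(n^{3/2})$, not $\tilde O(n)$, because the dominant contribution comes from the fluctuation of $|A_I|$ (which is $\Theta(\sqrt n)$) entering $\deg_G(I) \approx \tfrac{1}{2}\binom{|A_I|}{2}$ quadratically; a naive McDiarmid over the edge variables confirms this, since $O(n)$ edges incident to $I$ each have Lipschitz constant $\Theta(n)$. The corrected bound $\|\Delta_2\| \leq \tilde O(\omega^2 n^{3/2})$ still sits under the target, so your conclusion survives. Second, in your $\Delta_1$ step the Hadamard factor $(1 + e^{1,2}_K)$ produces cross-terms of the form $e^{1,2}_K \odot e^i_{s,K}$ that are not literally covered by the statements you invoke; they can be handled by the same trace-moment pattern (or even by Gershgorin, since each such entry is $\tilde O(\sqrt n)$), but this is extra bookkeeping that the elementary row-sum route sidesteps entirely.
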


Finally, we bound the spectral norm of the last piece $(\cR - \tilde \cR_0)$:
\begin{lemma}
 \label{lem:deg4-corr-dev}
 Let $G \sim G(n,1/2)$.
 With probability $1 - O(n^{-10})$, $\|\cR - \cR_0 \| \leq O(\gamma \omega^5 n^{1/2} \log(n)^2)$.
\end{lemma}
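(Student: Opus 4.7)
The plan is to split $\cR - \tilde\cR_0$ by intersection size $q = |I \cap J| \in \{0,1,2\}$, writing $\cR - \tilde\cR_0 = M_0 + M_1 + M_2$, and bound each $M_q$ separately. The scalar $\kappa \defeq \gamma(\omega/n)^5 C_4$ appears as a common multiplicative factor in both $\cR$ and $\tilde\cR_0$; since $C_4 \leq \binom{n}{4}$ deterministically, $\kappa \leq O(\gamma \omega^5 / n)$, so it suffices to show each $\|M_q\|/\kappa \leq \tilde O(n^{3/2})$.

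For $M_0$ (disjoint entries), both $\cR$ and $\tilde\cR_0$ are nonzero and the difference isolates the deviation of $\mathbf{1}[\sE_{ext}(I,J) \subseteq G]$ from its mean $1/16$. Since $|\sE_{ext}(I,J)| = 4$ when $|I \cup J| = 4$, Fourier expansion gives
\[
  \mathbf{1}[\sE_{ext}(I,J) \subseteq G] - \tfrac{1}{16} = \tfrac{1}{16} \sum_{\emptyset \neq U \subseteq \sE_{ext}(I,J)} \prod_{e \in U} g_e \mcom
\]
so that $M_0 = (\kappa/16) \sum_{\emptyset \neq U \subseteq [2]\times[2]} Q_U$ after identifying $\sE_{ext}(I,J)$ with $[2]\times[2]$ via the sorting maps. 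Each matrix $Q_U$ then has exactly the form required by \pref{lem:correction-disjoint}, which gives $\|Q_U\| \leq O(n^{3/2}(\log n)^3)$ with probability $1 - O(n^{-10})$. Summing the fifteen nonempty patterns by triangle inequality and applying the deterministic bound on $\kappa$ yields $\|M_0\| \leq O(\gamma \omega^5 n^{1/2} (\log n)^3)$.

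For $M_1$ we have $\cR \equiv 0$ on these entries, so $M_1 = -\tilde\cR_0|_{q=1}$; for $I = \{a,c\}, J = \{b,c\}$ with $a,b,c$ distinct, $M_1(I,J) = -(\kappa/16) \sum_s g_{s,a} g_{s,b}$. Each sum is of $n-2$ independent centered $\pm 1$ variables, so Hoeffding plus a union bound over the $O(n^2)$ pairs $(a,b)$ gives $|\sum_s g_{s,a} g_{s,b}| \leq O(\sqrt{n \log n})$ for all $a \neq b$ with probability $1 - O(n^{-10})$. Each row of $M_1$ has at most $2(n-2)$ nonzero entries (two choices of shared vertex, then the new partner), so Gershgorin (\pref{fact:Gershgorin}) yields $\|M_1\| \leq O(\kappa n^{3/2} \sqrt{\log n}) \leq O(\gamma \omega^5 n^{1/2} \sqrt{\log n})$. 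Finally $M_2$ is supported on $I = J$, where $(I \cup J) \setminus (I \cap J) = \emptyset$ and the product defaults to $1$; thus $M_2$ is a constant multiple of the identity with $\|M_2\| \leq O(\kappa n) = O(\gamma \omega^5)$, which is negligible.

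The conceptually nontrivial step is the Fourier decomposition of the $\sE_{ext}$ indicator reducing $M_0$ to matrices matching the form in \pref{lem:correction-disjoint}; the other two pieces fall to routine concentration and a Gershgorin bound. The hard spectral norm estimates are already encapsulated in \pref{lem:correction-disjoint}, so I do not anticipate a serious obstacle beyond careful bookkeeping of constants and the multiplicative factor $\kappa$.
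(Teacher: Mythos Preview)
Your proposal is correct and follows essentially the same approach as the paper: split by intersection size, handle the disjoint block via the Fourier expansion of the clique indicator reducing to \pref{lem:correction-disjoint}, and handle the non-disjoint blocks via entrywise concentration plus a Gershgorin row-sum bound. The paper phrases the non-disjoint pieces as $\tilde\cR_0^{(3)}$ and $\tilde\cR_0^{(2)}$ and invokes \pref{lem:deg4-scalar} rather than a bare Hoeffding bound for the $q=1$ entries, but this is cosmetic; your direct Hoeffding argument is equivalent and arguably cleaner.
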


The proofs of these lemmas follow, but first we complete the proof of \pref{lem:deg4-PSD-claim} and hence of \pref{thm:deg-4-sqrtn}.
\begin{proof}[Proof of \pref{lem:deg4-PSD-claim}]
  By \pref{lem:N4prime}, it will be enough to exhibit $c,\gamma = \polylog n$ and $\omega_1 = \Omega(\sqrt{n/\log(n) c\gamma})$ so that $\cN \succ (\omega^2 n^2 / c) \cdot I$ with probability $ 1- O(n^{-10})$ when $\omega \leq \omega_1$.
  Then our final bound will be given by the minimum of $\omega_1$ and $\omega_0$ of \pref{lem:N4prime}.
  (Recall that $\gamma$ is a parameter inside $\cN$.)
  In the following, all that we claim happens with probability at least $1-n^{-9}$ by a union bound.

  So let $\omega_0 \in \R$; we will choose it later.
  We will find $c,\gamma$ and conditions on $\omega_0$ so that the conditions of \pref{lem:blockpsd} hold for $\cN - ({\omega^2 n^2}/c)\cdot I$.

  First of all, by  \pref{lem:deg4-corr-dev}, for every $\gamma$ there is $c = O(\min\{n/\omega^3\gamma, 1 \})$ so that
  \[
    E - (\omega^2 n^2 /c) \cdot I \succeq \Omega(\omega^4n^2) \cdot \Pi_{W_0} + \Omega(\omega^3n^2) \cdot \Pi_{W_1} + \Omega(\gamma \omega^5 n) \cdot \Pi_{W_{1.5}} + \Omega(\omega^2n^2) \cdot \Pi_{W_2}\mper
  \]
  We assume $c = c(\gamma)$ is chosen in this way.

  For any $\gamma$, by \pref{lem:deg4-corr-dev} if we choose $\omega \leq \sqrt{n} /\gamma \log(n)^2$ then $\|\cR - \cR_0 \| \leq o(\omega^2 n^2)$.
  Adding $\cR - \cR_0$ to the previous equation,
  \[
  (E + \cR_0) + (\cR - \cR_0) - ({\omega^2 n^2}/{c}) \cdot I \succeq \Omega(\omega^4n^2) \cdot \Pi_{W_0} + \Omega(\omega^3n^2) \cdot \Pi_{W_1} + \Omega(\gamma \omega^5 n) \cdot \Pi_{W_{1.5}} + \Omega(\omega^2n^2) \cdot \Pi_{W_2}\mper
  \]
  By the same reasoning using \pref{lem:deg4-delta} to add $\Delta$ to the previous equation, we have for the same choice of $\omega$:
  \begin{equation}
    \cN - L - ({\omega^2 n^2}/{c}) \cdot I \succeq \Omega(\omega^4n^2) \cdot \Pi_{W_0} + \Omega(\omega^3n^2) \cdot \Pi_{W_1} + \Omega(\gamma \omega^5 n) \cdot \Pi_{W_{1.5}} + \Omega(\omega^2n^2) \cdot \Pi_{W_2}\mper \label{eq:deg4-0}
  \end{equation}
  So it just remains to add $L$ to the left-hand side.

  We decompose $L$ as:
  \[
    L = (\Pi_{W_0} + \Pi_{W_1} + \Pi_{W_{1.5}}) L (\Pi_{W_0} + \Pi_{W_1} + \Pi_{W_{1.5}}) + \Pi_{W_2} L + L \Pi_{W_2}\mper
  \]
  Let $p$ be as in \pref{lem:deg4-main}, which implies that
  \begin{equation}
    (\Pi_{W_0} + \Pi_{W_1} + \Pi_{W_{1.5}}) L (\Pi_{W_0} + \Pi_{W_1} + \Pi_{W_{1.5}}) \preceq O(p \omega^4 n^{3/2}) (\Pi_{W_0} + \Pi_{W_1} + \Pi_{W_{1.5}})\mper \label{eq:deg4-1}
  \end{equation}
  Choosing $\gamma = O(p^2 \log(n))$ we get that this is $o(\sqrt{\omega^4 n^2 \cdot \gamma \omega^5 n})$.
  So using \pref{lem:matrix-cs} to add \pref{eq:deg4-0} and \pref{eq:deg4-1}, we obtain
  \begin{equation}
    \cN - \Pi_{W_2} L - L \Pi_{W_2} - ({\omega^2 n^2}/{c}) \cdot I \succeq \Omega(\omega^4n^2) \cdot \Pi_{W_0} + \Omega(\omega^3n^2) \cdot \Pi_{W_1} + \Omega(\gamma \omega^5 n) \cdot \Pi_{W_{1.5}} + \Omega(\omega^2n^2) \cdot \Pi_{W_2}\mper\label{eq:deg4-2}
  \end{equation}
  We break $\Pi_{W_2} L$ apart as
  \[
    \Pi_{W_2} L = \Pi_{W_2} L \Pi_{W_0} + \Pi_{W_2} L \Pi_{W_2} + \Pi_{W_2} L (\Pi_{W_1} + \Pi_{W_{1.5}})\mper
  \]
  By \pref{lem:deg4-main},
  \begin{align*}
     \|\Pi_{W_2} L \Pi_{W_0} \| & \leq O(p\omega^4 n^{3/2}) = o(\sqrt{\omega^4 n^2 \cdot \omega^2 n^2}) & \quad \text{ for } \omega \leq \sqrt{n} / p\log(n)\\
     \|\Pi_{W_2} L \Pi_{W_2} \| & \leq O(p \omega^4 n) = o(\omega^2 n^2) & \quad \text{ for } \omega \leq \sqrt{n} / p \log(n)\\
     \| \Pi_{W_2} L (\Pi_{W_1} + \Pi_{W_{1.5}}) \| & \leq O(p \omega^3 n^{3/2}) = o(\sqrt{ \omega^2 n^2 \cdot \gamma \omega^5 n}) & \quad \text{ for } \gamma \geq p^2 \log p\mper
  \end{align*}
  Together with \pref{lem:matrix-cs} and \pref{eq:deg4-2}, this implies the lemma, for $\gamma \geq p^2 \log p$, $c = c(\gamma)$ as above, and $\omega_0 \leq \min \{\sqrt n / p \log(n)^2, \sqrt n / \gamma \log(n)^2 \}$.
\end{proof}

%
%

\subsection{Proof of Diagonal and Off-Diagonal Norm Bounds (\pref{lem:deg4-main})}
Here we prove \pref{lem:deg4-main}.

\begin{proof}[Proof of \pref{lem:deg4-main}]
  We start with the easy parts.
  Note that $\Pi_{W_2} L \Pi_{W_2} = \Pi_{W_2} \Pi_{V_2} L \Pi_{V_2} \Pi_{W_2}$, so the bound
  \[
    \|\Pi_{W_2} L \Pi_{W_2}\| \leq \tO(\omega^4 n)
  \]
  is immediate from \pref{lem:dev-d}.
  The same theorem also implies that $\|L\| \leq \tO(\omega^4 n^{3/2})$,
  and since projectors are contractive this finishes the bounds on the diagonal terms and the first part of the off-diagonal bound (for cross terms among $W_0, W_1, W_{1.5}$).

  We just have to prove that $\| \Pi_{W_{1}} L \Pi_{W_2} \| \leq \tO(\omega^4 n)$.
  We will use the patterned matrix machinery to show this.
  We recall the decomposition of $L$ from \pref{sec:degree-d} as $L = L_0 + L_1 + L_2$. 
  The main point is to show that $\|\Pi_{W_1} L_0 \Pi_{W_2} \| \leq \tO(\omega^4 n)$, so we postpone to the end of the proof the case of $L_1$ and $L_2$.

  Recall again that $L_0$ can be further decomposed as $L_0 = O(1) \cdot \sum_{B \in \cB} L_0^B$ (again, see \pref{sec:degree-d} for definitions).
  Finally, for each $L_0^B$ there is a $\tilde L_0^B$ so that $\|L_0^B - \tilde L_0^B \| \leq \tO(\omega^4 n)$ with probability $1 - O(n^{-10})$ (see \pref{claim:Lltilde}).
  Since $|\cB| = O(1)$, it is actually enough for us to show that
  \[
    \Norm{ \Pi_{W_1} \Paren{\sum_{B \in \cB} \tilde L_0^B } \Pi_{W_2} } \leq \tO(\omega^4 n)\mper
  \]

  Let $\cB_1$ be the bipartite graphs on $[2] \times [2]$ for which at least one right-hand vertex has degree $0$.
  Then by \pref{lem:kernelsymoperators}, $\sum_{B \in \cB} \tilde L_0^B \Pi_{W_2} = 0$, since $W_2 \subseteq V_2$.

  We are left with $\cB_2$, the family of bipartite graphs where every right-hand-side vertex has nonzero degree.
  Using \pref{lem:fourier-norm-mpw-disjoint} on the matrices $L_0^B$, we get that if $B$ has no vertices of degree $0$ then $\|L_0^B \| \leq \tO(\omega^4 n)$ with probability $1 - O(n^{-10})$.
  Since as above $\|L_0^B - \tilde L_0^B \| \leq \tO(\omega^4 n)$ with similar probability, we get $\| \tilde L_0^B \| \leq \tO(\omega^4 n)$ for these $B$.

  The only remaining graphs in $\cB$ are the two graphs $B_1, B_2$ with exactly one vertex on the left-hand side of degree $0$.
  It is not hard to check that rows $\{ a, b \}$ of the matrices $\tilde L_0^{B_1}$ and $\tilde L_0^{B_2}$ are matrices are $r_a^{\tensor 2}$ and $r_b^{\tensor 2}$, respestively.
  Since $W_2 \perp r_s^{\tensor 2}$, we get $\tilde L_0^{B_1} \Pi_{W_2} = \tilde L_0^{B_2} \Pi_{W_2} = 0$.

  It remains to handle $L_1$ and $L_2$.
  By \pref{claim:norm-bounds} together with \pref{claim:Lltilde}, each satisfies $\|L_1\|, \|L_2\| \leq \tO(\omega^3 n^{3/2})$.
  Since $\omega \leq \sqrt n$, the lemma now follows.
\end{proof}

\subsection{Lower-Degree Cleanup (\pref{lem:N4prime})}
\label{sec:deg4-lowdeg-cleanup}
In this section we prove \pref{lem:N4prime}.
We start by bounding the difference between our pseudoexpectation $\pE$ and the MPW operator on polynomials of degree less than $4$.
This lemma is a consequence of \pref{lem:deg4-scalar} and the Gershgorin circle theorem (\pref{fact:Gershgorin}).
\begin{lemma}
  \label{lem:nondisjoint-entry}
  Let $G \sim G(n,1/2)$.
  Let $\omega$ be a real parameter.
  Let $\pE$ be as given in \pref{def:corrected-moms}.
  Let $\pE_0$ be the MPW operator for clique size $\omega$.
  Suppose $\omega \leq \sqrt n$ and $\gamma = O(\omega)$.
  Then with probability at least $1 - O(n^{-20})$,
  \begin{itemize}
    \item Every $i,j,k \in [n]$ with $i,j,k$ all distinct satisfies
    \[
      |\pE x_i x_j x_j - \pE_0 x_i x_j x_k| \leq O(\gamma \log(n) \omega^4 / n^4)\mper
    \]
    \item Every $i,j \in [n]$ with $i \neq j$ satisfies $|\pE x_i x_j - \pE_0 x_i x_j | \leq o(\omega^2 / n^2)$.
  \end{itemize}
\end{lemma}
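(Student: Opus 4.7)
The plan is to expand $\pE$ via the recursion in \pref{def:corrected-moms} down to degree $4$ (where the correction $\cL$ becomes explicit) and then compare term-by-term against analogous expansions of $\pE_0$ obtained from the MPW identity $\pE_0[(\textstyle\sum_i x_i)\cdot p(x)] = \omega\,\pE_0[p(x)]$ (valid because $\pE_0$ satisfies the constraint $\sum_i x_i = \omega$ and $x_i^2 = x_i$). In both cases the difference will split cleanly into a \emph{denominator-mismatch} contribution---controlled by the bound $|\omega-\omega'| = O(\gamma \log(n)^2 \omega^2 / n^{5/2})$ from \pref{lem:deg4-constraints}---and a genuine $\cL$-\emph{contribution} to be bounded using items (2) and (3) of \pref{lem:deg4-scalar}.

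For the degree-$3$ bound I would write
\begin{align*}
\pE[x_i x_j x_k] & = \tfrac{1}{\omega'-3} \sum_{\ell \notin \{i,j,k\}} \bigparen{\pE_0[x_{ijk\ell}] + \cL[x_{ijk\ell}]}\mcom \\
\pE_0[x_i x_j x_k] & = \tfrac{1}{\omega-3} \sum_{\ell \notin \{i,j,k\}} \pE_0[x_{ijk\ell}]\mcom
\end{align*}
subtract, and verify that the denominator-mismatch term is $\tO(\gamma \omega^4 / n^{11/2})$ and hence negligible. The surviving $\cL$-piece, after interchanging the sums over $s$ and $\ell$, equals $\tfrac{\gamma(\omega/n)^5}{\omega'-3}\sum_s\sum_{\ell:\{i,j,k,\ell\}\text{ clique}} r_s(i) r_s(j) r_s(k) r_s(\ell)$, whose absolute value is at most $O(\gamma \omega^4 \log(n)/n^4)$ by item (2) of \pref{lem:deg4-scalar}, giving the first claim.

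For the degree-$2$ bound I would iterate the recursion once more and subtract from the doubly iterated MPW identity, reducing matters to bounding
\[
\frac{1}{(\omega'-2)(\omega'-3)} \sum_{\substack{\ell \neq m \\ \ell,m \notin \{i,j\}}} \cL[x_{ij\ell m}] \;=\; \frac{\gamma(\omega/n)^5}{(\omega'-2)(\omega'-3)} \cdot T(i,j)\mcom
\]
where $T(i,j) \defeq \sum_s r_s(i) r_s(j) \cdot (r_s^\top A_{i,j} r_s)$ and $A_{i,j}$ is the $\{0,1\}$ adjacency matrix of $G$ restricted to $N(i) \cap N(j)$ (with zero diagonal). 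Per $s$, I would decompose $A_{i,j}$ as a rank-$1$ ``mean'' piece plus a centered piece, handle the rank-$1$ piece via item (3) of \pref{lem:deg4-scalar} (which gives $(\sum_{\ell \in N(i) \cap N(j)} r_s(\ell))^2 = O(n \log n)$), and handle the centered piece by bounding $\|A_{i,j}\|$ via the Gershgorin circle theorem (\pref{fact:Gershgorin}): its row sums are degrees inside $N(i) \cap N(j)$, which are $O(n)$ uniformly by Chernoff plus a union bound, yielding $|r_s^\top A_{i,j} r_s| = \tO(n)$ uniformly in $s$.

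The main obstacle is to replace the resulting na\"ive $\tO(n^2)$ bound on $|T(i,j)|$ by $\tO(n^{3/2})$, without which the correction would fail to be $o(\omega^2/n^2)$ in the regime $\omega \leq \sqrt n$. The plan is to obtain this by a McDiarmid concentration argument modeled on the proof of \pref{lem:deg4-scalar}, exploiting the approximate independence of the signs $r_s(i) r_s(j)$ across $s$: after truncating on the good event from the previous step, edge-wise differences of the function $G \mapsto T(i,j)$ are $\tO(n)$, so McDiarmid yields $|T(i,j)| \leq \tO(n^{3/2})$ with probability $1 - O(n^{-25})$, and a union bound over the $\binom{n}{2}$ choices of $(i,j)$ makes this uniform. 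Plugging back in gives a correction of size $\tO(\gamma\omega^3/n^{7/2})$, which under $\gamma = O(\omega)$ and $\omega \leq \sqrt n$ is $\tO(\omega^4/n^{7/2}) = o(\omega^2/n^2)$ as required.
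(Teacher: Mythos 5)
Your degree-$3$ argument is exactly the paper's. The problem is in the degree-$2$ part, where you replace the paper's reduction by a new concentration estimate for $T(i,j)$, and the key quantitative step as stated does not work. First, a local issue: the per-$s$ bound $|r_s^\top A_{i,j} r_s| = \tO(n)$ does not follow from \pref{fact:Gershgorin}. Gershgorin only gives $\|A_{i,j}\| = O(n)$, and since $\|r_s\|^2 = \Theta(n)$ the spectral bound yields $O(n^2)$; the $\tO(n)$ bound is true, but it is the quadratic-form concentration $\bigl|\sum_{k,\ell \text{ in a clique with } i,j} r_s(k) r_s(\ell)\bigr| \le C_2\, n \log n$ established by McDiarmid inside the proof of \pref{lem:deg4-scalar}, not a norm bound. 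More seriously, your final step claims that edge-wise differences of $G \mapsto T(i,j)$ of size $\tO(n)$ give $|T(i,j)| \le \tO(n^{3/2})$ via McDiarmid. With $\Theta(n^2)$ edge variables each having bounded difference $\tO(n)$, the bounded-difference inequality only gives deviations of order $\sqrt{n^2 \cdot \tO(n^2)} = \tO(n^2)$ --- precisely the na\"ive bound you set out to beat, and then the correction is only $\tO(\gamma\omega^3/n^3)$, which is not $o(\omega^2/n^2)$ when $\omega = \Theta(\sqrt n)$ and $\gamma = \Theta(\omega)$. To rescue your route you need the two-tier accounting used in the proof of \pref{lem:deg4-scalar}: show that each of the $\Theta(n^2)$ edges not incident to $\{i,j\}$ changes $T(i,j)$ by only $\tO(\sqrt n)$ (its effect goes through a single $\sum_\ell r_s(\ell)$-type or $\sum_s r_s(i)r_s(j)r_s(u)r_s(v)$-type sum on the truncation event), while the $O(n)$ edges incident to $i$ or $j$ are allowed differences $\tO(n)$; then $\sum_e c_e^2 = \tO(n^3)$ and McDiarmid does give $\tO(n^{3/2})$.

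It is also worth knowing that the paper avoids this extra concentration lemma entirely: the degree-$2$ bound is deduced from the already-proved degree-$3$ bound by a single application of the defining recursion plus the triangle inequality, bounding $|\pE x_i x_j - \pE_0 x_i x_j|$ by a normalized sum over $k$ of $|\pE x_i x_j x_k - \pE_0 x_i x_j x_k| \le O(\gamma \log(n)\omega^4/n^4)$ together with a denominator-mismatch term controlled by $|\omega - \omega'|$; the $n$ summands already give $o(\omega^2/n^2)$ without any new estimate on the doubly-summed correction. So your lemma on $T(i,j)$, while true and provable after the fix above, is avoidable, and as written the proposal has a genuine gap at the step where the $\tO(n^{3/2})$ bound is claimed.
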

\begin{proof}[Proof of \pref{lem:nondisjoint-entry}]
  Recall that we obtained $\pE$ by starting with the clique-size $\omega$ MPW operator on multilinear homogeneous degree-4 polynomials,
   adding a correction operator on those same polynomials, and then infering values of $\pE$ on lower-degree polynomials via the constraint $\sum_i x_i = \omega'$.
  There are two primary sources of the difference between our operator $\pE$ and the MPW operator on polynomials of lower degree.
  The dominant one is the propogation to lower degree polynomials of the correction operator $\cL$ (recall \pref{def:update}).
  The second is that the degree-4 values coming from the MPW part of our operator $\pE$ are propogated downwards using the constraint $\sum_i x_i = \omega'$ rather than $\sum_i x_i = \omega$ as is done to define the rest of the MPW operator.

  We start with the degree $3$ bound.
  Denote by $\pE_0$ the MPW operator for clique size $\omega$.
  Consider $i,j,k \in [n]$ all distinct, and recall that
  \begin{align*}
    \pE x_i x_j x_k & = \frac 1 {\omega' - 3} \sum_{\ell \neq i,j,k} \pE x_i x_j x_k x_\ell \\
    & = \frac 1 {\omega'-3} \sum_{\ell \neq i,j,k} \Paren{\pE_0 x_i x_j x_k x_\ell + \cL x_i x_j x_k x_\ell}\\
    & = \Paren{\frac 1 {\omega' -3} - \frac 1 {\omega - 3}} \Paren{\sum_{\ell \neq i,j,k} \pE_0 x_i x_j x_k x_\ell} + \pE_0 x_i x_j x_k + \frac 1 {\omega'-3} \sum_{\ell \neq i,j,k} \cL x_i x_j x_k x_\ell\mper
  \end{align*}
  Thus,
  \[
    \pE x_i x_j x_k - \pE_0 x_i x_j x_k = \Paren{\frac 1 {\omega' -3} - \frac 1 {\omega - 3}} \Paren{\sum_{\ell \neq i,j,k} \pE_0 x_i x_j x_k x_\ell} + \frac 1 {\omega'-3} \sum_{\ell \neq i,j,k} \cL x_i x_j x_k x_\ell\mper
  \]
  By \pref{lem:deg4-scalar}, with probability $1 - O(n^{-25})$ every $i,j,k$ satisfies $|\sum_{\ell \neq i,j,k} \cL x_i x_j x_k x_\ell | \leq O(\gamma \omega^5 \log(n) / n^4)$.
  At the same time, we know by \pref{lem:deg4-constraints} together with \pref{lem:deg4-scalar} that $|\omega' - \omega| \leq O(\gamma \omega^2 \log(n)^2 / n^{5/2})$.
  This implies that $|1 / (\omega' - 3) - 1/(\omega -3)| \leq O(\gamma \log(n)^2 / n^{5/2})$ (all with probability at least $1 - O(n^{-25})$).
  In conjunction with the preceeding, it implies also that every $i,j,k$ satisfies $(1/(\omega'-3))|\sum_{\ell \neq i,j,k} \cL x_i x_j x_k x_\ell | \leq O(\gamma \log(n) \omega^4 / n^4)$.
  Together with the trivial bound $|\sum_{\ell \neq i,j,k} \pE_0 x_i x_j x_k x_\ell | \leq O(\omega^4 / n^3)$ with probability $1 - n^{-\omega(1)}$ (following from \cite[Theorem 10.3]{MPW15}), all this implies that with probability $1 - O(n^{-25})$,
  \[
    |\pE x_i x_j x_k - \pE_0 x_i x_j x_k| \leq O(\gamma \log(n)^2 \omega^4 /n^{11/2}) + \tO(\gamma \log(n) \omega^4 / n^4) = \tO(\gamma \log(n) \omega^4 / n^4)\mper
  \]

  We turn to the degree-two bound.
  Fix $i \neq j \in [n]$.
  We expand $\pE x_i x_j - \pE_0 x_i x_j$.
  \begin{align*}
    & \pE x_i x_j - \pE_0 x_i x_j\\
     & = \frac 1 {(\omega' - 2)(\omega' -3)} \sum_{k \neq i,j} \pE x_i x_j x_k - \frac 1 {(\omega - 2)(\omega -3)} \sum_{k \neq i,j} \pE_0 x_i x_j x_k\\
    & = \frac 1 {(\omega' - 2)(\omega' - 3)} \sum_{k \neq i,j} (\pE x_i x_j x_k - \pE_0 x_i x_j x_k) - \Paren{\frac 1 {(\omega - 2)(\omega - 3)} - \frac 1 {(\omega'-2 )(\omega'-3)}} \sum_{k \neq i,j} \pE_0 x_i x_j x_k\mper
  \end{align*}

  With probability $1 - O(n^{-20})$ when $G \sim G(n,1/2)$ by \pref{lem:deg4-scalar}, we get that $1/(\omega' -2)(\omega'-3) = O(1/\omega^2)$.
  By the same,
  \[
    \Abs{\frac 1 {(\omega - 2)(\omega - 3)} - \frac 1 {(\omega'-2 )(\omega'-3)}} \leq \tO(\gamma /\omega n^{5/2})\mper
  \]
  Together with the bound from earlier in this proof on $|\pE x_i x_j x_k - \pE_0 x_i x_j x_k|$ and the trivial bound $|\sum_{k \neq i,j} \pE_0 x_i x_j x_k| \leq \tO(\omega^3/n^2)$, we obtain
  \begin{align*}
    |\pE x_i x_j - \pE_0 x_i x_j| & \leq \tO(\gamma \omega^2/ n^3) + \tO(\gamma \omega^2 / n^{9/2})
  \end{align*}
  which is $o(\omega^2 / n^2)$ for $\omega = o(\sqrt n)$ and $\gamma = o(\omega)$.
\end{proof}

\begin{proof}[Proof of \pref{lem:N4prime}]
  We first observe that an eigenvalue lower bound on $\cN$ implies the same on the (principal) submatrix indexed only by cliques (in this case, edges) in $G$.
  This submatrix is equal to $C_4 \cN' + \Err$, where
  \[
    \Err(I,J) = \begin{cases}
    			C_4 \pE x^I x^J - C_4 \pE_0 x^I x^J & \text{ if $|I \cup J| < 4$}\\
			0 & \text{ otherwise}
		\end{cases}
  \]
  We break $\Err$ into two parts so that $\Err_2 + \Err_3 = \Err$:
  \[
    \Err_2(I,J) =\begin{cases}
    			\Err(I,J) & \text{ if $|I \cup J| = 2$}\\
			0 & \text{ otherwise}
		\end{cases}
		\quad
		\text{and}
		\quad
    \Err_3(I,J) =\begin{cases}
    			\Err(I,J) & \text{ if $|I \cup J| = 3$}\\
			0 & \text{ otherwise}
		\end{cases}
  \]
  Note that $\Err_3$ consists of the off-diagonal nonzero entries of $\Err$, while $\Err_2$ contains the diagonal entries of $\Err$.
  We start by showing a bound on $\|\Err_3\|$.
  \begin{align*}
    \|\Err_3\| & \leq \max_{I \in {n \choose 2}} \sum_{J \neq I} |\Err_3(I,J)|\\
    & = \max_{\{i_1,i_2\} \in {n \choose 2}} \sum_{k \notin \{i_1,i_2\}} |\Err_3(I,\{k,i_1\})| + |\Err_3(I,\{k,i_2\})| \quad \text{(definition of $\Err_3$)}\\
    & \leq O(n) \cdot \max_{i,j,k \text{ all not equal}} C_4 | \pE x_i x_j x_k - \pE_0 x_i x_j x_k |\\
    & \leq O(n) \cdot C_4 \cdot  O(\gamma \log(n) \omega^4 / n^4) \quad \text{w.p. $1 - O(n^{-20})$ by \pref{lem:nondisjoint-entry}}\\
    & \leq O(\gamma \log(n) \omega^4 n) \quad \text{w.p. $1 - O(n^{-20})$ by $C_4 \approx n^4$, see \cite[Theorem 10.3]{MPW15}}
  \end{align*}

  Next we bound $\Err_2$.
  Since it is diagonal, it is enough to give an entrywise bound.
  \begin{align*}
    \|\Err_2\| & \leq \max_{I \in {n \choose 2}} \Err_2(I,I)\\
    & = \max_{i\neq j} C_4(\pE x_i x_j - \pE_0 x_i x_j) \quad \text{by definition of $\Err_2$}\\
    & \leq C_4 \cdot o(\omega^2 / n^2) \quad \text{w.p. $1 - O(n^{-20})$ by \pref{lem:nondisjoint-entry}}\\
    & \leq o(\omega^2 n^2) \quad \text{w.p. $1 - O(n^{-20})$ by $C_4 \approx n^4$, see \cite[Theorem 10.3]{MPW15}}\mper
  \end{align*}

  Fix $\gamma,c \in \R$.
  Suppose $\cN \succeq (\omega^2 n^2/c) \cdot I$.
  Then for $\cN + \Err$ to be PSD it is enough to have $\|\Err\| \leq \omega^2 n^2 /c$.
  There is by the above bounds a universal constant $C$ so that it is enough to have
  $C c \gamma \log(n) \omega^4 n \leq \omega^2 n^2$, or rearranging, $\omega \leq \sqrt{n/Cc \log(n) \gamma}$.
\end{proof}

\subsection{Eigenvalue Lower Bound for the Correction}

The following is the main claim for this section.

\begin{lemma}
  \label{lem:cor-eval-lb}
  Let $G \sim G(n,1/2)$.
  Let 
  \[
    r_s(j) = \begin{cases}
    1 & \text{ if $s \sim j$}\\
    -1 & \text{ if $s \not \sim j$}\\
    0 & \text{ if $s = j$}
    \end{cases}\mper
  \]
  Let $r_s^{\tensor 2} \in \R^{\nchoose{2}}$ be the vector with entries $r_s^{\tensor 2}(\{i,j\}) = r_s(i) r_s(j)$.
  Let $V = \Span\{r_s^{\tensor 2} \}_{ s\in [n]}$.
  Let $\Pi_V$ be the projector to $V$.
  With probability at least $1 - O(n^{-10})$ over the sample of $G$,
  \[
    \sum_s (r_s^{\tensor 2})(r_s^{\tensor 2})^\top \succeq \Omega(n^2) \Pi_V\mper
  \]
\end{lemma}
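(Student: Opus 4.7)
Let $R \in \mathbb{R}^{{n \choose 2} \times n}$ be the matrix whose $s$-th column is $r_s^{\tensor 2}$, so that $\sum_s (r_s^{\tensor 2})(r_s^{\tensor 2})^\top = RR^\top$ and $V = \mathrm{col}(R)$. Since $RR^\top$ and $K := R^\top R$ share the same non-zero spectrum, it is enough to establish $K \succeq \Omega(n^2) I$: this forces $R$ to have full column rank and hence $\lambda_{\min}^+(RR^\top) = \lambda_{\min}(K)$, giving $RR^\top \succeq \Omega(n^2) \Pi_V$. A short computation (using $\sum_i r_s(i)^2 r_t(i)^2 = n-2$ together with $\sum_{i,j} r_s(i) r_s(j) r_t(i) r_t(j) = \langle r_s, r_t\rangle^2$) yields $K_{ss} = \binom{n-1}{2}$ deterministically and
\[
  K_{st} \;=\; \tfrac{1}{2}\bigl(\langle r_s, r_t\rangle^2 - (n-2)\bigr) \qquad \text{for } s \neq t.
\]
For $G \sim G(n,1/2)$ and $s \neq t$, $\langle r_s, r_t\rangle = \sum_{i \neq s,t} g_{s,i} g_{t,i}$ is a sum of $n-2$ independent Rademachers, so $\mathbb{E}[K_{st}] = 0$. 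Hence $K = \tfrac{(n-1)(n-2)}{2} I + N$ with $\mathbb{E}[N] = 0$ and zero diagonal.

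Everything thus reduces to showing $\|N\| = o(n^2)$ with probability $1 - O(n^{-10})$ --- any bound $\|N\| \leq c n^2$ with $c < 1/2$ would suffice, but I would aim for the tight $\|N\| \leq \tilde O(n^{3/2})$ via the trace power method of \pref{sec:spectralnorm}. Expanding
\[
  \mathbb{E}\,\Tr(N^{2k}) \;=\; \sum_{s_0, \ldots, s_{2k-1}}\; \sum_{\{a_i, b_i\}_i}\; \mathbb{E}\!\left[\prod_{i=0}^{2k-1} g_{s_i, a_i}\,g_{s_{i+1}, a_i}\,g_{s_i, b_i}\,g_{s_{i+1}, b_i}\right]
\]
(indices $\bmod\,2k$), each step $i$ attaches a $K_{2,2}$-gadget on $\{s_i, s_{i+1}\} \cup \{a_i, b_i\}$ to a multigraph on $[n]$, and the expectation is non-zero only if every edge of that multigraph appears with even multiplicity. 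Because the $2k$ gadgets are linked into one connected multigraph through the shared $s_i$'s and the $8k$ gadget-edges must pair into at most $4k$ distinct edges, the ribbon-counting logic from \pref{lem:ribbon-unique-bound} and \pref{lem:fancy-ribbon-unique-bound} forces the number of distinct vertex labels in a contributing walk to be at most $3k + O(1)$. This yields $\mathbb{E}\,\Tr(N^{2k}) \leq (Ck)^{O(k)} n^{3k}$; setting $k = \Theta(\log n)$ and invoking Markov as in \pref{sec:spectralnorm} gives the claimed polylogarithmic bound on $\|N\|$ with probability $1 - O(n^{-10})$, completing the proof.

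\textbf{Main obstacle.} The only non-trivial step is the combinatorial bound on the number of distinct vertex labels in contributing closed decorated walks. The per-step gadget is a $K_{2,2}$ between $\{s_i, s_{i+1}\}$ and $\{a_i, b_i\}$, slightly different from the fancy ribbons handled in \pref{lem:fancy-ribbon-unique-bound}, so the counting argument must be reworked rather than invoked as a black box; however, the underlying strategy --- decomposing the even-multiplicity multigraph into cycles generated by the walk and bounding the number of unique edges per cycle against the multiplicity constraint --- transfers directly from \pref{sec:spectralnorm}.
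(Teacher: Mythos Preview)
Your proposal is correct and follows essentially the same approach as the paper: reduce to a spectral lower bound on the $n\times n$ Gram matrix $R^\top R$, peel off the deterministic diagonal $\binom{n-1}{2}I$, and bound the zero-mean off-diagonal part by $\tilde O(n^{3/2})$ via the trace method with a $3k+O(1)$ distinct-label count on contributing closed walks (what the paper calls ``diamond ribbons''). The one step the paper makes explicit that you leave implicit is a random-partitioning preprocessing (\pref{lem:partitioning}) to force the label classes $\{s_i\},\{t_i\},\{u_i\},\{v_i\}$ to be disjoint before counting; without it the $3k+O(1)$ bound does not follow cleanly, so you should invoke that lemma rather than attempt the ribbon count directly on the raw matrix $N$.
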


We will need the following graph theoretic machinery for the moment method bound.
\begin{definition}
  Let $G$ be a graph on $[n]$.
  A diamond ribbon $R$ of length $2\ell$ is a graph on vertices $s_1,\ldots,s_\ell, t_1,\ldots,t_\ell, u_1,\ldots,u_{2\ell}, v_1,\ldots,v_{2\ell}$.
  It has edges
  \[
    (s_i, u_{2i}), (s_i, v_{2i}), (u_{2i}, t_i), (v_{2i},t_i), (t_i, u_{2i + 1}), (t_i, v_{2i + 1}), (u_{2i+1}, s_{i+1}),(v_{2i+1},s_{i+1})
  \]
  where addition is modulo $2\ell$.

  A labeled diamond ribbon $(R,F)$ is a a diamond ribbon of length $2\ell$ together with a labeling $F : R \rightarrow G$ of vertices in $R$ with vertices from $G$.
  We insist that for $(x,y) \in R$ an edge that $F(x) \neq F(y)$.

  The labeled diamond ribbon $(R,F)$ is contributing if no element of the multiset $\{(F(x),F(y)) \text{ such that } (x,y) \in R\}$ occurs with odd multiplicity.
  It is \emph{disjoint} if the sets
  \[
    \{F(s_i)\}, \{F(t_i)\}, \{F(u_i)\}_{i \text{ odd}}, \{F(v_i)\}_{i\text{ even}}, \{F(v_i)\}_{i \text{ odd}}, \{ F(v_i) \}_{i \text{ even}}
  \]
  are disjoint.
\end{definition}

\begin{lemma}
  Let $(R,F)$ be a contributing disjoint labeled diamond ribbon of length $2\ell$.
  Then it contains at most $3\ell + O(1)$ distinct labels.
\end{lemma}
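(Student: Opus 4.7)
The plan is to decompose the $8\ell$ ribbon edges into $8$ types by the pair of label categories of their endpoints, and then exploit the disjointness hypothesis on the six label categories ($s$-type, $t$-type, $u_{\text{odd}}$, $u_{\text{even}}$, $v_{\text{odd}}$, $v_{\text{even}}$) to split the contributing condition into eight independent parity conditions. For each $i \in [\ell]$, the eight edges incident to the $i$-th ``slot'' of the ribbon fall into eight distinct category-pairs:
\[
(s_i,u_{2i}),\ (u_{2i},t_i),\ (t_i,u_{2i+1}),\ (u_{2i+1},s_{i+1}),\ (s_i,v_{2i}),\ (v_{2i},t_i),\ (t_i,v_{2i+1}),\ (v_{2i+1},s_{i+1}),
\]
which I label Type 1 through Type 8. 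Each type has endpoints in a fixed, distinct pair of label categories, and as $i$ ranges over $[\ell]$ each type contributes exactly $\ell$ edges.

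The first observation is that since the six label categories are pairwise disjoint, the unordered label-pair $\{F(x),F(y)\}$ of any Type-$k$ edge cannot coincide with that of a Type-$k'$ edge for $k \neq k'$ (the two ends live in incompatible category pairs). Consequently the ``each label-pair appears with even multiplicity'' condition on the full multiset of ribbon edge labels splits into eight independent conditions: within each type, every label-pair appearing must appear an even number of times.

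The central deduction is then that within each of the six label categories, every label that occurs must occur at an even number of ribbon positions of that category. For instance, for an $s$-label $\ell_s$,
\[
|\{i : F(s_i) = \ell_s\}| \;=\; \sum_{\ell_u}|\{i : F(s_i) = \ell_s,\ F(u_{2i}) = \ell_u\}|,
\]
where the right-hand sum is over $u_{\text{even}}$-labels and each summand is precisely the Type-1 multiplicity of the label-pair $\{\ell_s, \ell_u\}$, hence even. The same style of argument applies verbatim to each of the remaining five categories via an appropriate type (any category belongs to two of the eight types, either works).

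To conclude, each of the six categories has exactly $\ell$ ribbon positions available (the sets $\{s_i\}$, $\{t_i\}$, $\{u_{2i-1}\}$, $\{u_{2i}\}$, $\{v_{2i-1}\}$, $\{v_{2i}\}$), and every appearing label takes at least $2$ of them, so each category contributes at most $\ell/2$ distinct labels; summing gives at most $3\ell$ distinct labels overall, which is within the claimed $3\ell + O(1)$. (Note that for odd $\ell$ no contributing labeling exists at all, since the even $s$-multiplicities cannot sum to the odd value $\ell$, and the lemma is vacuously true.) There is essentially no technical obstacle here beyond the initial conceptual move: once one recognizes that disjointness allows the contributing condition to be applied independently within each edge type, the remainder is direct counting; no walk/cycle or moment-method machinery is needed for this particular lemma.
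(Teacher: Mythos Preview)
Your proof is correct and follows essentially the same idea as the paper's (one-line) proof: use the disjointness hypothesis to force every label in each category to appear with multiplicity at least two, then count. The paper states this conclusion without justification, grouping into three multisets $\{F(s_i),F(t_i)\}$, $\{F(u_i)\}$, $\{F(v_i)\}$; you work with the finer six-category partition and spell out the parity argument via the eight edge types, which is a clean and complete way to justify what the paper asserts.
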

\begin{proof}
  By our disjointness assumption, every element of the multiset $\{F(s_i), F(t_i) \}$ must occur with multiplicity at least two and similarly for $\{F(u_i)\}$ and $\{F(v_i) \}$.
\end{proof}

\begin{proof}[Proof of \pref{lem:cor-eval-lb}]
  Note that the matrix $R = \sum_s (r_s^{\tensor 2})(r_s^{\tensor 2})^\top$ has row and column spaces both $V$.
  Note also that it factors as $SS^\top$, where $S$ is the $\nchoose{2} \times n$ matrix whose columns are the vectors $r_s^{\tensor 2}$.
  Thus, it will be enough to show that
  \[
    S^\top S \succeq \Omega(n^2) I \quad \text{w.p. $1 - O(n^{-10})$}
  \]
  where here $I$ is the $n \times n$ identity matrix.

  For this, consider the matrix $S^\top S$ indexed by vertices $s,t \in [n]$.
  It has entries
  \[
    S^\top S (s,t) = \iprod{r_s^{\tensor 2}, r_t^{\tensor 2}}
  \] 
  and in particular,
  \[
    S^\top S (s,s) = \iprod{r_s^{\tensor 2}, r_s^{\tensor 2}} = \nchoose{2}
  \]
  So, zeroing this matrix on the diagonal, it is enough to prove that
  \[
    \Norm{S^\top S - \nchoose{2} I} \leq o(n^2) \quad \text{w.p. $1 - O(n^{-10})$}\mper
  \]
  
  Let $H \seteq S^\top S - \nchoose{2} I$.
  Let $H_{i,j}$ for $i \neq j$ be given by
  \[
    H_{i,j} (s,t) = \begin{cases}
    r_s(i) r_s(j) r_t(i) r_t(j) & \text{ if $s \neq t$}\\
    0 & \text{ otherwise}
    \end{cases}\mper
  \]
  Then $H = \sum_{i \neq j} H_{i,j}$.
  Note that $H_{i,j}(s,t) = 0$ if $i \in \{s,t \}$ or $j \in \{s,t \}$.
  Thus the obvious generalization of \pref{lem:partitioning} to the two-parameter family $H_{i,j}$ applies.
  This gives us a family of matrices $H^1 ,\ldots,H^r$ for some $r = O(\log n)$ and a corresponding family of partitions $(S_1^1,S_2^2,S_3^1,S_4^1,S_5^1,S_6^1),\ldots,(S_1^r,\ldots,S_6^r)$ of $[n]$.

  These will be such that $\sum_{i = 1}^r H^i = H$, and
  \[
    H^i(s,t) = \sum_{(j,k) \in {S_i}'} r_s(j) r_s(k) r_t(j) r_t(k)
  \]
  where ${S^i}' \subseteq S_5^i \times S_6^i$ is the subset giving indices so that the corresponding summand has not occurred in any $i'' < i$.
  We will bound $\E \Tr (H^i)^{2\ell}$ for some $\ell$ to be chosen later.
  Every term in the expansion of this quantity corresponds to a disjoint labeled diamond ribbon of length $2\ell$, and the number of nonzero terms is at most the number of contributing disjoint labeled diamond ribbons.
  So $\E \Tr(H^i)^{2\ell} \leq n^{3\ell + O(1)}$.
  The rest follows by standard manipulations.
\end{proof}
\subsection{Proofs of Remaining Lemmas}
\begin{proof}[Proof of \pref{lem:deg4-ER}]
  By \pref{lem:Eeigenvals},
  \[
    E \succeq \Omega(\omega^4n^2) \cdot \Pi_{W_0} + \Omega(\omega^3n^2) \cdot \Pi_{W_1} + \Omega(\omega^2n^2) \cdot \Pi_{W_2}\mper
  \]

  Let $W = W_0 \oplus W_1$.
  By \pref{lem:cor-eval-lb} and \cite[Theorem 10.3]{MPW15} (saying that $C_4 \approx n^4$),
  with probability $1 - O(n^{-10})$ we get that $\tilde \cR_0 \succeq \Omega(\gamma \omega^5 n) \Pi_{\Span \{ r_s^{\tensor 2}\} }$.
  We make the observation that $\tilde \cR_0 = (\Pi_W + \Pi_{W_{1.5}})\tilde \cR_0(\Pi_W + \Pi_{W_{1.5}})$ and that $\Pi_{W_{1.5}} \Pi_{\Span \{ r_s^{\tensor 2}\} } \Pi_{W_{1.5}} = \Pi_{W_{1.5}}$.
  So we just need to handle the term $\Pi_W \tilde \cR_0 \Pi_{W_{1.5}} + \Pi_{W_{1.5}} \tilde \cR_0 \Pi_W$.

  Together, \pref{lem:correction-disjoint} and \pref{lem:deg4-scalar} imply that $\|\tilde \cR_0 \| \leq O(\gamma \omega^5 n \log(n)^2)$ with probability $1 - O(n^{-10})$.
  Thus to ensure that $\|\tilde \cR_0 \| \leq o(\sqrt{\omega^3 n^2} \cdot \gamma \omega^5 n)$ it is enough to choose $\omega \leq \sqrt{\gamma n}/\log(n)^2$.
  This is enough to apply \pref{lem:matrix-cs} and conclude the proof.
\end{proof}

\begin{proof}[Proof of \pref{lem:deg4-delta}]
  Note that for $I,J$ disjoint we have $\Delta(I,J) = 0$.
  We bound the maximal sum across any row of $\Delta$.
  With probability $1 - O(n^{-10})$ every off-diagonal entry off $\Delta$ is at most $O(\omega^3 \sqrt n \log(n)^2)$ in absolute value \cite[Theorem 10.1]{MPW15}.
  For each $I \in \nchoose{2}$, we then get $\sum_{J \neq I} |\Delta(I,J) | \leq O(\omega^3 n^{3/2} \log(n)^2)$.
  At the same time,
  The diagonal entries are each at most $O(\omega^2 n^{3/2}\log(n)^2)$ with similar probability, again by \cite[Theorem 10.1]{MPW15}.
\end{proof}

\begin{proof}[Proof of \pref{lem:deg4-corr-dev}]
  $\cR$ and $\tilde \cR_0$ differ in two respects.
  We first bound the spectral norm of the part of $\tilde \cR_0$ on non-disjoint entries.
  Let
  \[
    \tilde \cR_0^{(3)}(I,J) = \begin{cases} \tilde \cR_0(I,J) & \text{ if $|I \cap J| = 3$} \\ 0 & \text{ otherwise} \end{cases}\mper
  \]
  It follows from \pref{lem:deg4-scalar} and a row-sum bound that $\|\tilde \cR_0^{(3)}\| \leq O(\gamma \omega^5 \sqrt n \log(n)^2)$ with probability $ 1 - O(n^{-10})$.
  A similar analysis holds for the analogous matrix $\tilde \cR_0^{(2)}$.

  Let $\cR_0$ be given by
  \[
  \cR_0(I,J) =
  \begin{cases}
    \frac 1 {16} \sum_{s \in [n]} \gamma (\tfrac {\omega} n)^5 C_4 \cdot \Pi_{a \in (I \cup J) \setminus (I \cap J)} r_s(a) & \text{ if $|I \cap J| = 4$}\\
    0 & \text{ otherwise}
  \end{cases}\mper
  \]
  Now it is enough to bound $\| \cR_0 - \cR \|$.
  This is the deviation introduced by zeroing non-clique entries.
  Note that each entry $I,J$ of $\cR$ can be decomposed as a function of the underlying graph edge variables:
  \[
    \cR(I,J) = \cR_0 + \gamma C_4 (\tfrac{\omega} n)^5 \frac 1 {16} \sum_{\text{nonempty } S \subseteq \sE_{ext}(I,J)} \sum_s \prod_{(u,v) \in S} g_{u,v} \prod_{u \in I \cup J} r_s(u)\mcom
  \]
  where we recall that for an edge $(u,v)$, the variable $g_{(u,v)}$ is the $\pm 1$ indicator for that edge.
  Each of the entries in the sum over nonempty $S$ above corresponds to a matrix of the form bounded in \pref{lem:correction-disjoint}, where we conclude that
  each has spectral norm at most $O(n^{3/2} \log(n)^2)$ with probability $1 - O(n^{-10})$.
  We conclude (also using $C_4 \approx n^4$, see \cite[Theorem 10.3]{MPW15}) that $\|\cR_0 - \cR\| \leq O(\gamma \omega^5 \sqrt n \log(n)^2)$ as desired.
\end{proof}


\section{Concentration of $\deg_G(I)$}\label{sec:degreevariancecalculations}
In this section, we prove the following large-deviation bounds on the number of $x$-cliques a random $G(n,1/2)$ graph contains and on $\deg_G(I)$.
Similar results (which are likely sufficient for our needs) appear in the literature; see \cite{Rucinski88, Vu01, JLR11book} for instance.
We provide these proofs for completeness.
A coarser concentration result for $\deg_G(I)$ appears in \cite{MPW15}.

\begin{definition}
For a graph $G$, define $N_x(G)$ to be the number of $x$-cliques in $G$.
\end{definition}
Unless otherwise specified, in this section $G \sim G(n,1/2)$.

This first theorem gives the large deviation bound for the number of cliques of size $x$ in $G$.
\begin{theorem}\label{thm:cliqueboundtheoremone}
For all $\epsilon \in (0,1)$, for all $x$, if $n > {x^2}(2e - e\ln{\epsilon})(2e + 2 - e\ln{\epsilon})$ then
$$\Pr\left[\left|N_x(G) - 2^{-{x \choose 2}}{n \choose x}\right| > 
e(2 - \ln{\epsilon})\frac{x^2}{x!}n^{x-1}\right] < \epsilon.$$
(Note that $2^{-{x \choose 2}} {n \choose x} = \E N_x(G)$.)
\end{theorem}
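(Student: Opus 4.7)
The plan is to apply Chebyshev's inequality (the second moment method) with a careful bound on $\Var(N_x(G))$. By linearity and the fact that each $x$-subset is a clique with probability $2^{-\binom{x}{2}}$, we have $\E[N_x(G)] = 2^{-\binom{x}{2}}\binom{n}{x}$, matching the centering in the statement. Writing $N_x(G) = \sum_{S \in \binom{[n]}{x}} \mathbf{1}[S\text{ is a clique}]$ and splitting the double sum by the intersection size $k = |S \cap T|$: for $k \leq 1$ the pair shares no edges and is independent, while for $k \geq 2$ the pair shares exactly $\binom{k}{2}$ edges, giving $\mathrm{Cov}(\mathbf{1}[S],\mathbf{1}[T]) = 2^{-2\binom{x}{2}}(2^{\binom{k}{2}} - 1)$. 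Counting pairs,
\[
\Var(N_x(G)) \;=\; 2^{-2\binom{x}{2}}\binom{n}{x}\sum_{k=2}^{x}\binom{x}{k}\binom{n-x}{x-k}\bigl(2^{\binom{k}{2}}-1\bigr).
\]

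Next I would show that under the hypothesis on $n$, the $k=2$ term dominates the sum. The ratio of the $(k{+}1)$-st to the $k$-th term is approximately $(x-k)^2 \cdot 2^k / ((k+1)(n-2x+k))$, and the hypothesis $n > x^2(2e-e\ln\epsilon)(2e+2-e\ln\epsilon)$ is calibrated precisely so that this ratio is bounded by a constant strictly less than one across all $k \in [2,x-1]$. Summing the resulting geometric-like series yields
\[
\Var(N_x(G)) \;\leq\; O\!\left(\frac{x^2\, n^{2x-2}\, 2^{-2\binom{x}{2}}}{x!\,(x-2)!}\right).
\]
Plugging this into Chebyshev with $t = e(2-\ln\epsilon)\,(x^2/x!)\,n^{x-1}$ and chasing constants, the precise form of the $n$-lower bound in the hypothesis is exactly what is needed to force $\Var(N_x)/t^2 < \epsilon$.

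The main obstacle I anticipate is the two-sided nature of the variance series: at $k=2$ the combinatorial factor $\binom{n-x}{x-k}$ is largest but $2^{\binom{k}{2}}$ is smallest, while at $k=x$ the situation is reversed, so a priori the maximum could sit anywhere in between. The heart of the argument is verifying that the stated condition on $n$ tames the rapid growth of $2^{\binom{k}{2}}$ enough that the series geometrically collapses onto the $k=2$ term, with constants tracked carefully enough that the final bound matches the form $e(2-\ln\epsilon)$. If it turns out that Chebyshev's $1/\sqrt{\epsilon}$ dependence is looser than the target's $-\ln\epsilon$ for the smallest admissible $\epsilon$, the backup plan is to instead apply McDiarmid's inequality to $N_x$ as a function of the $\binom{n}{2}$ iid edge indicators, using the crude Lipschitz bound $c_{(u,v)} \leq \binom{n-2}{x-2}$ (the maximum number of $x$-cliques through a fixed edge); this yields a sub-Gaussian tail that comfortably beats the claimed bound.
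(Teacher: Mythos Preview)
Your primary Chebyshev plan cannot reach the target. The deviation threshold in the theorem scales like $(2-\ln\epsilon)$, whereas Chebyshev forces $t \geq \sqrt{\Var(N_x)/\epsilon}$, i.e.\ a $1/\sqrt{\epsilon}$ scaling. Concretely, with the $k=2$ term of your variance formula one gets $\Var(N_x)/t^2 \asymp 2^{-2\binom{x}{2}}/(2-\ln\epsilon)^2$, which is \emph{independent of $n$}; the hypothesis on $n$ therefore cannot rescue the bound, and for any fixed $x$ the ratio exceeds $\epsilon$ once $\epsilon$ is small. Your own parenthetical suspicion about the $1/\sqrt{\epsilon}$ versus $-\ln\epsilon$ discrepancy is exactly the fatal point, not a side issue.

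Your backup plan, however, is correct and is genuinely different from what the paper does. With Lipschitz constant $\binom{n-2}{x-2}$ per edge, McDiarmid gives an exponent $2t^2/\sum c_e^2 \geq 4e^2(2-\ln\epsilon)^2 \cdot x^2/(x-1)^2$, which is comfortably larger than $\ln(2/\epsilon)$ for all $\epsilon\in(0,1)$; in fact the hypothesis on $n$ is not even needed. The paper instead runs a higher-moment argument: it bounds $\E[X^q]$ for even $q$ by counting ordered $q$-tuples $(V_1,\ldots,V_q)$ of $x$-subsets in which every $V_j$ shares at least two vertices with some other $V_i$ (this is the content of their combinatorial \pref{lem:cliqueboundlemma}), obtaining $\E[X^q]\leq 2q!\bigl(\tfrac{x^2}{x!}n^{x-1}\bigr)^q$. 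Markov on $X^q$ and optimizing $q$ near $-\ln\epsilon$ then yields the factor $e(2-\ln\epsilon)$ and explains the shape of the $n$-hypothesis (it is precisely $n\geq x^2 q(q+2)$ for this $q$). So: the paper's route is ``moment method at order $q\approx\ln(1/\epsilon)$,'' yours is ``sub-Gaussian tail from bounded differences''; both give the logarithmic dependence on $1/\epsilon$, but the second moment alone cannot.
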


We also want a large deviations inequality for the number of cliques of size $2d$ that a clique of size $d' < 2d$ participates in in $G$.  Moreover, to carry out the eigenspace splitting arguments needed for \pref{lem:Delta-bound}, we want to know the dependence of this deviation on the number of vertices adjacent to every vertex in the $d'$-clique.
The following theorem serves both these purposes.
\begin{definition}
Given any $I \subseteq [n]$, let $A_I$ be the set of all vertices not in $I$ which are adjacent to all 
vertices in $I$.
\end{definition}
\begin{theorem}\label{thm:cliqueboundtheoremtwo}
There is a universal constant $C$ so that for any $I \subseteq [n]$ of size at most $2d$, if $I$ is a clique in $G$ then for any $\epsilon \in (0,1)$, if $n \geq 100{d^2}2^{2d}(3 - \ln{\epsilon})^2$ then
$$\Pr\left[\,\left|\deg_G(I) - 2^{{{|I|} \choose 2} - {{2d} \choose 2}}{{n-|I|} \choose {{2d} - |I|}}\right| > 
C(3 - \ln{\epsilon})n^{d - |I| - \frac{1}{2}}\,\right] < \epsilon$$
More precisely, if $|I| < d$ then
\begin{align*}
\Pr &\left[\,\left|\deg_G(I) - 2^{{{|I|} \choose 2} - {{2d} \choose 2}}{{n-|I|} \choose {{2d} - |I|}} - \left(|A_I| - \frac{n-|I|}{2^{|I|}}\right)\frac{2^{{{|I|+1} \choose 2} - {{2d} \choose 2}}(n-|I|)^{2d-|I|-1}}{(2d-|I|-1)!}\right| > \right. \\
&\;\;\;\;\left. C 2^{2d}(3 - \ln{\epsilon})^2n^{2d-|I|-1}\,\right] < \epsilon.
\end{align*}
\end{theorem}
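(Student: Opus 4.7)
The plan is to condition on the set $A_I$ of common neighbors of $I$ and then apply \pref{thm:cliqueboundtheoremone} to the induced subgraph on $A_I$. Since $I$ is already a clique, any $2d$-clique $S$ containing $I$ corresponds to an $(2d - |I|)$-clique on $S \setminus I \subseteq A_I$ in the induced subgraph $G[A_I]$; moreover, conditional on the set $A_I$, the induced subgraph $G[A_I]$ is distributed as $G(|A_I|, 1/2)$ and is independent of the adjacency indicators used to identify $A_I$. Writing $m \seteq 2d - |I|$ and $k_0 \seteq (n-|I|)/2^{|I|}$, this reduction yields $\deg_G(I) = N_m(G[A_I])$, reducing the problem to controlling (i) the size of $A_I$ and (ii) the clique count inside it.

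First I would apply a standard Chernoff bound to $|A_I|$, which is a sum of $n - |I|$ independent $\mathrm{Bernoulli}(2^{-|I|})$ indicators with mean $k_0$. This gives $\bigl||A_I| - k_0\bigr| \leq O\!\bigl(\sqrt{n \ln(1/\epsilon)}\bigr)$ with probability at least $1 - \epsilon/2$. The hypothesis $n \geq 100\, d^2 2^{2d}(3-\ln\epsilon)^2$ is more than enough to ensure that on this event $|A_I|$ meets the hypothesis of \pref{thm:cliqueboundtheoremone} for $G[A_I]$ at clique size $m \leq 2d$. Conditional on $|A_I| = k$ in this good range, \pref{thm:cliqueboundtheoremone} applied to $G[A_I]$ gives
\[
  \Bigl| N_m(G[A_I]) - 2^{-\binom{m}{2}} \tbinom{k}{m} \Bigr| \leq O\!\bigl((3 - \ln\epsilon)\, n^{m-1}\bigr)
\]
with probability at least $1 - \epsilon/2$.

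Next I would Taylor-expand $f(k) \seteq 2^{-\binom{m}{2}} \binom{k}{m}$ about $k_0$. Using the combinatorial identity $\binom{2d}{2} = \binom{m}{2} + m|I| + \binom{|I|}{2}$, a direct computation shows $f(k_0) = 2^{\binom{|I|}{2} - \binom{2d}{2}} \binom{n-|I|}{2d-|I|}$ and $f'(k_0) = 2^{\binom{|I|+1}{2} - \binom{2d}{2}} (n-|I|)^{m-1}/(m-1)!$, matching exactly the constant term and the linear-in-$|A_I|$ correction appearing in the theorem statement. The Taylor remainder obeys $|f(k) - f(k_0) - (k - k_0) f'(k_0)| = O\!\bigl((k - k_0)^2 k_0^{m-2}\bigr)$, which on the Chernoff event is $O\!\bigl(n \ln(1/\epsilon) \cdot n^{m-2}\bigr) = O\!\bigl(n^{m-1} \ln(1/\epsilon)\bigr)$. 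Taking a union bound over the two failure events and folding the $2^{2d}$ factor (which enters through the constant in \pref{thm:cliqueboundtheoremone} at clique size $m \leq 2d$) into a universal $C$ yields the refined ($|I| < d$) bound of the theorem. The coarser first bound follows by instead bounding $|f(k) - f(k_0)|$ without extracting the linear term: on the Chernoff event it is $O\!\bigl(|k-k_0| \cdot |f'(k_0)|\bigr) = O\!\bigl(n^{m - 1/2} \sqrt{\ln(1/\epsilon)}\bigr)$, absorbing into the stated scaling.

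The main obstacle here is essentially bookkeeping rather than any new probabilistic input. The delicate points are (a) verifying the algebraic identity between $f'(k_0)$ and the linear coefficient asserted in the theorem, (b) checking that the Taylor remainder really is of the same order as the conditional concentration error from \pref{thm:cliqueboundtheoremone}, so that neither dominates by more than constants, and (c) tracking the factors of $2^{|I|}$, $2^{2d}$, and $(3 - \ln \epsilon)$ through the two layers of conditioning so that the constant $C$ can be chosen universally and the hypothesis $n \geq 100\, d^2 2^{2d}(3 - \ln \epsilon)^2$ is indeed sufficient.
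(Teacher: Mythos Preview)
Your proposal is essentially the same as the paper's proof: condition on $A_I$, apply \pref{thm:cliqueboundtheoremone} to $G[A_I]$, concentrate $|A_I|$ (the paper uses Bernstein's inequality, you use Chernoff), and expand to first order in $|A_I|-k_0$. The paper merely organizes the expansion differently, first replacing $\binom{k}{m}$ and $\binom{n-|I|}{m}$ by $k^m/m!$ and $(n-|I|)^m/m!$ and then expanding $(1+x)^m$ about $x=0$, whereas you Taylor-expand $f(k)=2^{-\binom{m}{2}}\binom{k}{m}$ directly.

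One small correction: your claim that $f(k_0)$ and $f'(k_0)$ \emph{exactly} equal the constant and linear coefficients in the theorem is not literally true. With $k_0=(n-|I|)2^{-|I|}$ one has $2^{-\binom{m}{2}}\binom{k_0}{m}\neq 2^{\binom{|I|}{2}-\binom{2d}{2}}\binom{n-|I|}{m}$, since the falling factorials $k_0(k_0-1)\cdots$ and $k_0(k_0-2^{-|I|})\cdots$ differ. The discrepancy is $O(k_0^{m-1})=O(n^{m-1})$, hence harmlessly absorbed into your remainder term; the paper handles this by passing through the $k^m/m!$ approximation explicitly. So the argument is sound once you downgrade ``matching exactly'' to ``matching up to $O(n^{m-1})$.''
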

The key lemma in proving \pref{thm:cliqueboundtheoremone} is the following, which bounds how often subsets of $G$ of size $x$ share (potential) edges.
\begin{lemma}\label{lem:cliqueboundlemma}
If $x \geq 2$ and $n \geq {x^2}q(q+2)$ then there are at most 
$2n^{xq - q}\left(\frac{{x^2}}{x!}\right)^{q}$ multi-sets $S = \{V_1, \cdots, V_q\}$ of subsets $V_i \subseteq [n]$ of size $|V_i| = x$ such that
that for all $j$ there exists an $i \neq j$ such that $|V_i \cap V_j| \geq 2$.
\end{lemma}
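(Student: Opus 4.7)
I would prove the lemma by induction on $q \geq 2$. To each candidate multi-set $V = \{V_1, \ldots, V_q\}$ I associate its \emph{intersection graph} $H(V)$ on vertex set $[q]$, where $\{i,j\}$ is an edge iff $|V_i \cap V_j| \geq 2$; the lemma's hypothesis is exactly that $H(V)$ has no isolated vertex. Call vertex $V_i$ \emph{safe} in $V$ if no other $V_j$ has $V_i$ as its unique $H$-neighbor (equivalently, $V \setminus \{V_i\}$ still satisfies the hypothesis). The key structural observation is that a safe vertex exists in $V$ unless $H(V)$ is a perfect matching: otherwise every $V_i$ would have an attached pendant leaf $V_{j(i)}$ of degree $1$ whose only neighbor is $V_i$; the map $i \mapsto j(i)$ is injective (a pendant has a unique neighbor), forcing every vertex of $V$ to be a pendant and hence $H$ to be $1$-regular.

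The base case $q = 2$ is immediate: the number of multi-sets $\{V_1,V_2\}$ with $|V_1 \cap V_2| \geq 2$ is at most $\binom{n}{x}\binom{x}{2}\binom{n-2}{x-2} \leq \tfrac{x^4 n^{2x-2}}{2(x!)^2}$, which equals $\tfrac14 \cdot 2n^{2x-2}(x^2/x!)^2$. For the inductive step with $q \geq 3$, every qualifying multi-set either has $H$ equal to a perfect matching ($q$ necessarily even) or admits a safe vertex; I bound each type separately.

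For the safe-vertex case, I fix a canonical rule for choosing a safe vertex $V^* \in V$ and send $V$ to the pair $(V \setminus \{V^*\}, V^*)$. This map is injective because $V = (V \setminus \{V^*\}) \cup \{V^*\}$, and safety guarantees that $V \setminus \{V^*\}$ is itself qualifying of size $q-1$. For a fixed qualifying $V'$ of size $q-1$, the number of subsets $V^*$ of size $x$ admitting some $V_k \in V'$ with $|V^* \cap V_k| \geq 2$ is at most $(q-1)\binom{x}{2}\binom{n-2}{x-2}$ by a union bound. Plugging in the inductive hypothesis and using $n \geq x^2 q(q+2)$, this contribution is at most $\tfrac{(q-1)x^2}{2n} \leq \tfrac{1}{2(q+2)}$ times the target bound.

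For the perfect-matching case, I enumerate the $(q-1)!!$ perfect matchings $\pi$ of $[q]$; for each $\pi$, the number of ordered tuples $(V_1,\ldots,V_q)$ whose intersection graph contains $\pi$ is at most $\bigl(\binom{n}{x}\binom{x}{2}\binom{n-2}{x-2}\bigr)^{q/2}$. Because no value can appear three or more times in a multi-set with $H$ a perfect matching (a triple repeat would force degree $\geq 2$), every such multi-set has at least $q!/2^{q/2}$ ordered representatives; dividing yields at most $\tfrac{1}{2^{q/2+1}(q/2)!}$ times the target. Adding the two case bounds gives at most $\bigl[\tfrac{1}{2^{q/2+1}(q/2)!} + \tfrac{1}{2(q+2)}\bigr]$ times the target, which is $\leq 1$ for every $q \geq 2$ (with the perfect-matching summand understood as $0$ for odd $q$), closing the induction. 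The main subtlety will be the structural dichotomy between the two regimes, together with the careful multi-set-to-ordered-tuple conversion in the perfect-matching case, which crucially relies on the observation that in a $1$-regular intersection graph no subset can appear three or more times without forcing a vertex of degree at least $2$.
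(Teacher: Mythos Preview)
Your proof is correct and takes a genuinely different route from the paper's. The paper argues globally: it first bounds (via a spanning-tree style enumeration, their Lemma~\ref{lem:connectedcaselemma}) the number of multi-sets whose constraint graph $H_S$ is \emph{connected} of any size $k$, then for a general $S$ multiplies these bounds over the connected components of $H_S$, and finally sums over all possible component-size profiles $(s_1,\dots,s_t)$, showing that the profile with all $s_i=2$ (the perfect-matching case) dominates the entire sum once $n \geq x^2 q(q+2)$.

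Your argument replaces the partition sum by a clean induction on $q$, pivoting on the dichotomy ``$H$ is a perfect matching'' versus ``there is a removable (safe) vertex.'' The safe-vertex branch feeds directly into the inductive hypothesis, and only the perfect-matching branch requires a direct count. This avoids the connected-case lemma and the bookkeeping over component profiles altogether, at the cost of a small structural argument (the injective pendant map forcing $1$-regularity) and the multi-set-to-tuple conversion you already flagged. The paper's approach, on the other hand, yields the connected-case bound as a standalone byproduct and makes the geometric-decay in $t$ explicit, which could be useful if one later needed finer control over the contribution of each component profile. For the lemma as stated, your inductive proof is tighter and more self-contained.
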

Using \pref{lem:cliqueboundlemma}, we can bound the deviation of the number of $x$-cliques in $G$ from its expected value; we carry this out now.
\begin{definition}
Define $X = \sum_{V: V \subseteq G, |V| = x}{\left(1_{V} - 2^{-{x \choose 2}}\right)}$ where $1_V = 1$ if $V$ is a clique in $G$ and 
$0$ otherwise.
\end{definition}
\begin{proposition} \label{prop:degG-1}
$X = N_x(G) - 2^{-{x \choose 2}}{n \choose x}$
\end{proposition}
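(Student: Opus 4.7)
The plan is to split the defining sum for $X$ by linearity and identify each piece directly. Writing
\[
X \;=\; \sum_{V \subseteq [n],\, |V| = x}\!\!\bigl(1_V - 2^{-{x \choose 2}}\bigr) \;=\; \sum_{V \subseteq [n],\, |V| = x} 1_V \;-\; \sum_{V \subseteq [n],\, |V| = x} 2^{-{x \choose 2}},
\]
I would first observe that the indicator sum $\sum_V 1_V$ is precisely the number of $x$-element vertex subsets that induce a clique in $G$, which is $N_x(G)$ by definition. For the second sum I would simply count: the constant $2^{-{x \choose 2}}$ is being summed over all ${n \choose x}$ unordered $x$-subsets of $[n]$, giving $2^{-{x \choose 2}} {n \choose x}$. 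Subtracting yields the claimed identity.

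The only thing worth being careful about is the indexing convention in the definition of $X$: the sum is over all $x$-subsets $V$ of the vertex set $[n]$ (not only those that are cliques), with the cliqueness encoded in the indicator $1_V$. Once this is read off correctly the statement is immediate and requires no concentration inequality, combinatorial enumeration, or probabilistic argument — it is just the linearity of the finite sum together with the definition of $N_x(G)$. There is no genuine obstacle; this proposition is purely a notational reformulation, set up so that \pref{thm:cliqueboundtheoremone} can be proved by bounding the centered sum $X$ (for instance via a second-moment or higher-moment argument using \pref{lem:cliqueboundlemma}), which is presumably what the subsequent part of the section will do.
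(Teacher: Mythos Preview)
Your proof is correct and is exactly the content of the paper's own proof, which simply reads ``By observation.'' You have merely spelled out the observation.
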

\begin{proof}
By observation.
\end{proof}
\begin{corollary}\label{cliqueboundcorollaryone}
If $x \geq 2$ and $n \geq {x^2}q(q+2)$ then $E[X^q] < 2q!\left(\frac{{x^2}}{x!}n^{x-1}\right)^q$.
\end{corollary}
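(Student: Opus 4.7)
The plan is to expand $\E[X^q]$ as a sum of expectations indexed by ordered tuples $(V_1,\ldots,V_q)$ of $x$-subsets, and then argue that most terms vanish. Concretely, by linearity
\[
  \E[X^q] = \sum_{(V_1,\ldots,V_q)} \E\Brac{\prod_{i=1}^q \bkets{1_{V_i} - 2^{-{x \choose 2}}}}\mper
\]
The key observation is that if some $V_j$ shares at most one vertex with every other $V_i$, then the edge set $\sE(V_j)$ is disjoint from $\bigcup_{i \neq j} \sE(V_i)$. Consequently $1_{V_j}$ is independent (over $G \sim G(n,1/2)$) of $\{1_{V_i}\}_{i \neq j}$, so the corresponding expectation factors and picks up a factor of $\E[1_{V_j} - 2^{-{x \choose 2}}] = 0$.

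It therefore suffices to sum over tuples in which \emph{every} $V_j$ shares at least two vertices with some other $V_i$. I will pass from ordered tuples to multisets: a multiset $\{V_1,\ldots,V_q\}$ corresponds to at most $q!$ ordered tuples, so the number of surviving ordered tuples is at most $q!$ times the number of such multisets. By \pref{lem:cliqueboundlemma} (applicable under the hypothesis $n \geq x^2 q(q+2)$), this count is at most
\[
  q! \cdot 2 n^{xq - q} \Paren{\frac{x^2}{x!}}^{q}\mper
\]

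Each surviving term contributes an expectation of absolute value at most $1$, since $|1_V - 2^{-{x \choose 2}}| \leq 1$ deterministically for every $V$, so the product inside the expectation is bounded by $1$. Combining the count with this trivial per-term bound yields
\[
  \E[X^q] \leq 2 q! \Paren{\frac{x^2}{x!} n^{x-1}}^{q}\mper
\]
No step is really an obstacle here: the vanishing of isolated terms is standard mean-zero reasoning, and the counting has been packaged into \pref{lem:cliqueboundlemma}. The only subtlety worth double-checking in the writeup is the multiset-to-tuple conversion factor of $q!$, which is loose only when there are repeated $V_i$'s (and in that case the extra slack is easily absorbed into the inequality).
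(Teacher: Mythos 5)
Your proposal is correct and follows essentially the same route as the paper: expand $\E[X^q]$ over tuples, note that any term in which some $V_j$ shares no potential edge with the other sets vanishes by independence and mean-zero-ness, bound each surviving term by $1$, and count the surviving configurations via \pref{lem:cliqueboundlemma} with the multiset-to-ordered-tuple factor of $q!$. The only cosmetic difference is that you spell out the independence/factorization step slightly more carefully than the paper does.
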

\begin{proof}
By \pref{prop:degG-1},
$\E[X^q] = \sum_{V_1, \cdots, V_q: \atop \forall i, V_i \subseteq V(G), |V_i| = x}
{\E\left[\prod_{i=1}^{q}{\left(1_{V_i} - 2^{-{x \choose 2}}\right)}\right]}$. Note that all terms of this sum have value less than $1$. Furthermore, for all nonzero terms in this sum, 
for all $j$ there must be an $i$ such that $|V_i \cap V_j| \geq 2$, since the sets $V_i$ and $V_j$ must share a potential edge in order for $1_{V_i}$ and $1_{V_j}$ not to be independent.
Thus, this sum 
is at most the number of ordered multi-sets of $q$ $x$-cliques $\{V_1, \cdots, V_q\}$ where for all $j$ there 
is an $i$ such that $|V_i \cap V_j| \geq 2$. In turn, this is at most $q!$ times the number 
of unordered multi-sets of such $q$ $x$-cliques. By \pref{lem:cliqueboundlemma}, this is at most 
$2q!\left(\frac{{x^2}}{x!}n^{x-1}\right)^q$, as needed.
\end{proof}
We are now ready to prove \pref{thm:cliqueboundtheoremone}.
\ignore{ which we restate here for convenience.
\vskip.1in
\noindent
{\bf \pref{thm:cliqueboundtheoremone}.}
{\it
For all $\epsilon \in (0,1)$, for all $x$, if $n > {x^2}(2e - e\ln{\epsilon})(2e + 2 - e\ln{\epsilon})$ then 
$$P\left[\left|N_x(G) - 2^{-{x \choose 2}}{n \choose x}\right| > 
e(2 - \ln{\epsilon})\frac{x^2}{x!}n^{x-1}\right] < \epsilon$$
}}
\begin{proof}[Proof of \pref{thm:cliqueboundtheoremone}]
The result is trivial for $x = 0$ and $x = 1$ so we may assume that $x \geq 2$. 
Using Corollary \ref{cliqueboundcorollaryone} and Markov's inequality, 
$$\epsilon > \Pr\left[X^q > \frac{E[X^q]}{\epsilon}\right] \geq 
\Pr\left[X^q > \frac{2q!\left(\frac{{x^2}}{x!}n^{x-1}\right)^q}{\epsilon}\right] 
= \Pr\left[|X| > \sqrt[q]{\frac{2q!}{\epsilon}}\frac{x^2}{x!}n^{x-1}\right]$$

Thus, we just need to give an upper bound on $\min_{\{\text{positive even q}\}}\{\sqrt[q]{\frac{2q!}{\epsilon}}\}$. 
For all positive even $q$, $2q! \leq q^q$ so this expression is upper bounded by $\frac{q}{\sqrt[q]{\epsilon}}$. 
We now try to minimize $\frac{q}{\sqrt[q]{\epsilon}}$ over all positive even $q$. Taking the derivative of this expression 
with respect to $q$ yields $\frac{1}{\sqrt[q]{\epsilon}} + \frac{\ln{\epsilon}}{q\sqrt[q]{\epsilon}}$. Setting this 
to $0$ yields $q = -\ln{\epsilon}$. However, we require $q$ to be even so we take $q$ to be the smallest positive even 
integer which is greater than $-\ln{\epsilon}$. Now $q < 2 - \ln{\epsilon}$ and 
$\sqrt[q]{\epsilon} \geq \sqrt[(-\ln{\epsilon})]{\epsilon} = \left(e^{\ln{\epsilon}}\right)^{\frac{1}{-\ln{\epsilon}}} = \frac{1}{e}$. 
Putting everything together, for this $q$, $\sqrt[q]{\frac{2q!}{\epsilon}} \leq \frac{q}{\sqrt[q]{\epsilon}} < 2e - e\ln{\epsilon}$. 
Plugging this in gives 
$$\epsilon > P\left[|X| > \sqrt[q]{\frac{2q!}{\epsilon}}\frac{x^2}{x!}n^{x-1}\right] \geq 
P\left[|X| > e(2-\ln{\epsilon})\frac{x^2}{x!}n^{x-1}\right]$$
All that is left is to check that $n \geq {x^2}q(q+2)$ for this $q$ to make sure that our application of 
Corollary \ref{cliqueboundcorollaryone} was valid. Since $n > {x^2}(2e - e\ln{\epsilon})(2e + 2 - e\ln{\epsilon})$, this holds, 
as needed.
\end{proof}
Now that we have proven \pref{thm:cliqueboundtheoremone}, we will derive \pref{thm:cliqueboundtheoremtwo} 
from \pref{thm:cliqueboundtheoremone}. The idea is that conditioned on $I$ being a clique, by 
\pref{thm:cliqueboundtheoremone}, $\deg_G(I)$ is primarily determined by $|A_I|$, which can be easily shown to be tightly concentrated around its expected value. 
We start with the following lemma
\begin{lemma}\label{lem:ndvavassumedlemma}
If $n > d$ then for any $I \subseteq [n]$ of size less than $d$, if we first determine all of the edges 
incident to elements of $I$ (which determines $A_I$) then if $I$ is a clique, when we look at the remainder of the graph, for any $\epsilon_1 \in (0,1)$, 
\begin{align*}
P\Big(&\left|\deg_G(I) - 2^{{{|I|} \choose 2} - {d \choose 2}}{{n-|I|} \choose {d - |I|}} - 
\left(|A_I| - \frac{n-|I|}{2^{|I|}}\right)\frac{2^{{{|I|+1} \choose 2} - {d \choose 2}}(n-|I|)^{d-|I|-1}}{(d-|I|-1)!}\right| > \\
&10(2 - \ln{\epsilon_1})n^{d-|I|-1} + (d - |I|)^2\left(\frac{2^{|I|}|A_I|}{n-|I|} - 1\right)^2
\frac{2^{{{|I|} \choose 2} - {d \choose 2}}(n-|I|)^{d-|I|}}{(d - |I|)!}\Big) < \epsilon_1
\end{align*} 
so long as the following conditions hold:
\begin{enumerate}
\item $(d - |I|)\left|\frac{2^{|I|}|A_I|}{n - |I|} - 1\right| \leq 1$
\item $|A_I| > {d^2}(2e - e\ln{\epsilon_1})(2e + 2 - e\ln{\epsilon_1})$
\end{enumerate}
\end{lemma}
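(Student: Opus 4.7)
The plan is to condition on the edges incident to $I$ and reduce to a clique-counting problem on the induced subgraph on $A_I$, then apply \pref{thm:cliqueboundtheoremone}.

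First I would observe that after revealing all edges incident to $I$ (which determines whether $I$ is a clique and the set $A_I$), the remaining randomness in $G$ is precisely the edges on the vertex set $[n] \setminus I$, each present independently with probability $1/2$. Assuming $I$ is a clique, a subset $S \subseteq [n]$ of size $d$ containing $I$ is a clique in $G$ if and only if $S \setminus I \subseteq A_I$ and $S \setminus I$ is a clique in the induced subgraph $G[A_I]$. Hence, writing $r = |I|$, $k = d-r$, and $m = n-r$, we have the clean identity
\[
  \deg_G(I) \;=\; N_k\bigl(G[A_I]\bigr),
\]
and $G[A_I]$ is distributed as $G(|A_I|, \tfrac12)$, independently of the revealed edges.

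Next, I would apply \pref{thm:cliqueboundtheoremone} to $G[A_I]$ with clique size $k$ and failure probability $\epsilon_1$. Assumption (2) on $|A_I|$ ensures the theorem's hypothesis $|A_I| > k^2 (2e - e\ln\epsilon_1)(2e+2 - e\ln\epsilon_1)$ is satisfied. This yields, with probability at least $1-\epsilon_1$,
\[
  \Bigl| \deg_G(I) - 2^{-\binom{k}{2}} \binom{|A_I|}{k} \Bigr| \;\leq\; e(2-\ln\epsilon_1) \tfrac{k^2}{k!} |A_I|^{k-1},
\]
and since $|A_I| \leq m$, the right-hand side is at most $10(2 - \ln\epsilon_1) n^{d - |I| - 1}$, accounting for the first error term in the lemma (the constant is loose, giving room).

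The central computation then is to compare $2^{-\binom{k}{2}} \binom{|A_I|}{k}$ to the stated target
\[
  T \;\defeq\; 2^{\binom{r}{2} - \binom{d}{2}} \binom{m}{k} + \Bigl(|A_I| - \tfrac{m}{2^r}\Bigr) \cdot \tfrac{2^{\binom{r+1}{2} - \binom{d}{2}} m^{k-1}}{(k-1)!}.
\]
Writing $\delta = |A_I| - m/2^r$ and using the elementary identities $\binom{r}{2} - \binom{d}{2} = -\binom{k}{2} - rk$ and $\binom{r+1}{2} - \binom{d}{2} = -\binom{k}{2} - r(k-1)$, the target becomes
\[
  T \;=\; 2^{-\binom{k}{2}}\Bigl[ \tfrac{(m/2^r)^k}{k!} \cdot (\text{minor correction from }\binom{m}{k}\text{ vs }m^k/k!) + k \cdot \tfrac{(m/2^r)^{k-1} \cdot \delta}{k!} \Bigr].
\]
So $T$ is (up to lower-order terms) exactly the zeroth and first-order Taylor expansion of $2^{-\binom{k}{2}}|A_I|^k/k!$ around $|A_I| = m/2^r$. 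The Taylor remainder theorem applied to $x \mapsto x^k$ gives
\[
  \Bigl|\, |A_I|^k - (m/2^r)^k - k(m/2^r)^{k-1}\delta \,\Bigr| \;\leq\; \binom{k}{2} (m/2^r)^{k-2} \delta^2 \cdot (1+\eta)^{k-2}
\]
for some $\eta$ with $|\eta| \leq |2^r \delta / m|$; assumption (1), namely $k|2^r\delta/m| \leq 1$, keeps $(1+\eta)^{k-2}$ bounded by a constant. Thus the quadratic remainder contributes at most $O(k^2 \delta^2 (m/2^r)^{k-2} / k!)$ to $|2^{-\binom{k}{2}}|A_I|^k/k! - T|$, which after multiplying out matches the second error term claimed in the lemma.

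Finally I would handle two small bookkeeping issues: the gap between $\binom{|A_I|}{k}$ and $|A_I|^k/k!$ (a relative error of $O(k^2/|A_I|)$, absorbed into the first error term using assumption (2)), and the corresponding gap between $\binom{m}{k}$ and $m^k/k!$. The main obstacle, as usual in these calculations, is keeping the three error sources---the probabilistic deviation from \pref{thm:cliqueboundtheoremone}, the Taylor remainder, and the $\binom{\cdot}{k}$-vs-$(\cdot)^k/k!$ discrepancy---cleanly separated so that the final bound matches the stated form. Assumptions (1) and (2) exist precisely to make each of these errors small and of the correct order.
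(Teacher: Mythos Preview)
Your proposal is correct and follows essentially the same approach as the paper: the paper also reduces to $\deg_G(I) = N_{d-|I|}(G[A_I])$, applies \pref{thm:cliqueboundtheoremone} on $G[A_I]$ (this is the paper's \pref{lem:ndvpartfourlemma}), and then handles the same three remaining gaps---the $\binom{|A_I|}{k}$ vs.\ $|A_I|^k/k!$ discrepancy (\pref{lem:ndvpartthreelemma}), the Taylor-type quadratic remainder (\pref{lem:ndvparttwolemma}, proved via the elementary inequality $|(1+x)^k - 1 - kx| \leq k^2 x^2$ rather than an explicit remainder estimate), and the $\binom{m}{k}$ vs.\ $m^k/k!$ discrepancy (\pref{lem:ndvpartonelemma}). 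The only cosmetic difference is that the paper writes these out as four separate labeled lemmas and then adds them, whereas you describe the same decomposition narratively.
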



To prove \pref{lem:ndvavassumedlemma} we require the following results; proofs of the more elementary ones are deferred to \pref{sec:deg-lems}.
\begin{proposition}\label{prop:factorialapproxprop}
For all nonnegative integers $n$ and $k$ where $k < n$, $0 \leq \frac{n^k}{k!} - {n \choose k} \leq \frac{k^2}{2n}\frac{n^k}{k!}$
\end{proposition}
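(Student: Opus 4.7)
The plan is to rewrite ${n \choose k}$ as a product times $n^k/k!$ and then bound the ratio. Explicitly, since
\[
  {n \choose k} = \frac{n(n-1)(n-2)\cdots(n-k+1)}{k!} = \frac{n^k}{k!} \prod_{i=0}^{k-1}\left(1 - \frac{i}{n}\right)\mcom
\]
we have
\[
  \frac{n^k}{k!} - {n \choose k} = \frac{n^k}{k!}\left(1 - \prod_{i=0}^{k-1}\left(1 - \frac{i}{n}\right)\right)\mper
\]
First I would observe that each factor $(1 - i/n)$ lies in $[0,1]$ for $0 \leq i \leq k-1 < n$, so the product is in $[0,1]$ and the lower bound $\frac{n^k}{k!} - {n \choose k} \geq 0$ is immediate.

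For the upper bound it suffices to show $1 - \prod_{i=0}^{k-1}(1 - i/n) \leq k^2/(2n)$. The key step is the Weierstrass product inequality: whenever $x_0,\ldots,x_{k-1} \in [0,1]$, one has $\prod_i (1 - x_i) \geq 1 - \sum_i x_i$. I would prove this by induction on $k$ (base case $k=1$ trivial; inductive step uses $(1-x_k)(1 - \sum_{i<k} x_i) \geq 1 - \sum_{i \leq k} x_i$ since the cross term $x_k \sum_{i<k} x_i$ is nonnegative). Applying it with $x_i = i/n$ gives
\[
  \prod_{i=0}^{k-1}\left(1 - \frac{i}{n}\right) \geq 1 - \sum_{i=0}^{k-1}\frac{i}{n} = 1 - \frac{k(k-1)}{2n}\mper
\]
Rearranging and using $k(k-1) \leq k^2$ yields $1 - \prod_{i=0}^{k-1}(1 - i/n) \leq k^2/(2n)$, which combined with the factored form above completes the upper bound.

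There is no real obstacle here; the statement is an elementary approximation lemma and the entire argument rests on the Weierstrass inequality applied to the linear terms $i/n$. The only thing to be careful about is the sign convention and confirming that $k<n$ is enough to keep all factors nonnegative (so that the Weierstrass bound applies without sign issues).
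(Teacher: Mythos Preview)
Your proof is correct and follows essentially the same approach as the paper: both factor ${n \choose k} = \frac{n^k}{k!}\prod_{i=0}^{k-1}(1-i/n)$ and then apply the Weierstrass-type bound $\prod_i(1-x_i)\geq 1-\sum_i x_i$ with $x_i=i/n$, together with $\sum_{i=0}^{k-1} i \leq k^2/2$. The paper's write-up is terser but the argument is the same.
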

\begin{proof}[Proof of \pref{prop:factorialapproxprop}]
Note that $n^k \geq \prod_{j=0}^{k-1}{(n-j)} \geq n^k\prod_{j=0}^{k-1}{(1 - \frac{j}{n})} \geq 
n^k(1 - \sum_{j=0}^{k-1}{\frac{j}{n}}) \geq n^k(1 - \frac{k^2}{2n})$

This implies that $0 \leq n^k - \prod_{j=0}^{k-1}{(n-j)} \leq \frac{k^2}{2n}{n^k}$ and dividing everything by 
$k!$ gives the claimed result.
\end{proof}

\begin{lemma}\label{lem:ndvpartonelemma}
If $n > d \geq |I|$ then $\left|2^{{{|I|} \choose 2} - {d \choose 2}}\frac{(n - |I|)^{d - |I|}}{(d-|I|)!} - 
2^{{{|I|} \choose 2} - {d \choose 2}}{{n-|I|} \choose {d - |I|}}\right| \leq 2^{{{|I|} \choose 2} - {d \choose 2}}n^{d - |I| - 1}$
\end{lemma}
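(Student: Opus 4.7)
The plan is to factor out the common prefactor $2^{\binom{|I|}{2}-\binom{d}{2}}$ from both sides (it is positive), reducing the claim to showing
\[
\left|\frac{(n-|I|)^{d-|I|}}{(d-|I|)!} - \binom{n-|I|}{d-|I|}\right| \leq n^{d-|I|-1}.
\]
Setting $k = d - |I|$ and $m = n - |I|$, this is precisely the kind of monomial-versus-binomial estimate supplied by \pref{prop:factorialapproxprop}.

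First I would dispose of the degenerate case $d = |I|$, where $k = 0$: the left-hand side is $|1 - 1| = 0$ while the right-hand side is $n^{-1} > 0$. For $k \geq 1$, I would apply \pref{prop:factorialapproxprop} with parameters $(m, k) = (n-|I|, d-|I|)$ to obtain
\[
0 \;\leq\; \frac{m^{k}}{k!} - \binom{m}{k} \;\leq\; \frac{k^2}{2m}\cdot \frac{m^{k}}{k!} \;=\; \frac{k^2}{2\cdot k!}\, m^{k-1}.
\]

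Then I would observe that the prefactor $k^2/(2\cdot k!) \leq 1$ for every positive integer $k$: it is $1/2$, $1$, $3/4$, $1/3$, $\ldots$ for $k = 1, 2, 3, 4, \ldots$, and the inequality $k^2 \leq 2 \cdot k!$ follows by a one-line induction for $k \geq 2$. Combined with the trivial estimate $m^{k-1} = (n-|I|)^{d-|I|-1} \leq n^{d-|I|-1}$, this yields the required bound. Restoring the prefactor $2^{\binom{|I|}{2}-\binom{d}{2}}$ gives the lemma.

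No step here poses a real obstacle; this is a standard asymptotic enumeration estimate whose only content is bookkeeping the $k^2/(2\cdot k!)$ coefficient. The only mildly delicate point is verifying the coefficient bound uniformly in $k$, and that is handled by the short induction mentioned above.
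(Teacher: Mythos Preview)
Your proof is correct and essentially identical to the paper's own argument: both factor out $2^{\binom{|I|}{2}-\binom{d}{2}}$, apply \pref{prop:factorialapproxprop} with parameters $n-|I|$ and $d-|I|$, and then use $\tfrac{(d-|I|)^2}{2(d-|I|)!}\leq 1$ together with $(n-|I|)^{d-|I|-1}\leq n^{d-|I|-1}$. The paper is simply terser, leaving the coefficient check and the $d=|I|$ edge case implicit.
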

\begin{proof}[Proof of \pref{lem:ndvpartonelemma}]
Applying \pref{prop:factorialapproxprop} on $n - |I|$ and $d - |I|$ gives 
$$\left|\frac{(n-|I|)^{d - |I|}}{(d - |I|)!} - {{n - |I|} \choose {d - |I|}}\right| \leq 
\frac{(d - |I|)^2}{2(n - |I|)}\frac{(n - |I|)^{d - |I|}}{(d - |I|)!} \leq (n - |I|)^{d - |I| - 1} \leq n^{d - |I| - 1}$$
Multiplying this equation by $2^{{{|I|} \choose 2} - {d \choose 2}}$ gives the claimed result.
\end{proof}

\begin{proposition}\label{prop:exponentprop}
For all nonnegative integers $x$ and $d$ where $x \leq d$, $x(d-x) + {x \choose 2} - {d \choose 2} = -{{d-x} \choose 2}$. 
\end{proposition}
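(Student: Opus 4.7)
The statement is a purely algebraic identity about binomial coefficients, so the plan is simply to expand both sides and verify equality by direct manipulation. I would first rewrite the binomial coefficients using their closed forms, $\binom{x}{2} = \frac{x(x-1)}{2}$, $\binom{d}{2} = \frac{d(d-1)}{2}$, and $\binom{d-x}{2} = \frac{(d-x)(d-x-1)}{2}$, and then put the left-hand side over a common denominator of $2$.

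After clearing denominators, the claim becomes $2x(d-x) + x(x-1) - d(d-1) = -(d-x)(d-x-1)$. Expanding both sides yields the same polynomial in $x$ and $d$: the left side simplifies to $-(x^2 - 2xd + d^2) + (d - x) = -(d-x)^2 + (d-x)$, which factors as $-(d-x)(d-x-1)$, matching the right-hand side exactly.

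There is no real obstacle here; the identity holds for all integers $x,d$ (and even all reals), and the hypothesis $0 \le x \le d$ is only needed so that $\binom{d-x}{2}$ is interpreted in the standard way (nonnegative arguments). I would present the proof as a short two-line computation without further commentary, since the paper uses this identity as a black box in subsequent binomial manipulations in \pref{sec:degreevariancecalculations}.
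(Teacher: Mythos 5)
Your computation is correct: expanding $\binom{x}{2}$, $\binom{d}{2}$, $\binom{d-x}{2}$ and clearing the denominator reduces the claim to $2x(d-x)+x(x-1)-d(d-1)=-(d-x)(d-x-1)$, which indeed follows since the left side equals $-(d-x)^2+(d-x)$. The paper, however, argues differently: it rearranges the identity as $\binom{d}{2}=\binom{x}{2}+x(d-x)+\binom{d-x}{2}$ and proves it by double counting, observing that a pair of elements of $[1,d]$ either lies entirely in $[1,x]$, straddles $[1,x]$ and $[x+1,d]$, or lies entirely in $[x+1,d]$. Your algebraic route has the minor advantage of showing the identity holds as a polynomial identity in $x$ and $d$ (so the hypothesis $0\le x\le d$ is needed only to interpret the binomial coefficients combinatorially, as you note), while the paper's counting argument explains \emph{why} the identity is true and matches the set-partition flavor of how it is used in the surrounding degree calculations. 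Either proof is acceptable here; there is no gap in yours.
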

\begin{proposition}\label{prop:degG-2}
For any nonnegative integer $k$ and any $x$ such that $|kx| \leq 1$, $\left|(1+x)^k - (1 + kx)\right| \leq {k^2}{x^2}$
\end{proposition}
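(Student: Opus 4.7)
The plan is to apply the binomial theorem directly to the quantity $(1+x)^k - (1+kx)$ and bound the resulting tail. By the binomial expansion, $(1+x)^k - (1+kx) = \sum_{j=2}^{k} \binom{k}{j} x^j$, so the task reduces to showing that the absolute value of this tail sum is bounded by $k^2 x^2$. This is a standard second-order estimate for the linearization of $(1+x)^k$ around $x=0$.

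For the first step, I would bound each binomial coefficient using the elementary inequality $\binom{k}{j} \leq k^j / j!$. Applying this termwise gives
\[
  \left| \sum_{j=2}^{k} \binom{k}{j} x^j \right| \;\leq\; \sum_{j=2}^{k} \binom{k}{j} |x|^j \;\leq\; \sum_{j=2}^{k} \frac{(k|x|)^j}{j!}.
\]
Setting $y = k|x|$, which satisfies $y \leq 1$ by the hypothesis $|kx| \leq 1$, the problem reduces to showing $\sum_{j=2}^{k} y^j / j! \leq y^2 = k^2 x^2$.

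For the second step, I would factor out $y^2$ from the tail to get $\sum_{j=2}^{k} y^j/j! = y^2 \sum_{j=2}^{k} y^{j-2}/j!$. Since $0 \leq y \leq 1$, each term $y^{j-2}$ is at most $1$, so this is bounded by $y^2 \sum_{j=2}^{\infty} 1/j! = (e-2)\,y^2$. Since $e - 2 < 1$, we conclude $(e-2)\,y^2 \leq y^2 = k^2 x^2$, which is exactly the desired inequality.

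There is no serious obstacle here; the argument is a purely mechanical calculation once the binomial expansion is written down. The only mild care required is in observing that the resulting comparison uses $e - 2 < 1$, which leaves a comfortable slack factor, so no tight edge case needs to be chased. A completely analogous proof could be given via Taylor's theorem with Lagrange remainder, but the binomial approach is shorter and more elementary in this integer-exponent setting.
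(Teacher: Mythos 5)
Your proof is correct and follows essentially the same route as the paper's: expand by the binomial theorem, bound $\binom{k}{j}\leq k^j/j!$, use $|kx|\leq 1$ to reduce higher-order terms to the quadratic one, and conclude via $\sum_{j\geq 2}1/j!\leq e-2<1$. No issues.
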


Eventually in the course of proving \pref{lem:ndvavassumedlemma} we will break
$$\deg_G(I) - 2^{{{|I|} \choose 2} - {d \choose 2}}{{n-|I|} \choose {d - |I|}} - 
\left(|A_I| - \frac{n-|I|}{2^{|I|}}\right)\frac{2^{{{|I|+1} \choose 2} - {d \choose 2}}(n-|I|)^{d-|I|-1}}{(d-|I|-1)!}$$ 
into pieces.
The following lemmas offer the necessary bounds on each piece.

\begin{lemma}\label{lem:ndvparttwolemma}
If $(d - |I|)\left|\frac{2^{|I|}|A_I|}{n - |I|} - 1\right| \leq 1$ then
\begin{align*}&\left|2^{-{{d - |I|} \choose 2}}\frac{|A_I|^{d - |I|}}{(d - |I|)!} - 
\frac{2^{{{|I|} \choose 2} - {d \choose 2}}(n-|I|)^{d-|I|}}{(d - |I|)!} - 
\left(|A_I| - \frac{n-|I|}{2^{|I|}}\right)\frac{2^{{{|I|+1} \choose 2} - {d \choose 2}}(n-|I|)^{d-|I|-1}}{(d-|I|-1)!}\right| \\
&\leq (d - |I|)^2\left(\frac{2^{|I|}|A_I|}{n-|I|} - 1\right)^2
\frac{2^{{{|I|} \choose 2} - {d \choose 2}}(n-|I|)^{d-|I|}}{(d - |I|)!}\end{align*}
\end{lemma}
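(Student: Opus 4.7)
My plan is to reduce the three-term expression on the left-hand side to a single instance of Proposition~\ref{prop:degG-2} (the bound on $|(1+x)^k - (1+kx)|$) by a careful change of variables that factors out a common scalar.

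To lighten notation I will write $s = |I|$, $k = d - |I|$, $m = n - |I|$, and $a = |A_I|$. The hypothesis becomes $k|y| \leq 1$ where $y \defeq 2^s a / m - 1$. The key algebraic observation is Proposition~\ref{prop:exponentprop}, which gives $s(d-s) + \binom{s}{2} - \binom{d}{2} = -\binom{d-s}{2}$, i.e., $sk + \binom{s}{2} - \binom{d}{2} = -\binom{k}{2}$. In particular, $2^{-\binom{k}{2}} = 2^{sk} \cdot 2^{\binom{s}{2} - \binom{d}{2}}$, and $\binom{s+1}{2} - \binom{d}{2} = s + \binom{s}{2} - \binom{d}{2}$, so $2^{\binom{s+1}{2} - \binom{d}{2}} = 2^s \cdot 2^{\binom{s}{2} - \binom{d}{2}}$.

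Setting $C \defeq 2^{\binom{s}{2} - \binom{d}{2}}/k!$, these identities let me rewrite the three terms inside the absolute value on the left-hand side as
\[
  2^{sk} C \, a^k \; - \; C\, m^k \; - \; (a - m/2^s)\cdot k \cdot 2^s \cdot C \cdot m^{k-1}
  = C \cdot \Bigl[ (2^s a)^k - m^k - k\,m^{k-1}(2^s a - m)\Bigr].
\]
Now I factor $m^k$ out of the bracket and let $x = 2^s a / m$, so that $2^s a = xm$ and $2^s a - m = m(x-1)$. The bracket becomes
\[
  m^k \bigl( x^k - 1 - k(x-1) \bigr).
\]
Writing $y = x - 1$, the left-hand side of the lemma is therefore exactly
\[
  C \cdot m^k \cdot \bigl|(1+y)^k - 1 - ky\bigr|.
\]

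At this point, the hypothesis $(d-|I|)\bigl|2^{|I|}|A_I|/(n-|I|) - 1\bigr| \leq 1$ says precisely that $|ky| \leq 1$, which is the hypothesis of Proposition~\ref{prop:degG-2}. Applying that proposition bounds the quantity above by $C\cdot m^k \cdot k^2 y^2$. Substituting back $y = 2^{|I|}|A_I|/(n-|I|) - 1$, $m = n - |I|$, and the definition of $C$ gives exactly the right-hand side of the lemma, completing the proof.

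The proof is essentially bookkeeping: the main (minor) obstacle is spotting the variable change $x = 2^s a / m$ that collapses the three-term difference into the standard $(1+y)^k - 1 - ky$ form, together with the $2$-power identity from Proposition~\ref{prop:exponentprop} that ensures the prefactors $2^{-\binom{k}{2}}$, $2^{\binom{s}{2} - \binom{d}{2}}$, and $2^{\binom{s+1}{2} - \binom{d}{2}}$ all line up after pulling $m^k$ and $C$ out.
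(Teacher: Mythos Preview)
Your proof is correct and follows essentially the same approach as the paper: both apply Proposition~\ref{prop:degG-2} with $x = \tfrac{2^{|I|}|A_I|}{n-|I|} - 1$ and $k = d-|I|$, and use Proposition~\ref{prop:exponentprop} to reconcile the powers of $2$. The only cosmetic difference is the order of operations---you first factor out the common scalar $C m^k$ and then apply the proposition, whereas the paper applies the proposition to the normalized ratio and multiplies through afterward.
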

\begin{proof}[Proof of \pref{lem:ndvparttwolemma}]
Applying \pref{prop:degG-2} with $x = \frac{2^{|I|}|A_I|}{n-|I|} - 1$ and $k = (d - |I|)$, since 
$(d - |I|)\left|\frac{2^{|I|}|A_I|}{n - |I|} - 1\right| \leq 1$, 
$$\left|\frac{2^{|I|(d - |I|)}|A_I|^{d - |I|}}{(n - |I|)^{d - |I|}} - 1 - (d - |I|)
\left(\frac{2^{|I|}|A_I|}{n-|I|} - 1\right)\right| \leq 
(d - |I|)^2\left(\frac{2^{|I|}|A_I|}{n-|I|} - 1\right)^2$$
Multiplying this equation by $\frac{2^{{{|I|} \choose 2} - {d \choose 2}}(n-|I|)^{d-|I|}}{(d - |I|)!}$ and using 
\pref{prop:exponentprop} with $x = |I|$ gives
\begin{align*}&\left|2^{-{{d - |I|} \choose 2}}\frac{|A_I|^{d - |I|}}{(d - |I|)!} - 
\frac{2^{{{|I|} \choose 2} - {d \choose 2}}(n-|I|)^{d-|I|}}{(d - |I|)!} - 
(d - |I|)\left(\frac{2^{|I|}|A_I|}{n-|I|} - 1\right)\frac{2^{{{|I|} \choose 2} - {d \choose 2}}(n-|I|)^{d-|I|}}{(d - |I|)!}\right| \\
&= \left|2^{-{{d - |I|} \choose 2}}\frac{|A_I|^{d - |I|}}{(d - |I|)!} - 
\frac{2^{{{|I|} \choose 2} - {d \choose 2}}(n - |I|)^{d-|I|}}{(d - |I|)!} - 
\left(A_I - \frac{n-|I|}{2^{|I|}}\right)\frac{2^{{{|I|+1} \choose 2} - {d \choose 2}}(n - |I|)^{d-|I|-1}}{(d-|I|-1)!}\right| \\
&\leq (d - |I|)^2\left(\frac{2^{|I|}A_I}{n-|I|} - 1\right)^2\frac{2^{{{|I|} \choose 2} - {d \choose 2}}(n-|I|)^{d-|I|}}{(d - |I|)!}\end{align*}
\end{proof}

\begin{lemma}\label{lem:ndvpartthreelemma}
If $|A_I| > d - |I|$ then $\left|2^{-{{d - |I|} \choose 2}}{{|A_I|} \choose {d-|I|}} - 2^{-{{d - |I|} \choose 2}}
\frac{|A_I|^{d - |I|}}{(d - |I|)!}\right| \leq 2^{-{{d - |I|} \choose 2}}|A_I|^{d-|I|-1}$
\end{lemma}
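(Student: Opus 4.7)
The plan is to derive this directly from \pref{prop:factorialapproxprop}, which already bounds the gap between $\binom{n}{k}$ and $n^k/k!$ by a factor of $k^2/(2n)$. Setting $n = |A_I|$ and $k = d - |I|$ (valid under the hypothesis $|A_I| > d - |I|$), \pref{prop:factorialapproxprop} yields
\[
  0 \;\leq\; \frac{|A_I|^{d-|I|}}{(d-|I|)!} - \binom{|A_I|}{d-|I|} \;\leq\; \frac{(d-|I|)^2}{2|A_I|} \cdot \frac{|A_I|^{d-|I|}}{(d-|I|)!}\mper
\]
Multiplying through by the positive constant $2^{-\binom{d-|I|}{2}}$ bounds the quantity whose absolute value we wish to control.

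It then remains to show that the right-hand side is at most $2^{-\binom{d-|I|}{2}} |A_I|^{d-|I|-1}$. After cancelling common factors, this reduces to the purely arithmetic inequality
\[
  \frac{(d-|I|)^2}{2\,(d-|I|)!} \;\leq\; 1\mcom
\]
which I verify for all $m = d-|I| \geq 0$: the cases $m = 0,1,2$ are immediate ($0 \leq 2$, $1 \leq 2$, $4 \leq 4$), and for $m \geq 3$ one has $m! \geq m \cdot (m-1) \geq m^2/2$ since $m-1 \geq m/2$. This completes the argument.

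There is no real obstacle — the lemma is a one-line consequence of \pref{prop:factorialapproxprop} combined with the elementary bound $m^2 \leq 2 \cdot m!$. The hypothesis $|A_I| > d-|I|$ is used only to ensure that $\binom{|A_I|}{d-|I|}$ is nonzero and that \pref{prop:factorialapproxprop} applies in the intended regime $k < n$.
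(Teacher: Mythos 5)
Your proof is correct and matches the paper's argument: both apply \pref{prop:factorialapproxprop} with $n = |A_I|$, $k = d-|I|$, multiply by $2^{-\binom{d-|I|}{2}}$, and absorb the factor $\frac{(d-|I|)^2}{2(d-|I|)!} \leq 1$ (the paper states this elementary bound as $\frac{(d-|I|)^2}{(d-|I|)!} \leq 2$ and uses it implicitly, while you verify it explicitly). No further changes needed.
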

\begin{proof}[Proof of \pref{lem:ndvpartthreelemma}]
Applying \pref{prop:factorialapproxprop} on $|A_I|$ and $d - |I|$ gives 
$$\left|\frac{|A_I|^{d-|I|}}{(d-|I|)!} - {{|A_I|} \choose {d-|I|}}\right| \leq \frac{(d-|I|)^2}{2|A_I|}\frac{|A_I|^{d-|I|}}{(d-|I|)!} 
\leq |A_I|^{d-|I|-1}$$
Multiplying this equation by $2^{-{{d - |I|} \choose 2}}$ gives the claimed result.
\end{proof}

\begin{lemma}\label{lem:ndvpartfourlemma}
For all $\epsilon_1 \in (0,1)$, if $|A_I| > {d^2}(2e - e\ln{\epsilon_1})(2e + 2 - e\ln{\epsilon_1})$ then 
$$\Pr\left[\left|\deg_G(I) - 2^{-{{d-|I|} \choose 2}}{|A_I| \choose {d-|I|}}\right| > 
e(2 - \ln{\epsilon_1})\frac{(d-|I|)^2}{(d-|I|)!}|A_I|^{d-|I|-1}\right] < \epsilon_1$$
\end{lemma}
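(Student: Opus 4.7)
The plan is to reduce this directly to \pref{thm:cliqueboundtheoremone} applied to an induced subgraph. First I would observe that under the conditioning of the lemma --- namely, that $I$ is a clique and that we have revealed all edges incident to $I$ (which determines the set $A_I$) --- the remaining randomness in $G$ consists of independent, uniformly random $\pm 1$ indicators for the edges among $[n] \setminus I$. In particular, the induced subgraph $G[A_I]$ is distributed as $G(|A_I|, 1/2)$ and is independent of everything we have conditioned on.

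The next step is the key combinatorial identification. I would argue that
\[
  \deg_G(I) = N_{d-|I|}(G[A_I])\mper
\]
Indeed, a set $S \supseteq I$ of size $d$ is a clique in $G$ if and only if every vertex of $S \setminus I$ is adjacent in $G$ to every vertex of $I$ (equivalently, $S \setminus I \subseteq A_I$) and $S \setminus I$ is itself a clique in $G$. Since $I$ is already a clique by assumption and $|S \setminus I| = d - |I|$, the supersets $S$ are in bijection with the $(d-|I|)$-cliques of $G[A_I]$.

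Now I would invoke \pref{thm:cliqueboundtheoremone} with the substitutions $n \leftarrow |A_I|$ and $x \leftarrow d-|I|$. The hypothesis $|A_I| > d^2(2e - e\ln \epsilon_1)(2e+2-e\ln\epsilon_1)$ of \pref{lem:ndvpartfourlemma} is at least as strong as the hypothesis $|A_I| > (d-|I|)^2(2e-e\ln\epsilon_1)(2e+2-e\ln\epsilon_1)$ required by \pref{thm:cliqueboundtheoremone} (since $d - |I| \leq d$), so the latter applies, yielding
\[
  \Pr\!\left[\left|N_{d-|I|}(G[A_I]) - 2^{-\binom{d-|I|}{2}}\binom{|A_I|}{d-|I|}\right| > e(2-\ln\epsilon_1)\frac{(d-|I|)^2}{(d-|I|)!}|A_I|^{d-|I|-1}\right] < \epsilon_1\mper
\]
Substituting the identity $\deg_G(I) = N_{d-|I|}(G[A_I])$ gives exactly the statement of \pref{lem:ndvpartfourlemma}. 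There is essentially no obstacle: the only content in the proof is the observation that, conditionally, the edges inside $A_I$ are still uniform and independent (since we have only revealed edges with at least one endpoint in $I$, none of which lie inside $A_I$), which lets us apply the previously established tail bound for $G(n,1/2)$ as a black box.
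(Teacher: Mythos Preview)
Your proof is correct and follows exactly the paper's approach: the paper's proof is the single line ``This lemma follows immediately from applying \pref{thm:cliqueboundtheoremone} on the random graph $G$ restricted to the vertices $A_I$,'' and you have simply unpacked that sentence --- the conditional independence of the edges inside $A_I$, the identification $\deg_G(I) = N_{d-|I|}(G[A_I])$, and the verification that the hypothesis with $d^2$ implies the one with $(d-|I|)^2$.
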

\begin{proof}[Proof of \pref{lem:ndvpartfourlemma}]
This lemma follows immediately from applying \pref{thm:cliqueboundtheoremone} on the random graph $G$ restricted to the vertices $A_I$.
\end{proof}

\begin{proof}[Proof of \pref{lem:ndvavassumedlemma}]
We break $$\deg_G(I) - 2^{{{|I|} \choose 2} - {d \choose 2}}{{n-|I|} \choose {d - |I|}} - 
\left(|A_I| - \frac{n-|I|}{2^{|I|}}\right)\frac{2^{{{|I|+1} \choose 2} - {d \choose 2}}(n-|I|)^{d-|I|-1}}{(d-|I|-1)!}$$ 
into four parts and analyze each one separately.
\begin{enumerate}
\item $2^{{{|I|} \choose 2} - {d \choose 2}}\frac{(n - |I|)^{d - |I|}}{(d-|I|)!} - 2^{{{|I|} \choose 2} - 
{d \choose 2}}{{n-|I|} \choose {d - |I|}}$
\item $2^{-{{d - |I|} \choose 2}}\frac{|A_I|^{d - |I|}}{(d - |I|)!} - 
2^{{{|I|} \choose 2} - {d \choose 2}}\frac{(n - |I|)^{d-|I|}}{(d - |I|)!} - 
\left(|A_I| - \frac{n-|I|}{2^{|I|}}\right)\frac{2^{{{|I|+1} \choose 2} - {d \choose 2}}(n - |I|)^{d-|I|-1}}{(d-|I|-1)!}$
\item $2^{-{{d - |I|} \choose 2}}{{|A_I|} \choose {d-|I|}} - 2^{-{{d - |I|} \choose 2}}\frac{|A_I|^{d - |I|}}{(d - |I|)!}$
\item $\deg_G(I) - 2^{-{{d - |I|} \choose 2}}{{|A_I|} \choose {d-|I|}}$
\end{enumerate}
Combining \pref{lem:ndvpartonelemma}, \pref{lem:ndvparttwolemma}, \pref{lem:ndvpartthreelemma}, and \pref{lem:ndvpartfourlemma}, we have that under 
the given conditions, 
\begin{align*}
\Pr\Big(&\left|\deg_G(I) - 2^{{{|I|} \choose 2} - {d \choose 2}}{{n-|I|} \choose {d - |I|}} - 
\left(|A_I| - \frac{n-|I|}{2^{|I|}}\right)\frac{2^{{{|I|+1} \choose 2} - {d \choose 2}}(n-|I|)^{d-|I|-1}}{(d-|I|-1)!}\right| > \\
&2^{{{|I|} \choose 2} - {d \choose 2}}n^{d - |I| - 1} + (d - |I|)^2\left(\frac{2^{|I|}|A_I|}{n-|I|} - 1\right)^2
\frac{2^{{{|I|} \choose 2} - {d \choose 2}}(n-|I|)^{d-|I|}}{(d - |I|)!} + \\
&2^{-{{d - |I|} \choose 2}}|A_I|^{d-|I|-1} + e(2 - \ln{\epsilon_1})\frac{(d-|I|)^2}{(d-|I|)!}|A_I|^{d-|I|-1}\Big) < \epsilon_1
\end{align*}
The result now reduces to showing the following equation 
$$2^{{{|I|} \choose 2} - {d \choose 2}}n^{d - |I| - 1} + 
2^{-{{d - |I|} \choose 2}}|A_I|^{d-|I|-1} + e(2 - \ln{\epsilon_1})\frac{(d-|I|)^2}{(d-|I|)!}|A_I|^{d-|I|-1} \leq 
10(2 - \ln{\epsilon_1})n^{d-|I|-1}$$
which follows from the facts that $|I| < d$, $|A_I| \leq n$, and $\frac{(d-|I|)^2}{(d-|I|)!} \leq 2$.
\end{proof}
To use \pref{lem:ndvavassumedlemma} to prove \pref{thm:cliqueboundtheoremtwo}, we need probabilistic bounds 
on $|A_I|$.
\begin{lemma}\label{avboundslemma}
There is a universal constant $C$ so that
for all $\epsilon_2 \in (0,1)$, $P\left[||A_I| - 2^{-|I|}(n - |I|)| > C(2 - \ln{\epsilon_2})\sqrt{n}\right] < \epsilon_2$
\end{lemma}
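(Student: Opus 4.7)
The plan is to observe that $|A_I|$ is a sum of independent Bernoulli random variables and then apply a standard sub-Gaussian concentration inequality such as Hoeffding.

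First, I would note that for each vertex $v \in [n] \setminus I$, the event $\{v \in A_I\}$ depends only on the $|I|$ edge indicators $\{g_{v,i}\}_{i \in I}$. Because $G \sim G(n,1/2)$ draws the edges independently, and because for distinct $v, v' \in [n] \setminus I$ the corresponding edge sets $\{(v,i)\}_{i \in I}$ and $\{(v',i)\}_{i \in I}$ are disjoint, the indicator variables $\mathbf{1}[v \in A_I]$ for $v \in [n] \setminus I$ are mutually independent. Each is a Bernoulli random variable with parameter $2^{-|I|}$, so $\mathbb{E}|A_I| = 2^{-|I|}(n - |I|)$ and $|A_I|$ is a sum of $n - |I|$ independent $[0,1]$-valued random variables.

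Second, applying Hoeffding's inequality gives
\[
  \Pr\left[\,\bigl| |A_I| - 2^{-|I|}(n-|I|) \bigr| > t\,\right] \le 2\exp\!\left(-\frac{2t^2}{n - |I|}\right)
\]
for every $t > 0$. Setting the right-hand side equal to $\epsilon_2$ and solving for $t$ yields $t = \sqrt{(n-|I|)\ln(2/\epsilon_2)/2}$. Since $n - |I| \le n$ and $\sqrt{\ln(2/\epsilon_2)} \le \sqrt{\ln 2} + \sqrt{-\ln \epsilon_2} \le O(2 - \ln \epsilon_2)$ (using $\sqrt{x} \le 1 + x$ for $x \ge 0$), there is a universal constant $C$ with $t \le C(2 - \ln \epsilon_2)\sqrt n$, which gives the claimed bound.

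There is no real obstacle: the only point requiring care is verifying the independence claim (which follows immediately from the disjointness of the relevant edge sets) and absorbing logarithmic factors into the constant $C$ so that the tail parameter $\sqrt{\ln(2/\epsilon_2)}$ can be replaced by the slightly coarser $(2 - \ln \epsilon_2)$ form used throughout this section.
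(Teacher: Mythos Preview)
Your proposal is correct and essentially identical to the paper's own proof: both recognize $|A_I|$ as a sum of $n-|I|$ independent Bernoulli$(2^{-|I|})$ indicators and invoke a standard concentration inequality. The only cosmetic difference is that the paper cites Bernstein's inequality while you use Hoeffding's, but either yields the required $O((2-\ln\epsilon_2)\sqrt n)$ deviation bound after the same elementary manipulation.
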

\begin{proof}
The lemma follows from standard concentration of measure.
If we let $x_i$ be $1$ if $i \notin I$ and $i$ is adjancent to all 
vertices in $I$ and $0$ otherwise then $\sum_{i=1}^{n}{x_i} = A_I$.
The expected value of 
$A_I = \sum_{i=1}^{n}{x_i}$ is $2^{-|I|}(n - |I|)$, so by Bernstein's inequality there is $C$ so that for all 
$\epsilon_2 \in (0,1)$, $\Pr\left[|A_I - 2^{-|I|}(n - |I|)| > C(2 - \ln{\epsilon_2})\sqrt{n}\right] < \epsilon_2$. 
\end{proof}

We have all we need now to prove \pref{thm:cliqueboundtheoremtwo}.
\begin{proof}[Proof of \pref{thm:cliqueboundtheoremtwo}]
The result is trivial if $|I| = d$ and follows immediately from \pref{thm:cliqueboundtheoremone} if $|I| = 0$ 
so we may assume that $0 < |I| < d$.

In what follows, $C$ and $C'$ denotes universal constants which may vary from line to line.
Now recall that by \pref{lem:ndvavassumedlemma}, for any $\epsilon_1 \in (0,1)$, 
\begin{align*}
P\Big(&\left|\deg_G(I) - 2^{{{|I|} \choose 2} - {d \choose 2}}{{n-|I|} \choose {d - |I|}} - 
\left(|A_I| - \frac{n-|I|}{2^{|I|}}\right)\frac{2^{{{|I|+1} \choose 2} - {d \choose 2}}(n-|I|)^{d-|I|-1}}{(d-|I|-1)!}\right| > \\
&C (2 - \ln{\epsilon_1})n^{d-|I|-1} + (d - |I|)^2\left(\frac{2^{|I|}|A_I|}{n-|I|} - 1\right)^2
\frac{2^{{{|I|} \choose 2} - {d \choose 2}}(n-|I|)^{d-|I|}}{(d - |I|)!}\Big) < \epsilon_1
\end{align*} 
so long as the following conditions hold:
\begin{enumerate}
\item $(d - |I|)\left|\frac{2^{|I|}|A_I|}{n - |I|} - 1\right| \leq 1$
\item $|A_I| > {d^2}(2e - e\ln{\epsilon_1})(2e + 2 - e\ln{\epsilon_1})$
\end{enumerate}
Taking $\epsilon_1 = \epsilon_2 = \frac{\epsilon}{2}$, plugging Lemma \ref{avboundslemma} into these equations and using the 
union bound, we have that 
\begin{align*}
P\Big(&\left|\deg_G(I) - 2^{{{|I|} \choose 2} - {d \choose 2}}{{n-|I|} \choose {d - |I|}} - 
\left(|A_I| - \frac{n-|I|}{2^{|I|}}\right)\frac{2^{{{|I|+1} \choose 2} - {d \choose 2}}(n-|I|)^{d-|I|-1}}{(d-|I|-1)!}\right| > \\
&C (3 - \ln{\epsilon})n^{d-|I|-1} + (d - |I|)^2\left(\frac{2^{|I|}C'(3 - \ln{\epsilon})\sqrt{n}}{n-|I|}\right)^2
\frac{2^{{{|I|} \choose 2} - {d \choose 2}}(n-|I|)^{d-|I|}}{(d - |I|)!}\Big) < \epsilon
\end{align*} 
so long as the corresponding conditions hold. Assuming these conditions hold for now, since $|I| < \frac{n}{16}$, $|I| < d$, 
$\frac{(d-|I|)^2}{(d - |I|)!} \leq 2$, and $2^{|I|}2^{{|I|} \choose 2} = 2^{{|I|+1} \choose 2}$,
\begin{align*}
(d - |I|)^2\left(\frac{2^{|I|}C(3 - \ln{\epsilon})\sqrt{n}}{n-|I|}\right)^2
\frac{2^{{{|I|} \choose 2} - {d \choose 2}}(n-|I|)^{d-|I|}}{(d - |I|)!} &\leq 
C'\frac{2^{|I|}(3 - \ln{\epsilon})^2{n}}{(n - |I|)}(n-|I|)^{d-|I|-1} \\
&< C \cdot {2^d}(3 - \ln{\epsilon})^2{n}^{d-|I|-1}
\end{align*}
Plugging this in we have that
\begin{align*}
P\Big(&\left|\deg_G(I) - 2^{{{|I|} \choose 2} - {d \choose 2}}{{n-|I|} \choose {d - |I|}} - 
\left(|A_I| - \frac{n-|I|}{2^{|I|}}\right)\frac{2^{{{|I|+1} \choose 2} - {d \choose 2}}(n-|I|)^{d-|I|-1}}{(d-|I|-1)!}\right| > \\
&C 2^{d}(3 - \ln{\epsilon})^2n^{d-|I|-1}\Big) < \epsilon
\end{align*}
as needed. For the first part of Theorem \pref{thm:cliqueboundtheoremtwo}, note that this implies that 
\begin{align*}
P\Big(&\left|\deg_G(I) - 2^{{{|I|} \choose 2} - {d \choose 2}}{{n-|I|} \choose {d - |I|}}\right| > \\
&\left||A_I| - \frac{n-|I|}{2^{|I|}}\right|\frac{2^{{{|I|+1} \choose 2} - {d \choose 2}}(n-|I|)^{d-|I|-1}}{(d-|I|-1)!} 
+ C2^{d}(3 - \ln{\epsilon})^2n^{d-|I|-1}\Big) < \epsilon
\end{align*}
Plugging in Lemma \ref{avboundslemma} and noting that
$$C(3 - \ln{\epsilon})\sqrt{n}\frac{2^{{{|I|+1} \choose 2} - {d \choose 2}}(n-|I|)^{d-|I|-1}}{(d-|I|-1)!} 
< C'(3 - \ln{\epsilon})n^{d - |I| - \frac{1}{2}}$$
we have that 
$$\Pr\Big(\left|\deg_G(I) - 2^{{{|I|} \choose 2} - {d \choose 2}}{{n-|I|} \choose {d - |I|}}\right| > 
C(3 - \ln{\epsilon})n^{d - |I| - \frac{1}{2}} + C'2^{d}(3 - \ln{\epsilon})^2n^{d-|I|-1}\Big) < \epsilon$$
Taking $n \geq C{d^2}2^{2d}(3 - \ln{\epsilon})^2$ and $d \geq 2$, 
$$C2^{d}(3 - \ln{\epsilon})^2{n}^{d-|I|-1} \leq 
C'(3 - \ln{\epsilon})n^{d - |I| - \frac{1}{2}}$$
Plugging this in gives that 
$$\Pr\Big(\left|\deg_G(I) - 2^{{{|I|} \choose 2} - {d \choose 2}}{{n-|I|} \choose {d - |I|}}\right| > 
C(3 - \ln{\epsilon})n^{d - |I| - \frac{1}{2}}\Big) < \epsilon$$
as needed. All that is left is to check the conditions for \pref{lem:ndvavassumedlemma}, which are as follows.
\begin{enumerate}
\item $(d - |I|)\frac{2^{|I|}e(3 - \ln{\epsilon})\sqrt{n}}{n-|I|} \leq 1$
\item $2^{-|I|}(n - |I|) > {d^2}e(3 - \ln{\epsilon})(3e + 2 - e\ln{\epsilon}) + e(3 - \ln{\epsilon})\sqrt{n}$
\end{enumerate}
These conditions are true if $n \geq 4{d^2}2^{2d}(3 - \ln{\epsilon})^2$. To see this, note that since $d > |I| > 0$ and  
$|I| < \frac{n}{16}$
\begin{enumerate}
\item $(d - |I|)2^{|I|}e(3 - \ln{\epsilon})\sqrt{n} \leq d2^{|I|}e(3 - \ln{\epsilon})\sqrt{n} - |I| 
\leq \frac{e}{4}\sqrt{4{d^2}2^{2d}(3 - \ln{\epsilon})^2}\sqrt{n} - |I| \leq n - |I|$
\item $2^{|I|}{d^2}e(3 - \ln{\epsilon})(3e + 2 - e\ln{\epsilon}) < 2^{|I|}{d^2}e^2\frac{3e + 2}{3e}(3 - \ln{\epsilon})^2 < 
10 \cdot {2^{|I|}}{d^2}(3 - \ln{\epsilon})^2 \leq \frac{5}{16}n$
\item $2^{|I|}e(3 - \ln{\epsilon})\sqrt{n} < 4 \cdot {2^{|I|}}(3 - \ln{\epsilon})\left(2d{2^{d}}(3 - \ln{\epsilon})\right) \leq \frac{n}{2}$
\end{enumerate}
Dividing the first statement by $n - |I|$ gives the first condition. Using the second and third statements, 
$$2^{|I|}{d^2}e(3 - \ln{\epsilon})(3e + 2 - e\ln{\epsilon}) + 2^{|I|}e(3 - \ln{\epsilon})\sqrt{n} < \frac{13}{16}n < n - |I|$$
Dividing this by $2^{|I|}$ gives the second condition, as needed.
\end{proof}

\subsection{Proofs of Remaining Lemmas}
\label{sec:deg-lems}
\begin{proof}[Proof of \pref{lem:cliqueboundlemma}]
\begin{definition}
For each multi-set $S = \{V_1, \cdots, V_k\}$ of $k$ $x$-cliques, define the constraint graph 
$H_S$ as follows.
\begin{enumerate}
\item $V(H_S) = \{V_1, \cdots, V_k\}$
\item $E(H_S) = \{(V_i,V_j): |V_i \cap V_j| \geq 2\}$
\end{enumerate}
\end{definition}
Let's first bound the number of $S$ such that $H_S$ is connected.
\begin{lemma}\label{lem:connectedcaselemma}
For any $x,k \geq 2$, there are at most 
$n^{kx-2k+2}k!\frac{2}{x^4}\left(\frac{{x^4}}{2(x!)}\right)^{k}$ multi-sets $S$ of $k$ $x$-cliques 
such that $H_S$ is connected.
\end{lemma}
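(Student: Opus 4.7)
My plan is to exploit the connectedness of $H_S$ to fix a canonical BFS-style ordering of each multi-set and then directly enumerate the resulting ordered tuples. Concretely, given a multi-set $S$ with $H_S$ connected, one may choose an ordering $(V_1,\ldots,V_k)$ of its elements so that for every $i\ge 2$ the clique $V_i$ is $H_S$-adjacent to some $V_j$ with $j<i$ (take any BFS/DFS order starting from any vertex of $H_S$). Mapping each multi-set to any such ordered tuple gives an injection into the set of ordered tuples $(V_1,\ldots,V_k)$ satisfying this ``each $V_i$ $H_S$-connects backwards'' property. The injection may be far from surjective and a given multi-set may admit many valid orderings, but for an upper bound that is harmless.

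Next I would count the ordered tuples by building them up in order. The root $V_1$ can be any $x$-element subset of $[n]$, so it contributes at most $\binom{n}{x}\le n^x/x!$. For each $i\ge 2$, I would specify three things in order: (i) the index $j<i$ of a predecessor of $V_i$ in the ordering, in at most $i-1$ ways; (ii) two vertices of $V_j$ that are forced to lie in $V_i$, in $\binom{x}{2}\le x^2/2$ ways; and (iii) the remaining $x-2$ vertices of $V_i$, in at most $\binom{n}{x-2}\le n^{x-2}/(x-2)!$ ways. The only (overcounting) slack here is that two different choices of ``shared pair'' can yield the same $V_i$ and that the last $x-2$ vertices might repeat the chosen pair; again this is fine for an upper bound.

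Multiplying these choices together, the $n$-exponents combine to $x+(k-1)(x-2)=kx-2k+2$; the predecessor factors give $\prod_{i=2}^k(i-1)=(k-1)!$; the $\binom{x}{2}$ factors give at most $x^{2(k-1)}/2^{k-1}$; and using $(x-2)!\ge x!/x^2$ the remaining factorial denominators contribute at most $x^{2(k-1)}/(x!)^{k-1}$. Combining with the $n^x/x!$ from the root, the total count is at most
\[
n^{kx-2k+2}\cdot\frac{(k-1)!\, x^{4k-4}}{2^{k-1}(x!)^k}
= \frac{1}{k}\cdot n^{kx-2k+2}\cdot k!\cdot\frac{2}{x^4}\cdot\Bigl(\frac{x^4}{2(x!)}\Bigr)^{k},
\]
which is at most the stated bound (in fact stronger by a factor of $k$).

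I expect no substantial obstacle in executing this plan: the argument is elementary counting once one has the BFS order in hand, and the fact that every connected graph admits such an ordering is standard. The only things to watch are keeping the arithmetic straight and confirming that the multi-set-to-ordered-tuple map is indeed an injection (trivial since the ordered tuple recovers the multi-set). The factor-of-$k$ cushion between what I obtain and what is claimed makes the argument robust against small accounting errors.
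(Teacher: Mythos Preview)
Your proposal is correct and follows essentially the same approach as the paper: order the $V_i$ so each has a predecessor in $H_S$, then count by choosing $V_1$ freely and for each later $V_i$ a predecessor, a shared pair, and the remaining $x-2$ vertices. The paper uses the slightly looser bound of $j$ (rather than your $j-1$) for the number of predecessor choices, which is why your final count is tighter by a factor of $k$.
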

\begin{proof}
Since $H_S$ is connected, we can order $\{V_1, \cdots, V_k\}$ so that for all $j > 1$ there is an $i < j$ such that 
$(V_i,V_j) \in E(H_S)$. Assuming this is the case, we have at most ${n \choose x}$ choices for $V_1$. For each $j > 1$, 
there are two vertices in $V_j$ which are contained in some $i$ where $i < j$. There are at most $j$ choices for this 
$i$ and then there are ${d \choose 2}$ choices for which two vertices of $V_i$ are contained in $V_j$. There are at 
most ${{n} \choose {x-2}}$ choices for the other $x-2$ vertices of $V_j$ so for each $j > 1$ there are at most 
$j{x \choose 2}{{n} \choose {x-2}}$ choices for $V_j$. Putting everything together, there are at most 
$${n \choose x}{k!}\left({x \choose 2}{{n} \choose {x-2}}\right)^{k-1} \leq 
{n^x}\frac{k!}{x!}\left(\frac{{x^4}n^{x-2}}{2(x!)}\right)^{k-1} \leq 
n^{kx-2k+2}k!\frac{2}{x^4}\left(\frac{{x^4}}{2(x!)}\right)^{k}$$
multi-sets $S$ of $k$ $x$-cliques such that $H_S$ is connected.
\end{proof}
Now consider the number of multi-sets $S$ of $q$ $x$-cliques such that $H_S$ has $t$ connected 
components with sizes $s_1, \cdots, s_t$. Using \pref{lem:connectedcaselemma}, there are at most 
$$\prod_{i=1}^{t}{\left(n^{{s_i}x-2{s_i}+2}{s_i}!\frac{2}{x^4}\left(\frac{{x^4}}{2(x!)}\right)^{{s_i}}\right)} = 
\left(\prod_{i=1}^{t}{{s_i}!}\right){n^{xq - 2q + 2t}\left(\frac{2}{x^4}\right)^{t}\left(\frac{{x^4}}{2(x!)}\right)^{q}}$$
such $S$.

We now total this up over all possible $t,s_1, \cdots, s_t$. For the special case that all connected components 
of $H_S$ have size $2$, there are at most ${n^{xq - q}\left(\frac{{x^2}}{x!}\right)^{q}}$ such $S$. 
We will show that this term contributes more than all of the other terms combined, which implies 
that the total number of $S$ is at most $2{n^{xq - q}\left(\frac{{x^2}}{x!}\right)^{q}}$, as needed.

For a given $t$, $\prod_{i=1}^{t}{{s_i}!} \leq 2^{t-1}{(q+2-2t)!} \leq {2^t}{(q+2)^{q-2t}}$. 
Also, each of the $t$ components of $H_S$ must have at least two vertices so to determine the sizes 
$s_1,\cdots,s_t$ it is sufficient to decide how to distribute the $q-2t$ extra vertices among the $t$ connected 
components of $H_S$. There are at most ${{q - t - 1} \choose {t-1}} \leq q^{q - 2t}$ ways to do this. Thus, the total 
contribution for terms of a given $t$ is at most 
$${2^t}{(q(q+2))^{q-2t}}n^{xq - 2q + 2t}\left(\frac{2}{x^4}\right)^{t}\left(\frac{{x^4}}{2(x!)}\right)^{q} = 
\left(\frac{{x^2}q(q+2)}{2n}\right)^{q - 2t}\left({n^{xq - q}\left(\frac{{x^2}}{(x!)}\right)^{q}}\right)$$

Since $n \geq {x^2}q(q+2)$, the ${n^{xq - q}\left(\frac{{x^2}}{x!}\right)^{q}}$ term which comes from 
$t = \frac{q}{2}$ contributes more than all the other terms combined, as needed.
\end{proof}

\begin{proof}[Proof of \pref{prop:exponentprop}]
Rearranging this equation gives ${d \choose 2} = {x \choose 2} + dx + {{d-x} \choose 2}$ which just says that 
if we want to pick two elements in $[1,d]$ we can either pick two elements in $[1,x]$, one element from $[1,x]$ and 
one element from $[x+1,d]$, or two elements from $[x+1,d]$. 
\end{proof}

\begin{proof}[Proof of \pref{prop:degG-2}]
$$\left|(1+x)^k - (1 + kx)\right| \leq \sum_{j=2}^{k}{\left|{k \choose j}{x^j}\right|} 
\leq \sum_{j=2}^{k}{\left|\frac{1}{j!}{k^j}{x^j}\right|} \leq {k^2}{x^2}\sum_{j=2}^{k}{\frac{1}{j!}} \leq {k^2}{x^2}$$
\end{proof}


\section{Optimality of MPW Analysis}\label{sec:kelner}
In this section we sketch an argument due to Kelner showing that the MPW moments are not PSD when $\omega \gg n^{1/(d+1)}$.
\begin{theorem}
With high probability, the MW moments are not PSD when $\omega \gg n^{\frac{1}{d+1}}$. In particular, if $\omega \gg n^{\frac{1}{d+1}}$ then for all $s$, for some appropriately chosen $C$, with high probability, 
$$\tilde{E}[(C{\omega}^d{x_s}-\sum_{I: I \subseteq V, |I|=d}{\prod_{i \in I}{r_s(i)x_i}})^2] < 0$$ 
with high probability.
\end{theorem}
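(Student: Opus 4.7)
The plan is to verify directly, for $C = 1/d!$, that the polynomial $q_s(x) \defeq C\omega^d x_s - \sum_{|I|=d} r_s^I x_I$ (using the shorthand $r_s^I \defeq \prod_{i \in I} r_s(i)$) has negative pseudo-expectation of its square under the MPW operator once $\omega \gg n^{1/(d+1)}$. This is the higher-degree analogue of Kelner's polynomial for degree four discussed in \pref{sec:Kelner-poly}.

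First, using $x_i^2 = x_i$ and $x_s x_I = x_{s \cup I}$, I would expand
\[
  \tilde E[q_s^2] \;=\; C^2 \omega^{2d}\, \tilde E[x_s] \;-\; 2C\omega^d \sum_{|I|=d} r_s^I\, \tilde E[x_{s \cup I}] \;+\; \sum_{|I|=|J|=d} r_s^I r_s^J\, \tilde E[x_{I \cup J}]\mper
\]
I would then compute $\E_G[\tilde E[q_s^2]]$ term by term using the identity $\tilde E[x_S] \approx (\omega/n)^{|S|} 2^{\binom{|S|}{2}}$ when $S$ is a clique in $G$ (and $0$ otherwise), together with the standard estimate that the expected number of $k$-cliques in $G(n,1/2)$ through $s$ is $\binom{n-1}{k-1} 2^{-\binom{k}{2}}$. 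The first term contributes $C^2 \omega^{2d+1}/n$. In the second, $r_s^I = 0$ when $s \in I$ since $r_s(s) = 0$, and when $s \not\in I$ and $\{s\} \cup I$ is a clique each factor equals $+1$ so $r_s^I = 1$; summing against $\tilde E[x_{s\cup I}]$ gives $-2C\omega^{2d+1}/(d!\, n)$. For the third term I would split into $I = J$ and $I \neq J$. When $I = J$, $r_s^I r_s^I = \prod_{i \in I} r_s(i)^2$ equals $1$ if $s \not\in I$ and $0$ otherwise, so the diagonal contributes $\sum_{I : s \not\in I} \tilde E[x_I] \approx \omega^d/d!$. When $I \neq J$, $r_s^I r_s^J$ vanishes unless $s \not\in I \cup J$, in which case it equals $\prod_{i \in I \triangle J} r_s(i)$; this depends only on edges incident to $s$, while $\tilde E[x_{I \cup J}]$ depends only on edges inside $I \cup J$, so the two factors are independent and $\E_G[r_s^I r_s^J] = 0$ whenever $I \triangle J \neq \emptyset$. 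Summing,
\[
  \E_G[\tilde E[q_s^2]] \;\approx\; \frac{C(Cd! - 2)\, \omega^{2d+1}}{d!\, n} + \frac{\omega^d}{d!}\mcom
\]
which at the minimizing choice $C = 1/d!$ becomes $\omega^d/d! - \omega^{2d+1}/((d!)^2 n)$. This is negative precisely when $\omega^{d+1} \gg d! \cdot n$, and in particular for any $\omega \gg n^{1/(d+1)}$ in the regime $d = o(\sqrt{\log n})$ where the factorial and polylogarithmic factors are controlled.

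The hard part will be upgrading this in-expectation calculation to a high-probability bound, which requires showing that $\tilde E[q_s^2]$ concentrates around its expectation. The first two terms and the $I = J$ diagonal of the third term are nearly deterministic---their fluctuations come only from counting small cliques near $s$ and are controlled by \pref{thm:cliqueboundtheoremtwo}. The main task is to bound the variance of the random cross-term $Y_s \defeq \sum_{I \neq J} r_s^I r_s^J \tilde E[x_{I \cup J}]$. Expanding $\E_G[Y_s^2]$ gives a sum over quadruples $(I_1, J_1, I_2, J_2)$ subject to $I_1 \triangle J_1 = I_2 \triangle J_2$ (otherwise the expectation over edges from $s$ vanishes); I would analyze this by splitting according to $t = |I_k \cap J_k|$ and the overlap between $I_1 \cup J_1$ and $I_2 \cup J_2$, using combinatorial ideas like those in \pref{sec:spectralnorm}. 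The dominant contribution should come from quadruples in which the two sets overlap heavily, and a careful counting-plus-clique-probability estimate should yield $\Var_G(Y_s) = o\big((\omega^{2d+1}/((d!)^2 n))^2\big)$ when $\omega \gg n^{1/(d+1)}$. A Chebyshev inequality followed by a union bound over $s \in [n]$ then upgrades the bound to hold simultaneously for all $s$ with high probability.
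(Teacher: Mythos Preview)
Your high-level strategy matches the paper's: the same polynomial $q_s$, the same three-term expansion, and the same comparison $\omega^{2d+1}/n$ versus $\omega^d$. But there is a real gap in your argument, and it lies exactly where the paper does extra work.

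You assert that ``$\tilde E[x_{I \cup J}]$ depends only on edges inside $I \cup J$'', and use this to conclude that $r_s^{I}r_s^{J}$ and $\tilde E[x_{I\cup J}]$ are independent. This is false for the MPW operator: by \pref{def:pseudoexpectation}, $\tilde E[x_S] = \tfrac{\deg_G(S)}{C_{2d}}\cdot \binom{\omega}{|S|}/\binom{2d}{|S|}$, where $\deg_G(S)$ counts the number of $2d$-cliques containing $S$ and $C_{2d}$ counts all $2d$-cliques. Both are global quantities; in particular, whether $s$ is adjacent to vertices of $S$ affects $\deg_G(S)$ (and $C_{2d}$). So $\E_G[r_s^{I}r_s^{J}\,\tilde E[x_{I\cup J}]] \neq 0$ in general, and the same issue contaminates your variance calculation: in $\E_G[Y_s^2]$ the surviving quadruples are not only those with $I_1\triangle J_1 = I_2\triangle J_2$, because the $\tilde E$ factors themselves carry dependence on the edges from $s$.

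The paper deals with exactly this: it explicitly notes ``$\tilde E[x_K]$ is not completely independent of $\prod_{i\in K} r_s(i)$'' and introduces a decoupled surrogate $y_K$, defined as the conditional expectation of $\tilde E[x_K]$ after resampling the edges incident to $s$. One then writes $\tilde E[x_K] = y_K + (\tilde E[x_K]-y_K)$, shows via the concentration of $\deg_G(K)$ (\pref{thm:cliqueboundtheoremtwo}) that $|\tilde E[x_K]-y_K| = O(\omega^{|K|}\log n / n^{|K|+1})$ whp, and bounds the resulting error term by $O(\omega^{d+x}/n)$, which is negligible against $\omega^{2d+1}/n$. With $y_K$ in place, the expectation and variance are then taken \emph{only over the edges incident to $s$}, conditioning on the rest of $G$; this is what makes the ``only paired terms survive'' reasoning valid. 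This conditioning-plus-decoupling step is the missing idea in your proposal; your proposed full-randomness second-moment computation could in principle be salvaged, but not without first isolating and controlling the weak dependence of $\tilde E[x_K]$ on the neighborhood of $s$.
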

\begin{proof}[Proof Sketch]
We will be using the following proposition heavily.
\begin{proposition}
For all $I \subseteq V(G)$ such that $|I| < 2d$, $\tilde{E}[\sum_{j \notin I}{x_{I \cup j}}] = (\omega-|I|)\tilde{E}[x_I]$
\end{proposition}
\begin{proof}
We have the equation that $\sum_{j}{x_j} = \omega$ so 
$$\tilde{E}[\sum_{j}{x_{I \cup j}}] = |I|\tilde{E}[x_I] + \tilde{E}[\sum_{j \notin I}{x_{I \cup j}}] 
= {\omega}\tilde{E}[x_I]$$
and the result follows.
\end{proof}
\begin{corollary}\label{kelnerproofcorollary}
For all $I \subseteq V(G)$ and all $m$ such that $|I| + m \leq 2d$, 
$$\tilde{E}[\sum_{J: I \subseteq J, |J|=|I|+m}{x_{J}}] = \frac{\prod_{x=0}^{m-1}{(\omega-|I|-x)}}{m!}\tilde{E}[x_I] = \binom{\omega-|I|}{m}\tilde{E}[x_I]$$
\end{corollary}
\begin{proof}
This result follows from repeatedly expanding out $(\omega-|K|)\tilde{E}[x_K]=\sum_{j \notin K}{\tilde{E}[x_{K \cup j}]}$. For any given $J$ such that $I \subseteq J$ and $|J| = |I|+m$ there are $m!$ different ways to reach $J$ from $I$ which gives us the $m!$ term.
\end{proof}

With this corollary in hand, we expand out $\tilde{E}[(C{\omega}^d{x_s}-\sum_{I: I \subseteq V \setminus \{s\}, \atop |I|=d}{\prod_{i \in I}{r_s(i)x_i}})^2]$, obtaining
$$\tilde{E}[C^2{\omega}^{2d}x_s] - 2C{\omega}^d\tilde{E}[\sum_{I: I \subseteq V \setminus \{s\}, \atop |I|=d}{x_{I \cup \{s\}}}] + \tilde{E}[\sum_{I,J: I,J \subseteq V \setminus \{s\}, \atop |I|=|J|=d}{\left(\prod_{i \in I \Delta J}{r_s(i)}\right)x_{I \cup J}}]$$

Using Corollary \ref{kelnerproofcorollary}, 
$$2C{\omega}^d\tilde{E}[\sum_{I: I \subseteq V \setminus \{s\}, \atop |I|=d}{x_{I \cup \{s\}}}] = 2C{\omega}^d\binom{\omega-1}{d}\tilde{E}[x_s]$$

Thus combining the first two terms gives 
$$\left(C^2{\omega}^{2d} - 2C{\omega}^d\binom{\omega-1}{d}\right)\tilde{E}[x_s]$$
From our concnetration bounds on $deg_G(s)$, with high probability $\tilde{E}(x_s)$ is $\frac{\omega}{n}(1 \pm O(\frac{log(n)}{\sqrt{n}}))$. Thus, taking $C$ to be a sufficiently small constant (which will depend on $d$), with high probability the first two terms are $-\Omega(\frac{{\omega}^{2d+1}}{n})$

Analyzing the third term is trickier. For the third term, taking $K = I \Delta J$ for each term,
\begin{align*}
\tilde{E}[\sum_{I,J: I,J \subseteq V \setminus \{s\}, \atop |I|=|J|=d}{\left(\prod_{i \in I \Delta J}{r_s(i)}\right)x_{I \cup J}}] &= \sum_{x=0}^{d}{\sum_{I,J: I,J \subseteq V \setminus \{s\}, \atop 
|I|=|J|=d, |I \Delta J|=2x}{\left(\prod_{i \in I \Delta J}{r_s(i)}\right)\tilde{E}[x_{I \cup J}]}} \\
&= \sum_{x=0}^{d}{\sum_{K:K \subseteq V \setminus \{s\}, \atop |K| = 2x}{\binom{\omega-d-x}{d-x}\left(\prod_{i \in K}{r_s(i)}\right){\tilde{E}[x_K]}}}
\end{align*}

We will now analyze the expected value and variance of this expression. However, before doing so there is a subtle issue we must deal with. ${\tilde{E}[x_K]}$ is not completely independent of $\left(\prod_{i \in K}{r_s(i)}\right)$. What saves us is that the dependence is small enough to be negligible. 

For each $K \subseteq V(G) \setminus \{s\}$, define $y_K$ to be the expected value of $\tilde{E}[x_K]$ if we preserve all of the edges of $G$ which are not incident with $s$ but reselect the edges of $G$ incident with $s$ randomly. From our concentration results on $deg_G(K)$, with high probability, for all $K$, $|\tilde{E}[x_K]-y_K|$ is $O(\frac{\omega^{|K|}log(n)}{n^{|K|+1}})$. We now write 
\begin{align*}
\sum_{x=0}^{d}{\sum_{K:K \subseteq V \setminus \{s\}, \atop |K| = 2x}{}}&{\binom{\omega-d-x}{d-x}\left(\prod_{i \in K}{r_s(i)}\right){\tilde{E}[x_K]}} = \\
&\sum_{x=0}^{d}{\sum_{K:K \subseteq V \setminus \{s\}, \atop |K| = 2x}{\binom{\omega-d-x}{d-x}\left(\prod_{i \in K}{r_s(i)}\right){y_K}}} \\
& + \sum_{x=0}^{d}{\sum_{K:K \subseteq V \setminus \{s\}, \atop |K| = 2x}{\binom{\omega-d-x}{d-x}\left(\prod_{i \in K}{r_s(i)}\right){(\tilde{E}[x_K]-y_K)}}}
\end{align*}
For the second part, for each $x \in [0,d]$ there are at most $\binom{n}{2x}$ $K$ such that $K \subseteq V(G) \setminus \{s\}$ and $|K|=2x$. Thus, 
\begin{align*}
\sum_{x=0}^{d}{\sum_{K:K \subseteq V \setminus \{s\}, \atop |K| = 2x}{}}{}&{\binom{\omega-d-x}{d-x}\left(\prod_{i \in K}{r_s(i)}\right){(\tilde{E}[x_K]-y_K)}} \\
&\leq \sum_{x=0}^{d}{\binom{n}{2x}\binom{\omega-d-x}{d-x}\max_{K: |K|=2x}{|\tilde{E}[x_K]-y_K|}}
\end{align*}
From our concentration bounds, with high probability, for all $x$ the corresponding term on the right is $O(\frac{{\omega}^{d+x}\lg{n}}{n})$
which is $O(\frac{\omega^{d+x}}{n})$. For all $x \leq d$, this is much smaller than $\Omega(\frac{{\omega}^{2d+1}}{n})$. Thus we may ignore the second part.

For the first part, the values $r_s(i)$ are completely independent of the values $y_{K}$. When we take the expected value of 
$$\sum_{x=0}^{d}{\sum_{K:K \subseteq V \setminus \{s\}, \atop |K| = 2x}{\binom{\omega-d-x}{d-x}\left(\prod_{i \in K}{r_s(i)}\right){y_K}}}$$
over the edges incident to $s$, only the $x=0$ term remains and we obtain $\binom{\omega-d}{d}$. When we take the variance of $$\sum_{x=0}^{d}{\sum_{K:K \subseteq V \setminus \{s\}, \atop |K| = 2x}{\binom{\omega-d-x}{d-x}\left(\prod_{i \in K}{r_s(i)}\right){y_K}}}$$
over the edges incident to $s$, only the square terms remain so we obtain
$$\sum_{x=0}^{d}{\sum_{K:K \subseteq V \setminus \{s\}, \atop |K| = 2x}{{\binom{\omega-d-x}{d-x}}^2{(y_K)^2}}}$$
From our concentration bounds on the $y_K$, with high probability this is 
$${\binom{\omega-d}{d}}^2 + \sum_{x=1}^{d}{{\binom{\omega-d-x}{d-x}}^2\binom{n-1}{2x}O\left(\frac{\omega^{4x}}{n^{4x}}\right)}$$
which is 
$${\binom{\omega-d}{d}}^2 + O\left(\frac{\omega^{2d+2}}{n^2}\right)$$
Thus, with high probability $$\sum_{x=0}^{d}{\sum_{K:K \subseteq V \setminus \{s\}, \atop |K| = 2x}{\binom{\omega-d-x}{d-x}\left(\prod_{i \in K}{r_s(i)}\right){y_K}}}$$
is $\binom{\omega-d}{d}\left(1 \pm O(\frac{\omega}{n})\right)$ which is $O(\omega^{d})$.
Putting everything together, if $\omega \gg n^{\frac{1}{d+1}}$ then for some appropriately chosen $C$, with high probabbility, $\tilde{E}[(C{\omega}^d{x_s}-\sum_{I: I \subseteq V, |I|=d}{\prod_{i \in I}{r_s(i)x_i}})^2] < 0$
\end{proof}

\section{Acknowledgements}
We thank Jonathan Kelner, Ankur Moitra, Aviad Rubinstein, and David Steurer for many helpful discussions.
We especially thank Boaz Barak for introducing the problem to us, for numerous conversations which led to many of the proofs in this paper, and for invaluable help and advice in writing it.

S.B.H. acknowledges the support of an NSF Graduate Research Fellowship under award no. 1144153.

\addreferencesection
\bibliographystyle{amsalpha}
\bibliography{bib/mr,bib/dblp,bib/scholar,bib/tensor-pca}

\end{document}